\newcolumntype{C}[1]{>{\centering\arraybackslash}m{#1}}
\definecolor{airforceblue}{rgb}{0.36, 0.54, 0.66}
\definecolor{amaranth}{rgb}{0.9, 0.17, 0.31}
\definecolor{arsenic}{rgb}{0.23, 0.27, 0.29}
\definecolor{bazaar}{rgb}{0.6, 0.47, 0.48}
\definecolor{darkscarlet}{rgb}{0.34, 0.01, 0.1}
\definecolor{prettyorange}{rgb}{0.9372549019607843, 0.5019607843137255, 0.4}
\newcolumntype{M}[1]{>{\centering\arraybackslash}m{#1}}
\newcolumntype{N}{@{}m{0pt}@{}}
\newcommand\phantomarrow[2]{%
  \setbox0=\hbox{$\displaystyle #1\to$}%
  \hbox to \wd0{%
    $#2\mapstochar
     \cleaders\hbox{$\mkern-1mu\relbar\mkern-3mu$}\hfill
     \mkern-7mu\rightarrow$}%
 \,}
\definecolor{tartorange}{rgb}{0.96, 0.28, 0.25}
\definecolor{lightkaki}{rgb}{0.96,0.94,0.53}
\theoremstyle{definition}
\newtheorem{definition}{Definition}[section]
\newtheorem{observation}{Observation}[section]
\newtheorem{example}{Example}[section]
\newtheorem{remark}{Remark}[section]
\newtheorem{theorem}{Theorem}[section]
\newtheorem{corollary}{Corollary}[section]
\newtheorem{proposition}{Proposition}[section]
\DeclarePairedDelimiter{\diagfences}{(}{)}
\newcommand{\diag}{\operatorname{diag}\diagfences}
\DeclareMathOperator{\Ima}{Im}
    \newlength\epitextskip
    \pretocmd{\@epitext}{\em}{}{}
    \apptocmd{\@epitext}{\em}{}{}
    \patchcmd{\epigraph}{\@epitext{#1}\\}{\@epitext{#1}\\[\epitextskip]}{}{}
\def\bbordermatrix#1{\begingroup \m@th
  \@tempdima 4.75\p@
  \setbox\z@\vbox{%
    \def\cr{\crcr\noalign{\kern2\p@\global\let\cr\endline}}%
    \ialign{$##$\hfil\kern2\p@\kern\@tempdima&\thinspace\hfil$##$\hfil
      &&\quad\hfil$##$\hfil\crcr
      \omit\strut\hfil\crcr\noalign{\kern-\baselineskip}%
      #1\crcr\omit\strut\cr}}%
  \setbox\tw@\vbox{\unvcopy\z@\global\setbox\@ne\lastbox}%
  \setbox\tw@\hbox{\unhbox\@ne\unskip\global\setbox\@ne\lastbox}%
  \setbox\tw@\hbox{$\kern\wd\@ne\kern-\@tempdima\left[\kern-\wd\@ne
    \global\setbox\@ne\vbox{\box\@ne\kern2\p@}%
    \vcenter{\kern-\ht\@ne\unvbox\z@\kern-\baselineskip}\,\right]$}%
  \null\;\vbox{\kern\ht\@ne\box\tw@}\endgroup}
\definecolor{lavender}{rgb}{0.9, 0.9, 0.98}
\NewDocumentCommand{\dgal}{sO{}m}{%
  \IfBooleanTF{#1}
    {\dgalext{#3}}
    {\dgalx[#2]{#3}}%
}
\NewDocumentCommand{\dgalext}{m}{%
  \sbox0{%
    \mathsurround=0pt 
    $\left\{\vphantom{#1}\right.\kern-\nulldelimiterspace$%
  }%
  \sbox2{\{}%
  \ifdim\ht0=\ht2
    \{\kern-.625\wd2 \{#1\}\kern-.625\wd2 \}%
  \else
    \left\{\kern-.7\wd0\left\{#1\right\}\kern-.7\wd0\right\}%
  \fi
}
\NewDocumentCommand{\dgalx}{om}{%
  \sbox0{\mathsurround=0pt$#1\{$}%
  \sbox2{\{}%
  \ifdim\ht0=\ht2
    \{\kern-.625\wd2 \{#2\}\kern-.625\wd2 \}%
  \else
    \mathopen{#1\{\kern-.7\wd0 #1\{}
    #2
    \mathclose{#1\}\kern-.7\wd0 #1\}}
  \fi
}
\begin{document}

\newpage

\clearpage
\thispagestyle{empty}

\begin{center}

\begin{figure}[h!]
\centering
\includegraphics[scale=0.08]{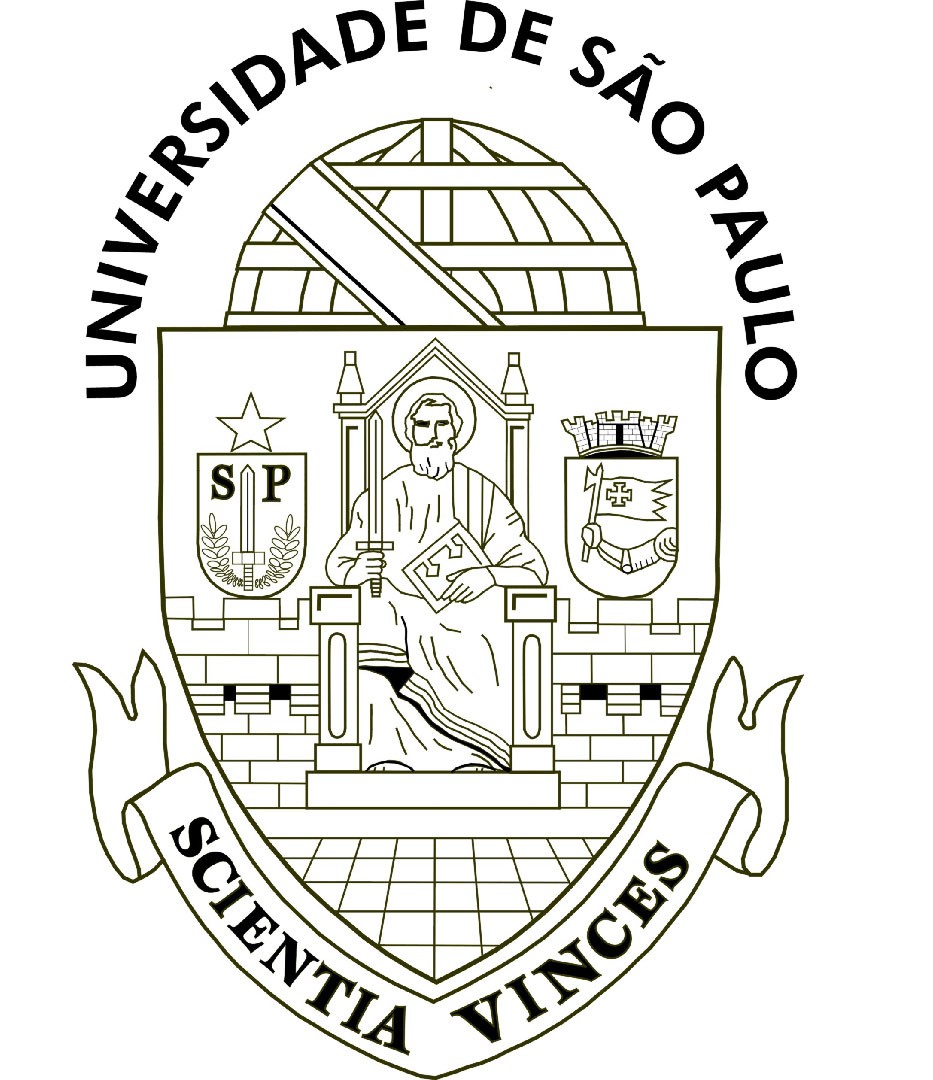}
\end{figure}

\textsc{\large University of São Paulo}

\smallskip



\textsc{\large Interunit Bioinformatics Graduate Program}

\end{center}


\bigskip
\bigskip
\bigskip
\bigskip
\bigskip
\bigskip

\newenvironment{pnc}{\fontfamily{pnc}\selectfont}{\par}

\begin{pnc}

\thispagestyle{plain}

\begin{center}
\textsc{\large Heitor Baldo}
\end{center}

\bigskip
\bigskip
\bigskip
\bigskip
\bigskip
\bigskip

\begin{center}
 \Large \textsc{Towards a Quantitative Theory of Digraph-Based Complexes and its Applications in Brain Network Analysis}
\end{center}


\end{pnc}

\bigskip
\bigskip
\bigskip
\bigskip
\bigskip

\begin{center}
\textsc{PhD Thesis}
\end{center}


\bigskip
\bigskip
\bigskip
\bigskip
\bigskip
\bigskip
\bigskip
\bigskip
\bigskip

\begin{center}
\textsc{This study was financed by the Coordenação de Aperfeiçoamento de Pessoal de
Nível Superior (CAPES) - Finance Code 88887.464712/2019-00}
\end{center}

\bigskip

\begin{center}
\textsc{São Paulo}

\textsc{2024}
\end{center}

\newpage

\clearpage
\thispagestyle{empty}

~\\*[0.05cm]

\begin{center}
\large Heitor Baldo
\end{center}

\bigskip
\bigskip
\bigskip
\bigskip

\begin{center}
 \textbf{ \large Towards a Quantitative Theory of Digraph-Based Complexes and its Applications in Brain Network Analysis}
\end{center}

\bigskip
\bigskip

\begin{center}
 \textbf{ \large Rumo a uma Teoria Quantitativa de Complexos Baseados em Dígrafos e suas Aplicações na Análise de Redes Cerebrais}
\end{center}

\bigskip
\bigskip
\bigskip
\bigskip
\bigskip
\bigskip

\begin{center}
\textbf{\large Final Version}
\end{center}

\bigskip
\bigskip
\bigskip
\bigskip
\bigskip
\bigskip


\begin{flushleft}
\hspace{80mm}Ph.D. thesis \hspace{0.02mm} presented to \hspace{0.01mm} the \hspace{0.002mm} Interunit\\
\hspace{80mm}Bioinformatics Graduate Program  at the\\
\hspace{80mm}University \hspace{0.01mm} of \hspace{0.01mm} São Paulo \hspace{0.01mm} to \hspace{0.01mm} obtain \hspace{0.01mm} the\\
\hspace{80mm}degree of Doctor of Science.\\
\bigskip
\hspace{80mm}Concentration area: Bioinformatics\\
\bigskip
\hspace{80mm}Supervisor:  Prof. Dr. Koichi Sameshima\\
\hspace{80mm}{\small Faculdade de Medicina-USP}\\
\hspace{80mm}Co-supervisor: Prof. Dr. André Fujita\\
\hspace{80mm}{\small Instituto de Matemática e Estatística-USP}
\end{flushleft}

\bigskip
\bigskip
\bigskip
\bigskip
\bigskip
\bigskip
\bigskip
\bigskip
\bigskip
\bigskip
\bigskip
\bigskip
\bigskip
\bigskip
\bigskip

\begin{center}
São Paulo

2024
\end{center}

\newpage

\clearpage
\thispagestyle{empty}

~\\*[0.2cm]

{\footnotesize

\begingroup\centering\singlespacing\small
Ficha catalográfica elaborada com dados inseridos pelo(a) autor(a)\\
Biblioteca Carlos Benjamin de Lyra\\
Instituto de Matemática e Estatística\\
Universidade de São Paulo\par
\vspace{2\baselineskip}\hrule\vspace{.8\baselineskip}
\RaggedRightRightskip 0pt plus 30pt minus 0pt\relax
\RaggedRightParfillskip 20pt plus 40pt minus 10pt\relax
\ttfamily\hspace{2em}\begin{minipage}[t]{125mm}
\RaggedRight\sloppy\setlength{\parindent}{\widthof{123}}

\noindent Baldo, Heitor

Towards a Quantitative Theory of Digraph-Based Complexes and its Applications in Brain Network Analysis / Heitor Baldo; orientador, Koichi Sameshima; coorientador, André Fujita. - São Paulo, 2024.

213 p.: il.

\vspace{1\baselineskip}

\vspace{1\baselineskip}

Tese (Doutorado) - Programa Interunidades de Pós-Graduação em Bioinformática / Instituto de Matemática e Estatística / Universidade de São Paulo.

Bibliografia

Versão corrigida

\vspace{1\baselineskip}

\vspace{1\baselineskip}

1. Graph Theory. 2. Digraph-Based Complexes. 3. Partial Directed Coherence. 4. Brain Connectivity. 5. Epilepsy. I. Sameshima, Koichi. II. Título.

\end{minipage}\par
\vspace{1\baselineskip}\hrule\vspace{.5\baselineskip}\rmfamily
Bibliotecárias do Serviço de Informação e Biblioteca\\
Carlos Benjamin de Lyra do IME-USP, responsáveis pela\\
estrutura de catalogação da publicação de acordo com a AACR2:\\
Maria Lúcia Ribeiro CRB-8/2766; Stela do Nascimento Madruga CRB 8/7534.
\par\endgroup

}

\let\cleardoublepage\clearpage








  



\newpage

\clearpage
\thispagestyle{empty}

~\\*[19.5cm]

\begin{flushright}
\textit{Lovingly dedicated to my mother,}\\ 
\textit{Maria Aparecida Marchi Baldo}\\
\smallskip
\textit{(in memoriam)}
\end{flushright}

\newpage

\clearpage
\thispagestyle{empty}

~\\*[1mm]

\begin{center}
\Large \textbf{Acknowledgments}
\end{center}

\bigskip

I’d first like to acknowledge my thesis advisor, Prof. Koichi Sameshima, for allowing me to do research in this unique field, for his endless patience, and for all the valuable discussions. Secondly, I’d like to acknowledge my co-advisor, Prof. André Fujita, for all the support. I'd also like to express my sincere gratitude to Prof. Luiz Baccalá for all the valuable insights, tips, and discussions, and to all my friends and colleagues for the friendly conversations and emotional support during this period.

Also, I'd like to thank my parents, my mother, Maria Aparecida Marchi Baldo (\textit{in memoriam}), and my father, João Edgar Baldo, who always supported my education and always encouraged me in difficult times throughout my life.

Finally, I'd like to thank the Interunit Bioinformatics Graduate Program at the University of São Paulo, firstly for accepting me as a graduate student and, secondly, for all the support, and CAPES for the financial support.











\newpage

\clearpage
\thispagestyle{empty}

~\\*[1mm]

\begin{center}
\Large \textbf{Abstract}
\end{center}

\bigskip

The development of mathematical methods for studying the structure, organization, and functioning of the brain has become increasingly important, especially in the context of brain connectivity networks studies, highlighting, in recent decades, methods associated with graph theory, network science, and computational (algebraic) topology. In particular, these methods have been used to study neurological disorders associated with abnormal structural and functional properties of brain connectivity, such as epilepsy, Alzheimer's disease, Parkinson's disease, and multiple sclerosis.

In this work, we developed new mathematical methods for analyzing network topology and we applied these methods to the analysis of brain networks. More specifically, we rigorously developed quantitative methods based on complexes constructed from digraphs (digraph-based complexes), such as path complexes and directed clique complexes (alternatively, we refer to these complexes as ``higher-order structures," or ``higher-order topologies," or ``simplicial structures"), and, in the case of directed clique complexes, also methods based on the interrelations between the directed cliques, what we called ``directed higher-order connectivities." This new quantitative theory for digraph-based complexes can be seen as a step towards the formalization of a “quantitative simplicial theory.”

Subsequently, we used these new methods, such as characterization measures and similarity measures for digraph-based complexes, to analyze the topology of digraphs derived from brain connectivity estimators, specifically the estimator known as information partial directed coherence (iPDC), which is a multivariate estimator that can be considered a representation of Granger causality in the frequency-domain, particularly estimated from electroencephalography (EEG) data from patients diagnosed with left temporal lobe epilepsy, in the delta, theta and alpha frequency bands, to try to find new biomarkers based on the higher-order structures and connectivities of these digraphs. In particular, we attempted to answer the following questions: How does the higher-order topology of the brain network change from the pre-ictal to the ictal phase, from the ictal to the post-ictal phase, at each frequency band and in each cerebral hemisphere? Does the analysis of higher-order structures provide new and better biomarkers for seizure dynamics and also for the laterality of the seizure focus than the usual graph theoretical analyses?

We found that all simplicial characterization measures considered in the study showed statistically significant increases in their magnitudes from the pre-ictal phase to the ictal phase, for several higher orders, for both cerebral hemispheres, particularly in the delta and theta bands but no statistically significant changes were observed from the ictal to the post-ictal phase, which may suggest that several topological and functional aspects of brain networks change from the pre-ictal to the ictal phase, at various levels of the higher-order topological organization. Regarding the laterality of the seizure focus, the analysis based on simplicial similarities found no statistically significant difference between the clique topology of the left and right hemispheres in the ictal phase. We conclude from this study that, despite a number of limitations, there may be evidence supporting the viability and reliability of using higher-order structures associated with digraphs to identify biomarkers associated with epileptic networks. However, further research is needed, and the applicability of the newly introduced methods to other disorders of brain connectivity networks will also depend on future studies.

\bigskip

\medskip

\textbf{Keywords:} Graph Theory; Directed Clique Complexes; Path Complexes; Directed Higher-Order Connectivity; Partial Directed Coherence; Brain Connectivity; Electroencephalography; Epilepsy.

\newpage

\clearpage
\thispagestyle{empty}

~\\*[1mm]

\begin{center}
\Large \textbf{Resumo}
\end{center}

\bigskip

O desenvolvimento de métodos matemáticos para o estudo da estrutura, organização e funcionamento do cérebro tem se tornado cada vez mais importante, sobretudo no âmbito de estudos das redes de conectividade cerebral, destacando-se, nas últimas décadas, os métodos associados à teoria dos grafos, ciência de redes e topologia (algébrica) computacional. Em particular, esses métodos têm sido usados para estudar desordens neurológicas associadas com propriedades estruturais e funcionais anormais da conectividade cerebral, tais como epilepsia, doença de Alzheimer, doença de Parkinson e esclerose múltipla.

Neste trabalho, desenvolvemos novos métodos matemáticos para análise da topologia de redes e a aplicação destes métodos à análise de redes cerebrais. Mais especificamente, desenvolvemos rigorosamente métodos quantitativos baseados em complexos construídos a partir de dígrafos (complexos baseados em dígrafos), como complexos de caminhos e complexos de cliques direcionados (alternativamente, referimo-nos à esses complexos por ``estruturas de ordem superior", ou ``topologias de ordem superior", ou ``estruturas simpliciais"), e, no caso de complexos de cliques direcionados, também métodos baseados nas interrelações entre os cliques direcionados, o que chamamos de ``conectividades de ordem superior direcionadas". Essa nova teoria quantitativa para complexos baseados em dígrafos pode ser vista como um passo em direção à formali-zação de uma “teoria quantitativa simplicial”.

Subsequentemente, usamos esses novos métodos, como medidas de caracterização e de similaridade para complexos baseados em dígrafos, para analisar a topologia de dígrafos derivados de estimadores de conectividade cerebral, especificamente do estimador conhecido como coerência parcial direcionada informacional (iPDC), que é um estimador multivariado que pode ser considerado uma representação da causalidade de Granger no domínio da frequência, particularmente estimados a partir de dados de eletroencefalografia (EEG) de paciente diagnosticados com epilepsia do lobo temporal esquerdo, nas bandas de frequência delta, teta e alfa, para tentar encontrar novos biomarcadores baseados nas estruturas e conectividades de ordem superior direcionadas desses dígrafos. Em particular, tentamos responder as seguintes questões: Como a topologia de ordem superior da rede cerebral muda da fase pre-ictal para a ictal, da fase ictal para a pós-ictal, em cada banda de frequência e em cada hemisfério cerebral? A análise de estruturas de ordem superior fornece biomarcadores novos e melhores para a dinâmica das crises e também para a lateralidade do foco da crise do que as análises de grafos usuais?

Encontramos que todas as medidas de caracterização simplicial consideradas no estudo apresentaram aumentos estatisticamente significativos em suas magnitudes da fase pré-ictal para a fase ictal, para várias ordem superiores, para ambos os hemisférios cerebrais, particularmente nas bandas delta e teta, mas nenhuma mudança estatisticamente significativa foi observada da fase ictal para a fase pós-ictal, o que pode sugerir que vários aspectos topológicos e funcionais das redes cerebrais mudam da fase pré-ictal para a fase ictal, em vários níveis de organização topológica de ordem superior. Quanto à lateralidade do foco da crise, a análise baseada em similaridades simpliciais não encontrou diferença estatisticamente significativa entre a topologia de cliques dos hemisférios esquerdo e direito na fase ictal. Concluímos deste estudo que, apesar de uma série de limitações, pode haver evidências que apoiam a viabilidade e confiabilidade do uso de estruturas de ordem superior associadas à dígrafos para identificar biomarcadores associados à redes epilépticas. Entretanto, são necessários mais estudos nessa direção, e a aplicabilidade dos métodos recentemente introduzidos à outros distúrbios de redes de conectividade cerebral também dependerá de estudos futuros.

\bigskip

\medskip

\textbf{Palavras-chave:} Teoria dos Grafos; Complexos de Cliques Direcionados; Complexos de Caminhos; Conectividade de Ordem Superior Direcionada; Coerência Parcial Direcionada; Conectividade Cerebral; Eletroencefalografia; Epilepsia.

\newpage


\clearpage
\thispagestyle{empty}

~\\*[1mm]

\begin{center}
\Large \textbf{List of Abbreviations}
\end{center}

\bigskip
\bigskip
\bigskip
\bigskip

\begin{tabular}{cp{1.1\textwidth}}

  ADSC &\textit{Abstract Directed Simplicial Complex} \\
  AIC & \textit{Akaike Information Criterion} \\
  ASC & \textit{Abstract Simplicial Complex} \\
  DAG & \textit{Directed Acyclic Graph}\\
  DQC & \textit{Directed Quasi-Clique}\\
  DRE & \textit{Drug-Resistent Epilepsy}\\
  DTF & \textit{Directed Transfer Function} \\
  EEG & \textit{Electroencephalography} \\
  EZ & \textit{Epileptogenic Zone} \\
  fMRI & \textit{Functional Magnetic Resonance Imaging} \\
  GC & \textit{Granger Causality}\\
  GED & \textit{Graph Edit Distance} \\
  gPDC & \textit{Generalized Partial Directed Coherence} \\
  GTA & \textit{Graph Theoretical Analysis}\\
  ICA & \textit{Independent Component Analysis} \\
  ILAE & \textit{International League Against Epilepsy} \\
  iPDC & \textit{Informational Partial Directed Coherence} \\
  MRI & \textit{Magnetic Resonance Imaging}\\
  MTLE & \textit{Mesial Temporal Lobe Epilepsy}\\
  MVAR & \textit{Multivariate Autoregressive Model} \\
  PDC & \textit{Partial Directed Coherence} \\
  TDA & \textit{Topological Data Analysis}\\
  TLE & \textit{Temporal Lobe Epilepsy}\\
  VAR & \textit{Vector Autoregressive Model} \\
\end{tabular}\\


{
  \hypersetup{linkcolor=black}
  \tableofcontents
}

\clearpage
\thispagestyle{empty}

\thispagestyle{plain}

\let\cleardoublepage\clearpage


\chapter[Introduction]{Introduction}
\label{chap:chap1}

\pagenumbering{arabic}
\setcounter{page}{7}

\epigraph{Because we do not understand the brain very well we are constantly tempted to use the latest technology as a model for trying to understand it. In my childhood we were always assured that the brain was a telephone switchboard. (‘What else could it be?’) I was amused to see that Sherrington, the great British neuroscientist, thought that the brain worked like a
telegraph system. Freud often compared the brain to hydraulic and electro-magnetic systems. Leibniz compared it to a mill, and I am told some of the ancient Greeks thought the brain functions like a catapult. At present, obviously, the metaphor is the digital computer.}{--- John Searle \citep{Searle}}


\bigskip

The human brain is the most complex biological system known by science, it is believed to contain between 80 and 100 billion neurons and trillions of synaptic connections \citep{Fornito-2016, Herculano}. Essentially, the brain is a network of nervous cells with intricate connections, and the architecture of this network is the structural substrate for its functioning \citep{Sporns-2012}. Understanding how the brain works is one of the greatest scientific challenges of our time, and a crucial component of this challenge is understanding brain networks.

The structural and functional properties of brain networks are the substrate of many brain processes, and the disruption of these networks may be the cause of many neurological disorders, such as Alzheimer's disease and epilepsy. Brain network studies analyze the structural and functional properties of these networks in different contexts, such as in healthy individuals performing motor, sensory, or cognitive tasks, and in individuals diagnosed with some neurological disease. Recently, these studies have used mathematical methods from graph theory, network science, and computational topology to assist in brain network analysis, and this is the scenario that we will deal with in this text.

\section{Brain Networks}

Brain activity depends on several different brain areas rather than being restricted to one or more particular brain regions, and the interaction among remote-located neuroanatomical structures produces structural and functional brain networks. These networks are characterized by the different types of structures involved in the connections and the different types of connectivity, the latter being classified into three main types \citep{Lang, Sporns-Scholar}: \textit{structural connectivity} (or \textit{anatomical connectivity}), which refers to a set of anatomical connections that link neural elements; \textit{functional connectivity}, which refers to statistical relationships between different and remote-located populations of neurons; and \textit{effective connectivity}, which refers to causal relationships between activated brain regions \citep{Friston-1994, Friston-2011, Horwitz, Sporns-2010}.

The inference of the different types of brain networks is carried out through the application of mathematical and statistical methods, called \textit{connectivity estimators}, to neurophysiological signal data that may be obtained through various different technologies, such as electroencephalography (EEG) \citep{Schomer}, functional magnetic resonance imaging (fMRI) \citep{Huettel}, magnetoencephalography (MEG) \citep{Papanicolaou}, and diffusion tensor imaging (DTI) \citep{Mukherjee}. In the literature, there are several types of connectivity estimators, for instance, there are directed and non-directed estimators, bivariate and multivariate estimators, time-domain and frequency-domain estimators \citep{Chiarion}. Typically, effective and functional connectivity networks are estimated through methods that involve some correlation, covariance, or coherence property between the different time series measured from different cortical areas \citep{Fallani3, Fallani}. For the estimation of functional connectivity, correlation, and coherence are the two most commonly used estimators. Other commonly used estimators are mutual information, transfer entropy, and phase synchronization \citep{Pereda}. For the estimation of effective connectivity, an important class of methods are those based on the concept of \textit{Granger causality} (GC) \citep{Granger}, which is a cause-effect relation idea, where the past values of one time series can predict the current values of another. Possibly the most popular among GC-based connectivity estimators are transfer entropy, Granger causality index (GCI) \citep{Brovelli, Geweke}, directed coherence (DC) \citep{Saito}, partial directed coherence (PDC) \citep{Baccala-2001}, and directed transfer function (DTF) \citep{Kaminski1}.

Among the estimators mentioned above, of special interest in this work is the PDC. PDC is a directed, model-based, multivariate technique that is used to simultaneously determine the directional influences and spectral properties of the interaction between any pairs of brain signals (e.g., pairs of channels in EEG) given in a multivariate ensemble (e.g., a multivariate autoregressive model (MVAR)) \citep{Fallani, Sameshima1}. The PDC is an estimator that is based on the coefficients of an MVAR model (or other multivariate models, such as the vector moving average (VMA) model and the vector autoregressive moving average (VARMA) model \citep{Baccala2022}), transformed in the frequency-domain, and it can be considered a representation of the concept of GC in the frequency-domain \citep{Sameshima1}. Additionally, the PDC is able to distinguish between direct and indirect causal flows in the estimated connectivity pattern \citep{Fallani, Sameshima1}. For instance, PDC tends not to add ``erroneous" causality flows between signals recorded from one structure to another. This property makes PDC suitable to be applied to brain signals. Furthermore, a generalized form of PDC (the \textit{generalized} PDC (gPDC)) was introduced in \citep{Baccala8}. In an attempt to understand precisely how the PDC relates to the information flow, Takahashi et al. \citep{Takahashi2, Takahashi} introduced the \textit{information} PDC (iPDC), a modification of the PDC expression that formalizes the relationship between the PDC and the information flow based on the concept of mutual information rate between two time series. 



Mathematically, networks can be represented by \textit{graphs}, which are formed by a set of nodes and a set of links connecting these nodes. Commonly, the terms ``networks" and ``graphs" are used interchangeably. In the case of brain networks, nodes may represent brain areas or neural elements \citep{Zalesky-2010a}, and the links between them can represent, for example, anatomical connections, statistical dependencies, or causal influences. Therefore, the anatomical and functional organization of the brain can be studied from its mathematical representations as graphs, and this fact is what has led in recent years to the widespread adoption of methods from graph theory and network science in network neuroscience \citep{Bullmore1, Bullmore2, Fallani3, Papo-2014, Rubinov-2010, Sporns-2010, Sporns-2014}.

\section{Graph Theoretical Analysis of Brain Networks}

As mentioned in the previous section, there has been a significant increase in research on brain networks using methods from graph theory and network science, and, commonly, the analyses that are based on methods from these fields are called \textit{graph theoretical analysis} (GTA). Numerous GTA studies have demonstrated that several non-trivial topological and organizational properties \citep{Zalesky-2012}, such as hierarchical organization \citep{Bassett-2008}, clustering, modularity \citep{Rubinov-2010, Sporns-2010},  presence of structural and functional network motifs \citep{Sporns-2010, Sporns-2004c}, and small-world organization \citep{Achard, Bassett-2017, Papo-2016, Sporns-2004a} are exhibited by brain networks, depending on the context. Furthermore, recent works on brain network analysis have used concepts from spectral graph theory, such as eigenvectors \citep{Abdelnour-2015b, Abdelnour2-2018} and the spectrum \citep{deLange} of the Laplacian matrix, the latter being able to reveal an integrative community structure of the networks.


Additionally, GTA has proven to be an important tool in the search for network-based biomarkers for many aspects of brain functioning, for instance, in identifying structural and functional anomalies in network connection patterns, such as those indicating neurological illnesses or certain cognitive processes \citep{Ward}. In particular, small-world network topology has been regularly detected in various brain network studies, mainly in data from healthy patients, which has been used as a parameter to differentiate healthy and neuropathological network patterns \citep{Ahmadlou, Bassett-2017, Chiang}. The comprehensive evaluation of GTA investigations of brain networks using fMRI data conducted by Farahani et al. \citep{Farahani} noted that applications of graph theory in human cognition include the identification of biomarkers (fMRI-based biomarkers) for human intelligence \citep{vandenHeuvel-2009}), working memory \citep{Stanley}, aging brain \citep{Farahani}, and behavioral performance in natural environments \citep{Qian}, and applications in brain diseases lie in the discovery of biomarkers for conditions like epilepsy \citep{Vlooswijk}, Alzheimer's disease (AD) \citep{deHaan}, multiple sclerosis (MS) \citep{Liu2016}, autism spectrum disorders (ASD) \citep{Keown}, and attention-deficit/hyperactivity disorder (ADHD) \citep{Wang-2009}.  

Of particular interest in the study of neuropathologies are the EEG-based biomarkers, due to the advantages of EEG over fMRI, such as portability and low cost. There are several studies that use GTA on EEG data obtained from patients diagnosed with some type of neurological disorder in the search for biomarkers \citep{Liu, Stam-2014}. Among these disorders, we can mention Parkinson's disease (PD) \citep{Utianski}, epilepsy \citep{Horstmann} and schizophrenia (SCZ) \citep{Yin2017}.

\section{Topological Data Analysis of Brain Networks}

Together with GTA, concepts from computational topology, such as simplicial complexes, homotopy, homology, Betti numbers, and persistent homology, which, eventually, in the data analysis scenario, have been put together under the umbrella term \textit{topological data analysis} (TDA) \citep{Chazal}, have been used in recent investigations of brain networks.

Methods from TDA are suitable for evaluating topological characteristics of topological spaces, in particular discrete topological spaces such as simplicial complexes. Briefly, an \textit{abstract simplicial complex} (which, for the sake of simplicity, we refer to as simplicial complex) is a finite collection of finite sets (called simplices) that is closed under subset inclusion \citep{Edelsbrunner}, and they can be seen as a generalization of graphs. Simplicial complexes can be constructed from graphs, or directed graphs (digraphs), in several ways, e.g., by considering the cliques of a graph as the simplices of the complex (called \textit{clique complex} or \textit{flag complex}) \citep{Aharoni}, or by considering the directed cliques of a digraph as directed simplices of a directed simplicial complex (where directed simplices are considered to be ordered sets)  (called \textit{directed clique complex} or \textit{directed flag complex}) \citep{Lutgehetmann, Masulli, Reimann}. Other complexes that can be associated with digraphs are the \textit{path complexes} \citep{Grigoryan-2020}, which can contribute with additional insights into the substructures of digraphs. 

One of the reasons for considering these complexes rather than just the network nodes is that network characterization measures cannot always provide us with relevant insights into the network topology, for example, two nodes may have the same clustering coefficient, but the topology of their neighborhoods can differ significantly \citep{Kartun}. Another point that we can highlight is that methods such as persistent homology can analyze the topological features of a space at different scales and their persistence (or their ``lifetime persistence").

Several examples of applications of TDA in network neuroscience are the following: characterizing functional brain networks of patients with ADHD and ASD \citep{Lee1-AD, Lee2-AD}; utilizing clique topology and persistent homology of clique complexes constructed out of brain networks to assess neural functions and structures \citep{Giusti2016, Giusti2015, Petri, Reimann, Sizemore}; and utilizing persistent homology to detect epileptic seizures \citep{Fernandez, Piangerelli, Sun2023, Wang}.

\section{Epilepsy Studies through Brain Connectivity Networks}


Although a number of brain disorders were mentioned in the previous sections, our focus in this study is epilepsy. Epilepsy is one of the most common disorders of the central nervous system (CNS), characterized by recurrent and non-induced seizures \citep{Alarcon, Wasade}. It is also a brain connectivity network disorder, typified by a clear relation between pathological symptoms and aberrant network dynamics \citep{Frohlich}.

Over the years, studies have consistently shown that, compared to healthy individuals, patients diagnosed with epilepsy present changes in the topology of brain connectivity networks \citep{Farahani, Liu, Stam-2014}, and many of these discoveries come from GTA and TDA performed on these networks. Furthermore, these techniques have aided in understanding how brain network architecture changes during the ictal phase (i.e., during the seizure) and how epileptic networks can be described in terms of their topologies.

Some findings that show how epilepsy and alterations in brain network topology are related, based on different data acquisition techniques and connectivity estimation methods, are: Bernhardt et al. \citep{Bernhardt} found that patients with temporal lobe epilepsy (TLE) showed changes in the distribution of hubs (important brain regions), increased path lengths, and increased clustering coefficients compared with healthy controls; Liao et al. \citep{Liao} found that patients with mesial temporal lobe epilepsy (MTLE) showed significantly increased local connectivity and decreased global connectivity compared with healthy individuals; Bonilha et al. \citep{Bonilha} found that patients with MTLE showed increased degree, local efficiency, and clustering coefficient, in certain areas compared with healthy controls; Horstmann et al. \citep{Horstmann} found that patients with drug-resistant epilepsy (a pharmacoresistant form of epilepsy) showed abnormally regular functional networks compared to healthy individuals.

\section{About this Thesis}

First and foremost, before proceeding further in the text, I would like to warn the reader that some relevant research may not have been discussed or cited. I apologize in advance for the research works that were overlooked.

\subsection{Objectives and Scientific Relevance}

\subsubsection{Objectives}

The main objectives of this work, in simple terms, are the development of new mathematical methods for analyzing the topology of networks and the application of these methods to the analysis of brain networks. More specifically, we are interested in building new ways of looking at the topology of digraphs; for this aim, we chose to develop quantitative methods based on complexes built out of digraphs (or \textit{digraph-based complexes}) such as path complexes and directed cliques complexes, what we call their ``higher-order structures" (or ``higher-order topologies," or ``simplicial structures") and, in the case of directed clique complexes, also methods based on the interrelationships between the directed cliques, what we call their ``(directed) higher-order connectivities," and, ultimately, use these new methods to analyze the topology of digraphs derived from brain connectivity estimators (especially the iPDC estimator), particularly estimated from epileptic individuals, to try to find new network-based biomarkers. We can put this more clearly into two main objectives:

\begin{enumerate}

\item To develop rigorously a new quantitative theory for digraph-based complexes (or, as we can consider it, a step towards the formalization of a ``quantitative simplicial theory"), with special emphasis on directed higher-order connectivity between directed cliques;

\item To apply the methods of the new theory to epileptic brain networks obtained through iPDC to quantitatively investigate their higher-order topologies and search for new biomarkers based on their directed higher-order structures and connectivities, thus pointing out potential applications of the theory in network neuroscience.
  
\end{enumerate}

\subsubsection{Clinical and Scientific Relevance}
 
The new methods introduced in this work may be helpful to the academic community in several areas involving the study of networks, such as biology, social sciences, computer science, and, in particular, network neuroscience, especially in the study of neurological diseases associated with disorders of brain connectivity, such as epilepsy, Alzheimer's disease, and Parkinson's disease. In the case of epilepsy, for example, they may be helpful to the community in answering fundamental questions such as: How do the brain connectivity networks change from one seizure phase to another? How do these networks change during a seizure? Is it possible to associate reliable network-based biomarkers with epileptic brain networks and with the laterality of the seizure focus? Furthermore, these methods may be useful in clinical practice, for instance, in assisting epilepsy surgeries that depend on the precise location of the epileptogenic zone.

\subsection{Outline of the Thesis}


This thesis is divided into two parts: the first part deals with the development and formalization of a quantitative theory of digraph-based complexes (objective 1), including the exposition of the necessary basic tools, such as the fundamentals of graph theory; the second part deals with the presentation of the theory of brain connectivity networks, together with its utility in the study of epilepsy, and the application of the methods developed in the first part in the analysis of epileptic brain networks (objective 2). In the following, we present a brief description of the chapters that compose each of these two parts.

\bigskip
\noindent \textbf{Part I - Towards a Quantitative Theory of Digraph-Based Complexes}
\medskip

\begin{itemize}

\item \textbf{Chapter 2 - Fundamentals of Graph Theory:} In this chapter, we introduce the fundamental concepts of graph theory, starting with some formal definitions involving relations and orders, such as equivalence relations and ordered sets, then move on to introduce concepts related to graphs and digraphs, algebraic and spectral graph theory, graph measures, graph similarities, and finally a brief discussion on random graphs.

\item \textbf{Chapter 3 - Digraph-Based Complexes and Directed Higher-Order Connectivity:} In this chapter, we present the theory of (abstract) simplicial complexes and directed clique complexes associated with digraphs, along with the case of weighted digraphs, passing through simplicial homology, persistent homology, and combinatorial Laplacians associated with these complexes. Next, we present the concept of path complexes, their homologies, and a brief discussion about combinatorial Laplacians associated with them. Finally, we introduce a new theory related to directed higher-order connectivity between directed cliques. This theory leads to the conception of new concepts such as \textit{directed higher-order adjacencies} (upper and lower adjacencies) and \textit{maximal/lower $q$-digraphs} (defined as digraphs whose nodes are maximal directed cliques), as well as provides new concepts and formalisms for directed Q-Analysis.

\item \textbf{Chapter 4 - Quantitative Approaches to  Digraph-Based Complexes:} In this chapter, based on graph measures, we introduce new quantifiers for characterizing $q$-digraphs, and, based on graph similarity comparison methods, we introduce new similarity comparison methods for directed clique complexes and path complexes. The set of all these quantifiers and methods can be seen as the formalization of a simplicial analogue of the quantitative graph  theory, that is, a formalization of a ``quantitative simplicial theory." Finally, we present some examples with random digraphs.

\end{itemize}

\bigskip
\noindent \textbf{Part II - Brain Connectivity Networks and a Quantitative Graph/Simplicial Analysis of Epileptic Networks}
\medskip

\begin{itemize}

\item \textbf{Chapter 5 - Brain Connectivity Networks:} In this chapter, we study the theory behind brain connectivity networks, briefly going over the biophysical principles of brain signals and the methods for obtaining them, paying particular attention to EEG. Next, we discuss, also briefly, bivariate and multivariate connectivity estimators, focusing on PDC and its variants, particularly gPDC and iPDC. Finally, we cover the various kinds of brain connectivity networks, particularly structural, functional, and effective networks, as well as the modern uses of graph theory and computational (algebraic) topology in the analysis of these networks to explore the dynamics of brain activity in various contexts, especially in the study of neurological disorders.

\item \textbf{Chapter 6 - Epilepsy as a Disorder of Brain Connectivity:} In this chapter, we look more closely at the neuropathology of epilepsy, covering its main characteristics, etiologies, epidemics, diagnoses, and treatments. We also discuss a number of studies that showed differences between the brain networks of epileptic patients and the brain networks of healthy people in terms of network properties, such as clustering coefficient, characteristic path length, and degree distribution, as well as differences between the ictal and non-ictal periods.

\item \textbf{Chapter 7 - Quantitative Graph/Simplicial Analysis of Epileptic Networks:} In this chapter, we perform an analysis of epileptic brain networks, estimated through iPDC from EEG data from patients with left temporal lobe epilepsy, using the new quantitative methods developed in previous chapters for directed higher-order networks ($q$-digraphs), to explore how certain properties of these networks change according to the seizure phases, as well as according to the cerebral hemispheres, in different frequency bands.

\item \textbf{Chapter 8 - Final Considerations:} Finally, in this last chapter, we present a summary of the objectives and developments of this thesis, some relevant considerations, and the gaps left in this work that we intend to complete in future studies.

\end{itemize}

\let\cleardoublepage\clearpage

\part{Towards a Quantitative Theory of Digraph-Based Complexes}
\label{partI}

\chapter[Fundamentals of  Graph Theory]{Fundamentals of  Graph Theory}
\label{chap:chap2}

\epigraph{Graph theory serves as a mathematical model for any system involving a binary relation.}{--- Frank Harary \citep{Harary}}


\bigskip

Bearing in mind the multidisciplinary nature of this thesis and to make our discussion self-contained, we have chosen to offer enough background information so that readers with an undergraduate mathematics background may follow along. Accordingly, in this second chapter, we present the basic concepts, terminology, and notations of graph theory that will be necessary throughout the text.



\section[Fundamental Concepts]{Fundamental Concepts}
\label{sec:graph-theory}

Graphs are the central objects of this thesis because, as we made explicit in the introduction, we will deal with how different areas of the brain interact with each other, and these interactions can be represented abstractly through \textit{graphs}, which are mathematical representations of the relationships between units of a system. Besides neuroscience, graph theory has numerous applications in a variety of other scientific fields, such as social science, biology, computer science, linguistics, and transportation planning \citep{Chung2006, Newman}. 


Some authors use the term ``networks" to designate real-world networks and ``complex networks" to designate large real-world networks, whereas the term ``graphs" is used to designate their mathematical representations. However, here we will not make these distinctions; thus, henceforth, we will use these terms interchangeably.

In this section, we present the mathematical formalism of the areas of graph theory necessary for the development of the text. Some of the propositions and theorems presented here do not include their respective proofs, but instead, we provide appropriate references to their proofs.

Henceforth, we will use the Bourbaki notation for the sets of natural, integer, and real numbers, $\mathbb{N}$ ($0 \notin \mathbb{N}$), $\mathbb{Z}$, and $\mathbb{R}$, respectively. For these sets, some additional notations are adopted: $ \mathbb{N}_{0} = \mathbb{N} \cup \{ 0 \}$, $\mathbb{R}_{\ge 0} = \{ x \in \mathbb{R}: x \ge 0\}$, and $\mathbb{R}_{+} = \{ x \in \mathbb{R}: x > 0\}$. Also, we may use the Dirac notation to denote vectors in the Euclidean space, e.g. $|v \rangle \in \mathbb{R}^{n}$, and $\langle v| = |v \rangle^{T}$.


\subsection{Relations and Orders}
\label{sec:relations-orders}

Let us begin with a basic yet crucial distinction between unordered and ordered pairs of elements. The primary reference for this section is \citep{Jech}.

The axiom of extensionality of Zermelo-Fraenkel axiomatic set theory states that if two sets have the same elements, then they are equal. From this axiom, we have $\{x,y\} = \{y,x\}$, for any set of two elements (or \textit{unordered pair}). We define an \textit{ordered pair} $(x,y)$ so as to satisfy the condition: $(x, y) = (z, w)$ if and only if $x = z$ and $y = w$. In general, we define an \textit{ordered $n$-tuple} $(x_{1},..., x_{n})$ so as to satisfy the condition: $(x_{1},..., x_{n}) = (y_{1},...,y_{n})$ if and only if $x_{i} = y_{i}$, for all $i=1,...,n$.

As a usual convention, braces $\{\}$ are used to denote sets and parentheses $()$ to denote ordered $n$-tuples; however, in some cases, we may commit an abuse of notation and use $()$ to denote unordered pairs, as we will see in the definition of undirected graphs.

\smallskip

\begin{definition}\label{def:relation}
Let $S_{1},..., S_{n}$ be $n$ sets, not necessarily distinct. A \textit{relation} (or \textit{$n$-ary relation}) on these sets is a set $R \subseteq S_{1} \times... \times S_{n}$ whose elements are $n$-tuples  $(x_{1},..., x_{n})$ such that $x_{i} \in S_{i}$, $\forall i = 1,..,n$. We call $R$ a \textit{binary relation} if $n=2$, and, in this case, we denote a pair $(x_{1}, x_{2}) \in R$ as $x_{1}Rx_{2}$.
\end{definition}

\begin{definition}\label{def:equivalence-relation}
Let $\sim \subseteq S \times S$ be a binary relation on a set $S$. We say that $\sim$ is an \textit{equivalence relation} on $S$ if it satisfies the following conditions, for all $x,y,z \in S$:
\begin{enumerate}
\item $x \sim x$ (reflexivity);
\item $x \sim y \iff y \sim x$  (symmetry);
\item if $x \sim y$ and $y \sim z$ $\Rightarrow$ $x \sim z$ (transitivity).
\end{enumerate}
\end{definition}

\smallskip

An equivalence relation $\sim$ on $S$ defines a \textit{class of equivalence} for each element $x \in S$: $[x] = \{ y \in S : x \sim y \}$. The classes of equivalence of $S$ define a set called \textit{quotient set}, and it is denoted as $S/\sim$, i.e. $S/\sim = \{[x] : x \in S\}$.

\smallskip


\begin{definition}\label{def:order}
Given a set $S$, an \textit{order} (or \textit{partial order}) defined in $S$ is a binary relation, denoted by $\le$, which satisfies the following conditions, for all $x,y,z \in S$:  

\begin{enumerate}
\item $x \le x$ (reflexivity);
\item if $x \le y$ and $y \le x$ $\Rightarrow$ $x=y$ (antisymmetry); 
\item if $x \le y$ and $y \le z$ $\Rightarrow$ $x \le z$ (transitivity).
\end{enumerate}

Moreover, if either $x \le y$ or $y \le x$, $\forall x, y \in S$, we said that $\le$ is a \textit{total order} (or \textit{linear order}).
\end{definition}

\smallskip

A set equipped with an order (partial order) is called a \textit{ordered set} (\textit{partially ordered set} or \textit{poset}), and a set equipped with a total order is called a \textit{totally ordered set} (or \textit{linearly ordered set}).

It's important to note that if two partially/totally ordered sets have the same elements, they are identical as sets, but they are not identical as partially/totally ordered sets if the respective orders of their elements are different.

As a last consideration, since a set does not allow repetitions of its elements, i.e. $\{x,x\} = \{x\}$, and since sometimes it is necessary to take these repetitions into account, below we present the formal definition of \textit{multiset}, which is a generalization of the concept of set \citep{Blizard}.

\smallskip

\begin{definition}\label{def:multiset}
A \textit{multiset} is a collection of elements in which repetition of elements is allowed. The \textit{multiplicity} of an element is the number of times it appears in the multiset. The cardinality of a multiset is equal to the total number of its elements, counting their multiplicities.
\end{definition}

\smallskip

To avoid confusion, we will adopt the notation $\dgal{ }$ to denote multisets.

\smallskip

\begin{example}
Consider the multiset $M = \dgal{x,x,y,y,y}$. The multiplicity of $x$ is equal to $2$, the multiplicity of $y$ is equal to $3$, and the cardinality of $M$ is equal to $5$. 
\end{example}


\subsection{General Concepts in Graph Theory}
\label{sec:graph-basics}

In this subsection, we present a brief introduction to the main concepts of graph theory that will be necessary for the development of subsequent concepts. For this first part, the basic references are \citep{Bollobas, Diestel, Harary, West}, however, additional references will be presented throughout the text.

\subsubsection{Graphs: Basic Definitions}

\begin{definition}\label{def:und-graph}
A \textit{graph} is a pair $G = (V, E)$ of disjoint sets such that $E \subseteq V \times V$ is a binary relation, i.e. the elements of $E$ are unordered pairs of elements of $V$. The elements of $V$ are the \textit{vertices} (or \textit{nodes}) and the elements of $E$ are the \textit{edges} (or \textit{links}) of the graph $G$. The edges are denoted as $(v,u)$ $(= (u,v))$, where $v,u \in V$. The notations $V = V(G)$ and $E = E(G)$ are also commonly used. The cardinalities of $V$ and $E$ are denoted by $|V|$ and $|E|$, respectively, and $|V|$ is said to be the \textit{order} of $G$. If $|V| < \infty$ and $|E| < \infty$, the graph is said to be \textit{finite}.
\end{definition}

\smallskip

Strictly speaking, in the previous definition, we defined an \textit{undirected graph}, whose edges are actually sets $(v,u) = \{v,u\}$, where we are committing an abuse of notation by using parentheses. As a convention, we will use parentheses whenever we deal with edges of graphs.

It is usual to represent graphs through diagrams of points (vertices) connected by lines (edges) (see Figure \ref{fig:cliques-und}).


\smallskip

\begin{definition}\label{def:incidence}
Given a graph $G = (V, E)$, a vertex $x \in V$ is said to be \textit{incident with an edge} $e = (v,u) \in E$ if $x \in e$, i.e. if $x = v$ or $x = u$ (we can also say that the edge $e$ is \textit{incident to} $x$). Furthermore, two edges are said to be \textit{adjacent} if they share a common vertex, and two vertices, $v, u \in V$, are said to be \textit{adjacent} (or \textit{neighbors}) if the edge $(v,u)$ exists.
\end{definition}

\smallskip

Some authors define the \textit{empty graph} as the graph having at least one vertex but no edges, and the \textit{null graph} as the graph having no vertices and no edges. Here we will adopt this convention.

\smallskip

\begin{definition}\label{def:loop}
A \textit{loop} (or \textit{self-loop}) in a graph $G = (V, E)$ is an edge $(v,v) \in E$, i.e. it is an edge that links a vertex $v \in V$ to itself. If there is more than one edge between two vertices, we say that $G$ contains  \textit{multiple edges}. If a graph contains no loops or multiple edges, it is said to be a \textit{simple graph}. A graph that allows multiple edges and loops is called \textit{pseudograph}\footnote{Strictly speaking, a pseudograph is not a graph since, by definition, a graph has no loops, but for practical reasons we use this terminology.} and a pseudograph without loops is called \textit{multigraph}.
\end{definition}

\smallskip

We can make the vertices of a graph distinguishable from one another by associating names or \textit{labels} with each of them. More formally, a non-null graph is said to be \textit{labeled} if it is equipped with a bijection between the finite set of vertices and a finite set of labels. For instance,  for a graph with $k+1$ vertices, we can assign labels such as $v_{0},...,v_{k}$, or  non-negative integers $0,1,...,k$, to its vertices.

Henceforth, all graphs will be considered non-empty, non-null, finite, simple, and labeled, unless said otherwise.

\smallskip

\begin{definition}\label{def:subgraph}
The graph $G' = (V', E')$ is a \textit{subgraph} of $G = (V, E)$ if $V' \subseteq V$ and $E' \subseteq E$, and in this case we write $G' \subseteq G$. If $G' \subseteq G$ and $G' \neq G$, then $G'$ is called a \textit{proper subgraph} of $G$. If $G'$ contains all edges $(v,u) \in E$, with $v,u \in V'$, then $G'$ is said to be an \textit{induced subgraph} of $G$, and we say that $V'$ \textit{induces} (or \textit{generates}) $G'$ into $G$.
\end{definition}

\smallskip

Typically, in network science literature, subgraphs that appear at a significantly higher frequency in a given graph than in equivalent random graphs are called \textit{motifs} \citep{Milo}.

\smallskip


\begin{definition}\label{def:path}
A \textit{path} is a graph $P = (V, E)$, $V = \{v_{0},...,v_{k}\}$, such that $E = \{e_{1} = (v_{0}, v_{1}),..., e_{k} = (v_{k-1}, v_{k})\}$, with $v_{i} \neq v_{j}$ if $i \neq j$, for all $i,j = 0,..., k$, and $e_{m} \neq e_{n}$ if $m \neq n$, for all $m,n = 1...,k$. Also, we can denote a path simply by the sequence of its vertices, i.e. $P = v_{0}v_{1}...v_{k}$.
\end{definition}

\begin{definition}\label{def:cycle}
A \textit{cycle} is a graph $C = (V, E)$, $V = \{v_{0},...,v_{k}\}$, with $k \ge 3$, such that $E = \{e_{1} = (v_{0}, v_{1}),..., e_{k} = (v_{k-1}, v_{k}), e_{k+1} = (v_{k}, v_{0})\}$, with $v_{i} \neq v_{j}$ if $i \neq j$, for all $i,j = 0,..., k$, and $e_{m} \neq e_{n}$ if $n \neq m$, for all $m,n = 1...,k+1$. 
\end{definition}


\begin{definition}\label{def:path-walk-trail}
A \textit{walk} between two vertices $v_{0}$ and $v_{k}$ (or \textit{$(v_{0}, v_{k})$-walk}) is a sequence of vertices and edges (not necessarily distinct) $v_{0}e_{1}v_{1}...v_{k-1}e_{k}v_{k}$ such that $e_{i} = (v_{i-1}, v_{i})$, for all $1 \le i \le k$.  If $v_{0} = v_{k}$, the walk is called \textit{closed}, and is called \textit{open} otherwise. If the edges of a walk are all distinct, it is said to be a \textit{trail}.  If the vertices of a walk are all distinct (and consequently all of the edges), then it is said to be a \textit{path}. A closed walk with all distinct vertices and with $k \ge 3$ is a \textit{cycle}. The \textit{length} of a walk is equal to its number of edges. Also, a walk of length $k$ is said to be a \textit{$k$-walk}. 
\end{definition}

\smallskip

It's important to highlight that, as sequences, paths, and cycles are special cases of trails, and trails are special cases of walks, but paths and cycles form actually simple graphs. In a simple graph, it's common to denote a walk by the sequence of its vertices, i.e. $W = v_{0}v_{1}...v_{k}$, $v_{i} \in V$, $i = 0,1,...,k$. 

\smallskip

\begin{definition}\label{def:connected}
Given a graph $G = (V, E)$, two vertices $v,u \in V$ are said to be \textit{connected} if either $v = u$ or there exists a path connecting $v$ to $u$. Also, if every vertex in $G$ is connected to every other, $G$ is said to be \textit{connected}. If $G$ is not connected, it is said to be \textit{disconnected}.
\end{definition}

\begin{proposition}\label{prop:connected}
\textit{The property of two vertices of a graph being connected is an equivalence relation on its vertex set.}
\end{proposition}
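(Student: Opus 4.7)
The plan is to verify directly the three axioms of Definition~\ref{def:equivalence-relation} for the ``connectedness'' relation $\sim$ on $V$, where $v \sim u$ means either $v = u$ or there is a path joining $v$ to $u$ in $G$.

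\textbf{Reflexivity} is immediate from the definition itself: for every $v \in V$ the equality $v = v$ makes the first clause of Definition~\ref{def:connected} true, so $v \sim v$.

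\textbf{Symmetry} is almost as easy. If $v \sim u$ and $v = u$, then clearly $u \sim v$. Otherwise there exists a path $P = v_{0}v_{1}\ldots v_{k}$ with $v_{0} = v$ and $v_{k} = u$, and I would simply observe that, since $G$ is undirected, the reversed sequence $P' = v_{k}v_{k-1}\ldots v_{0}$ uses exactly the same (unordered) edges as $P$ and still has pairwise distinct vertices, so it satisfies Definition~\ref{def:path} and witnesses $u \sim v$.

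\textbf{Transitivity} is the step that carries the real content. Assume $v \sim u$ and $u \sim w$. After disposing of the trivial cases in which one of the endpoints coincides (which reduce to the previous two properties), I may take a path $P_{1} = v_{0}v_{1}\ldots v_{k}$ from $v = v_{0}$ to $u = v_{k}$ and a path $P_{2} = u_{0}u_{1}\ldots u_{\ell}$ from $u = u_{0}$ to $w = u_{\ell}$. The concatenation $v_{0}v_{1}\ldots v_{k}u_{1}\ldots u_{\ell}$ is, by Definition~\ref{def:path-walk-trail}, a $(v,w)$-walk; however it need not be a path, since vertices of $P_{1}$ may recur in $P_{2}$. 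The natural obstacle is therefore to extract a genuine path from this walk, and I expect this to be the only non-cosmetic part of the argument.

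To resolve it, I would invoke (or prove as a short lemma) the standard fact that any walk between two vertices of a graph contains a path between the same endpoints. The proof is by induction on the length of the walk: if the walk has all distinct vertices it already is a path; otherwise pick any repeated vertex $x$, delete the closed sub-walk between its two occurrences, and obtain a strictly shorter $(v,w)$-walk, to which the induction hypothesis applies. Applying this lemma to the concatenation above produces a path from $v$ to $w$, so $v \sim w$, completing the verification of the three axioms and hence the proposition.
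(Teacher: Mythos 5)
Your proof is correct and follows the same overall route as the paper's: a direct verification of reflexivity, symmetry, and transitivity against Definition~\ref{def:equivalence-relation}. The one substantive difference is in the transitivity step, and there your version is actually the more careful one. The paper simply concatenates the path from $x$ to $y$ with the path from $y$ to $z$ and declares the result a path from $x$ to $z$; this is not literally true when the two paths share internal vertices, since a path must have all vertices distinct (Definition~\ref{def:path}). You correctly identify the concatenation as only a walk and repair the argument with the standard walk-to-path extraction lemma (delete the closed sub-walk between two occurrences of a repeated vertex and induct on length). So your proposal not only matches the paper's approach but closes a small gap in its transitivity argument; the extra lemma is exactly the right tool and its inductive proof is sound.
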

\begin{proof}
Let $G = (V, E)$ be a graph and let $\sim$ denote the relation ``is connected to" on the vertex set $V$. For arbitrary vertices $x,y,z \in V$, the following properties are satisfied:

\begin{enumerate}
\item  $x \sim x$. Indeed, by Definition \ref{def:connected}, every vertex is connected to itself.

\item $x \sim y$ $\Rightarrow$ $y \sim x$. Indeed, if $x \sim y$, there is a path connecting $y$ to $x$ as well.

\item $x \sim y$ and $y \sim z$ $\Rightarrow$ $x \sim z$. Indeed, let $P_{xy} = xv_{0}...v_{k}y$ be the path connecting $x$ to $y$ and let $P_{yz} = yv_{k+1}...v_{k+n}z$ be the path connecting $y$ to $z$, $v_{i} \in V$. The path  $P_{xz} = xv_{0}...v_{k}yv_{k+1}...v_{k+n}z$ is a path connecting $x$ to $z$, thus  $x \sim z$.

\end{enumerate}
Therefore, the relation $\sim$ is an equivalence relation since it is reflexive, symmetric, and transitive. 
\end{proof}

\begin{definition}\label{def:connected-component}
A \textit{maximal connected subgraph} of a graph $G$ is a connected subgraph which is not a proper subgraph of any other connected subgraph of $G$.  A \textit{connected component} of $G$ is a maximal connected subgraph of $G$. The largest connected component of $G$ is called its \textit{giant component}.
\end{definition}

\begin{definition}\label{def:density}
Given a graph $G= (V, E)$, with $|V|=n$ and $|E|=m$, its \textit{density} is defined as $\mbox{den}(G)  = 2m/n(n-1)$, since the maximum number of edges in the graph is equal to ${n \choose 2} = n(n-1)/2$. As a convention, a graph is said to be \textit{dense} if $\mbox{den}(G) > 0.5$, and it is said to be \textit{sparse} otherwise.
\end{definition}

\begin{definition}\label{def:neighborhood}
Given a graph $G = (V, E)$, the \textit{neighborhood} of 
a vertice $v$ in $G$, denoted by $\mathcal{N}_{G}(v)$, is the set of all vertices that are adjacent to $v$, i.e.
\begin{equation}\label{eq:neighborhood}
\mathcal{N}_{G}(v) = \{u \in V : (v, u) \in E\}.
\end{equation}
\end{definition}

\smallskip

Moreover, we say that $\mathcal{N}_{G}(v)$ is an \textit{open neighborhood} if $v \not\in \mathcal{N}_{G}(v)$, and that it is a \textit{closed neighborhood} otherwise. In the last case, we use the notation $\mathcal{N}_{G}[v]$.


\smallskip

\begin{definition}\label{def:degree}
Given a graph $G = (V, E)$, the \textit{degree} of a vertex $v \in V$, denoted by $\deg_{G}(v) = \deg(v)$, is equal to the number of edges incident to $v$. If $\deg(v) = 0$, the vertex is said to be \textit{isolated}. 
\end{definition}

\smallskip

In general, in network science literature, if the degree of a vertex far exceeds the average degree of the other nodes in a graph, it is called a \textit{hub}\footnote{Other definitions of hub were proposed either based on other node-wise measures, such as betweenness centrality and clustering coefficient \citep{Bullmore1, Sporns-2007}, or based on the connection with the concept of \textit{authorities} \citep{Kleinberg}.}.

Now that we have introduced the definition of vertex degree, let's introduce one of the most fundamental properties of a graph: its \textit{degree distribution} \citep{Newman}. Let $G=(V,E)$ be a graph with $|V|=n$. Let $\delta(k)$ be the number of vertices having degree $k$. The probability that a vertex chosen uniformly at random has a degree equal to $k$ is given by
\begin{equation}\label{eq:prob-degree}
p(k) = \frac{\delta(k)}{n}.
\end{equation}

The fractions $p(k)$ represent the \textit{degree distribution}  of the graph, and they describe how frequently a vertex with a certain degree appears in the graph. Furthermore, we can represent the degree distribution of $G$ graphically as the plot of $p(k)$ versus $k$.

\smallskip



\begin{definition}\label{def:complete-graph}
Given a graph $G$, if all its vertices are adjacent to each other, $G$ is said to be \textit{complete}. A complete graph with $n$ vertices is commonly denoted by $K_{n}$.
\end{definition}

\smallskip

The complete graph $K_{n}$ has $n(n -1)/2$ edges. The graph $K_{3}$ is called \textit{triangle}. Figure \ref{fig:cliques-und} illustrates the complete graphs $K_{n}$, for $n=1,2,3,4,5$.

\smallskip

\begin{definition}\label{def:clique}
Let $G=(V, E)$ be a graph. A \textit{$(k+1)$-clique} in $G$ is a complete induced subgraph with $k+1$ vertices, $0 \le k \le |V|-1$. A clique is said to be \textit{maximal} if it is not a proper subgraph of any other clique in $G$. Also, the  \textit{clique number} of $G$, denoted by $\omega(G)$, is the number of vertices contained in the largest clique of $G$.
\end{definition}

\begin{example}
Figure \ref{fig:cliques-und} shows examples of $(k +1)$-cliques, for $k=0,1,2,3,4$. From left to right: $1$-clique (vertex), $2$-clique (edge), $3$-clique (triangle), $4$-clique, $5$-clique. The $(k+1)$-clique is the complete graph $K_{(k+1)}$.
\begin{figure}[h!]
   \centering
\includegraphics[scale=1.3]{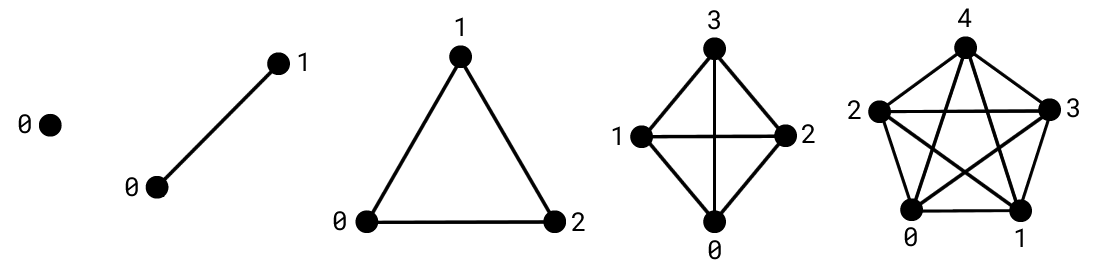}
\caption{Examples of $(k+1)$-cliques, for $k=0,1,2,3, 4$.}
    \label{fig:cliques-und}
\end{figure}
\end{example}

The term ``clique" originated in the study of social networks to denote the formation of a group of two or more people if the condition of being mutual friends is satisfied \citep{Luce}, and later was adopted to denote a complete induced subgraph. The process of clique enumeration (finding and listing all cliques in a graph) is very useful and widely used in network analysis since cliques might represent functional units within real-world networks, but the problem of finding cliques is NP-complete \citep{Cormen}. 

A natural generalization of cliques is the \textit{quasi-cliques}, which are \textit{almost complete} subgraphs. In the literature, there are slightly different ways to define them \citep{Brunato, Pattillo}, and here we will use the definition of \textit{degree-based quasi-clique} \citep{Sanei-Mehri}.

\smallskip

\begin{definition}\label{def:quasi-clique}
Given a graph $G=(V, E)$, a subgraph $H = (V', E') \subseteq G$, with $|V'| = m$, is called a \textit{$\gamma$-quasi-clique}, for a parameter $0 < \gamma \le 1$, if  $\deg_{H}(v) \ge \gamma (m-1)$, for all $v \in V'$.
\end{definition}

\smallskip

Note that if $\gamma = 1$, the $\gamma$-quasi-clique is actually a clique.

\medskip
\subsubsection{Directed and Weighted Graphs}
\label{sec:directed-graphs}

Many of the definitions and results presented here can be found in \citep{Bang, Harary}.

\smallskip

\begin{definition}\label{def:digraph}
A \textit{directed graph} (or \textit{digraph}) is a pair $G = (V, E)$ of disjoint sets such that the elements of $E$ are ordered pairs of elements of $V$ (set of vertices). The elements of $E$ are called \textit{directed edges}, (or \textit{arcs}, or \textit{arrows}), and are denoted by $(v,u)$ $(\neq (u,v))$, $v,u \in V$.  The first vertex $v$ of an arc $(v,u)$ is called \textit{tail} and the second vertex $u$ is called \textit{head}, and we say that the \textit{direction} of $(v,u)$ is from $v$ to $u$ (or from its tail to its head). Also, we say that a vertex $u$ \textit{arrives} at a vertex $v$ if the arc $(u,v)$ exists (equivalently, $(u,v)$ \textit{arrives} at $v$), and that a vertex $w$ \textit{leaves} $v$ if the arc $(v,w)$ exists (equivalently,  $(v,w)$ \textit{leaves} $v$).
\end{definition}

\smallskip

Unlike undirected edges, if we change the order of the vertices in an arc, we obtain an arc in the opposite direction. Moreover, in a digraph, we can have two arcs between the same two vertices, but with opposite directions, and when this occurs, we say that the digraph has a \textit{bidirectional edge} (or \textit{double edge}).

Most of the definitions made for undirected graphs in the previous section are straightforwardly extended to digraphs. For instance, we say that a digraph is \textit{simple} if it has no directed loops (arcs in which their tails coincide with their heads) and no multiple arcs (more than one arc with the same tail and head). The Definition \ref{def:subgraph} of subgraphs in a graph is the same as for a digraph, with the difference that now we are dealing with \textit{subdigraphs} in a digraph. 
The definitions of walk, trail, path, and cycle are easily extended for the directed case, as we will see in the next definition. 

Henceforth, all digraphs will be considered non-empty, non-null, finite, simple, and labeled, unless said otherwise.

\smallskip

\begin{definition}\label{def:dir-walk}
A  \textit{directed walk from $v_{0}$ to $v_{k}$} (or \textit{directed $(v_{0}, v_{k})$-walk}) is a sequence of vertices and arcs (not necessarily distinct) $v_{0}e_{1}v_{1}...v_{k-1}e_{k}v_{k}$ such that $e_{i} = (v_{i-1}, v_{i})$, i.e., $v_{i-1}$ is the tail and $v_{i}$ is the head of the arc $e_{i}$,  for all $1 \le i \le k$.  If $v_{0} = v_{k}$, the directed walk is called \textit{closed}, and is called \textit{open} otherwise. If the arcs of a directed walk are all distinct, it is said to be a \textit{directed trail}.  If the vertices of a directed walk are all distinct (and consequently all of the arcs), then it is said to be a \textit{directed path}. A closed directed walk with all distinct vertices and with $k \ge 2$ is a \textit{directed cycle}. Also, the \textit{length} of a directed walk is equal to its number of arcs. 
\end{definition}

\smallskip

Note that a directed cycle of length $2$ is actually a double edge.
		
\smallskip

\begin{definition}\label{def:underlying}
Given a digraph $G = (V, E)$, the  \textit{underlying undirected graph} of $G$ is the undirected graph, with the same set of vertices $V$, formed by replacing all directed edges in $E$ with undirected edges.
\end{definition}

\begin{definition}\label{def:dag}
A \textit{directed acyclic graph} (DAG), or \textit{acyclic digraph}, is a digraph that has no directed cycles.
\end{definition}

\smallskip

A notable property of (finite) DAGs is that they have at least one source and at least one sink (see \citep{Bang}, p. 32,  for proof).

\smallskip

\begin{definition}\label{def:isomorphism}
Two graphs $G_{1} = (V_{1}, E_{1})$ and $G_{2} = (V_{2}, E_{2})$ are said to be \textit{isomorphic} if there is a bijection $f: V_ {1} \rightarrow V_{2}$ such that if the vertices $v, u \in V_{1}$ are adjacent, then the vertices $f(v)$ and $ f(u)$ are adjacent in $V_{2}$ and vice versa, i.e. if and only if the bijection $f$ preserves adjacencies. Likewise, if $G_{1}$ and $G_{2}$ are digraphs, $f$ is an isomorphism if and only if $(u,v)$ is an arc in $V_{1}$ then $(f(u), f(v))$ is an arc in $V_{2}$.
\end{definition}

\smallskip

Graph properties that are invariant under graph/digraph isomorphism are called \textit{graph invariants}. For example, the order and the number of edges/arcs of a graph/digraph are graph invariants \citep{Diestel, Harary}.

\smallskip

\begin{definition}\label{def:reachable}
Given a digraph $G = (V, E)$, a vertex $v \in V$ is said to be \textit{reachable} from a vertex $u \in V$ if either $v = u$ or there exists a directed path from $u$ to $v$ in $G$.
\end{definition}

\smallskip

A difference between undirected and directed graphs is that, for digraphs, we have two different concepts of connectivity: \textit{weak connectivity} and \textit{strong connectivity}.

\smallskip

\begin{definition}\label{def:wcc}
Let $G = (V, E)$ be a digraph. A vertex $v \in V$ is said to be \textit{weakly connected} to another vertex $u \in V$ if there is an undirected path between $v$ and $u$ in the underlying undirected graph of $G$. We say that $G$ is \textit{weakly connected} if every vertex in $G$ is weakly connected to every other. A \textit{weakly connected component} of $G$ is a maximal subdigraph that is weakly connected. The largest connected component of G is called its giant component. Analogously to the undirected case, the largest weakly connected component of $G$ is called its \textit{giant component}.
\end{definition}

\begin{definition}\label{def:scc}
Let $G = (V, E)$ be a digraph. A vertex $v \in V$ is said to be \textit{strongly connected} to another vertex $u \in V$ if $u$ is reachable from $v$ and $v$ is reachable from $u$ in $G$.
We say that $G$ is \textit{strongly connected} if every vertex in $G$ is strongly connected to every other. A \textit{strongly connected component} of $G$ is a maximal subdigraph that is strongly connected.
\end{definition}


\begin{proposition}\label{prop:wcc-scc}
\textit{In a digraph, weak connectivity and strong connectivity are both equivalence relations on its vertex set.}
\end{proposition}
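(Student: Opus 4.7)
The plan is to verify the three defining axioms of an equivalence relation (reflexivity, symmetry, transitivity) for each of the two relations, in close analogy with the argument already used in Proposition \ref{prop:connected}. Since weak connectivity is literally ``connectedness in the underlying undirected graph,'' the weak case will essentially be a corollary of Proposition \ref{prop:connected} applied to the underlying undirected graph of $G$; I would state this reduction explicitly rather than reprove it, which collapses that half of the proof into one line.

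For the strong connectivity case, I would argue directly from Definition \ref{def:scc}. Reflexivity is immediate from the clause ``either $v=u$'' in Definition \ref{def:reachable}, since every vertex reaches itself. Symmetry is built into the definition: the statement ``$u$ is reachable from $v$ and $v$ is reachable from $u$'' is manifestly symmetric in $v$ and $u$. The only substantive step is transitivity: given $x$ strongly connected to $y$ and $y$ strongly connected to $z$, I need to exhibit a directed path from $x$ to $z$ and one from $z$ to $x$. I would take the directed path $P_{xy} = xv_{1}\cdots v_{k}y$ and $P_{yz} = yv_{k+1}\cdots v_{k+n}z$ and concatenate them to form a directed walk from $x$ to $z$; likewise for the reverse direction using the paths guaranteed by $y\to x$ and $z\to y$.

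The subtle point I would be careful about is that concatenation of two directed paths is in general only a directed walk (vertices may repeat at the junction or elsewhere), not a directed path. The clean fix is to observe that any directed walk from $x$ to $z$ contains a directed path from $x$ to $z$ as a subsequence (obtained by deleting the segments between repeated occurrences of any vertex); this is the standard ``walk contains a path'' lemma and is the only nontrivial step in the whole argument. This is really the same issue implicit in the proof of Proposition \ref{prop:connected}, so I expect this to be the main (and only) obstacle, and it is easily handled. Once transitivity is established for both reachability and its symmetric version, we conclude that weak and strong connectivity are each reflexive, symmetric, and transitive, hence equivalence relations on $V$, completing the proof. \qedhere
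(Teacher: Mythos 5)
Your proof follows essentially the same route as the paper: the weak case is reduced to Proposition \ref{prop:connected} applied to the underlying undirected graph, and the strong case checks reflexivity and symmetry directly from Definition \ref{def:reachable} and handles transitivity by concatenating directed paths. The only difference is that you explicitly address the fact that the concatenation is a priori only a directed walk and invoke the ``every walk contains a path'' lemma, a point the paper (like its proof of Proposition \ref{prop:connected}) leaves implicit; this is a small but genuine improvement in rigor rather than a different argument.
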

\begin{proof}
Let $G = (V, E)$ be a digraph. Let $\sim_{w}$ denote the relation ``is weakly connected to" and let $\sim$ denote the relation ``is connected to" in the underlying undirected graph of $G$, on the vertex set $V$. For arbitrary vertices $v,u \in V$, by definition, $v \sim_{w} u$ $\Leftrightarrow$ $v \sim u$. Since $\sim$ is an equivalence relation, $\sim_{w}$ is also an equivalence relation on $V$.

Now, denote by $\sim_{s}$ the relation ``is strongly connected to" on $V$. For arbitrary vertices $v,u,w \in V$, by Definition \ref{def:reachable}, every vertex is reachable from itself (reflexivity), if $v \sim_{s} u$, then $v$ is reachable from $u$ and $u$ is reachable from $v$ $\Rightarrow$ $u \sim_{s} v$ (symmetry), and if $v \sim_{s} u$ and $u \sim_{s} w$, similarly to the argument used in the proof of Proposition \ref{prop:connected}, there exists a directed path from $v$ to $w$ and a directed path from $w$ to $v$, thus $v \sim_{s} w$ (transitivity).  
\end{proof}

\smallskip

Another difference between undirected and directed graphs is that the density of a digraph $G = (V, E)$, with $|V|=n$ and $|E|=m$, is $\mbox{den}(G)  =  m/n(n-1)$, since the maximum number of arcs in the digraph is equal to $n(n-1)$ (see Definition \ref{def:density}).

\smallskip

\begin{definition}\label{def:dir-neighborhood}
Given a digraph $G = (V, E)$, the \textit{in-neighborhood} of 
a vertex $v$, denoted by $\mathcal{N}_{G}^{in}(v)$ or $\mathcal{N}_{G}^{-}(v)$, is the set of all vertices that arrive at $v$, and the \textit{out-neighborhood} of $v$, denoted by  $\mathcal{N}_{G}^{out}(v)$ or $\mathcal{N}_{G}^{+}(v)$, is the set of all vertices that leave $v$, i.e.
\begin{equation}
\mathcal{N}_{G}^{in}(v) = \mathcal{N}_{G}^{-}(v) = \{u \in V : (u, v) \in E\},
\end{equation}
\begin{equation}
\mathcal{N}_{G}^{out}(v) = \mathcal{N}_{G}^{+}(v) = \{u \in V : (v, u) \in E\}.
\end{equation}
\end{definition}

\smallskip

Note that if a digraph $G$ have no double edges, the neighborhood of a vertex $v$ in the underlying undirected graph is $\mathcal{N}_{G}(v) = \mathcal{N}_{G}^{in}(v) \cup \mathcal{N}_{G}^{out}(v)$. 

\smallskip

\begin{definition}\label{def:dir-degree}
Let $G = (V, E)$ be a digraph. The number of arcs arriving at a vertex $v \in V$ is its \textit{in-degree}, denoted by $\deg^{in}(v) = \deg^{-}(v)$, and the number of arcs leaving $v$ is its \textit{out-degree}, denoted by $\deg^{out}(v) = \deg^{+}(v)$. The \textit{total degree} of $v$, denoted by $\deg^{tot}(v)$, is the sum of its in-degree and its out-degree.
\end{definition}

\begin{example}
Figure \ref{fig:dir-graph2} exemplifies the digraph  $G = (V = \{ 0,1,2,3 \}, E = \{ (1,0), (2,1),$ $(0,2), (3,2), (1,3) \})$ in which $\deg^{in}(1) = 1$, $\deg^{out}(1) = 2$, and $\deg^{tot}(1) = 1 + 2 = 3$.
\end{example}

Furthermore, since a digraph may have double edges, let's denote by $\deg^{\pm}(v)$ the number of double edges incident to $v$.
It is clear that the degree of $v$ in the underlying undirected graph is given by $\deg(v) = \deg^{-}(v) + \deg^{+}(v) -  \deg^{\pm}(v)$.

Analogously to the undirected case, for a directed graph $G=(V,E)$ with $|V|=n$, if $\delta^{in}(k)$ is the number of vertices having in-degree $k$, the probability that a vertex chosen uniformly at random has in-degree equal to $k$ is given by
\begin{equation}\label{eq:in-degree-dist}
p^{in}(k) = \frac{\delta^{in}(k)}{n}.
\end{equation}

Similarly, the probability that a vertex chosen uniformly at random has out-degree equal to $k$ is given by
\begin{equation}\label{eq:out-degree-dist}
p^{out}(k) = \frac{\delta^{out}(k)}{n}.
\end{equation}

\smallskip

The fractions $p^{in}(k)$ and $p^{out}(k)$ represent the \textit{in-degree distribution} and the \textit{out-degree distribution}, respectively, of the digraph.

Before introducing the concept of weight, we need to introduce the concepts of \textit{metric}, \textit{quasi-metric}, and \textit{pre-metric}, which will be useful in the next sections as well.

\smallskip
 
\begin{definition}\label{def:metric}
Let $X$ be a set. A \textit{metric} (or \textit{distance}) on $X$ is a function $d: X \times X \rightarrow \mathbb{R}_{\ge 0}$ satisfying the following conditions, for all $x, y, z \in X$:

\begin{enumerate}
    \item $d(x,y) = 0 \iff x=y$ (identity);
    \item $d(x,y) = d(y,x)$ (symmetry);
    \item $d(x,y) \le d(x,z) + d(z,y)$ (triangular inequality).
\end{enumerate}

We say that $d$ is a \textit{quasi-metric} (or \textit{quasi-distance}) if $d$ does not necessarily satisfy the symmetry property; $d$ is called a \textit{semi-metric} (or \textit{semi-distance}) if $d$ does not necessarily satisfy the triangular inequality; if $d$ does not necessarily satisfy both conditions, symmetry and triangular inequality, then $d$ is called a \textit{pre-metric} (or \textit{pre-distance}). Given a (quasi/semi/pre-) metric $d$, the pair $(X, d)$ is called a (quasi/semi/pre-) \textit{metric space}. 
\end{definition}

In the following, weights in the set $\mathbb{R}_{\ge 0}$ are taken into consideration for defining weighted graphs (digraphs); alternatively, a different set of numbers may be used.

\begin{definition}\label{def:weig-function}
A \textit{weighted graph} (\textit{digraph}) is a triple $G^{\omega} = (V, E, \omega)$, where $G = (V, E)$ is a graph (digraph), and $\omega: V \times V \rightarrow \mathbb{R}_{\ge 0}$ is a pre-metric. The real number $\omega(v, u) = \omega_{vu}$ is the \textit{weight} of the edge $(v, u) \in E$. 
\end{definition}

\begin{example}
Figure \ref{fig:weig-graph2} exemplifies a weighted undirected graph with four vertices in which the edge thicknesses represent the weights that satisfy the relation $\omega_{02} < \omega_{23} < \omega_{13}   < \omega_{01} < \omega_{12}$.
\end{example}

\begin{figure}[h!]
\centering
\begin{subfigure}{.3\textwidth}
  \centering
  \includegraphics[scale=0.8]{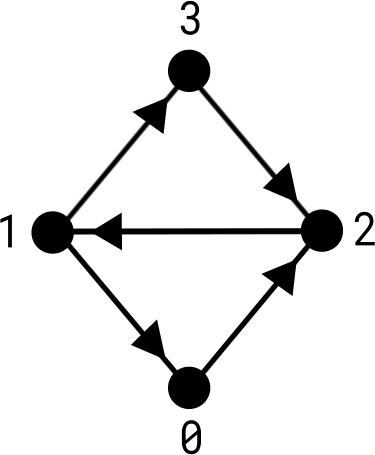}
  \caption{A directed graph.}
  \label{fig:dir-graph2}
\end{subfigure}
\begin{subfigure}{.3\textwidth}
  \centering
  \includegraphics[scale=0.8]{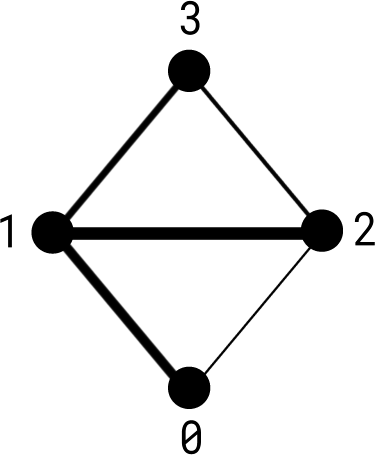}
  \caption{A weighted graph.}
  \label{fig:weig-graph2}
\end{subfigure}
\caption{Examples of directed and weighted graphs (the weights $\omega_{ij}$ are visually represented by the thicknesses of the edges).}
\label{fig:graphs-dir-weig}
\end{figure}

\smallskip

Defining the weight function $\omega$ as a pre-metric is suitable for the general case, since if $G$ is a digraph, we may have $\omega_{vu} \neq \omega_{uv}$, for some vertices $v, u \in G$.

\smallskip

\begin{definition}\label{def:weig-degree}
Let $G^{\omega}$ be a weighted graph. The \textit{weighted degree} of a vertex $v$ is the sum of all weights associated with the edges to which $v$ is incident, i.e.  $\deg_{\omega}(v)= \sum_{u \in \mathcal{N}(v)} \omega(v,u)$. Analogously, if $G^{\omega}$ is a weighted digraph, the \textit{weighted in-degree} of a vertex $v$ is $\deg_{\omega}^{-}(v) = \sum_{u \in \mathcal{N}^{-}(v)} \omega(u, v)$, and its \textit{weighted out-degree} is $\deg_{\omega}^{+}(v) = \sum_{u \in \mathcal{N}^{+}(v)} \omega(v, u)$.
\end{definition}

\smallskip

To define a distance in a weighted graph, we need a \textit{weight-to-distance} conversion function. However, for digraphs, the symmetry property is not necessarily satisfied; therefore, it would be more suitable to define a pre-distance.


\smallskip
 
\begin{definition}\label{def:weig-dist}
Given a weighted graph (or digraph) $G^{\omega} = (V, E, \omega)$ with a (normalized) weight function $\omega: V \times V \rightarrow [0,1]$, let $D^{\omega}: \mbox{Im}(\omega) \subseteq [0,1] \rightarrow \mathbb{R}_{\ge 0} \cup \{+\infty\}$ be a function defined by
\begin{equation}\label{eq:weig-distance}
D^{\omega}(\omega_{vu}) = \begin{cases}
\omega_{vu}^{-1} - 1, \mbox{ if } v \neq u;\\
+\infty, \mbox{ if } \omega_{vu} = 0;\\
0, \mbox{ if } v=u.
\end{cases}
\end{equation}

The function $D^{\omega}$ is a pre-distance. 
\end{definition}

\smallskip

It is clear that $D^{\omega}$ produces small values when applied to large weights and vice versa, thus we can say that two vertices are ``closer" when the weight of the connection between them is greater. Although $D^{\omega}$ is actually a pre-distance, from now on we will adopt an abuse of notation and call it a distance.

\begin{observation}\label{obs:weig-function}
When the weights are not normalized and take values $\ge 1$, we can define a weight-to-distance function by modifying the formula of $D^{\omega}$ by replacing the expression $(\omega_{vu}^{-1} - 1)$ with the expression $\omega_{vu}^{-1}$.
\end{observation}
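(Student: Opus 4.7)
The plan is to verify that the modified function
\[
\tilde{D}^{\omega}(\omega_{vu}) = \begin{cases} \omega_{vu}^{-1}, & \text{if } v \neq u \text{ and } \omega_{vu} \geq 1; \\ +\infty, & \text{if } \omega_{vu} = 0; \\ 0, & \text{if } v = u, \end{cases}
\]
is indeed a pre-distance on $V$, in the sense of Definition \ref{def:metric}, and moreover that it has the same qualitative behavior as the original $D^{\omega}$ of equation (\ref{eq:weig-distance}), namely that larger weights correspond to smaller distances. This justifies the use of the simplified expression $\omega_{vu}^{-1}$ in place of $(\omega_{vu}^{-1}-1)$ once the normalization assumption $\omega_{vu} \in [0,1]$ is dropped.

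First, I would check well-definedness and non-negativity. Since $\omega_{vu} \geq 1$ whenever $v \neq u$ and the edge is present, $\omega_{vu}^{-1}$ is defined and lies in $(0,1]$, so $\tilde{D}^{\omega}$ takes values in $\mathbb{R}_{\geq 0} \cup \{+\infty\}$, as required. Next I would verify the identity property: by definition $\tilde{D}^{\omega}(\omega_{vv}) = 0$, and conversely, if $v \neq u$, then $\omega_{vu}^{-1} > 0$ (since $\omega_{vu}$ is finite), so $\tilde{D}^{\omega}(\omega_{vu}) > 0$. Thus $\tilde{D}^{\omega}(\omega_{vu}) = 0 \iff v = u$.

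Finally, I would note that, exactly as in the discussion following Definition \ref{def:weig-dist}, neither symmetry nor the triangular inequality is imposed by the weight-to-distance conversion: for digraphs one may have $\omega_{vu} \neq \omega_{uv}$, and there is no structural constraint on the weights $\omega$ forcing $\omega_{vw}^{-1} \leq \omega_{vu}^{-1} + \omega_{uw}^{-1}$. Therefore $\tilde{D}^{\omega}$ satisfies only identity and non-negativity, which is precisely the content of being a pre-metric. The main (very mild) subtlety is to confirm that removing the ``$-1$'' does not create a violation of the identity axiom at $v \neq u$, which is ruled out by the assumption $\omega_{vu} \geq 1$ (so $\omega_{vu}^{-1} \leq 1$ but never equals $0$). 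No step here involves a hard calculation; the observation is essentially a book-keeping remark showing that the hypothesis $\omega_{vu} \geq 1$ plays exactly the role that the subtraction of $1$ played in the normalized setting, namely ensuring that distances are strictly positive off the diagonal.
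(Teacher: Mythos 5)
Your verification is correct in substance, and since the paper states this Observation without any proof, your check that the modified map is a pre-distance in the sense of Definition \ref{def:metric} (values in $\mathbb{R}_{\ge 0} \cup \{+\infty\}$, vanishing exactly on the diagonal, with symmetry and the triangular inequality not required) is precisely the justification the paper leaves implicit; the monotonicity remark (larger weights give smaller distances) is likewise what the surrounding text demands of a weight-to-distance conversion.

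One point in your closing paragraph is backwards, though, and it is worth fixing because it is the actual reason the formula must change. In the normalized setting the subtraction of $1$ does \emph{not} ensure strict positivity off the diagonal --- on the contrary, it permits $D^{\omega}(\omega_{vu}) = \omega_{vu}^{-1} - 1 = 0$ for $v \neq u$ whenever $\omega_{vu} = 1$; its role is calibration, namely sending the maximal weight $1$ to distance $0$ so that $[0,1]$ is mapped onto $[0,+\infty]$. The reason the ``$-1$'' must be dropped when the weights take values $\ge 1$ is different: keeping it would give $\omega_{vu}^{-1} - 1 \in (-1, 0]$, i.e.\ negative values for every weight $> 1$, so the map would not even land in $\mathbb{R}_{\ge 0} \cup \{+\infty\}$ and non-negativity would fail outright. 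Once this is stated, your verification goes through verbatim: for finite $\omega_{vu} \ge 1$ one has $\omega_{vu}^{-1} \in (0,1]$, hence non-negativity and the identity axiom hold (indeed more cleanly than for the original $D^{\omega}$, which can vanish off the diagonal), and nothing more is required of a pre-distance.
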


\section{Algebraic and Spectral Graph Theory}
\label{sec:spectral-graph}


Algebraic graph theory is concerned with translating properties of graphs into algebraic properties, such as matrices and groups, in such a way that concepts from abstract and linear algebra can be applied. An important branch of algebraic graph theory is the \textit{spectral graph theory}, which studies the properties of graphs through eigenvalues, eigenvectors, and other spectra-related concepts of graph matrices.

The fundamental bibliography for this section is \citep{Beineke-alg, Biggs, Chung}. Undergraduate knowledge of matrix theory is assumed.


\subsection{Algebraic Graph Theory}
\label{sec:AGT}

Let's start with the definition of the most common matrix representation of a graph: the \textit{adjacency matrix}.

\smallskip

\begin{definition}\label{def:adjacency-matrix}
Let $G = (V, E)$ be a graph with $|V| = n$. The \textit{adjacency matrix} of $G$ is the $n \times n$ matrix $A = A(G) = (a_{ij})$ whose entries are given by
\begin{equation}
a_{ij} =
\begin{cases}
1 , &\text{if } (v_{i}, v_{j}) \in E, \text{ for } v_{i}, v_{j} \in V, i \neq j; \\
0, &\text{otherwise.}\\
\end{cases}
\end{equation}
\end{definition}

\medskip

Note that, by definition, $A$ is a binary matrix (formed by $0$'s and $1$'s), symmetric, and with trace $\mbox{Tr}(A) = \sum a_{ii} = 0$. In the case where $G$ is a weighted graph, the entries of $A$ will be equal to the edge weights, i.e. $a_{ij} = \omega_{ij}$. On the other hand, if $G$ is a digraph, then the matrix $A$ might be asymmetric, since we might have $a_{ij} \neq a_{ji}$, since $a_{ij} = 1$ if and only if $(v_{i}, v_{j}) \in E$, and if there is no connection in the opposite direction, i.e. if $(v_{j}, v_{i}) \not\in E$, then $a_{ji} = 0$.

\smallskip

\begin{example}
Consider the complete graph $K_{4}$, the digraph $G$ represented in Figure \ref{fig:dir-graph2}, and the weighted graph $G^{\omega}$ represented in Figure \ref{fig:weig-graph2}. The respective adjacency matrices of these graphs are:
$$
A(K_{4}) = \begin{bmatrix}0 & 1 & 1 & 1\\ 1 & 0 & 1 & 1\\ 1 & 1 & 0 & 1\\ 1 & 1 & 1 & 0\\ \end{bmatrix}, \hspace{1.4mm}
A(G) = \begin{bmatrix}0 & 0 & 1 & 0\\ 1 & 0 & 0 & 1\\ 0 & 1 & 0 & 0\\ 0 & 0 & 1 & 0\\ \end{bmatrix},
\hspace{1.4mm} \mbox{and} \hspace{1.4mm}
A(G^{\omega}) = \begin{bmatrix}0 & \omega_{01} & \omega_{02} & 0\\ \omega_{10} & 0 & \omega_{12} & \omega_{13}\\ \omega_{20} & \omega_{21} & 0 & \omega_{23}\\ 0 & \omega_{31} & \omega_{32} & 0\\ \end{bmatrix}.
$$
\end{example}

\begin{definition}\label{def:incident-matrix}
Let $G = (V, E)$ be a graph with $|V| = \{v_{0},...,v_{n-1}\}$ and $|E| = \{e_{0},...,e_{m-1}\}$. The (vertex-edge) \textit{unoriented incidence matrix} of $G$ is the $n \times m$ matrix $B = B(G) = (b_{ij})$ whose entries are given by
\begin{equation}\label{eq:UIM}
b_{ij} =
\begin{cases}
1 , &\text{if } v_{i} \text{ is incident with } e_{j};\\
0, &\text{otherwise.}\\
\end{cases}
\end{equation}

If $G$ is a directed graph, the entries $b_{ij}$ of its (vertex-arc) \textit{oriented incidence matrix} are given by
\begin{equation}\label{eq:OIM}
b_{ij} =
\begin{cases}
1, &\text{if } v_{i}  \text{ is the tail of the arc } e_{j};\\
-1 , &\text{if } v_{i} \text{ is the head of the arc }e_{j};\\
0, &\text{otherwise.}\\
\end{cases}
\end{equation}
\end{definition}

\smallskip

We can associate an oriented incidence matrix (\ref{eq:OIM}) with an undirected graph by considering arbitrary directions on its edges. The following proposition shows how to obtain the number of walks of a certain size between two vertices in an undirected or directed graph from the powers of its adjacency matrix.

\smallskip

\begin{proposition}\label{prop:power-adj}
\textit{Let $G = (V,E)$ be a graph. For a non-negative integer $k$, the $(i,j)$-entries of the $k$-th power of its adjacency matrix, $A^{k}$, is equal to the number of $(v_{i}, v_{j})$-walks of length $k$ in $G$, $v_{i}, v_{j} \in V$. Analogously, if $G$ is a digraph, the $(i,j)$-entries of $A^{k}$ is the number of directed $(v_{i}, v_{j})$-walks of length $k$ in the digraph.}
\end{proposition}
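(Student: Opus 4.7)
The plan is to prove the statement by induction on $k$, using the combinatorial interpretation of matrix multiplication. The key idea is that entries of $A^{k}$ arise as sums of products $A_{i\ell_{1}}A_{\ell_{1}\ell_{2}}\cdots A_{\ell_{k-1}j}$, and each such product equals $1$ exactly when the sequence $v_{i}v_{\ell_{1}}\cdots v_{\ell_{k-1}}v_{j}$ forms a walk of length $k$ in $G$.

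For the base cases, I would verify $k=0$ and $k=1$ separately. When $k=0$, $A^{0}=I$ so $(A^{0})_{ij}=\delta_{ij}$, and per Definition~\ref{def:path-walk-trail} there is exactly one walk of length $0$ from $v_{i}$ to itself (the trivial walk consisting of the single vertex $v_{i}$) and no walks of length $0$ between distinct vertices. When $k=1$, $(A)_{ij}=1$ precisely when $(v_{i},v_{j})\in E$, which is exactly the number of $(v_{i},v_{j})$-walks of length $1$.

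For the inductive step, assuming the claim for $k$, I would apply the definition of matrix multiplication,
\begin{equation*}
(A^{k+1})_{ij} \;=\; \sum_{\ell=0}^{n-1} (A^{k})_{i\ell}\, A_{\ell j},
\end{equation*}
and invoke the following bijection: every $(v_{i},v_{j})$-walk $W$ of length $k+1$ decomposes uniquely as a $(v_{i},v_{\ell})$-walk of length $k$ concatenated with the edge $(v_{\ell},v_{j})$, where $v_{\ell}$ is the penultimate vertex of $W$. By the inductive hypothesis $(A^{k})_{i\ell}$ counts the first part, while $A_{\ell j}\in\{0,1\}$ indicates whether the required edge exists, so the product $(A^{k})_{i\ell}A_{\ell j}$ counts walks of length $k+1$ whose penultimate vertex is $v_{\ell}$. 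Summing over $\ell$ gives the total count.

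For the directed case, the identical argument applies, with the crucial observation that $A_{\ell j}=1$ iff the arc $(v_{\ell},v_{j})\in E$ (with the correct orientation), which is precisely the condition for extending a directed walk ending at $v_{\ell}$ to reach $v_{j}$. The main subtlety, and essentially the only place that demands care, is verifying that the decomposition of walks by penultimate vertex is a bijection so that no walk is over- or undercounted; once this is pinned down the proof is essentially a bookkeeping exercise and generalizes immediately to both the undirected and directed settings.
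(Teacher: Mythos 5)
Your proof is correct, and it follows the same route the paper points to: the paper does not spell out a proof but cites Biggs (p.~9) for an induction on $k$, which is exactly your argument of decomposing a walk of length $k+1$ by its penultimate vertex and summing via the definition of matrix multiplication. Nothing further is needed.
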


\smallskip

A simple proof (by induction in $k$) of the previous proposition for the undirected case (but straightforwardly extended to the directed case) can be found in \citep{Biggs}, p. 9.

\smallskip

\begin{corollary}\label{coro:trace-cycles}
\textit{Let $G$ be a graph with adjacency matrix $A$. For some non-negative integer $k$, the trace of the $k$-th power of $A$, $\mathrm{Tr}(A^{k})$, counts the number of closed $k$-walks in $G$. Analogously, if $G$ is a digraph, $\mathrm{Tr}(A^{k})$ counts the number of closed directed $k$-walks in $G$}.
\end{corollary}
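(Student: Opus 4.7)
The plan is to deduce the corollary directly from Proposition \ref{prop:power-adj} by exploiting the definition of the trace as the sum of the diagonal entries. First, I would recall that for any $n \times n$ matrix $M$ we have $\mathrm{Tr}(M) = \sum_{i} M_{ii}$, and apply this to $M = A^{k}$ to obtain $\mathrm{Tr}(A^{k}) = \sum_{i=0}^{n-1} (A^{k})_{ii}$, where $n = |V|$.

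Next, I would invoke Proposition \ref{prop:power-adj} applied to the diagonal case $j=i$: for each vertex $v_{i} \in V$, the entry $(A^{k})_{ii}$ equals the number of $(v_{i}, v_{i})$-walks of length $k$ in $G$. By Definition \ref{def:path-walk-trail}, a $(v_{i}, v_{i})$-walk of length $k$ is exactly a closed $k$-walk based at $v_{i}$. Summing these counts over all $i$ partitions the set of closed $k$-walks in $G$ according to their starting (equivalently, ending) vertex, so $\sum_{i} (A^{k})_{ii}$ is precisely the total number of closed $k$-walks in $G$. This yields the undirected case.

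For the directed case, I would run the identical argument, replacing the appeal to the undirected clause of Proposition \ref{prop:power-adj} with its directed analogue, and replacing ``$(v_{i}, v_{i})$-walk'' with ``directed $(v_{i}, v_{i})$-walk''. The partitioning step is unchanged since each closed directed $k$-walk has a unique starting vertex.

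There is no substantive obstacle here; the only mild subtlety worth flagging is a convention issue, namely whether closed walks are counted with their base vertex distinguished (which is what the trace naturally does) or up to cyclic rotation. I would therefore make explicit that the count provided by $\mathrm{Tr}(A^{k})$ is the number of closed $k$-walks \emph{together with a distinguished starting vertex}, matching the definition in \ref{def:path-walk-trail}, so that the statement is unambiguous.
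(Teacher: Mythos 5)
Your proof is correct and is exactly the argument the paper intends: the corollary is left as an immediate consequence of Proposition \ref{prop:power-adj}, obtained by summing the diagonal entries of $A^{k}$, which is what you do. Your remark that the trace counts closed $k$-walks with a distinguished starting vertex is a sensible clarification consistent with Definition \ref{def:path-walk-trail}.
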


\begin{definition}\label{def:graph-laplacian}
Given an undirected or directed graph $G$, let $B$ be its oriented incidence matrix. The \textit{Laplacian matrix} of $G$ is defined by
\begin{equation}\label{eq:graph-laplacian}
L = BB^{T}.
\end{equation}
\end{definition}

\smallskip


Note that whether $G$ is an undirected or directed graph, its Laplacian matrix $L(G)$ is symmetric ($BB^{T} = (BB^{T})^{T}$). Be aware that there are other alternative definitions for symmetric and non-symmetric versions of the Laplacian of a digraph since the above definition is \textit{independent} of the directions of the edges, but in this work, we will adopt this definition.

\smallskip

\begin{definition}\label{def:PSD}
Let $M \in \mathbb{R}^{n \times n}$ be a symmetric matrix. $M$ is said to be \textit{positive semi-definite} if $v^{T}M v \ge 0$, for all $v \in \mathbb{R}^{n}$.
\end{definition}

\begin{proposition}\label{prop:PSD}
\textit{The Laplacian matrix of a given undirected or directed graph is positive semi-definite.}
\end{proposition}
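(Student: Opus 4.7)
The plan is to exploit directly the factored form of the Laplacian given in Definition \ref{def:graph-laplacian}, namely $L = BB^T$, where $B$ is the oriented incidence matrix. This factorization is exactly the reason such a definition is convenient: any matrix that can be written as $MM^T$ for some real matrix $M$ is automatically positive semi-definite, so the proposition should reduce to a one-line calculation once the right vector identity is invoked.

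First, I would note that $L$ is symmetric, since $L^T = (BB^T)^T = BB^T = L$, so it makes sense to test the positive semi-definite property per Definition \ref{def:PSD}. Then, for an arbitrary column vector $v \in \mathbb{R}^n$ (with $n = |V(G)|$), I would compute the quadratic form
\begin{equation*}
v^T L v \;=\; v^T B B^T v \;=\; (B^T v)^T (B^T v) \;=\; \|B^T v\|^2 \;\ge\; 0,
\end{equation*}
where $\|\cdot\|$ denotes the standard Euclidean norm on $\mathbb{R}^{|E(G)|}$. Since this holds for every $v \in \mathbb{R}^n$, the matrix $L$ is positive semi-definite by Definition \ref{def:PSD}.

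A small point worth making explicit is that the argument is blind to whether $G$ is undirected or directed: in both cases, Definition \ref{def:graph-laplacian} prescribes the oriented incidence matrix (for the undirected case, with arbitrary orientations assigned to the edges, as noted in the discussion after Definition \ref{def:incident-matrix}), and the identity $L = BB^T$ holds verbatim. Consequently the same chain of equalities works in both settings, and no case analysis is required.

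There is essentially no obstacle here: the only thing one must be careful about is not to be misled by the fact that, for digraphs, $L$ as defined is independent of edge directions (as the paper remarks). This might suggest that some further verification is needed, but in fact it is exactly what makes the proof uniform, since the identity $v^T B B^T v = \|B^T v\|^2$ makes no reference to how the orientations were chosen.
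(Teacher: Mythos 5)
Your proof is correct and follows essentially the same route as the paper's: writing $v^{T}Lv = v^{T}BB^{T}v = (B^{T}v)^{T}(B^{T}v) = ||B^{T}v||^{2} \ge 0$ for arbitrary $v \in \mathbb{R}^{n}$. The additional remarks on symmetry and on the uniformity of the argument across the undirected and directed cases are sound but not needed beyond what the paper already records.
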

\begin{proof}
Let $G=(V,E)$ be an undirected or directed graph with $|V|=n$, and let $L$ be its Laplacian matrix. For any real-valued vector $v \in \mathbb{R}^{n}$, we have
\begin{equation}
v^{T}Lv = v^{T}BB^{T}v = (B^{T}v)^{T}(B^{T}v) = ||B^{T}v||^{2} \ge 0,
\end{equation}
\noindent where $||.||$ is the Euclidean norm.
\end{proof}




\subsection{Spectral Graph Theory}
\label{sec:SGT}

As previously stated, the spectral graph theory deals with the study of graphs through the spectra of their matrices; therefore, let's start by defining the components of these spectra: their \textit{eigenvalues}.

\smallskip

\begin{definition}\label{def:eigenvalues}
Given an undirected or directed graph $G$, the \textit{characteristic polynomial} of its adjacency matrix $A$ is given by $p_{A}(\lambda) = \det(\lambda I - A)$. The zeros of $p_{A}(\lambda)$ are the \textit{eigenvalues} of $A$. Analogously, the characteristic polynomial of its Laplacian matrix $L$ is given by $p_{L}(\mu) = \det(\mu I - L)$, and its zeros are the \textit{eigenvalues} of $L$ (or \textit{Laplacian eigenvalues}). The \textit{eigenvectors} associated with the eigenvalues of $A$ (respectively $L$) are the vectors $\mathrm{v}$ such that $Av = \lambda v$ (respectively $Lv = \mu v$).
\end{definition}

\smallskip

The \textit{spectrum} (respectively \textit{Laplacian spectrum}) of a graph $G$ is the set of all eigenvalues of $A$ (respectively $L$), together with their multiplicities. Commonly, the eigenvalues of $A$ are ordered in decreasing order: $\lambda_{1} \ge \hdots \ge \lambda_{min}$. On the other hand, the eigenvalues of the Laplacian matrix are ordered in increasing order: $\mu_{1} \le \hdots \le \mu_{max}$. 




A notable property of the Laplacian matrix is that it is positive semi-definite (Proposition \ref{prop:PSD}), and thus all of its eigenvalues are non-negative, as proved below.

\smallskip

\begin{proposition}\label{prop:eigenvalues}
\textit{All eigenvalues of the Laplacian matrix of a given undirected or directed graph are non-negative.}
\end{proposition}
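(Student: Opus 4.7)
The plan is to deduce the non-negativity of Laplacian eigenvalues directly from the positive semi-definiteness established in Proposition \ref{prop:PSD}, via the standard Rayleigh-quotient argument. First I would pick an arbitrary Laplacian eigenvalue $\mu$ with an associated eigenvector $v$, which by definition of eigenvector is nonzero, so that $Lv = \mu v$. Multiplying on the left by $v^{T}$ would then convert the eigenvalue equation into the scalar identity $v^{T} L v = \mu \, v^{T} v = \mu \lVert v \rVert^{2}$.

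The key step is to apply Proposition \ref{prop:PSD} to conclude $v^{T} L v \ge 0$, and to use $\lVert v \rVert^{2} > 0$ to solve for $\mu$, obtaining $\mu = v^{T} L v / \lVert v \rVert^{2} \ge 0$. Since $\mu$ was arbitrary, the inequality holds for every Laplacian eigenvalue. I would also remark briefly that $L = BB^{T}$ is real and symmetric irrespective of whether $G$ is undirected or directed, so its eigenvalues are automatically real and the inequality $\mu \ge 0$ is meaningful; in particular, a single unified argument covers both cases, with no need to treat directed graphs separately.

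I do not expect any substantive obstacle here: the statement is essentially a one-line corollary of the preceding proposition. The only points requiring minor care are to ensure the eigenvector is nonzero before dividing by $\lVert v \rVert^{2}$, and to note that the symmetry of $L$ (coming from the definition via the oriented incidence matrix) is what makes the spectrum real so that ``non-negative'' makes sense.
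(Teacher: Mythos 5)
Your argument is correct and is essentially identical to the paper's proof: both invoke Proposition \ref{prop:PSD} to get $v^{T}Lv = \mu\lVert v\rVert^{2}\ge 0$ for an eigenvector $v$ and conclude $\mu\ge 0$. Your added remarks about $v\neq 0$ and the symmetry of $L=BB^{T}$ guaranteeing a real spectrum are sensible touches the paper leaves implicit.
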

\begin{proof}
Let $G=(V,E)$ be an undirected or directed graph with $|V|=n$ and with Laplacian matrix $L$. Since $L$ is positive semi-definite (Proposition \ref{prop:PSD}), for any eigenvector $v \in \mathbb{R}^{n}$ of $L$, with eigenvalue $\mu$, we have
\begin{equation}
v^{T}Lv = v^{T} \mu v = \mu v^{T} v = \mu||v||^{2} \ge 0,
\end{equation}

\noindent but $\mu||v||^{2} \ge 0$ if and only if $\mu \ge 0$, where $||.||$ is the Euclidean norm.
\end{proof}

\smallskip

From the previous proposition, we conclude that zero is always the smallest eigenvalue of the Laplacian matrix, i.e. $\mu_{1} = 0$.

\smallskip

\begin{definition}\label{def:singular-values}
Let $M \in \mathbb{R}$ be a matrix. The square roots of the eigenvalues of the positive semi-definite matrix $M^{T}M$ are called the \textit{singular values} of $M$. Typically, the notation $\{ \sigma_{i} \}_{i=1}^{n}$ is used to denote the set of singular values (with their multiplicities). 
\end{definition}

\smallskip

If $A$ is the adjacency matrix, since $A$ is real and symmetric,  all singular values of $A$ are equal to the absolute value of its eigenvalues.

An important theorem associated with non-negative matrices is the \textit{Perron-Frobenius theorem} \citep{Beineke-alg}.

\smallskip

\begin{theorem}\label{theo:perron-frobenius}
(\textbf{Perron-Frobenius}) Let $M$ be a square non-negative matrix. Then $M$ has an eigenvalue $\hat{\lambda} \ge 0$ such that $|\lambda| \le \hat{\lambda}$, for all eigenvalue $\lambda$ of $M$, and the eigenvector associated with $\hat{\lambda}$ is a non-negative vector.
\end{theorem}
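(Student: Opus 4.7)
The plan is to first establish the result for strictly positive matrices and then pass to the non-negative case by a perturbation argument. For a strictly positive square matrix $P$, I would apply Brouwer's fixed point theorem to the continuous self-map of the standard simplex
\[
\Delta = \{x \in \mathbb{R}^n : x_i \ge 0, \; \textstyle\sum_i x_i = 1\}, \qquad T(x) = \frac{Px}{\|Px\|_1},
\]
which is well-defined because $P > 0$ forces $Px > 0$ whenever $x \in \Delta$. Any fixed point $v \in \Delta$ satisfies $Pv = \|Pv\|_1 \, v$, yielding a non-negative eigenvector with non-negative eigenvalue $\hat{\lambda}_P := \|Pv\|_1$.

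Next I would prove that $\hat{\lambda}_P$ dominates every eigenvalue of $P$ in absolute value. If $Pw = \lambda w$, then applying the triangle inequality componentwise gives $|\lambda|\,|w| \le P|w|$. Applying the same construction to $P^T$ (which is also strictly positive) produces a strictly positive left eigenvector $u > 0$ with $u^T P = \hat{\lambda}_P u^T$. Left-multiplying the inequality by $u^T$ yields
\[
|\lambda| \, u^T |w| \;\le\; u^T P |w| \;=\; \hat{\lambda}_P \, u^T |w|,
\]
and since $u^T|w| > 0$, this forces $|\lambda| \le \hat{\lambda}_P$.

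To handle the general non-negative case, set $M_\epsilon = M + \epsilon J$, where $J$ is the all-ones matrix and $\epsilon > 0$. Each $M_\epsilon$ is strictly positive, so the previous step supplies $\hat{\lambda}_\epsilon \ge 0$ and $v_\epsilon \in \Delta$ with $M_\epsilon v_\epsilon = \hat{\lambda}_\epsilon v_\epsilon$. The eigenvalues are uniformly bounded since $\hat{\lambda}_\epsilon = \|M_\epsilon v_\epsilon\|_1 \le \|M_\epsilon\|_1$, which stays bounded as $\epsilon \to 0$. By compactness of $\Delta$ and of a closed interval containing $\{\hat{\lambda}_\epsilon\}$, I extract a subsequence $\epsilon_k \to 0$ along which $v_{\epsilon_k} \to v \in \Delta$ and $\hat{\lambda}_{\epsilon_k} \to \hat{\lambda} \ge 0$; passing to the limit gives $Mv = \hat{\lambda} v$ with $v \ge 0$, $v \neq 0$.

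Finally, for dominance, I would invoke the continuity of the roots of the characteristic polynomial in its coefficients: each eigenvalue $\lambda$ of $M$ is the limit of a sequence of eigenvalues $\lambda_{\epsilon_k}$ of $M_{\epsilon_k}$, and $|\lambda_{\epsilon_k}| \le \hat{\lambda}_{\epsilon_k}$ by the positive case. Letting $k \to \infty$ yields $|\lambda| \le \hat{\lambda}$. The main obstacle is the limit argument itself: one must handle two simultaneous passages to the limit (for the dominant eigenpair and for an arbitrary eigenvalue) and ensure that $v \neq 0$ survives, which is why normalizing $v_\epsilon$ on the simplex is essential. The positivity of the left eigenvector in the strictly positive case is the one nontrivial ingredient needed to make the dominance inequality tight.
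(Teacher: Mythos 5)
The paper does not actually prove this theorem; it states it and cites the literature (\citep{Beineke-alg}), so there is no in-text argument to compare against. Your proof, however, is a complete and essentially correct self-contained argument along one of the classical lines: Brouwer's fixed point theorem on the simplex for the strictly positive case, a left-eigenvector argument for spectral dominance, and a compactness/perturbation argument with $M_\epsilon = M + \epsilon J$ to descend to the non-negative case, using continuity of the spectrum to preserve the dominance inequality in the limit. The normalization on $\Delta$ correctly guarantees the limiting eigenvector is nonzero, and the statement being proved is exactly the "weak" form the paper asserts (no claim of strict positivity, simplicity, or uniqueness), so the perturbation route suffices. The one point you should tighten is in the dominance step for positive $P$: the construction applied to $P^T$ yields a positive left eigenvector $u$ with $u^T P = \mu\, u^T$ for \emph{some} non-negative $\mu$, and you silently identify $\mu$ with $\hat{\lambda}_P$. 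This identification is true but needs a line of justification --- for instance, from $u^T P v = \mu\, u^T v$ and $Pv = \hat{\lambda}_P v$ one gets $(\mu - \hat{\lambda}_P)\, u^T v = 0$, and $u^T v > 0$ since $u > 0$ and $v \ge 0$, $v \neq 0$, forcing $\mu = \hat{\lambda}_P$. With that line added, the argument is airtight.
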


\smallskip

By definition, the adjacency matrix of an undirected or directed graph is always square and non-negative, thus the Perron-Frobenius theorem guarantees that a graph or digraph has a non-negative eigenvector. 






\section{Graph Measures}
\label{sec:graph-measures}


Graphs in general present a wide variety of structural features; for example, they can contain certain types of subgraphs, their nodes can have specific degree distributions, they can have specific average path lengths, their nodes can be organized into clusters or communities, etc. Accordingly, it is natural to try to define specific measures to quantify each of these structural characteristics. Once defined, they can be used to study the functionality, topology, and dynamics of real-world networks \citep{Chung2006, Newman}. 

Over the years, a large number of graph measures have been proposed, each with the purpose of quantifying some information about a specific characteristic of the network \citep{Costa, Rubinov-2010}. For instance, the \textit{global efficiency} tries to quantify how efficiently information is propagated through the graph; the \textit{degree centrality} tries to quantify how influential or central a node is in the graph; the \textit{closeness centrality} tries to quantify the ability of a node to transmit information \citep{Dehmer-2014, Dehmer-2011, Estrada-2011, Estrada-2015}. Recently, all these quantitative approaches to dealing with graphs have been brought together into a new branch within graph theory called \textit{quantitative graph theory} (QGT) \citep{Dehmer-2017}, which we will discuss in more detail in Chapter \ref{chap:chap5}.

One can classify graph measures into two major categories:

\begin{itemize}
\item  \textbf{Local measures:} refer to the measures that try to extract the properties of the nodes in a graph.

\item \textbf{Global measures:} refer to the measures that try to extract the global properties of a graph by taking into account the graph as a whole.
\end{itemize}




In the next subsections, we present several well-known graph measures (most of them can be found in the aforementioned bibliography), and all of them represent \textit{graph invariants}. Also, all graphs (or digraphs), $G = (V, E)$, will be considered with $|V| = n$ and $|E| = m$, and its vertices and edges (or arcs) will be denoted by $i, j \in V$ and $(i,j) \in E$, respectively, and $A = (a_{ij})$ will denote their adjacency matrix. Unless otherwise specified, the variants related to weighted graphs are directly extended to weighted digraphs.

\subsection{Distance-Related Measures}
\label{subsec:Distance-Related-Measures}

Perhaps the most common and intuitive measures related to graphs are the node degree and node degree distribution. The next most common approaches to characterize graphs are the \textit{distance-based measures}. These measures allow us to quantify integration at the global level of the network, that is, how each node interacts with all other nodes, and thus we can classify them as measures of \textit{global integration} \citep{Sporns-2010}. In what follows, we present some of the most relevant measures associated with distances.

\bigskip
\noindent \textbf{Shortest Path and Distance} (\textit{global}):
The \textit{shortest path} (or \textit{geodesic path}) between two given vertices in an undirected graph is the path with the minimum number of edges between them. The \textit{distance} (or \textit{geodesic distance}) between two vertices $i$ and $j$, denoted as $d(i,j) = d_{ij}$, is the length of the shortest path between $i$ and $j$, and it can be written in terms of the adjacency matrix entries as
\begin{equation}\label{eq:shortest-distance}
   d_{ij} = \sum_{x,y \in g_{i\leftrightarrow j}} a_{xy},
\end{equation}

\noindent where $g_{i\leftrightarrow j}$ is the geodesic path between $i$ and $j$. Since the adjacency matrix is symmetric, we have $d_{ij} = d_{ji}$, for all $i, j$. Also, we define $d_{ij} = +\infty$ for every disconnected pair $i,j$. 

Analogously, for directed graphs, the \textit{directed distance} from a vertex $i$ to a vertex $j$ is the length of the shortest directed path from $i$ to $j$, i.e.
\begin{equation}\label{eq:dir-shortest-distance}
   \vec{d}_{ij} = \sum_{x,y \in \vec{g}_{i\rightarrow j}} a_{xy},
\end{equation}

\noindent where $\vec{g}_{i\rightarrow j}$ is the geodesic directed path from $i$ to $j$.  Notice that $\vec{d}_{ij}$ is an asymmetric function, since we might have $\vec{d}_{ij} \neq \vec{d}_{ji}$ for some $i$ and $j$, therefore, strictly speaking, $\vec{d}$ is a \textit{quasi-distance} (Definition \ref{def:metric}). If there is no directed path from $i$ to $j$, then we put $\vec{d}_{ij} = +\infty$. Also, it is worth noting that the length of the shortest path is equal to the length of the shortest walk in both undirected and directed cases.

Lastly, for weighted graphs, let $f$ be a weight-to-distance function. The \textit{weighted distance} between two vertices $i$ and $j$ is the length of the shortest path in relation to the function $f$, i.e.
\begin{equation}\label{eq:weig-shortest-distance}
d_{ij}^{\omega} = \sum_{x,y \in g_{i\leftrightarrow j}(f)} f(\omega_{xy}),
\end{equation}

\noindent where $g_{i\leftrightarrow j}(f)$ is the geodesic path between $i$ and $j$ in relation to $f$ (or the \textit{weighted geodesic path}). Here, since we want a higher weight to be associated with a shorter path, if the weights are normalized, then we identify $f$ as the weight-to-distance function defined by Equation (\ref{eq:weig-distance}), i.e. $f = D^{\omega}$, otherwise, we can use the modified version of $D^{\omega}$ as explained in Observation \ref{obs:weig-function}.

\bigskip
\noindent \textbf{Characteristic Path Length} (\textit{global}):
The \textit{characteristic path length} (or \textit{average path length} or \textit{average shortest path length}) \citep{Watts} is a measure that represents the path length that is most likely to occur in the graph, i.e. it is the average distance between all possible pairs of vertices. Essentially, it is a measure of how efficiently information travels through the graph. Mathematically, it is defined by
\begin{equation}\label{eq:CPL}
L(G)  = \frac{1}{n} \sum_{i \in V}\frac{\sum_{j \in V, j\neq i} d_{ij}}{n-1} = \sum_{\substack{i,j\in V \\ i\neq j}} \frac{d_{ij}}{n(n-1)}.
\end{equation}

For directed graphs, \textit{directed characteristic path length}, $\vec{L}$, is obtained by replacing the distance $d_{ij}$ with the directed distance $\vec{d}_{ij}$, and for the weighted case, the \textit{weighted characteristic path length}, $L^{\omega}$, is obtained by replacing $d_{ij}$ with $d^{\omega}_{ij}$. Some algorithms consider $d_{ij} = 0$ (respectively $\vec{d}_{ij} = 0$) when $(i,j) \not\in E$, in this case $n$ is the order of $G$, otherwise $n$ must be the order of its giant component.

\bigskip
\noindent \textbf{Eccentricity} (\textit{global}):
For a connected graph, the \textit{eccentricity} of a vertex $i$ is the maximum distance between $i$ and any other vertex $j$ in the graph, i.e.
\begin{equation}\label{eq:eccentricity}
\mathrm{ecc}(i) = \max_{j \in V}  d(i, j) .
\end{equation}

Analogously, for strongly connected digraphs, the \textit{directed eccentricity}, denoted as $\vec{\mbox{ecc}}(i)$, is the maximum directed distance from $i$ to any other vertex $j$ in the digraph.

\bigskip
\noindent \textbf{Diameter and Radius} (\textit{global}):
The \textit{diameter} of a connected graph $G$ is the maximum eccentricity among all of its vertices, i.e.
\begin{equation}\label{eq:diameter}
\mathrm{diam}(G) = \max_{i \in V} (\mathrm{ecc}(i)).
\end{equation}

If $G$ is a strongly connected digraph, the \textit{directed diameter} is the maximum directed eccentricity among all of its vertices. A related measure is the \textit{radius} of a connected graph $G$, which is the minimum eccentricity among all of its vertices:
\begin{equation}\label{eq:radius}
\mathrm{rad}(G) = \min_{i \in V}  (\mathrm{ecc}(i)).
\end{equation}

Again, for strongly connected digraphs, the \textit{directed radius} is obtained by replacing $\mbox{ecc}(i)$ with the directed eccentricity $\vec{\mathrm{ecc}}(i)$ in the previous formula.

\bigskip
\noindent \textbf{Global Efficiency} (\textit{global}):
Let's define by $\epsilon_{ij} = d_{ij}^{-1}$ the \textit{efficiency} in the communication between two vertices $i$ and $j$ of a graph $G$. Latora and Marchiori \citep{Latora} defined the \textit{global efficiency} (or \textit{average efficiency}) of $G$ as 
\begin{equation}\label{eq:global-efficiency}
 E_{glob}(G) = \frac{1}{n} \sum_{i \in V}\frac{\sum_{j \in V, j\neq i} \epsilon_{ij}}{n-1}.
\end{equation}

This measure is closely related to the characteristic path length, and it is another way of trying to capture how efficiently information is propagated through the graph. The \textit{directed global efficiency}, $\vec{E}_{glob}$, is obtained by replacing $d_{ij}$ with the directed distance $\vec{d}_{ij}$, and the \textit{weighted global efficiency}, $E_{glob}^{\omega}$, is obtained by replacing $d_{ij}$ with $d^{\omega}_{ij}$.


\bigskip
\noindent \textbf{Communicability} (\textit{global}):
The \textit{communicability} between two nodes \citep{Estrada-2009a, Estrada-2015} is a measure based on the number of walks that exist between them, and that assigns different importance to these walks based on their lengths. It can be expressed in terms of the powers of the adjacency matrix as
\begin{equation}\label{eq:communicability-1}
CM(i,j) = \sum^{\infty}_{k=0} c_{k}(A^{k})_{ij},
\end{equation}

\noindent since $(A^{k})_{ij}$ is the number of walks of length $k$ between $i$ and $j$ (Proposition \ref{prop:power-adj}), and the coefficients $c_{k}$ must be defined in such a way that: guarantees the convergence of the series; the values of $c_{k}$ increase as we decrease $k$ and vice versa (i.e. assign greater importance to shorter walks); and produce positive values for all pairs $i,j$, $i \neq j$. 

A convenient choice for the coefficients $c_{k}$ is $c_{k} = 1/k!$, since in this case all of the above conditions are satisfied and we have: 
\begin{equation}\label{eq:communicability-2}
CM(i,j) = \sum^{\infty}_{k=0} \frac{(A^{k})_{ij}}{k!} = (\exp(A))_{ij}.
\end{equation}

Note that the communicability between a pair of nodes increases when the number of walks between them increases. 


\bigskip
\noindent \textbf{Returnability} (\textit{global}):
Returnability \citep{Estrada-2009b, Estrada-2015} is a measure that tries to quantify the amount of information that flows through a digraph and returns to its original sources by considering the relative contribution of all closed directed walks presented in the digraph. It may be considered as a measure of ``returnability of information." Formally, since $\mbox{Tr}(A^{k})$ is equal to the number of closed directed $k$-walks in a digraph (Corollary \ref{coro:trace-cycles}), we define its \textit{returnability} as
\begin{equation}\label{eq:returnability1}
K_{r}(G) = \sum^{\infty}_{k=2} c_{k} \mbox{Tr}(A^{k}),
\end{equation}

\noindent where the coefficients $c_{k}$ must satisfy the same three conditions exposed previously for the communicability measure. By choosing $c_{k} = 1/k!$ as before, we have:
\begin{equation}\label{eq:returnability2}
K_{r}(G) = \sum^{\infty}_{k=2} \frac{\mbox{Tr}(A^{k})}{k!} = \mbox{Tr}(\exp(A)) - n.
\end{equation}

The term $n$ in the right-hand side of Equation (\ref{eq:returnability2}) comes from our assumption that the digraph does not contain any self-loops or directed cycles of length $1$, so the two first terms, $\mathrm{Tr}(A^{0})= n$ and $\mathrm{Tr}(A^{1}) = 0$, are removed. Also, we can define the \textit{relative returnability} as
\begin{equation}\label{eq:relative-returnability}
K_{r}'(G) = \frac{\mathrm{Tr}(\exp(A)) - n}{\mathrm{Tr}(\exp(A')) - n},
\end{equation}

\noindent where $A'$ is the adjacency matrix of the underlying undirected graph.



\subsection{Measures of Centrality}
\label{subsec:centrality-measures}

In this part, we present the most relevant centrality measures related to a node found in the literature, such as degree centrality, closeness centrality, and betweenness centrality. Each one of these measures tries to quantify some specific property or role of a node in the graph. In general, node centrality measures try to quantify the ``importance," ``influence," or ``centrality" of a node within a graph, in the sense of capturing the capacity that a node has to ``spread" and/or ``receive" information for/from other nodes, which may be characterized by the direct contact with other nodes, its closeness to a large number of other nodes, and the number of pairs of nodes that require this node as an intermediary in their interactions.

Let's start with the simplest and most straightforward measure of node centrality: the \textit{degree centrality}.

\bigskip
\noindent \textbf{Degree Centrality} (\textit{local}):
The basic idea behind the degree centrality of a node \citep{Freeman1978} is to quantify the importance or centrality of this node based on the number of edges which are incident to it in the graph, i.e. if a node's degree is higher than another's, then that node is more influential or central than the other. As mentioned above, by ``influential" or ``central" we mean the capacity of spreading and/or receiving information for/from other nodes. Formally, the \textit{degree centrality} of a node $i$ is simply its degree divided by $(n-1)$, which is the maximum number of nodes that $i$ can be adjacent to, i.e. it is the proportion of nodes that are adjacent to $i$:
\begin{equation}\label{eq:degree-centrality}
   C_{dg}(i) = \frac{\deg(i)}{n-1}.
\end{equation}

For directed graphs, we have two different centralities, namely, the in-degree centrality and the out-degree centrality. The \textit{in-degree centrality} is the proportion of nodes which arrives at $i$:
\begin{equation}\label{eq:in-degree-centrality}
   C^{-}_{dg}(i) = \frac{\deg^{-}(i)}{n-1}.
\end{equation}

Similarly, the \textit{out-degree centrality} is the proportion of nodes which leaves $i$:
\begin{equation}\label{eq:out-degree-centrality}
   C^{+}_{dg}(i) = \frac{\deg^{+}(i)}{n-1}.
\end{equation}

Moreover, for a digraph without double edges, it's clear that $C_{dg}(i) = C^{-}_{dg}(i) + C^{+}_{dg}(i)$. On the other hand, if double edges exist in the digraph, then $C_{dg}(i) = C^{-}_{dg}(i) + C^{+}_{dg}(i) - \deg^{\pm}(i)/(n-1)$, where $\deg^{\pm}(i)$ is the number of double edges incident to $i$. The weighted versions of each of the previous degree centralities are obtained by replacing the degrees with their respective weighted formulas (Definition \ref{def:weig-degree}).


\bigskip
\noindent \textbf{Closeness Centrality} (\textit{local}):
Closeness centrality, originally proposed by Bavelas \citep{Bavelas}, is a measure that quantifies how relatively close a node is to all other nodes in the graph; that is, it identifies the nodes that can spread/receive information in an efficient way. Mathematically, the \textit{closeness centrality} of a node $i$ can be defined as the inverse of the sum of the shortest paths between $i$ and every other node in the graph, i.e.
\begin{equation}\label{eq:closeness-centrality}
Cl(i) = \frac{1}{\sum_{\substack{j \in V \\ j\neq i}} d_{ij}}.
\end{equation}

It's important to note that closeness centrality is defined for connected graphs. Thus, if $N$ is the order of the giant component, we can define the \textit{normalized closeness centrality} by multiplying the formula (\ref{eq:closeness-centrality}) by $(N-1)$:
\begin{equation}\label{eq:normal-closeness-centrality}
Cl(i) = \frac{N-1}{\sum_{\substack{j \in V \\ j\neq i}} d_{ij}}.
\end{equation}

For the directed case, we simply replace the distance $d_{ij}$ with the directed distance $\vec{d}_{ij}$, and for the weighted case, we replace $d_{ij}$ with $d^{\omega}_{ij}$.

\bigskip
\noindent \textbf{Harmonic Centrality} (\textit{local}):
As mentioned previously, the closeness centrality is not defined for disconnected graphs. To overcome this problem, the \textit{harmonic centrality} \citep{Boldi} of a node $i$ was introduced as the sum of the inverse of all distances between $i$ and all other nodes, i.e.
\begin{equation}\label{eq:harmonic-centrality}
HC(i) = \sum_{\substack{j \in V \\ j \neq i}} \frac{1}{d(i,j)},
\end{equation}

\noindent where the convention $1/\infty = 0$ is adopted. For the directed case, we simply replace $d_{ij}$ with $\vec{d}_{ij}$, and for the weighted case, we replace $d_{ij}$ with $d^{\omega}_{ij}$.


\bigskip
\noindent \textbf{Betweenness Centrality} (\textit{local}): 
Betweenness centrality \citep{Freeman1978} is a measure of how much control or influence a node has over the information flow in the graph. It takes into account the proportion between all geodesic paths that pass through a specific node and all other geodesic paths between all other nodes in the graph. Formally, the \textit{betweenness centrality} of a node $i$ is defined as 
\begin{equation}\label{eq:betweenness-centrality}
B(i) =  \sum_{\substack{h, j \in V\\h\neq j, h\neq i, j\neq i }} \frac{\rho_{hj}(i)}{\rho_{hj}},
\end{equation}

\noindent where $\rho_{hj}(i)$ is the number of geodesic paths from $h$ to $j$ that pass through $i$, and $\rho_{hj}$ is the total number of geodesic paths from $h$ to $j$.

Also, note that betweenness centrality is defined for connected graphs, and we can define the \textit{normalized betweenness centrality} by multiplying Equation (\ref{eq:betweenness-centrality}) by the normalization term $2/(N-1)(N-2)$, where $N$ is the number of nodes in the giant component:
\begin{equation}\label{eq:normal-betweenness-centrality}
B(i) = \frac{2}{(N-1)(N-2)} \sum_{\substack{h, j \in V\\h\neq j, h\neq i, j\neq i }} \frac{\rho_{hj}(i)}{\rho_{hj}}.
\end{equation}

For directed graphs, we consider $\vec{\rho}_{hj}(i)$ as the number of directed geodesic paths from $h$ to $j$ that pass through $i$ and $\vec{\rho}_{hj}$ the total number of directed geodesic paths from $h$ to $j$. Moreover, the normalization term for the directed case is $1/(N-1)(N-2)$. Similarly, for weighted graphs, we define $\rho^{\omega}_{hj}(i)$ and $\rho^{\omega}_{hj}$ in an analogous way as before, but considering the weighted geodesic paths.

Over the years, several variants of betweenness centrality have been proposed, such as the flow betweenness centrality, the random walk betweenness centrality, and the communicability betweenness centrality (see \citep{Estrada-2011}, chap. 7).


\bigskip
\noindent \textbf{Reaching Centrality} (\textit{local/global}):
Before introducing the measure known as ``reaching centrality," let's introduce the concept of hierarchy in networks \citep{Mones}. In the literature, there are three main types of hierarchies, namely: \textit{order hierarchy}, \textit{nested hierarchy}, and \textit{flow hierarchy}. Order hierarchy is described as the presence of an order in the elements of a set, that is, it is equivalent to an ordered set (e.g., an order in the vertex set of a network); nested hierarchy is described as the presence of higher and lower level components in the network, such that higher level components comprise of and include lower level components (e.g., cliques formed by smaller cliques); finally, flow hierarchy comprises the influence that a node has on other nodes, thus these other nodes are considered to be at a lower level than the node that influences them.
 
The concept of reaching centrality of a node, as proposed in \citep{Mones}, quantifies the concept of flow hierarchy in a digraph, i.e. how a node influences the flow of information through the digraph. Let $G$ be a digraph and let $r_{G}(i)$ be the number of nodes in $G$ that are reachable from the node $i$. The \textit{local reaching centrality} of $i$ is defined as the proportion of the nodes which are reachable from $i$, i.e.
\begin{equation}\label{eq:local-reaching-centrality}
C_{R}(i) = \frac{r_{G}(i)}{n-1}.
\end{equation}

Let $C_{R}^{max} = \max_{i \in V} \{ C_{R}(i) \}$ be the maximum local reaching centrality obtained in $G$. We define the \textit{global reaching centrality} as the average of the difference between $C_{R}^{max}$ and $C_{R}(i)$, over all nodes in the digraph, i.e.

\begin{equation}\label{eq:global-reaching-centrality}
GRC = \frac{\sum_{i \in V} [C_{R}^{max} - C_{R}(i)]}{n-1}.
\end{equation}

For a weighted digraph, there are variants of the local and global reaching centralities introduced in \citep{Mones}. However, here we will remain with the same formulas as we are solely considering the count of the reachable nodes.

\subsection{Measures of Segregation}
\label{subsec:segration-measures}

By ``measures of segregation," we mean to specify measures that attempt to quantify the tendency of nodes to segregate into clusters, communities, or modules, which are different ways to refer to densely connected neighborhoods. In the literature, there are a variety of such measures, but here we will focus on three of them, namely: \textit{clustering coefficient}, \textit{rich-club coefficient}, and \textit{local efficiency}.

\bigskip
\noindent \textbf{Clustering Coefficient} (\textit{local/global}): 
The clustering coefficient, as introduced in \citep{Watts}, is a measure that tries to quantify the tendency of nodes to form clusters in the network. Formally, the \textit{clustering coefficient} (or \textit{agglomeration coefficient}) (local) of a node $i$ is defined as the ratio between the number of triangles containing $i$, denoted by $t(i)$, and the maximum number of edges between its neighbors (equal to $\deg(i)(\deg(i) - 1)/2$), i.e.
\begin{equation}\label{eq:clustering-coef}
 C(i) =  \frac{2t(i)}{\deg(i)(\deg(i) - 1)}.
\end{equation}

The quantity $t(i)$ can be expressed in terms of the adjacency matrix entries as
\begin{equation}\label{eq:number-triangles}
t(i) = \frac{1}{2} \sum_{j,h\in V} a_{ij}a_{ih}a_{jh}.
\end{equation}

The formula (\ref{eq:clustering-coef}) is also known as \textit{local clustering coefficient}. Furthermore, we can define the \textit{average clustering coefficient} as the sum of $C(i)$ over all nodes $i \in V$ normalized by the order of the graph, i.e.
\begin{equation}\label{eq:avg-clustering-coef}
\bar{C} = \frac{1}{n} \sum_{i \in V} C(i) = \frac{1}{n} \sum_{i \in V} \frac{2t(i)}{\deg(i)(\deg(i) - 1)}.
\end{equation}

For directed graphs, Fagiolo \citep{Fagiolo} proposed a variant of the formula (\ref{eq:clustering-coef}) by considering directed triangles and the total number of arcs between the neighbors of $i$ (excluding the double edges), i.e.
\begin{equation}\label{eq:dir-clustering-coef}
\vec{C}(i) =  \frac{\vec{t}(i)}{\deg^{tot}(i)(\deg^{tot}(i) - 1) - 2\deg^{\pm}(i)},
\end{equation}

\noindent where $\deg^{tot}(i) = \deg^{-}(i) + \deg^{+}(i)$, $\deg^{\pm}(i)$ is the number of double edges incident to $i$, and $\vec{t}(i)$ is the number of directed triangles containing $i$ as one of their nodes, that is,
\begin{equation}\label{eq:dir-number-triangles}
\vec{t(i)}  = \frac{1}{2} \sum_{j,h\in V} (a_{ij} + a_{ji})(a_{ih} + a_{hi})(a_{jh} + a_{hj}).
\end{equation}

Analogously to the formula (\ref{eq:avg-clustering-coef}), we define the \textit{average directed clustering coefficient} as the sum of $\vec{C}(i)$ over all nodes $i \in V$ normalized by the order of the graph.

A weighted version of the clustering coefficient was proposed by  Onnela et al. \citep{Onnela} by defining the quantity $t^{\omega}(i)$ as the sum of the geometric mean of the scaled weights of the triangles (triangle intensities):
\begin{equation}\label{eq:weig-number-triangles}
t^{\omega}(i) = \frac{1}{2} \sum_{j,h\in V} ( \hat{\omega}_{ij}\hat{\omega}_{ih}\hat{\omega}_{jh})^{1/3}.
\end{equation}

\noindent where $\hat{\omega}_{ij} = \omega_{ij}/\max(\omega_{ij})$, and then replacing $t(i)$ with $t^{\omega}(i)$ in the formula (\ref{eq:clustering-coef}).

\bigskip
\noindent \textbf{Rich-Club Coefficient} (\textit{local}):
The rich-club coefficient, first proposed by Zhou and Mondragon \citep{Zhou}, is a measure that attempts to quantify the tendency of nodes with high degrees (called the ``rich nodes") to be densely connected to each other. Formally, let $N_{>k}$ be the number of nodes with degree $> k$, and let $E_{>k}$ be the number of edges among the nodes with degree $> k$, the \textit{rich-club coefficient} is defined as the ratio:
\begin{equation}\label{eq:rich-club-coef}
\phi(k) = \frac{2E_{>k}}{N_{>k}(N_{>k} - 1)}.
\end{equation}

For directed graphs, Smilkov and Kocarev \citep{Smilkov} defined the \textit{in-degree rich-club coefficient}, which considers the in-degree instead of the undirected degree, i.e.
\begin{equation}\label{eq:in-rich-club-coef}
\phi^{in}(k) = \frac{E^{in}_{>k}}{N^{in}_{>k}(N^{in}_{>k} - 1)},
\end{equation}

\noindent where $N^{in}_{>k}$ is the number of nodes $i$ having $\deg^{-}(i) > k$, and $E^{in}_{>k}$ is the number of directed edges connecting those $N^{in}_{>k}$ nodes. 

The formula (\ref{eq:rich-club-coef}) can also be defined using the out-degree instead of the in-degree, and in this case, we have the \textit{out-degree rich-club coefficient}:
\begin{equation}\label{eq:out-rich-club-coef}
\phi^{out}(k) = \frac{E^{out}_{>k}}{N^{out}_{>k}(N^{out}_{>k} - 1)},
\end{equation}

\noindent where $N^{out}_{>k}$ is the number of nodes $i$ having $\deg^{+}(i) > k$, and $E^{out}_{>k}$ is the number of directed edges connecting those $N^{out}_{>k}$ nodes.

\bigskip
\noindent \textbf{Local Efficiency} (\textit{global}):
Let $G(i)$ denote the subgraph of $G$ formed by the open neighborhood of a vertex $i$. Latora and Marchiori \citep{Latora} defined the \textit{local efficiency} of $G$ as the average efficiency of the local subgraphs $G(i)$, i.e.
\begin{equation}\label{eq:local-efficiency}
 E_{loc}(G) = \frac{1}{n} \sum_{i \in V} E_{glob}(G(i)).
\end{equation}

The \textit{directed} and \textit{weighted local efficiency} are is obtained by replacing $E_{glob}(G(i))$ with $\vec{E}_{glob}(G(i))$ and with $E_{glob}^{\omega}(G(i))$, respectively.

\subsection{Entropy Measures}
\label{subsec:entropy-measures}

The concept of \textit{entropy} first appeared in the study of thermodynamic systems when, in an 1865 study, Rudolf Clausius coined the term to mean ``transformation content" in the sense of availability/unavailability of energy in a system, and it was further studied in other contexts \citep{Gleick}. In his 1948 paper \citep{Shannon}, Claude Shannon, studying transmission of signals in communication systems, proposed an idea of ``information entropy" (which came to be known as \textit{Shannon entropy}) as a measure of \textit{uncertainty} in the sense of how much ``randomness" a signal carries or ``the `amount of surprise' a message source has for a receiver" \citep{Mitchell2009}.

Here, we are interested in the concept of entropy associated with networks. The \textit{graph entropy} or the \textit{topological information content} of a graph can be interpreted as a type of \textit{structural entropy}, and it was first proposed by Rashevsky \citep{Rashevsky} as the Shannon entropy of some probability distributions obtained from the symmetric structure of the vertices of a graph, which can also be computed in terms of the orbits of its automorphism group \citep{Mowshowitz}. Nonetheless, over the years, several new approaches to defining graph entropy have been proposed \citep{Dehmer2011}, for example, entropies based on the degree of the nodes \citep{Cao, Wang2006}, entropies based on the eigenvalues of the adjacency matrix \citep{Sivakumar, Sun}, and entropies based on the Laplacian eigenvalues \citep{Passerini, Ye}.

In what follows, we discuss the \textit{entropy of the degree distribution} for undirected and directed graphs.

\bigskip
\noindent \textbf{Entropy of the Degree Distribution} (\textit{global}):
The \textit{entropy of the degree distribution} (or \textit{degree distribution entropy}) \citep{Wang2006} is a measure that tries to capture the heterogeneity of the edge distribution in a given graph $G$. It is defined as the Shannon entropy of the node degree distributions (\ref{eq:prob-degree}), i.e.
\begin{equation}\label{eq:distribution-entropy}
H(G) = - \sum_{k=1}^{n-1}  p(k) \log_{2} p(k).
\end{equation}

In the case where $G$ is a directed graph, we have two possible definitions: the \textit{in-degree distribution entropy}  and the \textit{out-degree distribution entropy}. These entropies are respectively defined by
\begin{equation}\label{eq:in-distribution-entropy}
H^{in}(G) = - \sum_{k=1}^{n-1}  p^{in}(k) \log_{2} p^{in}(k),
\end{equation}

\begin{equation}\label{eq:out-distribution-entropy}
H^{out}(G) = - \sum_{k=1}^{n-1}  p^{out}(k) \log_{2} p^{out}(k),
\end{equation}

\noindent where $p^{in}(k)$ are the node in-degree distributions (\ref{eq:in-degree-dist}) and $p^{out}(k)$ are the node out-degree distributions (\ref{eq:out-degree-dist}).

Notice that if all nodes of a graph or digraph have the same degree, or same in-degree, or same out-degree, the respective entropies reach their minimum, i.e. $H(G)=H^{in}(G)=H^{out}(G)=0$ (with the convention $0\log_{2} 0 = 0$), and reach their maximum if $p(k)=p^{in}(k)=p^{out}(k)=1/(n-1)$, for all $k=1,2,...,n-1$. Thus, we can roughly say that these entropy measures quantify the ``degree of disorder" or ``degree of randomness" (\textit{in relation to} the inner or outer flux in the case of $H^{in}$ or $H^{out}$, respectively) of a network since they produce higher values for networks that are closer to a random model and lower values for those that are closer to a regular model (see Section \ref{sec:random-graphs} for an overview of random graph models).

\subsection{Spectrum-Related Measures}
\label{subsec:spectrum-measures}

In this last part, we present some measures that are related to the spectra of the adjacency and Laplacian matrices of an undirected or directed graph, such as \textit{graph energy}, \textit{Katz centrality}, \textit{eigenvector centrality}, and \textit{spectral entropy}.

\bigskip
\noindent \textbf{Graph Energy} (\textit{global}):
Gutman \citep{Gutman} originally defined the graph energy of an undirected graph as the sum of the absolute values of the eigenvalues (with multiplicities) of its adjacency matrix. As observed by Nikiforov \citep{Nikiforov}, the \textit{trace norm} of a matrix, denoted by $|| \cdot ||_{*}$,  is the sum of its singular values, and as we observed earlier, for a real symmetric matrix, its singular values are equal to the absolute value of its eigenvalues, thus the energy of a graph can be defined as the trace norm of its adjacency matrix.

In effect, the trace norm can be used to define energy for digraphs as well. Arizmendi and Arizmendi \citep{Arizmendi} defined the energy of a directed graph as the trace of the matrix $|A|^{+} = (AA^{T})^{1/2}$ (equivalently for $|A|^{-} = (A^{T}A)^{1/2}$), which is equal to the sum of the singular values of $A$ (with their respective multiplicities), which in turn is equal to the trace norm as discussed previously. Therefore, for both types of graphs, undirected and directed, we define the \textit{graph energy} as
\begin{equation}\label{eq:graph-energy}
\varepsilon(G) = ||A||_{*} = \mbox{Tr}(|A|^{+}) = \mbox{Tr}(|A|^{-}) = \sum^{n}_{i=1} \sigma_{i}.
\end{equation}

In particular, if $G$ is an undirected graph, we have $\varepsilon(G) = \sum^{n}_{i=1} |\lambda_{i}|$. Graph energy may be seen as a measure of graph connectivity \citep{Shatto}.

\bigskip
\noindent \textbf{Katz Centrality} (\textit{local}):
Katz centrality, first proposed by L. Katz \citep{Katz}, is based on the idea that the importance of a node is not only influenced by its immediate neighbors but also by nodes that are farther away in the graph. Unlike the degree centrality of a node $i$, for example, which takes into account solely the influence of its nearest-neighbors (walk of length $1$), the Katz centrality of $i$ also takes into account all the other nodes that are connected to $i$ by a walk of length $> 1$. Formally, we can take these nodes into account by considering the series $A^{0} + A^{1} + ... + A^{k} + ...$, since the $(i,j)$-entries of $A^{k}$ are the number of $(i,j)$-walks of length $k$ (Proposition \ref{prop:power-adj}). However, this series might diverge, then we need to introduce an \textit{attenuation factor} $\alpha \in \mathbb{R}_{+}$ in such a way that the series converges. This leads us to the definition of the \textit{Katz centrality} of a node $i$:

\begin{equation}\label{eq:katz-centrality}
K(i) = \Bigg[ \Bigg( \sum_{k=0}^{\infty} \alpha^{k} A^{k} \Bigg) |1 \rangle \Bigg]_{i},
\end{equation}

\noindent where $|1 \rangle = (1,...,1)^{T}$ and the subscript $i$ in the brackets represents the $i$-th position of the vector inside the brackets. In order to guarantee the convergence of (\ref{eq:katz-centrality}), the attenuation factor must be $\alpha \neq 1/\lambda_{1}$, where $\lambda_{1}$ is the largest eigenvalue of $A$. In this case, Equation (\ref{eq:katz-centrality}) can be written as 
\begin{equation}\label{eq:katz-centrality-mat}
K(i) = \big[ (I_{n} - \alpha A)^{-1} |1 \rangle \big]_{i},
\end{equation}

\noindent where $I_{n}$ is the $n \times n$ identity matrix. Typically, the attenuation factor is chosen to be $\alpha < 1/\lambda_{1}$. For directed graphs, we consider either the arcs arriving at $i$ ($K^{in}(i)$) or the arcs leaving $i$ ($K^{out}(i)$), i.e.
\begin{equation}\label{eq:dir-katz-centrality-in}
K^{in}(i) = \big[ \langle 1| (I_{n} - \alpha A)^{-1}  \big]_{i},
\end{equation}
\begin{equation}\label{eq:dir-katz-centrality-out}
K^{out}(i) = \big[ (I_{n} - \alpha A)^{-1} |1 \rangle  \big]_{i}.
\end{equation}

\bigskip
\noindent \textbf{Eigenvector Centrality} (\textit{local}):
The \textit{eigenvalue centrality} \citep{Bonacich} of a node $i \in V$ is defined as the $i$-th entry of the eigenvector associated with the largest eigenvalue ($\lambda_{1}$) of the adjacency matrix $A$ of $G$, i.e.
\begin{equation}\label{eq:eigenvector-centrality}
C_{e}(i) = \big( v_{1} \big)_{i} = \Bigg( \frac{1}{\lambda_{1}} A v_{1} \Bigg)_{i}.
\end{equation}

If $G$ is a digraph, the adjacency matrix $A$ might be non-symmetric ($A^{T} \neq A$), then we consider the right eigenvector ($Av = \lambda_{1}v$) or the left eigenvector ($A^{T}v = \lambda_{1}v$) associated with $\lambda_{1}$ in formula (\ref{eq:eigenvector-centrality}), and then we can have right and left eigenvector centralities associated with the node $i$. 

Moreover, the Perron-Frobenius theorem (Theorem \ref{theo:perron-frobenius}) guarantees that the eigenvector associated with $\lambda_{1}$ is non-negative, thus the eigenvalue centrality of every node is non-negative. 

The eigenvalue centrality can be seen as a modification of the Katz centrality, as demonstrated in \citep{Estrada-2015}, and it can be interpreted as a measure that tries to quantify the importance of a node according to the importance of its neighbors.



\bigskip
\noindent \textbf{Spectral Entropy} (\textit{global}):
In Subsection \ref{subsec:entropy-measures}, we have already discussed the concept of entropy and presented graph entropy based on the degree distribution. Now we define the \textit{spectral entropy} of a graph or digraph $G$ based on its Laplacian eigenvalues as follows. Let $L(G)$ be the Laplacian matrix of $G$ with eigenvalues $\{\mu_{i}\}_{i}$ (with multiplicities), and let
\begin{equation}\label{eq:spectral-prob}
p(\mu_{i}) = \frac{\mu_{i}}{\sum_{i} \mu_{i}}
\end{equation}

\noindent be the ``eigenvalue probabilities," i.e. the contribution of $\mu_{i}$ in the Laplacian spectrum (Proposition \ref{prop:eigenvalues} guarantees that $\mu_{i} \ge 0$, $\forall i$). Assuming the conventions $0/0=0$ and $0 \log_{2} 0 = 0$, we define the \textit{spectral entropy} of $G$ as the Shannon entropy of the eigenvalue probabilities, i.e.
\begin{equation}\label{eq:spectral-graph-entropy}
S(G) = - \sum_{i} p(\mu_{i}) \log_{2} p(\mu_{i}).
\end{equation}

\smallskip

Different definitions of spectral entropy associated with digraphs were presented by Sun et al. \citep{Sun} and Ye et al. \citep{Ye}.


\section{Graph Similarity}
\label{sec:graph-similarity}

In this section, we discuss methods and algorithms that are used to quantify how similar (or dissimilar) two graphs are, i.e. quantitative approaches to the \textit{graph similarity comparison problem}. In the previous section, we presented several measures capable of characterizing topological aspects of graphs; now, we discuss how to use them to compare graphs and, in addition, we present distance-based algorithms that try to quantify the differences between two graphs through a ``similarity" score. In what follows, all discussion applies to both directed and undirected cases.

The graph similarity comparison problem can be summarized in the following question: given two graphs, how similar are they? Despite the efforts employed in the development of methods to evaluate the similarity of graphs, there is still no one capable of satisfactorily answering this question \citep{Roy}. Also, the meaning of ``similarity"  may vary depending on the application. For example, in different contexts, we may be interested in answering one of the following questions: Are the graphs copies of each other? What changes should we apply in the graphs to transform one into the other? If we allow the graph to change over time, can we assess whether two graphs were generated from a common ancestor graph (by the successive application of evolutionary rules)? To tackle these and other questions, several methods were proposed \citep{Borgwardt, Wills, Zager}.

Mheich et al. \citep{Mheich} described two major classes of methods of graph comparison, namely:

\begin{itemize}
\item \textbf{Statistical comparison methods:} These methods consist of applying (local and global) measures of topological characterization (e.g., measures of centrality, segregation, integration, spectral measures, etc.)  in different groups of graphs in order to compare them by using statistical tests. Furthermore, for real-world networks, this comparison can be done in two ways: either by comparing them with equivalent random networks (null models) or by comparing them with other correlated groups of real-world networks. Other methods include graph correlation \citep{Fujita2016} and analysis of graph variability \citep{Fujita2017}.

\item \textbf{Distance-based comparison algorithms:} These algorithms produce ``similarity" scores as a result of comparing two graphs. Typically, these scores are normalized (i.e., they output $1$ if the two graphs are totally different from each other and output $0$ if they are totally similar). Among these algorithms are graph/subgraph isomorphisms, edit distances, graph kernels,  and structure distances (distances based on the presence/absence of edges, cliques, quasi-cliques, or other subgraphs).
\end{itemize}

Moreover, we note that when a statistical comparison analysis is performed using local/global measures, we may use the nomenclatures \textit{node-wise analysis}/\textit{global-level analysis}, respectively.

In what follows, we present some distance-based comparison algorithms that are of interest in this text.

\subsection{Distance-Based Comparison Algorithms}
\label{sec:Distance-Based-Comparison}

\bigskip
\noindent \textbf{Graph Edit Distance:}
The graph edit distance (GED) is one of the most common approaches in determining the similarity between two graphs. As observed in \citep{Gao}, a graph can be transformed into another by a finite sequence of graph editing operations (node/edge insertion, node/edge deletion), and the GED is defined as the minimum cost of these editing operations, for some suitable cost function. Formally, we can express the GED between two graphs $G_{1}$ and $G_{2}$ as 
\begin{equation}\label{eq:GED}
d_{GED}(G_{1}, G_{2}) = \min \sum_{k=1}^{N_{op}} c(e_{k}),
\end{equation}

\noindent where $c(e_{k})$ is the cost of the $k$-th editing operation $e_{k}$, and $N_{op}$ is the total number of editing operations.

\bigskip
\noindent \textbf{Graph Kernels:}
The fundamental idea behind graph kernels \citep{Borgwardt} is to build feature vectors from graphs by mapping these features into a Hilbert space\footnote{Hilbert space is a vector space provided with an inner product such that it is complete with respect to the induced norm.} $\mathcal{H}$ (feature space), and then define the kernel function as the inner product of this space. 

Let $G_{1}$ and $G_{2}$ be two graphs and let $\phi(G_{1}), \phi(G_{2}) \in \mathcal{H}$ be their feature vectors. Let $ \langle \cdot, \cdot \rangle$ denote the inner product of $ \mathcal{H}$. The graph kernel is 
\begin{equation}\label{eq:graph-kernel}
k(G_{1}, G_{2}) = \langle \phi(G_{1}), \phi(G_{2}) \rangle.
\end{equation}

The value $k(G_{1}, G_{2})$ is the similarity score used to compare the graphs. There are several ways to extract feature vectors from graphs and then define a kernel, for example, by counting the matching random walks between two graphs (random walk graph kernels) or by comparing all shortest path lengths in two graphs (shortest path kernel) \citep{Borgwardt, Vishwanathan}.

\smallskip



\bigskip
\noindent \textbf{Structure Distance:}
Similarly to the case of graph kernels, the basic idea behind defining a structure distance is to extract structure vectors (feature vectors) from graphs, based, for example, on the counting of meaningful substructures (subgraphs/ motifs), and embed them into a metric space. Since we are considering all graphs finite, here we consider the $n$-dimensional real metric space $(\mathbb{R}^{n}, d_{p})$ whose metric $d_{p}$ is induced by the $p$-norm $|| \cdot ||_{p}$, $p \in [1, \infty)$.

Let  $v^{1} = (v^{1}_{1},...,v^{1}_{n}), v^{2} = (v^{2}_{1},...,v^{2}_{n})\in \mathbb{R}^{n}$ be structure vectors associated with $G_{1}$ and $G_{2}$, respectively. We can define structure distances based on the $p$-norm as
\begin{equation}\label{eq:structure-dist}
d_{str}^{p}(G_{1}, G_{2}) = ||v^{1} - v^{2}||_{p} = \Bigg( \sum_{i=1}^{n} |v^{1}_{i} - v^{2}_{i}|^{p} \Bigg)^{1/p}.
\end{equation}

When $p=2$, $d_{str}^{p}$ represents the Euclidean distance. Also, if all entries of the structure vectors are considered non-negative, and assuming that at least one of the vectors is different from the null vector, we can normalize Equation (\ref{eq:structure-dist}) by dividing it by $||v^{1}||_{p} + ||v^{2}||_{p}$.


\section{Random Graphs}
\label{sec:random-graphs}

This section is based on the books \citep{Newman, Watts-small}, however, the fundamental concepts of random graphs presented here can be found in the classic book by Bollobás \citep{Bollobas-rand}. Moreover, an algorithmic approach to random graphs can be found in \citep{Joyner}.

\subsection{Erdős-Rényi Model}
\label{sec:er}

Roughly speaking, a random graph of order $n$ is nothing more than a set of $n$ vertices and a set of edges connecting pairs of these vertices in some random way. Almost all of random graph theory is concerned either with the analysis of the \textit{Erdős-Rényi $G(n, M)$ random models} (in homage to the contributions of these authors in the study of this model \citep{Erdos}), or with the analysis of the \textit{Erdős-Rényi $G(n, p)$ random models} (also called \textit{binomial random models} or \textit{Gilbert models}, since they were first proposed by Edgar Gilbert \citep{Gilbert}), and the relationship between these two models. Below, their formal definitions are presented.

\bigskip
\noindent \textbf{The G(n, M) model:} The graph $G(n, M)$ is generated as follows: given a set of $n$ (fixed) vertices, we choose a (fixed) number $M$ of distinct pairs of vertices uniformly at random, among the ${ n \choose 2}$ possible pairs, and we connect each pair with an edge. That is, there are ${{n \choose 2} \choose M}$ ways to place the $M$ edges, and we simply choose any of them with equal probability.

Strictly speaking, a random graph model is defined not just as a single randomly generated graph but as a set of graphs, that is, a probability distribution over the possible graphs. Therefore, the model $G(n, M)$ is correctly defined as a probability distribution $P(G)$ over all simple graphs $G$, with $n$ vertices and $M$ edges, such that
\begin{equation}
P(G) = \frac{1}{{{n \choose 2} \choose M}}.
\end{equation}

\begin{observation}
A directed version of the $G(n, M)$ model can be generated through a straightforward adaptation of the previously described method, i.e., given $n$ fixed vertices and a positive integer $N$, we add an arc from a vertex to another vertex randomly until we get $M$ arcs.
\end{observation}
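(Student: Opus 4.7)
The plan is to justify the observation by making the adaptation explicit and checking that it defines a legitimate probability distribution over simple digraphs, fully analogous to the $G(n,M)$ construction given just above it.

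First, I would fix a labeled vertex set $V$ with $|V|=n$. Since a simple digraph has no loops and no multiple arcs, its arc set is a subset of the $n(n-1)$ ordered pairs $(v,u)$ with $v,u \in V$ and $v \neq u$. The candidate pool of possible arcs therefore has size $n(n-1)$, in contrast to the $\binom{n}{2}$ unordered pairs used in the undirected case. This single replacement is the whole content of the ``straightforward adaptation.''

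Next, I would choose $M$ of these arcs uniformly at random, from the $\binom{n(n-1)}{M}$ possible $M$-subsets (requiring $0 \le M \le n(n-1)$). This yields a sample space consisting of all simple digraphs on $V$ with exactly $M$ arcs, equipped with the uniform probability distribution
\[
P(G) \;=\; \frac{1}{\binom{n(n-1)}{M}},
\]
mirroring the undirected formula. To finish, I would verify that the algorithmic description (``insert arcs one at a time, drawing uniformly from the unused ordered pairs until $M$ arcs are placed'') produces this same uniform law, by the standard argument that each $M$-subset of arcs corresponds to exactly $M!$ insertion orderings, so the induced measure is uniform on $M$-subsets.

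The statement is essentially constructive and I do not expect a substantive obstacle; the only subtlety worth flagging is that, unlike in the undirected case, a simple digraph may contain both arcs $(v,u)$ and $(u,v)$ simultaneously, so ``double-edges'' occur with positive probability. This is precisely why the relevant pool size is $n(n-1)$ rather than $\binom{n}{2}$, and it matches the density convention for digraphs already adopted after Proposition \ref{prop:wcc-scc}. I would additionally flag what appears to be a typographical slip in the observation: the parameter denoted $N$ should presumably read $M$, matching its role in the preceding definition of the undirected $G(n,M)$ model.
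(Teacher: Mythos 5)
Your proposal is correct and matches the intent of the paper, which states this observation without proof: the only substantive change from the undirected construction is replacing the pool of $\binom{n}{2}$ unordered pairs with the $n(n-1)$ ordered pairs, yielding the uniform distribution $P(G) = 1/\binom{n(n-1)}{M}$ over simple labeled digraphs with $M$ arcs, consistent with the digraph density convention used elsewhere in the text. Your flag that the parameter written as $N$ in the observation should read $M$ is also well taken.
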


\bigskip
\noindent \textbf{The G(n, p) model:} $G(n, p)$ is the graph with a set of $n$ (fixed) vertices, in which all possible ${n \choose 2}$ edges exist with probability $0 \le p \le 1$. That is, the (fixed) probability of placing an edge between each distinct pair of vertices is $p$. In this graph, the number $M$ of edges is not fixed. Formally, the model $G(n, p)$ is the set of simple graphs with $n$ vertices in which each graph $G$ appears with probability
\begin{equation}
P(G) = p^M (1-p)^{{n \choose 2}-M},
\end{equation}
\noindent where $M$ is the number of edges in the graph.

\begin{observation}
Similarly to the $G(n, M)$ model, a directed version of the $G(n, p)$ model can be obtained through a straightforward adaptation of the previous method: given $n$ fixed vertices, we add an arc from a vertex to another randomly with a given probability $p$.
\end{observation}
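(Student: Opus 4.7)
The plan is to verify that the described construction actually yields a well-defined probability distribution over the set of simple digraphs on $n$ labeled vertices, in direct analogy with the undirected $G(n,p)$ model discussed immediately above. First I would fix the sample space: since we are dealing with simple digraphs (no loops, and at most one arc per ordered pair by Definition \ref{def:digraph} and the convention adopted thereafter), the set of \emph{potential} arcs on a vertex set $V$ with $|V|=n$ is exactly the set of ordered pairs $(v,u) \in V \times V$ with $v \neq u$, whose cardinality is $n(n-1)$ rather than $\binom{n}{2}$. The sample space $\Omega_n$ is then the collection of all simple digraphs on these $n$ labeled vertices, which is in bijection with the power set of the $n(n-1)$ potential arcs and therefore has cardinality $2^{n(n-1)}$.

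Next I would formalize the generating procedure: independently for each ordered pair $(v,u)$ with $v \neq u$, include the arc $(v,u)$ in $E$ with probability $p$, and omit it with probability $1-p$. By independence across the $n(n-1)$ Bernoulli trials, the probability assigned to any specific simple digraph $G = (V,E) \in \Omega_n$ with $|E|=M$ arcs is
\begin{equation}
P(G) = p^{M}(1-p)^{n(n-1)-M}.
\end{equation}

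To conclude that this defines a genuine probability distribution, I would verify normalization by grouping digraphs according to their arc count: the number of simple digraphs on $n$ labeled vertices having exactly $M$ arcs is $\binom{n(n-1)}{M}$, so a direct application of the binomial theorem yields
\begin{equation}
\sum_{G \in \Omega_n} P(G) = \sum_{M=0}^{n(n-1)} \binom{n(n-1)}{M} p^{M}(1-p)^{n(n-1)-M} = \bigl(p + (1-p)\bigr)^{n(n-1)} = 1.
\end{equation}

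The only conceptual point requiring care, and what I would flag as the main subtlety rather than obstacle, is the replacement of $\binom{n}{2}$ by $n(n-1)$: the asymmetry of arcs means that the ordered pairs $(v,u)$ and $(u,v)$ correspond to \emph{distinct} potential arcs and are sampled independently, so the model naturally admits bidirectional edges (double-edges in the terminology introduced after Definition \ref{def:digraph}) with probability $p^2$ between any two vertices. Once this is made explicit, the construction is, as the observation claims, a straightforward adaptation of the undirected $G(n,p)$ model, and all analogous probabilistic identities (e.g., expected number of arcs $p\cdot n(n-1)$, expected in- and out-degrees $p(n-1)$) follow immediately from linearity of expectation.
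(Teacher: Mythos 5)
Your proposal is correct and takes essentially the same route as the paper: the paper states this observation without any formal proof, and your formalization is exactly the direct analogue of the paper's own definition of the undirected $G(n,p)$ model, with $\binom{n}{2}$ replaced by $n(n-1)$ and $P(G) = p^{M}(1-p)^{n(n-1)-M}$. Your normalization check and the remark that bidirectional edges (double-edges) arise with probability $p^2$ are consistent with the paper's conventions (simple digraphs exclude loops and multiple parallel arcs, but permit double-edges), so nothing in the argument conflicts with the text.
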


\subsection{$k$-Regular Model}
\label{sec:kr}

There are several models for generating $k$-regular random graphs uniformly at random, that is, random graphs whose nodes have a given fixed degree $k$. Perhaps the most common approach is the \textit{pairing model}. In what follows, we present the algorithm behind this method \citep{Steger}.

Given a set of $n$ nodes, create $n$ sets with $kn$ elements. Choose a pair of elements at random from these sets. Then create an edge $(i,j)$ in the graph if there is a pair composed of elements of the $i$'th and $j$'th sets. We disregard equal pairs and pairs of type the $(i,i)$ (self-loops). The resulting random graph is $k$-regular.

Moreover, we can generate a $k$-regular random digraph by considering $k = \deg^{in}(v)$ $+ \deg^{out}(v)$, for all nodes $v$ in the digraph, and considering $(i,j)$ as arcs in the paring model. Also, we can use the Havel-Hakimi algorithm \citep{Kleitman} to generate a digraph from a given sequence of in-degree and a given sequence of out-degree, and by choosing a fixed in- and out-degree $k$, the resulting digraph is $k$-regular (in this case, we have $k = \deg^{in}(v) = \deg^{out}(v)$, for all nodes $v$).

\subsection{Watts-Strogatz Model}
\label{sec:ws}


In this part, we present the \textit{Watts-Strogatz model} (WS model), sometimes called \textit{small-world network model}, proposed by Watts and Strogatz \citep{Watts}. Although it is not the only small-world network model, it is the most frequently used one. The reference for this part is \citep{Newman}.

The WS model is a model originally created to illustrate how two features of social networks (\textit{high clustering coefficient} (high $C$ ) and \textit{low path lengths} (low $L$)) can coexist in the same network. In retrospect, the main contribution of this model is being able to show why the \textit{small-world effect} (the existence of short paths between most vertices) is prevalent in networks of all types, especially in real-world networks.

We can define this model as follows. Let's start with a regular network (or regular grid) of some type. For example, let us arrange $n$ vertices in a circle and connect each of them to the nearest $k$ vertices ($k$ an even number). Then, we randomize this network, traversing each of the edges around the circle in succession, and with probability $p$, we remove an edge and replace it with another that connects two randomly chosen vertices in a uniform way (i.e., we reconnect some vertices). The probability $p$ is therefore said to be the \textit{reconnection probability} or \textit{rewiring probability}. For directed graphs, the process of rewiring is analogous, even though there are other processes to produce weighted and directed small-world networks \citep{Ramezanpour, Xu}.

In a small-world network model, the rewiring probability $p$ takes intermediate values between the value $p$ of the initial regular circular network and the value $p$ of a random network. That is, when $p = 0$, no edges are reconnected, keeping the configuration of the initial regular network (high $C$ and high $L$), when $p=1$, all edges reconnect, producing a random network (low $C$ and low $L$), and when $p$ takes intermediate values, e.g. $p=0.3$, only some edges are reconnected, and the generated network (small-world network) takes on properties of both networks (high $C$ and low $L$), as depicted in Figure \ref{fig:random-nets}.

\begin{figure}[h!]
\centering
\begin{subfigure}{.3\textwidth}
  \centering
  \includegraphics[scale=1.1]{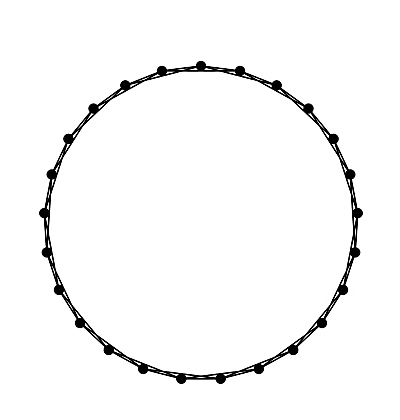}
  \caption{Regular network.}
  \label{fig:sub1}
\end{subfigure}%
\begin{subfigure}{.3\textwidth}
  \centering
  \includegraphics[scale=1.1]{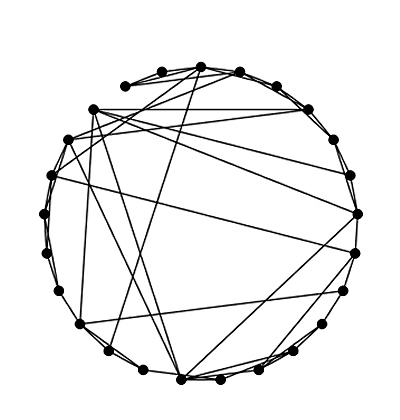}
  \caption{Small-world network.}
  \label{fig:sub2}
\end{subfigure}
\begin{subfigure}{.3\textwidth}
  \centering
  \includegraphics[scale=1.1]{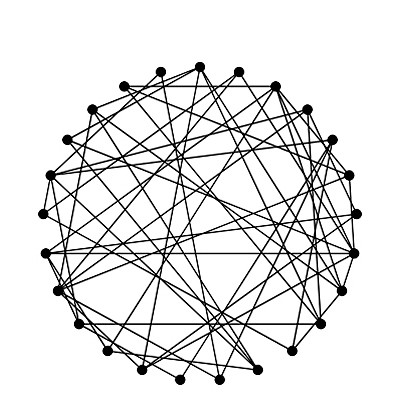}
  \caption{Random network.}
  \label{fig:sub3}
\end{subfigure}
\caption{Networks obtained through the NetLogo software \citep{Wilensky}, using the Watts and Strogatz model. All networks have $25$ nodes. (a) Regular network: high $C$ and high $L$ ($p = 0, C = 0.5, L = 3.5$). (b) Small-world network: high $C$ and low $L$ ($p = 0.3, C = 0.357, L = 2.51$). (c) Random network: low $C$ and low $L$ ($p = 1, C = 0.086, L = 2.317$).}
\label{fig:random-nets}
\end{figure}

A quantitative way of defining whether a given network $G$ has a “small-world” structure is by computing its \textit{small-worldness} \citep{Humphries}, which is defined as the ratio
\begin{equation}\label{eq:small-world-coef}
   S_{SW}(G) = \frac{C/C_{rand}}{L/L_{rand}},
\end{equation}

\noindent where $C$ and $L$ are defined as before, and $C_{rand}$ and $L_{rand}$ are these same quantities but computed for an equivalent Erdős-Rényi random network. We say that $G$ is a small-world network if $S_{SW}(G) >> 1$. For directed graphs, we use the directed versions $\vec{C}$ and $\vec{L}$.






\subsection{Barabási-Albert Model}
\label{sec:ba}

The last random graph model that we are going to discuss is the so-called \textit{Barabási-Albert model} (BA model) \citep{Barabasi-1999}. This model is based on two central rules: \textit{growth} and \textit{preferential attachment}\footnote{Preferential attachment is informally known as the “rich-get-richer" effect.}. 

Briefly, a network is generated through growth and preferential attachment as follows: given a set of $n_{0}$ initial nodes, a new node is added, at each time step, to the network and linked to $r$ other existing nodes (where $r$ is a parameter of the model) such that the connection probability is proportional to the degree of each node. Formally, the probability of a new node connecting to a node $i$ is $p(i) = \deg(i)/\sum_{j} \deg(j)$, where the sum is performed over all other nodes $j$ existing in the network \citep{Joyner}. 

The previous two rules produce networks whose degree distributions follow a power-law, i.e. let $p(k)$ be the probability of a vertex at chosen at random having degree $k$, then this probability is proportional to the power $k^{-\eta}$, $\eta \in \mathbb{R}_{+}$, i.e. 
\begin{equation}\label{eq:power-law}
p(k) \sim k^{-\eta}.
\end{equation}

Probability distributions of the form (\ref{eq:power-law}) are called \textit{power-laws} \citep{Newman3}. These networks are called ``scale-free networks," since their degree distributions have the same shape at different scales.


Furthermore, Bollobás et al. \citep{Bollobas2003} proposed a model which generates scale-free digraphs through the preferential attachment algorithm depending on both in-degrees and out-degrees of the nodes. The model is based on the parameters $\delta_{in}$, $\delta_{out}, \alpha, \beta, \gamma \in \mathbb{R}_{\ge 0}$, where  

\begin{itemize}
\item $\delta_{in}$ and $\delta_{out}$ are biases for choosing nodes based on the in-degree/out-degree distribution, respectively;

\item $\alpha$ is the probability of add a new node $v$ and an arc $(v,u)$ to an existing node $u$, where $u$ is chosen with a probability proportional to $\deg^{in}(u) + \delta_{in}$;

\item $\beta$ is the probability of adding a new arc from an existing node $v$ to another existing node $u$, where $v$ is chosen with a probability proportional to $\deg^{out}(v) + \delta_{out}$, and $u$ is chosen with a probability proportional to $\deg^{in}(u) + \delta_{in}$;

\item $\gamma$ is the probability of add a new node $v$ and an arc $(u,v)$ to an existing node $u$, where $u$ is chosen with a probability proportional to $\deg^{out}(u) + \delta_{out}$.
\end{itemize} 

These parameters must be chosen so that the equality $\alpha + \beta + \gamma = 1$ is satisfied. The scale-free digraph grows as we perform, at each time step, one of the actions associated with one of the previous probabilities. 

\chapter{Digraph-Based Complexes and Directed Higher-Order Connectivity}
\label{chap:chap4}

\epigraph{(...) the heart of the scientific method, and of all rational study of any human activity, lies in the process of identifying sets and of understanding the structural properties of relations between sets.}{--- R. H. Atkin \citep{Atkin1974a}}

\bigskip

In Chapter \ref{chap:chap2}, we discussed the foundations of graph theory and the elements of quantitative graph theory. In this chapter, we will extend our discussion to the field of \textit{simplicial complexes}, which can be seen as a generalization of graphs.

Graphs can be considered hierarchically structured: ``higher-level" (larger) subgraphs contain other ``lower-level" (smaller) subgraphs. This is the idea behind the \textit{clique organization} or \textit{clique topology} of a graph, since larger cliques contain smaller cliques. This hierarchical organization allows us to build simplicial complexes from the cliques of the graphs, the so-called \textit{clique complexes}. When we are dealing with digraphs, this same type of organization is present, but in this scenario, we may take the directionality of the edges into account and construct special types of complexes out of the digraphs, the so-called \textit{directed clique complexes} (or \textit{directed flag complexes}). 

To be more specific, in this chapter, we will deal mainly with directed flag complexes constructed from digraphs without double edges, which constitute special types of simplicial complexes. Furthermore, we will briefly discuss more general types of complexes obtained from digraphs, the so-called \textit{path complexes}. We use the umbrella term \textit{digraph-based complexes} to represent these types of complexes, that is, complexes constructed from digraphs. Here we will focus not only on the development of the theory of directed flag complexes, but also on the development of the theory of homology and persistent homology for these complexes, and mainly, on the development of a \textit{directed Q-Analysis} as a directed analogue of classical Q-Analysis that takes the directionality of the higher-order connectivity between directed simplices into account, introducing, therefore, the concept of \textit{directed higher-order adjacencies} (lower and upper directed adjacencies). In addition, we will discuss all these constructions for the weighted case, i.e. when these complexes are obtained from weighted digraphs.

\section{Directed Flag Complexes of Digraphs}
\label{sec:DFC}

In this section, we present the classical theory of simplicial complexes, with considerations about directed simplicial complexes and semi-simplicial sets; subsequently, we introduce the directed flag complexes and all the mathematical formalism behind these structures, including the weighted case, i.e., the case when they are obtained from weighted digraphs. Furthermore, we present several concepts from algebraic topology and computational algebraic topology for simplicial complexes constructed out of digraphs, such as simplicial homology, persistent homology, and the combinatorial Hodge Laplacians.

The basic bibliography for this section is \citep{Edelsbrunner, Hatcher, Kozlov, Munkres}. Also, all graphs/digraphs are considered to be non-empty, non-null, finite, simple, and labeled, and all sets are considered to be finite unless said otherwise. The notation $I_{n}^{*} = \{0,1,...,n \}$ is adopted for the index set.

\subsection{Simplicial Complexes and Semi-Simplicial Sets}
\label{sec:sc-semi-sc}

In this subsection, we present the formal definitions of abstract simplicial complexes and abstract directed simplicial complexes, the latter being the fundamental abstract structures behind directed flag complexes. Furthermore, we present the concept of semi-simplicial set, which can be seen as a generalization of the concept of an abstract directed simplicial complex.

\subsubsection{Abstract Simplicial Complexes}

\begin{definition}\label{def:ASC}
An \textit{abstract simplicial complex} (ASC) is a finite collection $\mathcal{X}$ of finite sets, such that if $\sigma \in \mathcal{X}$, then for all $\tau \subseteq \sigma$ we have $\tau \in \mathcal{X}$ (closed under subset inclusion).
\end{definition}

Let $\mathcal{X}$ be an ASC. Each set $\sigma \in \mathcal{X}$ is called a \textit{simplex} (or \textit{abstract simplex}), or an $n$-\textit{simplex}, if $|\sigma| = n +1$ is it cardinality; in this case we define the \textit{dimension} of $\sigma$ as $\dim \sigma = n$. The dimension of $\mathcal{X}$ is the maximum dimension of $\sigma$, $\forall \sigma \in \mathcal{X}$, i.e. 
$$
\dim \mathcal{X} = \max_{\sigma \in \mathcal{X}} ( \dim \sigma ).$$

Any element $v_{i}$ of an $n$-simplex $\sigma = \{ v_{0},..., v_{n}\}$ is called a \textit{vertex of $\sigma$}.  A simplex is uniquely determined by its vertices. Sometimes we may denote an $n$-simplex by $\sigma^{(n)}$, where the superscript $n$ denotes its dimension. The empty set is considered to be a subset of every simplex, therefore $\emptyset \in \mathcal{X}$. Sometimes the empty set is represented as a simplex with no vertices, i.e. a $(-1)$-dimensional simplex $\sigma^{(-1)} = \emptyset$ (also called \textit{null simplex}). A \textit{$k$-face} of an $n$-simplex $\sigma$, $0 \le k \le n$, is a $k$-simplex $\tau$ such that $\tau \subseteq \sigma$ (we'll use the notation $\tau \subseteq \sigma$ to denote that $\tau$ is a face of $\sigma$); in contrast, $\sigma$ is said to be a \textit{coface} of $\tau$. If $\tau \subseteq \sigma$ is a face of $\sigma$ such that $\tau \neq \sigma$, then $\tau$ is said to be a \textit{proper face}, and in this case we denote $\tau \subset \sigma$. The $(n-1)$-faces of an $n$-simplex are said to be its \textit{boundary}. A simplex is said to be \textit{maximal} in $\mathcal{X}$ if it is not a face of any other simplex in $\mathcal{X}$. The set of all $k$-simplices in $\mathcal{X}$ is denoted by $X_{k}$; in particular, $X_{0}$ is the set of all unit sets of $\mathcal{X}$. The \textit{vertex set} of $\mathcal{X}$  is the set of all vertices of its simplices, i.e.
$$
V_{\mathcal{X}} = \bigcup_{\sigma \in \mathcal{X}} \sigma.
$$


It's common to say that $\mathcal{X}$ is an ASC on the vertex set $V_{\mathcal{X}}$. A \textit{subcomplex} of $\mathcal{X}$ is a subcollection $\mathcal{S}_{\mathcal{X}} \subseteq \mathcal{X}$ such that $\mathcal{S}_{\mathcal{X}}$ is closed under subset inclusion. If the simplices of a subcomplex $\mathcal{S}_{\mathcal{X}} \subseteq \mathcal{X}$ have dimensions at most $k$, $0 \le k \le \dim \mathcal{X}$, then $\mathcal{S}_{\mathcal{X}}$ is called the \textit{$k$-skeleton} of $\mathcal{X}$, and denoted by $\mathcal{S}_{\mathcal{X}}^{(k)}$, i.e. the $k$-skeleton of $\mathcal{X}$ is equal to the union 
$$
\mathcal{S}_{\mathcal{X}}^{(k)} = \bigcup_{i=0}^{k} X_{i}.
$$

\begin{example}
A simple undirected graph $G = (V, E)$ is equivalent to a $1$-dimensional abstract simplicial complex on $V$, where the $0$-simplices are the vertices of $G$ and the $1$-simplices are the edges of $G$. We also can say that an undirected graph is the $1$-skeleton of an abstract simplicial complex.
\end{example}

Moreover, from the previous definitions, $\mathcal{X} \subseteq \mathcal{P}(V_{\mathcal{X}})$, where $\mathcal{P}(V_{\mathcal{X}})$ is the power-set\footnote{The \textit{power-set} $\mathcal{P}(S)$ of a given set $S$, also denoted by $2^{S}$, is the set formed by all possible subsets of $S$, including the empty set.} of $V_{\mathcal{X}}$ and if $\sigma, \sigma' \in \mathcal{X}$, the intersection $\sigma \cap \sigma'$ is either a face of both simplices or the empty set.

\begin{remark}\label{rem:topological-space}
The power-set of the vertex set of every ASC defines a topology (called \textit{discrete topology}) on the vertex set. Thus, every ASC defines a topological space.
\end{remark}

A less restrictive definition is the definition of simplicial family: a \textit{simplicial family} is a set formed by arbitrary simplices (the faces of the simplices do not need to belong to the set). Notice that every simplicial family determines an abstract simplicial complex.

As a last remark, we point out that some authors simply use the expression \textit{simplicial complex} to refer to an abstract simplicial complex, and here we may switch between the two nomenclatures deliberately.

\subsubsection{Geometric Simplicial Complexes}

One can associate with any abstract simplex $\sigma$ a \textit{geometric simplex}, which is the convex hull\footnote{The convex hull of a subset $S \subseteq \mathbb{R}^{n}$ is the intersection of all convex sets containing $S$.} of the vertices of $\sigma$ in the Euclidean space. More formally:

\begin{definition}\label{def-geo-simp}
Let  $v_{0}, ..., v_{k}$ be $k+1$ distinct points in $\mathbb{R}^{n}$. We say that $\{ v_{0}, ..., v_{k}\}$ is an \textit{affinely independent set} if for real numbers $a_{0},...,a_{k}$ such that $\sum_{i=0}^{k} a_{i}v_{i} = 0$ and $\sum_{i=0}^{k} a_{i} = 0$, then $a_{0} =... = a_{k} = 0$. If $\{ v_{0}, ..., v_{k}\}$ is an affinely independent set of points in $\mathbb{R}^{n}$, the \textit{geometric simplex} spanned by it is the set

\begin{equation}
\sigma = \Bigg\{ \sum_{i=0}^{k} \lambda_{i} v_{i} : \lambda_{i} \ge 0 \mbox{ and }  \sum_{i=0}^{k} \lambda_{i} = 1 \Bigg\}.
\end{equation}

\end{definition}

Analogously to Definition \ref{def:ASC}, a \textit{geometric simplicial complex} (or \textit{Euclidean simplicial complex}) $\mathcal{K}$ is defined as a finite collection of geometric simplices such that if $\sigma \in \mathcal{K}$ and $\tau \subseteq \sigma$, then $\tau \in \mathcal{K}$, and if $\sigma, \sigma' \in \mathcal{K}$, then either $\sigma \cap \sigma' \in \mathcal{K}$ or $\sigma \cap \sigma' = \emptyset$.

The \textit{boundary} of a geometric $n$-simplex is constituted of all its $(n-1)$-faces, and its \textit{interior} is constituted of all points which do not belong to its boundary. Every other definition made previously for abstract simplices is defined analogously to geometric simplices.

\begin{example}
Figure \ref{geometric_sc} presents examples of geometric $k$-simplices, for $k=0,1,2,3$. From left to right, we have: a 0-simplex (vertex), a 1-simplex (edge), a 2-simplex (triangle with interior), and a 3-simplex (tetrahedron with interior). Note that the edge has two vertices as its faces, the triangle has three edges as its faces, and the tetrahedron has four triangles as its faces.

\begin{figure}[h!]
\centering
  \includegraphics[scale=1.35]{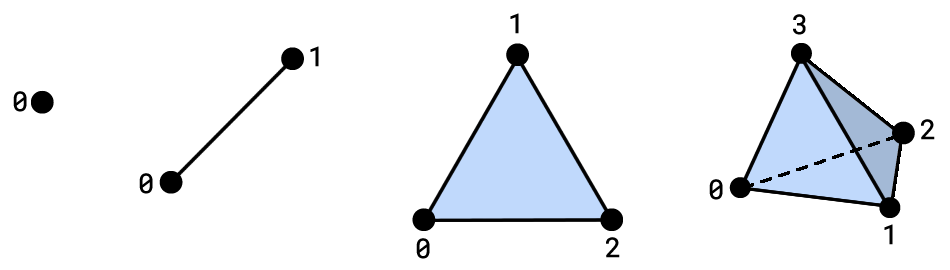}
  \caption{Examples of geometric simplices.}
  \label{geometric_sc}
\end{figure}
\end{example}

\begin{definition}
The \textit{vertex scheme} of a geometric simplicial complex is the ASC built out of the sets of vertices of its geometric simplices. On the other hand, the \textit{geometric realization} of an ASC $\Delta$ is the geometric simplicial complex whose vertex scheme is isomorphic to $\Delta$.
\end{definition}

Every finite ASC has a geometric realization on an Euclidean space, as stated by the geometric realization theorem: ``every $d$-dimensional abstract simplicial complex has a geometric realization in $\mathbb{R}^{2d+1}$" (see \citep{Edelsbrunner}, p. 53, for a proof).


\smallskip

\subsubsection{Abstract Directed Simplicial Complexes}

We can define an \textit{orientation} on an $n$-simplex by defining a total ordering on its vertices. For instance, given an $n$-simplex $\{ v_{0},..., v_{n} \}$, an orientation is obtained by ordering $v_{i} < v_{j}$, whenever $i < j$.

 In the literature, it is usual to denote an \textit{ordered} (or \textit{oriented}) \textit{$n$-simplex} by $[v_{0},..., v_{n}]$, which represents a totally ordered set. Just as we defined for abstract simplices, we may use the notation $\sigma^{(n)} = [v_{0},..., v_{n}]$, where the superscript represents its dimension. The edges of an ordered simplex inherit its ordering, i.e. $[v_{i}, v_{j}]$, $v_{i} < v_{j}$, if $i < j$. Here, a caveat is necessary because we are committing an abuse of notation: $i$ and $j$ are actually indicating the position of the vertices in the simplex, even though we are using the same indices on the vertex labels. With that said, we'll adopt the notation $v_{i} = i$ for the labels, and write $[0...n] = [0,...,n] = [v_{0},..., v_{n}]$, but keep in mind that the labels do not necessarily represent the position of the vertex in the ordered set.

Two orientations are equivalent if they differ by an even permutation, e.g., the ordered $2$-simplex $[0,1,2]$ in Figure \ref{fig:simp-ori-a} has the same clockwise orientation as those produced by the permutations $[1,2,0]$ and $[2,0,1]$,  while the permutation $[1,0,2]$ produces a counterclockwise orientation, as depicted in Figure \ref{fig:simp-ori-b}. It's important to note that the set $\{ 0,1,2 \}$ corresponds to a single simplex, regardless of its orientation, and the ordered simplices $[1,2,0]$ and $[2,0,1]$ are not equal as ordered sets, despite the same orientation.

\begin{figure}[h!]
\centering
\begin{subfigure}{.39\textwidth}
  \centering
  \includegraphics[scale=1.35]{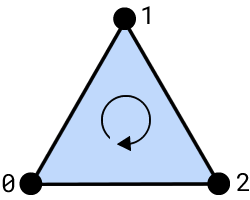}
  \caption{Clockwise orientation.}
  \label{fig:simp-ori-a}
\end{subfigure}%
\begin{subfigure}{.39\textwidth}
  \centering
  \includegraphics[scale=1.35]{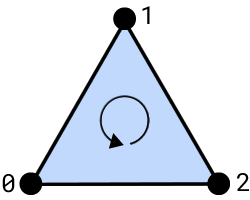}
  \caption{Counterclockwise orientation.}
  \label{fig:simp-ori-b}
\end{subfigure}
\caption{Different orientations of a $2$-simplex.}
\label{fig:orientations}
\end{figure}

In summary, an ordered $n$-simplex is merely an $n$-simplex with a total ordering in its vertices. Therefore, we can formulate an analogous definition of the definition of ASC where our simplices are ordered simplices, i.e. the definition of \textit{abstract ordered simplicial complex} or \textit{abstract directed simplicial complex} \citep{Reimann}.

\begin{definition} An \textit{abstract directed simplicial complex} (ADSC) is a finite collection $\mathcal{X}$ of totally ordered finite sets, such that if $\sigma \in \mathcal{X}$, then for all $\tau \subseteq \sigma$ (with an ordering inherited from $\sigma$) we have $\tau \in \mathcal{X}$ (closed under totally ordered subset inclusion).
\end{definition}

Henceforth, we'll adopt the nomenclature \textit{directed simplices} for the elements of an ADSC instead of ordered simplices, and we'll keep using the same notations introduced for ordered simplices. Also, all definitions and nomenclatures made previously for abstract simplicial complexes apply to abstract directed simplicial complexes as well.

It's important to point out that, as mentioned in Subsection \ref{sec:relations-orders}, two totally ordered sets with the same elements are identical if and only if they have the same ordering; therefore,  two different directed simplices may have the same set of vertices. This fact implies that, despite the name, ADSCs are not examples of ASCs in general, except in the cases where the ADSCs are built out of ASCs, since, in these cases, every ordered set in the ADSC corresponds to a single simplex.

A convenient and practical way to access the faces of a directed simplex is through \textit{face maps}, which are defined as follows.

\begin{definition}\label{facemap}
Given an abstract directed simplicial complex $\mathcal{X} = \bigcup^{\dim \mathcal{X}}_{k=0} X_{k}$, where $X_{k}$ is the set of all directed $k$-simplices, we define maps $\hat{d}_{i}: X_{k+1} \rightarrow X_{k}$, for each $0 \le k \le \dim \mathcal{X}-1$ and each $0 \le i \le k+1$, such that $\hat{d}_{i}([v_{0},...,v_{k}]) = [v_{0},...,\hat{v_{i}},...,v_{k}]$, where $\hat{v_{i}}$ denotes the deletion of the $i$-th vertex $v_{i}$ from $[v_{0},...,v_{k}]$. The map $\hat{d}_{i}$ is called \textit{i-th face map}.
\end{definition}

One can easily see that indeed $\hat{d}_{i}$ takes a simplex on its $i$-th face. An important property of the face maps is that if $i < j$, then

\begin{equation}\label{eqfacemap}
\hat{d}_{i}\circ \hat{d}_{j} = \hat{d}_{j-1} \circ \hat{d}_{i}.
\end{equation}

In fact, $\hat{d}_{i}(\hat{d}_{j}([v_{0},...,v_{k}])) = [v_{0},...,\hat{v_{i}},...,\hat{v_{j}},..., v_{k}] = \hat{d}_{j-1}(\hat{d}_{i}([v_{0},...,v_{k}]))$.

\smallskip

\subsubsection{Semi-Simplicial Sets}

A generalization of the concept of abstract simplicial complexes is the concept of \textit{semi-simplicial complexes}, which was originally proposed by S. Eilenberg and J. A. Zilber  \citep{Eilenberg}. The contemporary theory adopted the nomenclature \textit{semi-simplicial sets}, and the following definition is based on Friedman's work \citep{Friedman}.

\begin{definition}
A \textit{semi-simplicial set} (or \textit{Delta set}, or \textit{$\Delta$-set}) is a collection of sets $\mathcal{X} = \{X_{0}, X_{1},X_{2},... \}$  together with maps $\hat{d}_{i}: X_{n+1} \rightarrow X_{n}$, for each $0 \le n$ and each $0 \le i \le n+1$, such that $\hat{d}_{i}\circ \hat{d}_{j} = \hat{d}_{j-1} \circ \hat{d}_{i}$, whenever $i< j$.
\end{definition}

\smallskip

Every abstract simplicial complex forms a semi-simplicial set. Indeed, let $X_{k}$ be the set of all $k$-dimensional simplices and let $d_{i}: X_{k+1} \rightarrow X_{k}$ be the face maps as defined in Definition \ref{facemap}. The condition $\hat{d}_{i}\circ \hat{d}_{j} = \hat{d}_{j-1} \circ \hat{d}_{i}$, whenever $i<j$, is satisfied as shown in Equation (\ref{eqfacemap}). Nonetheless, there exist semi-simplicial sets that cannot be built out of abstract simplicial complexes, as we can see in the following example.

\smallskip

\begin{example}
Consider the collection of sets $\mathcal{X} = \{ X_{0}, X_{1} \}$, where $X_{0} = \{ [ 0 ], [ 1 ] \}$  and $X_{1} = \{ [ 0, 1 ], [1, 0] \}$. The face maps are $\hat{d}_{0}([0,1]) = \hat{d}_{1}([1,0]) = [1]$, $\hat{d}_{1}([0,1]) =  \hat{d}_{0}([1,0]) = [0]$. Thus, $\mathcal{X}$ together with the maps $\hat{d}_{i}$ form a semi-simplicial set. However, it does not form an ASC, since the intersection $[0,1] \cap [1,0] = \{0,1\}$ is not a common face of both directed $1$-simplices.
\end{example}

Similarly to an abstract simplicial complex, a semi-simplicial set has a geometric realization as described in \citep{Milnor}.

\subsection{Directed Flag Complexes}
\label{sec:dfc}


From this point forward, we will be mainly interested in abstract directed simplicial complexes built from directed graphs; nevertheless, let's start by making some observations about ASCs built out of undirected graphs.

There are several ways to build an ASC from a given graph $G = (V, E)$. For instance, we can construct an ASC by considering the neighborhood of each vertex and including all its faces (\textit{neighborhood complex}) \citep{Maletic}; by defining a collection of subgraphs of the power-set of $V$, $\mathcal{S}_{G} \subseteq \mathcal{P}(V)$, such that if $H \in \mathcal{S}_{G}$ and $e \in H$ is an edge, then $H - e \in \mathcal{S}_{G}$ (closed under deletion of edges) \citep{Jonsson}; or by considering the $n$-simplices as the $(n+1)$-cliques of the graph, forming what is known as \textit{clique complex} or \textit{flag complex} \citep{Aharoni}. 

Here we will focus on the study of flag complexes and their directed variants, the \textit{directed flag complexes} \citep{Reimann}, so let's present their formal definitions.

\begin{definition}
Given a graph $G = (V, E)$, its \textit{flag complex} (or \textit{clique complex}), denoted by $\mathrm{Fl}(G)$, is the abstract simplicial complex on the vertex set of $G$ whose $k$-simplices are  subsets of $V$ which span $(k+1)$-cliques of $G$.
\end{definition}

The previous definition only takes into account undirected graphs. Nevertheless, we will introduce a variation of this definition for directed graphs by using the definition of \textit{directed cliques}.

\begin{definition}
A \textit{directed $(k+1)$-clique} is a digraph $ G = (V, E)$, $V = \{ v_{0},..., v_{k} \}$, whose underlying undirected graph is a $(k+1)$-clique and for each $0 \le i < j \le k$, $v_{i}, v_{j} \in V$, and there is a directed edge from $v_{i}$ to $v_{j}$, i.e.  $(v_{i}, v_{j}) \in E$, for all $i < j$.
\end{definition}

Notice that, by definition, every directed $(k+1)$-clique has a source ($v_{0}$) and a sink ($v_{k}$) and is a directed acyclic graph (DAG) (c.f. Figure \ref{dir-clique}).

\begin{example}
Figure \ref{dir-clique} presents examples of directed $(k+1)$-cliques for $k=0,1,2,3,4$. From left to right, we have: a $1$-clique, a directed $2$-clique, a directed $3$-clique, a directed $4$-clique, and a directed $5$-clique.

\begin{figure}[h!]
    \centering
\includegraphics[scale=1.35]{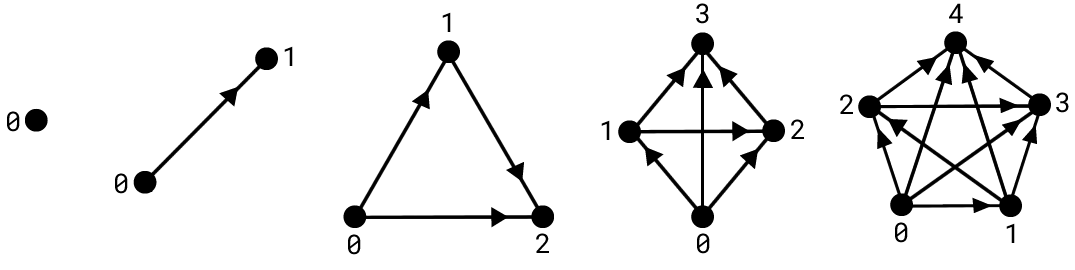}
 \caption{Examples of directed $(k+1)$-cliques, for $k=0,1,2,3,4$.}
 \label{dir-clique}
\end{figure}
\end{example}

The concept of directed flag complexes \citep{Masulli, Reimann} is a variant for digraphs of the concept of flag complexes, however, although the flag complex associated with a graph is, in fact, an example of ASC, a directed flag complex associated with a digraph is actually an example of ADSC, since the digraph may contain double edges, which implies that a subset of its vertex set may span different directed cliques; an exception occurs for digraphs \textit{without double edges}, in which cases the corresponding directed flag complexes are ASCs with a specific order in their simplices.

\begin{definition}
Given a digraph $G = (V, E)$, its \textit{directed flag complex}, denoted by $\mathrm{dFl}(G)$, is the abstract directed simplicial complex whose directed $k$-simplices span directed $(k+1)$-cliques of $G$, i.e. for every $[v_{0},...,v_{k}] \in \mathrm{dFl}(G)$, we have $v_{i} \in V$, $\forall i$, and $(v_{i}, v_{j}) \in E$,  $\forall i < j$.
\end{definition}

The directed simplices of a directed flag complex are not uniquely defined by their set of vertices, since they may differ by the ordering of their vertices. For instance, the directed $2$-simplices $[v_{0}, v_{1}, v_{2}]$ and $[v_{2}, v_{0}, v_{1}]$ have the same set of vertices, but they span different directed $3$-cliques. Again, remember that we are committing an abuse of notation and $i$ and $j$ are indicating the position of the vertices in the simplex, which are independent of the vertex labels, e.g., in the case $[v_{0}, v_{1}]$ we have $(v_{0}, v_{1}) \in E$ but in the case  $[v_{1}, v_{0}]$ we have $(v_{1}, v_{0}) \in E$.

\begin{example}
Figure \ref{fig:double-edge} shows the digraph $G = (V, E)$ which has a double edge between the vertices $0$ and $2$, i.e. $(0,2), (2,0) \in E$. Note that the set $\{0,2\}$ spans two different directed $2$-cliques, $[0,2]$ and $[2,0]$.

\begin{figure}[h!]
\centering
\begin{subfigure}{.33\textwidth}
  \centering
  \includegraphics[scale=0.8]{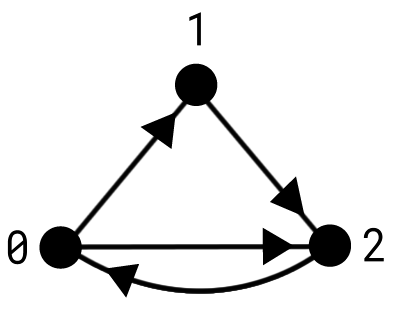}
  \caption{Digraph $G$.}
  \label{fig:double-edge}
\end{subfigure}%
\begin{subfigure}{.33\textwidth}
  \centering
  \includegraphics[scale=0.8]{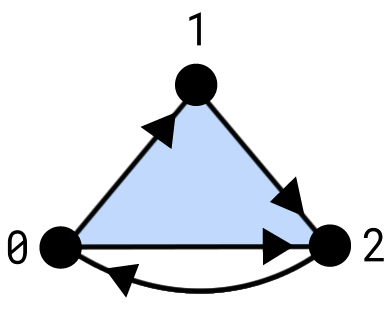}
  \caption{$\mathrm{dFl}(G)$.}
  \label{fig:flag-double}
\end{subfigure}
\caption{A digraph $G$ with a double edge and its directed flag complex $\mathrm{dFl}(G)$.}
\label{fig:digraph-dfl}
\end{figure}
\end{example}


Moreover, note that directed cycles are not considered directed cliques since they do not have a source and a sink.

We say that the flag complex associated with the underlying undirected graph of a digraph is its \textit{underlying flag complex}.


\begin{example}\label{ex:dir-flag-complex}
Figure \ref{fig:flag-complex-3} represents a simple digraph $G$, its underlying flag complex $\mathrm{Fl(G)}$, and its directed flag complex $\mathrm{dFl(G)}$.

\begin{figure}[h!]
\centering
\begin{subfigure}{.23\textwidth}
  \centering
  \includegraphics[scale=0.8]{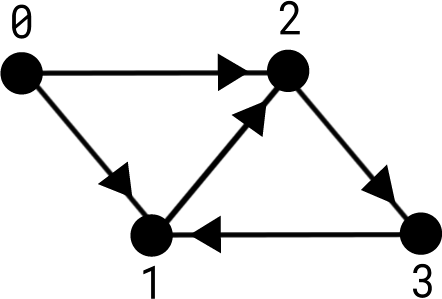}
  \caption{Digraph $G$.}
  \label{fig:dig}
\end{subfigure}%
\begin{subfigure}{.23\textwidth}
  \centering
  \includegraphics[scale=0.8]{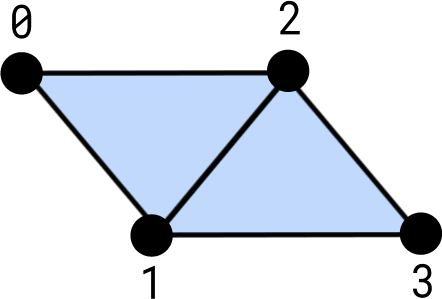}
  \caption{$\mathrm{Fl}(G)$.}
  \label{fig:flag}
\end{subfigure}
\begin{subfigure}{.23\textwidth}
  \centering
  \includegraphics[scale=0.8]{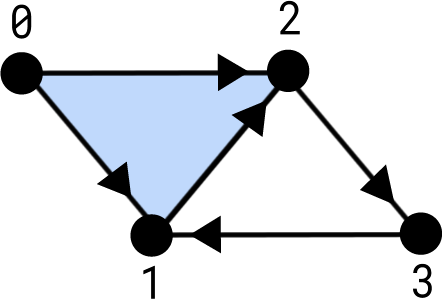}
  \caption{$\mathrm{dFl}(G)$.}
  \label{fig:dflag}
\end{subfigure}
\caption{A digraph $G$ along with its underlying flag complex $\mathrm{Fl(G)}$, and its directed flag complex $\mathrm{dFl(G)}$.}
\label{fig:flag-complex-3}
\end{figure}
\end{example}

Be aware that, in this text, the colors that fill the interior of the cliques do not represent the geometric interior in the sense of Definition \ref{def-geo-simp}, but are used as a visual representation of the corresponding higher-dimensional directed simplices.

\begin{remark}
Directed cliques are combinatorial objects; thus, for instance, given a set of three vertices, we can build six directed $3$-cliques by changing the directions of their edges in a proper way. They are all isomorphic because we can transform one into the other just by reordering their vertices and, since digraph isomorphism is an equivalence relation, the collection of all directed $(k+1)$-cliques form a class of equivalence. Although they are algebraically equivalent, the change in the direction of an edge can be combinatorially relevant, as it can affect the adjacent directed simplices, as we will see later in Subsection \ref{sec:dir-q-analysis}, when we introduce the theory of directed Q-analysis.
\end{remark}

\begin{observation}\label{flag-semi-simp}
In order to build chain complexes and homology groups (see Section \ref{sec:simplicial-homology}), we can associate a semi-simplicial set to a directed flag complex as follows. Given a directed flag complex of a digraph, $\mathrm{dFl}(G)$, let $X_{k}$ be the set of all directed $k$-simplices of $\mathrm{dFl}(G)$. The collection $\{ X_{0},...,X_{\omega(G)-1} \}$, where $\omega(G)$ is the clique number of $G$, together with face maps $d_{i}: X_{k+1} \rightarrow X_{k}$, for each $0 \le i \le k+1$ and each $0 \le k \le \omega(G)-2$, form a simplicial set.
\end{observation}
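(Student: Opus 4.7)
The plan is to verify the two axioms in the definition of a semi-simplicial set: first, that each face map $\hat{d}_i : X_{k+1} \to X_k$ is well-defined for every $0 \le k \le \omega(G) - 2$ and every $0 \le i \le k+1$; second, that the collection $\{\hat{d}_i\}$ satisfies the simplicial identity $\hat{d}_i \circ \hat{d}_j = \hat{d}_{j-1} \circ \hat{d}_i$ whenever $i < j$. The upper bound $\omega(G)-1$ on the dimension is forced because a directed $k$-simplex in $\mathrm{dFl}(G)$ corresponds to a directed $(k+1)$-clique of $G$, and the largest clique of $G$ has exactly $\omega(G)$ vertices. If one wishes to match the definition of semi-simplicial set literally (which prescribes an infinite family $\{X_n\}_{n\ge 0}$), one can extend the family by setting $X_k = \emptyset$ for all $k \ge \omega(G)$, in which case the axioms hold vacuously beyond dimension $\omega(G)-1$.

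For well-definedness I would take an arbitrary $\sigma = [v_0, \ldots, v_{k+1}] \in X_{k+1}$. By definition of the directed flag complex, $\sigma$ spans a directed $(k+2)$-clique, so $(v_a, v_b) \in E$ for every pair of positions $a < b$. Deleting the vertex in position $i$ yields $\hat{d}_i(\sigma) = [v_0, \ldots, \hat{v}_i, \ldots, v_{k+1}]$, an ordered tuple of $k+1$ vertices; the pairwise arrows among the remaining vertices are all inherited from $\sigma$, so this tuple again forms a directed $(k+1)$-clique and therefore lies in $X_k$. In particular, the map is well-defined on the bounded range $0 \le k \le \omega(G) - 2$, which is precisely the range for which $X_{k+1}$ can be nonempty.

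For the simplicial identity I would perform the same combinatorial computation that underlies Equation (\ref{eqfacemap}) for general ADSCs. For $i < j$,
\[
\hat{d}_i\bigl(\hat{d}_j(\sigma)\bigr) = \hat{d}_i\bigl([v_0, \ldots, \hat{v}_j, \ldots, v_{k+1}]\bigr) = [v_0, \ldots, \hat{v}_i, \ldots, \hat{v}_j, \ldots, v_{k+1}],
\]
and on the other side
\[
\hat{d}_{j-1}\bigl(\hat{d}_i(\sigma)\bigr) = \hat{d}_{j-1}\bigl([v_0, \ldots, \hat{v}_i, \ldots, v_{k+1}]\bigr) = [v_0, \ldots, \hat{v}_i, \ldots, \hat{v}_j, \ldots, v_{k+1}],
\]
and the two ordered tuples coincide. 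The index shift from $j$ to $j-1$ on the right is exactly what the identity predicts.

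The main obstacle, such as it is, is purely notational: one must keep vertex \emph{labels} carefully separate from \emph{positional} indices, since the face maps act on positions while the clique property is phrased in terms of the order of labels inherited from the ambient directed simplex. Once that bookkeeping is in place, both requirements reduce to immediate observations about deletions from a totally ordered tuple, and no deeper machinery is required beyond what is already contained in Definition \ref{facemap} and the definition of $\mathrm{dFl}(G)$.
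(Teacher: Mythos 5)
Your proposal is correct and follows essentially the same route the paper takes: the paper justifies this observation implicitly by appealing to the face maps of Definition \ref{facemap} and the identity already verified in Equation (\ref{eqfacemap}), exactly the two points you check (well-definedness of $\hat{d}_i$ on directed cliques and the identity $\hat{d}_i \circ \hat{d}_j = \hat{d}_{j-1}\circ \hat{d}_i$ for $i<j$). Your extra remarks on padding with empty sets beyond dimension $\omega(G)-1$ and on separating positional indices from vertex labels are consistent with the paper's conventions and add nothing that conflicts with its argument.
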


Govc et al. \citep{Govc} proposed another example of a complex built out of a digraph, called \textit{flag tournaplex}, in which the simplices are considered to be the tournaments of the digraph. A $(k+1)$-tournament is a digraph, without double edges, whose underlying undirected graph is a $(k+1)$-clique. Since the induced subdigraph of a tournament is also a tournament (called a subtournament), in the aforementioned article, an ASC-like structure called \textit{tournaplex} was introduced as a collection $\mathcal{X}$ of tournaments such that if $\sigma \in \mathcal{X}$ and $\tau \subseteq \sigma$ is a subtournament, then $\tau \in \mathcal{X}$; subsequently, a flag tournaplex was defined as the tournaplex built out of a given digraph. We can build a semi-simplicial set from a flag tournaplex in an analogous way as exposed in Observation \ref{flag-semi-simp}.


\subsection{Weighted Directed Flag Complexes}
\label{sec:weighted-dfc}


So far, we've been considering solely directed flag complexes built out of weightless digraphs; nonetheless, many real-world networks present weighted edges and/or weighted vertices, thus it would be convenient to transpose the weights associated with a digraph into its respective directed flag complex. The first formal definition of weighted simplicial complexes was proposed by Dawson \citep{Dawson}, in which the weights were considered to belong to the set of natural numbers. However, his definition was generalized for weights in a commutative ring (with unity) \citep{Ren}, thus we extended Dawson's definition to abstract directed simplicial complexes with weights on a commutative ring.

\smallskip

\begin{definition}\label{def:weig-dfc}
A \textit{weighted abstract directed simplicial complex} is a pair $(\mathcal{X}, \widetilde{\omega})$ consisting of an abstract directed simplicial complex $\mathcal{X}$ and a function $\widetilde{\omega}: \mathcal{X} \rightarrow \mathcal{R}$,  satisfying $\widetilde{\omega}(\tau) | \widetilde{\omega}(\sigma)$ whenever $\tau \subseteq \sigma$, where $\mathcal{R}$ is a commutative ring. Here, the convention $0/0 = 0$ is assumed.
\end{definition}

\smallskip

Furthermore, in previous chapters, all definitions associated with weighted digraphs only considered the weights of their edges, and nothing was said about the weights of their nodes. Nevertheless, in some contexts, a function that produces an edge-weight to node-weight transformation is needed. There are several ways to define such a function; thus, in what follows, we propose an edge-to-node weight function.

\smallskip

\begin{definition}\label{def:weights-edge-node}
Let $G^{\omega} = (V, E, \omega)$ be a weighted digraph, where $ \omega: E  \rightarrow \mathcal{R}$ is its edge-weight function, and $\mathcal{R}$ is a commutative ring. We define a \textit{node-weight function} $\tilde{\omega} : V  \rightarrow \mathcal{R}$ based on the function $\omega$ as follows:

\begin{equation}\label{eq:weight-func2}
\tilde{\omega}(i)  = \max\big(\deg_{\omega}^{-}(i), \deg_{\omega}^{+}(i) \big).
\end{equation}

\end{definition}

\smallskip

The above definition is convenient because it avoids null weights on vertices that are not isolated, and takes the \textit{in} and \textit{out} contributions of the edges into account. However, a drawback is that it might output high values for vertices with just a few strong connections and low values for vertices with a large number of weak connections.

\smallskip

\begin{definition}\label{def:prod-weight}
Let $G^{\omega}$ be a weighted digraph with directed flag complex $\mathrm{dFl}(G^{\omega})$, and let $\mathcal{R}$ be a commutative ring. We define the \textit{product-weight function} $\widetilde{\omega} : \mathrm{dFl}(G^{\omega})  \rightarrow \mathcal{R}$ as
\begin{equation}\label{eq:prod-weight}
\widetilde{\omega}(\sigma^{(n)}) =  \prod_{i = 0}^{n} \tilde{\omega}(i),
\end{equation}

\noindent where $\tilde{\omega}$ is an edge-to-node weight function, and $\tilde{\omega}(i)$ is the weight of the node $i \in \sigma^{(n)} = [0,1,...,n]$.  The pair $(\mathrm{dFl}(G^{\omega}), \widetilde{\omega})$ is called \textit{weighted directed flag complex}.
\end{definition}

\smallskip

Note that the product-weight function satisfies the conditions of the Definition \ref{def:weig-dfc}, therefore a weighted directed flag complex is indeed a weighted ADSC.

\smallskip

\begin{remark}
As defined in Chapter \ref{chap:chap2}, when we use the terminology \textit{weighted digraphs} we are referring to digraphs whose edges are weighted but the nodes aren't, thus, in order to build their weighted directed flag complexes, we must transform the edge weights into node weights, for instance, through the edge-to-node weight function proposed in Definition \ref{def:weights-edge-node}. On the other hand, if only the nodes of a digraph are weighted, then the definition of the product-weight is straightforward.
\end{remark}

\smallskip

Now, we are going to make a digression and discuss briefly how we can construct simplicial complexes when considering the vertices in a metric (or pre-metric) space (see Definition \ref{def:metric}). Different simplicial complexes can be built by establishing different criteria for the formation of simplices based on the (pre-)metric of the space, e.g., Vietoris-Rips complexes, \v{C}ech complexes, Delaunay complexes, and alpha complexes \citep{Edelsbrunner}. In the following, we discuss two types of complexes built in (pre-)metric spaces, namely: the Vietoris-Rips complexes and the Dowker complexes \citep{Chowdhury-2018a}. 

\smallskip

\begin{definition}\label{def:vietoris-rips}
Let $V$ be a set of vertices in a metric space $(X, d)$. Given a real number $\delta > 0$, the \textit{Vietoris-Rips complex} of V, denoted by $\mathfrak{R}_{\delta}(V)$, is an abstract simplicial complex formed by simplices whose diameters are at most $\delta$, i.e.
\begin{equation}
\mathfrak{R}_{\delta}(V) = \{ \sigma \subseteq V : \max_{v, w \in \sigma} d(v, w) \le \delta \}.
\end{equation}
\end{definition}

\smallskip

A drawback of the Vietoris-Rips complex is that it is insensitive to asymmetry, since the metric satisfies the symmetry condition. Also, since we are dealing with weighted directed networks, and since we can define a pre-metric from a weight function (see Definition \ref{def:weig-dist}), we would like to extend the concept of Vietoris-Rips complex to pre-metric spaces. Chowdhury and Mémoli \citep{Chowdhury-2018a} introduced the  \textit{Dowker complexes} for weighted directed networks. In what follows, we present the definition proposed in the aforementioned article for pre-metric spaces.

\smallskip


\begin{definition}\label{def:dowker-complex}
Let $V$ be a set of vertices in a pre-metric space  $(X, d)$ and let $R_{\delta}(V) \subseteq V \times V$ be the following relation:
\begin{equation}
R_{\delta}(V) = \{ (v, w) : d(v, w) \le \delta \},
\end{equation}

\noindent for any $\delta \in \mathbb{R}_{+}$. The \textit{pre-metric Dowker $\delta$-sink complex}, the \textit{pre-metric Dowker $\delta$-source complex} and the \textit{ pre-metric Dowker complex}, denoted, respectively, by $\mathfrak{D}_{si}^{pre}$, $\mathfrak{D}_{so}^{pre}$, and $\mathfrak{D}^{pre}$, are defined by
\begin{equation}
\mathfrak{D}^{pre}_{si}(V; \delta) = \{ \sigma = [v_{0},...,v_{n}] : \exists w \in V \mbox{ such that }  (v_{i}, w) \in R_{\delta}(V), \forall v_{i} \in \sigma \},
\end{equation}
\begin{equation}
\mathfrak{D}^{pre}_{so}(V; \delta) = \{ \sigma = [v_{0},...,v_{n}] : \exists w \in V \mbox{ such that }  (w, v_{i}) \in R_{\delta}(V), \forall v_{i} \in \sigma \},
\end{equation}
\begin{equation}
\mathfrak{D}^{pre}(V; \delta) = \{ \sigma = [v_{0},...,v_{n}]: \max_{v_{i}, v_{j} \in \sigma} d(v_{i}, v_{j}) \le \delta \}.
\end{equation}
\end{definition}

\smallskip

By the previous definition, if $G^{\omega} = (V, E, \omega)$ is a weighted digraph, and if we define the pre-metric $d = D^{\omega}$ such as in Definition \ref{def:weig-dist}, then $, \mathfrak{D}^{pre}_{si}(V; \delta), \mathfrak{D}_{so}^{pre}(V; \delta), \mathfrak{D}^{pre}(V; \delta) \subseteq \mathrm{dFl}(G^{\omega})$, for any real number $\delta > 0$. 

\smallskip


\begin{example}
Let $G^{\omega}=(V,E,\omega)$ be the weighted digraph illustrated in Figure \ref{fig:dowkera}, whose edge weights are not normalized. Let $d = D^{\omega}$ be the pre-metric defined by the formula (\ref{eq:weig-distance}) but replacing $\omega_{ij}^{-1}-1$ with $\omega_{ij}^{-1}$, as explained in Observation \ref{obs:weig-function}. Then, the pre-metric Dowker complex obtained by applying the pre-metric $d$ in $V$ with the condition $d(i,j) \le \delta = 0.17$, $\forall i, j \in V$, is
$$
\mathfrak{D}^{pre}(V; \delta) = \{ [0], [1], [2], [3], [4], [5], [2,3], [1,4], [4,5], [1,5], [1,4,5] \}.
$$

\begin{figure}[h!]
\centering
\begin{subfigure}{.4\textwidth}
  \centering
  \includegraphics[scale=1.5]{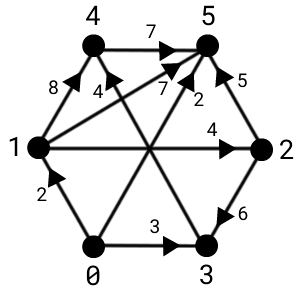}
  \caption{A weighted digraph $G^{\omega}$.}
  \label{fig:dowkera}
\end{subfigure}
\begin{subfigure}{.4\textwidth}
  \centering
  \includegraphics[scale=1.5]{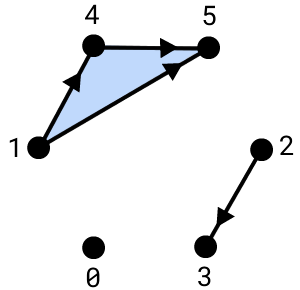}
  \caption{The pre-metric Dowker complex built from the weighted digraph $G^{\omega}$.}
  \label{fig:dowkerb}
\end{subfigure}
\caption{A weighted digraph and its pre-metric Dowker complex for $\delta = 0.17$.}
\label{fig:dowker}
\end{figure}
\end{example}

\subsection{Simplicial Homology}
\label{sec:simplicial-homology}



Historically, Henri Poincaré, in his 1895 paper, \textit{Analysis situs} \citep{Poincare1895}, where he quoted the work of his predecessors Riemann and Betti, was the first person to introduce the concept of homology classes for cell complexes \citep{Dieudonne}. Since then, homology theory has evolved to become one of the main branches of algebraic topology. The basic idea behind homology theory is to define algebraic-topological invariants of topological spaces, which are used to distinguish between these spaces. Homology classes of a topological space are equivalence classes that represent topological invariants, and these classes form the so-called \textit{homology groups} (or just \textit{homologies}, when seen as vector spaces) associated with the space. The dimensions of these groups, called \textit{Betti numbers}, roughly speaking, represent the numbers of ``$n$-dimensional holes" in the space, and are also topological invariants \citep{Hatcher}.

A particular application of homology theory is on simplicial complexes. As already discussed in Subsection \ref{sec:dfc}, we can build simplicial complexes out of the clique complexes (directed clique complexes) of graphs (digraphs without double edges), and then use the algebro-topological framework from homology theory to study the combinatorial and topological structures associated with them.

This section is mainly based on the book \citep{Hatcher}, and we consider homology theory only for simplicial complexes. Moreover, throughout this part, $\mathbb{K}$ will denote a fixed field and $\mathcal{X}$ will denote, unless said otherwise, the directed flag complex of a given digraph \textit{without double edges}.

\smallskip

\begin{definition}\label{def:chains}
For a non-negative integer $k$, we define $C_{k}(\mathcal{X}; \mathbb{K})$ as the $\mathbb{K}$-vector space formed by all formal $\mathbb{K}$-linear combinations of all $k$-dimensional elements of $\mathcal{X}$, and we call these spaces \textit{chain spaces}. The elements of $C_{k}(\mathcal{X}; \mathbb{K})$ are called \textit{$k$-chains} and are denoted by $c = \sum a_{i}\sigma^{(k)}_{i}$, where $a_{i} \in \mathbb{K}$ and $\sigma^{(k)}_{i}$ are directed $k$-simplices.
\end{definition}

\begin{definition}\label{def:boundary-map}
For any integer $n \ge 1$, we define a linear map $\partial_{n} : C_{n}(\mathcal{X}; \mathbb{K}) \rightarrow C_{n-1}(\mathcal{X}; \mathbb{K})$, called \textit{$n$-th boundary map}, by
\begin{equation}\label{eq:boundary-map}
\partial_{n}(\sigma^{(n)}) = \sum_{i=0}^{n} (-1)^{i} \hat{d}_{i}(\sigma^{(n)}) = \sum_{i=0}^{n} (-1)^{i} [v_{0},...,\hat{v}_{i},...,v_{n}],
\end{equation}

\noindent for any directed $n$-simplex $\sigma^{(n)} = [v_{0},...,v_{n}]$, where $\hat{d}_{i}$ is the $i$-th face map. For $n=0$, we define $C_{-1}(\mathcal{X}; \mathbb{K}) := \{0\}$, and $\partial_{0} = 0$ (null map).
\end{definition}

\begin{example}
Consider the directed $2$-simplex $\sigma^{(2)} = [0,1,2]$. Applying the $2$-boundary map on $\sigma^{(2)}$, we obtain the following $1$-chain: $\partial_{2}(\sigma^{(2)})$ $= [1,2] - [0,2] + [0,1]$.

\end{example}

\begin{definition}\label{def:chain-complex}
The \textit{chain complex} of $\mathcal{X}$ is defined as a sequence of chain spaces connected by boundary maps:
\[
   \hdots
   \xrightarrow[]
    {\partial_{n+1}}
    C_{n}(\mathcal{X}; \mathbb{K})
    \xrightarrow[]
    {\partial_{n}}
     C_{n-1}(\mathcal{X}; \mathbb{K})
     \xrightarrow[]
    {\partial_{n-1}}
    \hdots
     \xrightarrow[]
    {\partial_{2}}
    C_{1}(\mathcal{X}; \mathbb{K})
    \xrightarrow[]
    {\partial_{1}}
    C_{0}(\mathcal{X}; \mathbb{K})
    \xrightarrow[]
    {\partial_{0}}
    \{0\}.
\]  

As usual, we denote the chain complex by ($C_{\bullet}(\mathcal{X}; \mathbb{K}), \partial_{\bullet})$.
\end{definition}

\begin{definition}
For an integer $n \ge 0$, the elements of the image $\Ima \partial_{n+1}$ are called \textit{$n$-boundaries}, and the elements of the kernel $\ker \partial_{n}$  are called \textit{$n$-cycles}. Since $C_{-1}(\mathcal{X}; \mathbb{K}) = \{0\}$, every vertex has boundary equal to $0$, thus $\ker \partial_{0} = C_{0}(\mathcal{X}; \mathbb{K})$. The notations $B_{n} = \Ima \partial_{n+1}$ and $Z_{n} = \ker \partial_{n}$ are also commonly used.
\end{definition}


\smallskip

In addition, if an $n$-cycle is not a boundary of any $(n+1)$-chain (i.e. it does not belong to $\Ima \partial_{n+1}$), then it is called an \textit{independent $n$-cycle}.

The following proposition is referred to as the \textit{fundamental lemma of homology}, and it states that the boundary of a boundary is always zero.

\smallskip

\begin{proposition}\label{prop:fundamental-lemma-1}
\textit{For any integer $n \ge 0$, the identity $\partial_{n} \circ \partial_{n+1} = 0$ holds.}
\end{proposition}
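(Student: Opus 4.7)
The plan is to prove the identity by linearity and by evaluating both boundary maps on an arbitrary directed $(n+1)$-simplex $\sigma = [v_0, \ldots, v_{n+1}]$, then invoking the face map relation \eqref{eqfacemap} to match up and cancel terms pairwise. Since each $\partial_k$ is linear, it suffices to verify $\partial_n(\partial_{n+1}(\sigma)) = 0$ on every generator $\sigma$ of $C_{n+1}(\mathcal{X}; \mathbb{K})$.

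First I would unfold the two applications of Definition~\ref{def:boundary-map} to get a double sum
\[
\partial_n(\partial_{n+1}(\sigma)) = \sum_{j=0}^{n+1} (-1)^j \, \partial_n(\hat{d}_j(\sigma)) = \sum_{j=0}^{n+1} \sum_{i=0}^{n} (-1)^{i+j} \, \hat{d}_i(\hat{d}_j(\sigma)).
\]
Then I would split this double sum according to whether $i < j$ or $i \geq j$, writing it as $S_1 + S_2$ where $S_1$ collects the terms with $i < j$ and $S_2$ those with $i \geq j$.

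In $S_1$, the face map identity \eqref{eqfacemap} gives $\hat{d}_i \hat{d}_j = \hat{d}_{j-1}\hat{d}_i$ whenever $i < j$, so
\[
S_1 = \sum_{i < j} (-1)^{i+j} \, \hat{d}_{j-1}(\hat{d}_i(\sigma)).
\]
The key step is then a reindexing of $S_1$: substitute $j' = j-1$ and $i' = i$, so that $j'$ ranges over values with $i' \leq j'$, yielding
\[
S_1 = \sum_{i' \leq j'} (-1)^{i' + j' + 1} \, \hat{d}_{j'}(\hat{d}_{i'}(\sigma)) = -\sum_{i \leq j} (-1)^{i+j} \, \hat{d}_{j}(\hat{d}_{i}(\sigma)).
\]
But after renaming dummy indices, $S_2 = \sum_{i \geq j}(-1)^{i+j} \hat{d}_i(\hat{d}_j(\sigma))$ is, up to swapping the names of the summation variables, exactly $\sum_{i \leq j}(-1)^{i+j} \hat{d}_j(\hat{d}_i(\sigma))$. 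Hence $S_1 + S_2 = 0$, which is what we want.

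I do not anticipate a serious obstacle: the only real content is the face map relation \eqref{eqfacemap}, and the proof reduces to careful bookkeeping of signs and indices in the double sum. The most error-prone step will be the reindexing of $S_1$ and verifying that after this substitution its range matches the range of $S_2$ with the opposite sign; I would double-check this by writing out the small case $n=1$ (so $\sigma = [v_0, v_1, v_2]$) as a sanity check before presenting the general argument. The base case $n = 0$ is trivial since $\partial_0$ is the zero map by convention.
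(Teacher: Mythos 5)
Your argument is correct and is exactly the standard double-sum cancellation proof: the paper itself does not spell out a proof but defers to Hatcher (p.~105), and the argument given there is the same splitting into $i<j$ and $i\ge j$, applying the face-map relation \eqref{eqfacemap}, and reindexing $j\mapsto j-1$ so the two halves cancel. Your bookkeeping of the index ranges ($0\le i\le j'\le n$ after the shift, matching the range of the $i\ge j$ sum after swapping dummy names) checks out, and the $n=0$ case is indeed trivial since $\partial_0=0$ by convention.
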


A proof for the previous proposition can be found in \citep{Hatcher}, p. 105. Moreover, by this proposition, we have the inclusion $\Ima \partial_{n+1} \subseteq \ker \partial_{n}$, since $\partial_{n}(\partial_{n+1}c) = 0$, for any $(n+1)$-chain $c$.

\smallskip

\begin{definition}\label{def:homology}
Given an integer $n\ge 0$, the \textit{$n$-th homology} of $\mathcal{X}$ (over $\mathbb{K}$) is defined as the quotient vector space
\begin{equation}\label{eq:homology}
H_{n}(\mathcal{X}) = H_{n}(\mathcal{X}; \mathbb{K}) =  \ker \partial_{n} / \Ima \partial_{n+1}.
\end{equation}

The elements of the space $H_{n}(\mathcal{X})$ are called \textit{homology classes} (equivalence classes of independent cycles), and are denoted by $[c]$, for a given chain $c$. Two cycles are said to be \textit{homologous} if they belong to the same homology class. The dimension of $H_{n}(\mathcal{X})$, denoted by $\beta_{n} = \beta_{n}(\mathcal{X}) = \dim H_{n}(\mathcal{X})$, is called \textit{$n$-th Betti number}.
\end{definition}

\smallskip

Notice that $H_{n}(\mathcal{X}) = 0$, for all $n > \dim \mathcal{X}$. Also, the $n$-th Betti number is equal to $\beta_{n} = \dim \ker \partial_{n} - \dim \Ima \partial_{n+1}$, which is equal to the number of ``$n$-dimensional holes" in $\mathcal{X}$. In particular, as $\mathcal{X} = \mathrm{dFl}(G)$ for some given digraph $G$, using an argument analogous to that used for graphs in \citep{Ghrist2014}, the $0$-th Betti number is equal to the number of weakly connected components of $G$. 

\smallskip

\begin{example}\label{ex:holes}
Consider $\mathcal{X} = \{[0], [1], [2], [0,1], [1,2], [2,0]\}$. Applying $\partial_{1}$ on the $1$-chain $c = [0,1] + [1,2] + [2,0]$, we have $\partial_{1}(c)$ $= (1-0) + (2-1) + (0-2) = 0$. Thus, c is a $1$-cycle. However, if $c$ is not a boundary of any $2$-chain, then $c$ is an independent $1$-cycle.
\end{example}

\smallskip

Let $G$ be the digraph present in Figure \ref{fig:hole1} (a cycle of length $3$) and let $\mathcal{X}$ be its directed flag complex. The chain formed by the formal sum of the directed $1$-simplices of $\mathcal{X}$ forms an independent $1$-cycle (Example \ref{ex:holes}), and since $\beta_{1}(\mathcal{X}) = 1$, we say that there is a $1$-dimensional hole associated with $G$. On the other hand, there is no $1$-dimensional hole associated with the digraph present in Figure \ref{fig:hole2} (directed $3$-clique).

\begin{figure}[h!]
\centering
\begin{subfigure}{.35\textwidth}
  \centering
  \includegraphics[scale=0.8]{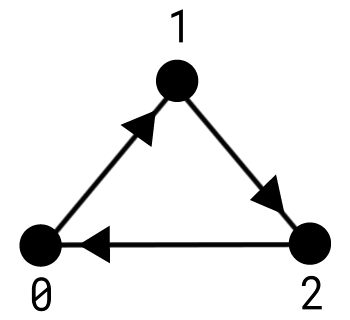}
  \caption{Cycle of length $3$.}
  \label{fig:hole1}
\end{subfigure}%
\begin{subfigure}{.35\textwidth}
  \centering
  \includegraphics[scale=0.8]{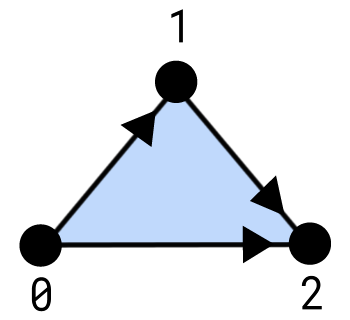}
  \caption{A directed $3$-clique.}
  \label{fig:hole2}
\end{subfigure}
\caption{There is a $1$-dimensional hole associated with a $3$-cycle, but there is no $1$-dimensional hole associated with a directed $3$-clique.}
\label{fig:holes}
\end{figure}

An important topological invariant associated with a simplicial complex is the \textit{Euler characteristic}, which can be defined in terms of Betti numbers as follows.


\smallskip

\begin{definition}\label{def:euler-char}
The \textit{Euler characteristic} of $\mathcal{X}$ is defined as the alternating sum of its Betti numbers:
\begin{equation}\label{eg:euler-char}
\chi(\mathcal{X}) = \sum_{n=0}^{\infty} (-1)^{n} \beta_{n}(\mathcal{X}) = \sum_{n=0}^{\infty} (-1)^{n} \dim H_{n}(\mathcal{X}).
\end{equation}
\end{definition}

\begin{remark}\label{rem:homology-double edges}
As commented in Subsection \ref{sec:dfc}, when a digraph has double edges, and they are taken into account, the corresponding directed flag complex is a semi-simplicial set, and then the corresponding formalism must be applied (see Observation \ref{flag-semi-simp}). In this case, the homologies might be different from those obtained when double edges are ignored.
\end{remark}

\smallskip
Furthermore, as previously mentioned, many real-world networks are weighted, so it would be interesting to take weights into account when calculating homologies. Dawson \citep{Dawson} proposed a generalization of the boundary map for weighted simplicial complexes, and later other authors proposed other generalizations based on Dawson's first proposal \citep{Ren, Wu-2021}. Below, we present the \textit{weighted $n$-th boundary map}, as proposed by Dawson, for a weighted directed flag complex $(\mathcal{X}, \widetilde{\omega})$, which is a direct modification of the formula (\ref{eq:boundary-map}):

\begin{equation}\label{eq:weig-boundary-map}
\partial_{n}^{\omega}(\sigma^{(n)})  = \sum^{n}_{i=0} (-1)^{i} \frac{\widetilde{\omega}(\sigma^{(n)})}{\widetilde{\omega}(\hat{d}_{i}(\sigma^{(n)}))} \hat{d}_{i}(\sigma^{(n)}).
\end{equation}

All previous definitions involving the boundary map are defined analogously for the weighted boundary map, and the identity $\partial_{n}^{\omega} \circ \partial_{n+1}^{\omega} = 0$ also holds for any $n\ge 0$.  In particular, if the weights of the directed simplices are all the same (but non-zero), then both boundary maps are identical. Moreover, since $(\mathcal{X}, \widetilde{\omega})$ is a weighted directed flag complex, where $\widetilde{\omega}$ is the product-weight (\ref{eq:prod-weight}), for any directed $n$-simplex $\sigma^{(n)} = [v_{0},...,v_{n}]$, we have $\widetilde{\omega}(\sigma^{(n)})/\widetilde{\omega}(\hat{d}_{i}(\sigma^{(n)})) = \tilde{\omega}(v_{i})$. Finally, we emphasize that depending on the choice of the weights, the homologies for the weighted and the unweighted cases might be different \citep{Wu-2021}.

\subsection{Persistent Homology}
\label{sec:persistent-homology}

Persistent homology is one of the main tools in the field of topological data analysis for computing topological features of a space at different scales and their persistence across these scales \citep{Edelsbrunner}. It originated with Frosini and collaborators \citep{Frosini} in the study of persistence of $0$-dimensional homology for shape recognition (which was referred to as \textit{size theory}), and it was further developed independently by Edelsbrunner et al. \citep{Edelsbrunner2000, Zomorodian}. An important property of persistent homology is that it is a robust method with respect to small perturbations in the input dataset \citep{Cohen-Steiner}.

This section is mainly based on the book \citep{Edelsbrunner}, and we consider persistent homology only for the case of simplicial complexes. Also, throughout this part, $\mathbb{K}$ will denote a fixed field and $\mathcal{X}$ will denote, unless said otherwise, the directed flag complex of a given digraph \textit{without double edges}.

\subsubsection{Filtrations}


\begin{definition}\label{def:filtration}
Given  a directed flag complex $\mathcal{X}$, a \textit{filtration} of  $\mathcal{X}$ is an indexed family of subcomplexes $\mathcal{X}_{i} \subseteq \mathcal{X}$, $\{\mathcal{X}_{i}\}_{i\in I_{k}^{*}}$, such that $\mathcal{X}_{i} \subseteq \mathcal{X}_{j}$, whenever $i \le j$, and it can be represented as a nested sequence of subcomplexes:
\begin{equation}\label{eq:filtration}
\emptyset = \mathcal{X}_{0} \subseteq \mathcal{X}_{1} \subseteq \mathcal{X}_{2} \subseteq \cdots \subset \mathcal{X}_{k} = \mathcal{X}.
\end{equation}

We say that $\mathcal{X}$ together with a filtration is a \textit{filtered directed flag complex}.
\end{definition}


\smallskip

As we advance in the sequence of a filtration (increasing the indexes), topological features of the simplicial complex, such as independent cycles, may appear and disappear.

\smallskip

\begin{example}\label{ex:filtration}
Consider $\mathcal{X} = \{[0], [1], [2], [0,1], [1,2], [0,2], [0,1,2]\}$.  The subcomplexes $\mathcal{X}_{1} = \{0], [1], [2]\}$, $\mathcal{X}_{2} = \{0], [1], [2], [0,1]\}$, $\mathcal{X}_{3} = \{0], [1], [2], [0,1], [1,2], [0,2]\}$, and $\mathcal{X}_{4} = \mathcal{X}$, satisfy $\mathcal{X}_{1} \subset \mathcal{X}_{2} \subset \mathcal{X}_{3} \subset \mathcal{X}_{4}$, thus they form a filtration of $\mathcal{X}$ (see Figure \ref{fig:filtration}). We let $\mathcal{X}_{0} = \emptyset$ implicit in the filtration .

\begin{figure}[h!]
    \centering
\includegraphics[scale=0.8]{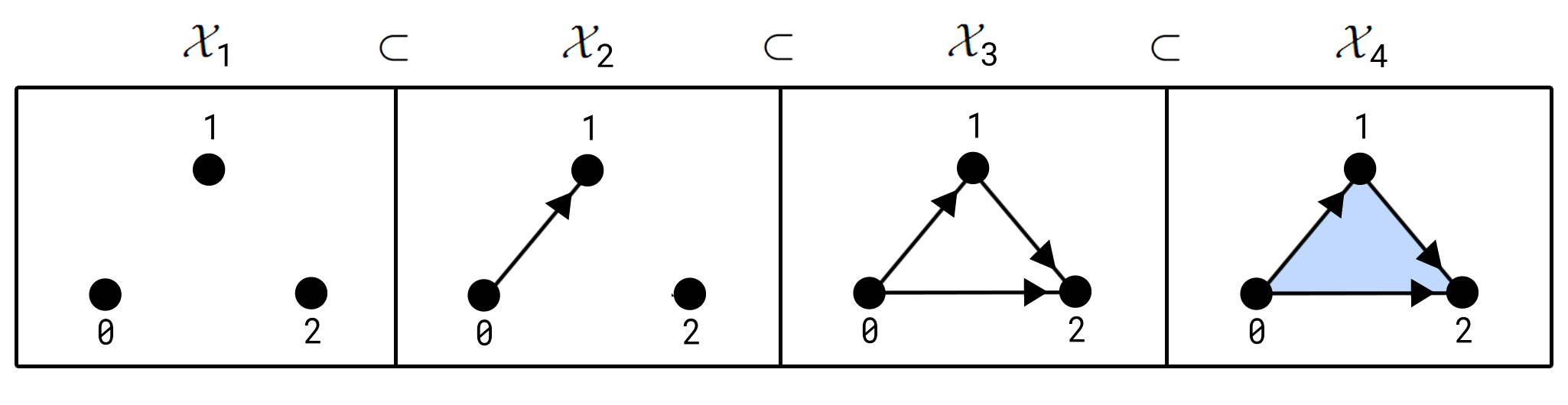}
 \caption{Example of filtration.}
 \label{fig:filtration}
\end{figure}
\end{example}

There are several different ways to create different filtrations for a (unweighted or weighted) simplicial complex \citep{Edelsbrunner, Ren}. One way is by considering its vertices in a metric space, and then producing simplices (and thus subcomplexes) by gradually increasing the threshold on the distance between the vertices, similarly as specified in the Definition \ref{def:vietoris-rips} of the Vietoris-Rips complex. These filtrations are called \textit{metric filtrations}. 

Analogously, in the case where $\mathcal{X}$ is associated with a weighted digraph $G=(V,E)$, we can obtain a filtration of $\mathcal{X}$ by considering the vertex set $V$ in a pre-metric space and then compute pre-metric Dowker complexes $\mathfrak{D}^{pre}(V; \delta)$ (Definition \ref{def:dowker-complex}) by gradually increasing the threshold $\delta$ on the pre-metric obtained from the weight function, that is, the subcomplexes $\mathcal{X}_{i} \subseteq \mathcal{X}$ will be $\mathcal{X}_{i} = \mathfrak{D}^{pre}(V; \delta_{i}) \subseteq \mathcal{X}_{j} = \mathfrak{D}^{pre}(V; \delta_{j})$, such that $\delta_{i} \le \delta_{j}$, whenever $i \le j$. We call this filtration \textit{pre-metric Dowker filtration}, and denote it by $\{ \mathfrak{D}^{pre}(V; \delta_{i}) \subseteq \mathfrak{D}^{pre}(V; \delta_{j}) \}_{\delta_{i} \le \delta_{j}}$. However, if we consider the pre-metric Dowker $\delta$-sink complex or the pre-metric Dowker $\delta$-source complex, the corresponding filtrations $\{ \mathfrak{D}^{pre}_{si}(V; \delta_{i}) \subseteq \mathfrak{D}^{pre}_{si}(V; \delta_{j}) \}_{\delta_{i} \le \delta_{j}}$ and $\{ \mathfrak{D}^{pre}_{so}(V; \delta_{i}) \subseteq \mathfrak{D}^{pre}_{so}(V; \delta_{j}) \}_{\delta_{i} \le \delta_{j}}$, respectively, are sensitive to directionality  \citep{Chowdhury-2016}.

Another way to define a filtration for a weighted directed flag complex $(\mathcal{X}, \widetilde{\omega})$ based on its weights is by establishing thresholds on the product-weight function: $\mathcal{X}_{i} = \{\sigma \in \mathcal{X} : \widetilde{\omega}(\sigma) \le \delta_{i}\}$, for given positive real numbers $\delta_{i}$.



\subsubsection{Persistent Homology}

Persistent homology essentially deals with the study of the ``lifetime persistence" of the topological features of a space. In the following, we present the mathematical theory behind persistent homology.

\smallskip

\begin{definition}\label{def:chain-map}
Let $f:\mathcal{X} \rightarrow \mathcal{X'}$ be a map between two directed flag complexes. For an integer $n \ge 0$, the map $f$ induces a $\mathbb{K}$-linear map 
\begin{alignat}{2}\label{eq:chain-map}
  \widetilde{f}_{n}: C_{n}(\mathcal{X}; \mathbb{K}) & \longrightarrow &  C_{n}(\mathcal{X}'; \mathbb{K}), \\
  \sum_{i} a_{i}\sigma_{i}^{(n)} & \mapsto & \sum_{i} a_{i}f(\sigma_{i}^{(n)}),  \notag 
\end{alignat}

\noindent where $a_{i} \in \mathbb{K}$. The map $\widetilde{f}_{n}$ is called \textit{$n$-th chain map}.
\end{definition}

\smallskip

If $(C_{\bullet}(\mathcal{X}; \mathbb{K}), \partial_{\bullet})$ and $(C_{\bullet}(\mathcal{X}'; \mathbb{K}), \partial_{\bullet}')$ denote the chain complexes associated with $\mathcal{X}$ and $\mathcal{X}'$, respectively, the $n$-th chain map satisfies $\widetilde{f}_{n} \circ \partial_{n+1} = \partial_{n+1}' \circ \widetilde{f}_{n+1}$, for all $n \ge 0$, which implies that $\widetilde{f}_{n}(\Ima \partial_{n+1}) \subseteq \Ima \partial_{n+1}'$ and $\widetilde{f}_{n}(\ker \partial_{n}) \subseteq \ker \partial_{n}'$, i.e. it takes $n$-cycles to $n$-cycles and $n$-boundaries to $n$-boundaries. Accordingly, $\widetilde{f}_{n}$ induces a linear map 
\begin{alignat}{2}\label{eq:induced-func-homology-groups}
  f_{n}: H_{n}(\mathcal{X}) & \rightarrow & H_{n}(\mathcal{X'}),\\
  [c] & \mapsto & [\widetilde{f}_{n}(c)].  \notag 
\end{alignat}

Let $\{\mathcal{X}_{i}\}_{i\in I_{k}^{*}}$ be a filtration of $\mathcal{X}$. For subcomplexes $\mathcal{X}_{i} \subseteq \mathcal{X}_{j}$, $i \le j$, we have a natural inclusion map $\mathcal{X}_{i} \hookrightarrow \mathcal{X}_{j}$ that, based on the previous discussion, induces a linear map between their $n$-th homologies, for each $n \ge 0$:
\begin{equation}
f_{n}^{i,j}: H_{n}(\mathcal{X}_{i}) \rightarrow H_{n}(\mathcal{X}_{j}).
\end{equation}

Consequently, by considering the entire filtration, for each $n \ge 0$ we have a sequence of homologies connected by linear maps:
\begin{equation}
\{0\} = H_{n}(\mathcal{X}_{0}) \xrightarrow{f_{n}^{0,1}}  H_{n}(\mathcal{X}_{1}) \xrightarrow{f_{n}^{1,2}} ... \xrightarrow{f_{n}^{k-1,k}} H_{n}(\mathcal{X}_{k}) = H_{n}(\mathcal{X}).
\end{equation}

\smallskip

Similarly, as mentioned earlier for filtrations, as we advance in the above sequence of homologies, homology classes may appear and disappear or, using the usual terminology within the computational topology literature, they may be ``born" and ``die." The persistent homologies, as defined below, try to capture the ``lifetimes" (persistence) of these homology classes.

\smallskip

\begin{definition}\label{def:per-homologies}
Given $\mathcal{X}$ with a filtration $\{\mathcal{X}_{i}\}_{i\in I_{k}^{*}}$, the \textit{$n$-th persistent homologies}, $H^{i, j}_{n}$, are the images of the linear maps $f_{n}^{i,j}$, for $0 \le i \le j \le k$, i.e.
\begin{equation}\label{eq:per-homologies}
H^{i, j}_{n} = \Ima f_{n}^{i,j} = \ker \partial_{n}(\mathcal{X}_{i}) / (\Ima \partial_{n+1}(\mathcal{X}_{j}) \cap \ker \partial_{n}(\mathcal{X}_{i})).
\end{equation}

The dimensions of $H^{i, j}_{n}$, denoted by $\beta^{i,j}_{n} = \dim H^{i, j}_{n}$, are called \textit{$n$-th persistent Betti numbers}.
\end{definition}

\smallskip

Note that $H^{i, i}_{n} = H_{n}(\mathcal{X}_{i})$, since $\Ima \partial_{n+1}(\mathcal{X}_{i}) \cap \ker \partial_{n}(\mathcal{X}_{i}) = \Ima \partial_{n+1}(\mathcal{X}_{i})$. Also, we say that a homology class $[c] \in H_{n}(\mathcal{X}_{i})$ is \textit{born at} $\mathcal{X}_{i}$ if $[c] \not\in H_{n}^{i-1, i}(\mathcal{X}_{i})$; and, if $[c]$ is born at $\mathcal{X}_{i}$, we say that $[c]$ \textit{dies entering} $\mathcal{X}_{j}$ if $f_{n}^{i, j-1}([c]) \not\in H_{n}^{i-1, j-1}$ but $f_{n}^{i, j}([c]) \in H_{n}^{i-1, j}$. The \textit{persistence} of a homology class that is born at $\mathcal{X}_{i}$ and dies entering $\mathcal{X}_{j}$ is defined as the difference $j-i$.

\smallskip 

\begin{definition}
Considering the $n$-th persistent Betti numbers $\beta^{i,j}_{n}$ of the $n$-th persistent homologies $H^{i, j}_{n}$, for $n \ge 0$ and $0 \le i \le j \le k$, we define the \textit{pairing number}, denoted by $\mu^{i, j}_{n}$, as the number of independent classes of dimension $n$ that are born at $\mathcal{X}_{i}$ and die entering $\mathcal{X}_{j}$, and it is given by
\begin{equation}
\mu^{i, j}_{n} = (\beta^{i, j - 1}_{n} - \beta^{i, j}_{n}) - (\beta^{i-1, j-1}_{n} - \beta^{i-1, j}_{n}),
\end{equation}

\noindent where $(\beta^{i, j - 1}_{n} - \beta^{i, j}_{n})$ represents the number of homology classes that are born at or before $\mathcal{X}_{i}$ and die entering $\mathcal{X}_{j}$, and $(\beta^{i-1, j-1}_{n} - \beta^{i-1, j}_{n})$ represents the number of classes that are born at or before $\mathcal{X}_{i-1}$ and die entering $\mathcal{X}_{j}$.

\end{definition}

\smallskip

The pairing numbers can be used to visualize the \textit{birth} and \textit{death} of homology classes through the \textit{persistence diagram} of a filtration, as formally defined below.

\smallskip

\begin{definition}\label{def:pers-diag}
Given a filtration $\mathcal{F} = \{\mathcal{X}_{i}\}_{i \in I_{k}^{*}}$,  the \textit{$n$-th persistence diagram} of $\mathcal{F}$, denoted by $\mbox{Dgm}_{n}(\mathcal{F})$,  is a multiset of points $(i,j)$, $0 \le i \le j \le k$, with multiplicities $\mu^{i, j}_{n}$, in the extended plane $\bar{\mathbb{R}}^{2}$, where $\bar{\mathbb{R}} = \mathbb{R} \cup \{\infty\}$, such that the vertical distance of $(i,j)$ to the diagonal is equal to $j-i$, i.e.
\def\test#1{#1|xxx}
\begin{equation}
\mbox{Dgm}_{n}(\mathcal{F}) = \dgal{ (i,i) \in \bar{\mathbb{R}}^{2} : \mu^{i, i}_{n} = \infty } \cup \dgal{ (i,j) \in \bar{\mathbb{R}}^{2} : i<j, \mu^{i, j}_{n} < \infty }.
\end{equation}
\end{definition}

\smallskip

As a consequence, the $n$-th persistent Betti numbers can be computed by counting the points (with multiplicities) in the respective $n$-th persistence diagram, as stated by the \textit{fundamental lemma of persistent homology} (FLPH) \citep{Edelsbrunner}.
\smallskip

\begin{proposition}\label{prop:FLPH}
\textbf{(FLPH)} Given a filtration $\{\mathcal{X}_{i}\}_{i \in I_{k}^{*}}$, for $n \ge 0$, the $n$-th persistent Betti number is equal to $\beta_{n}^{i,j} = \sum_{r \le i} \sum_{l > j} \mu^{r, l}_{n} $, for each pair $i,j$, with $0 \le i \le j \le k$.
\end{proposition}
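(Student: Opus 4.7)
The plan is to prove the identity by a double telescoping argument, substituting the defining expression for the pairing numbers
$$\mu_n^{r,l} = (\beta_n^{r,l-1} - \beta_n^{r,l}) - (\beta_n^{r-1,l-1} - \beta_n^{r-1,l})$$
into the right-hand side and collapsing the resulting sums. Adopting the natural boundary convention $\beta_n^{-1,l} = 0$ (corresponding to $\mathcal{X}_{-1} = \emptyset$), telescoping in the $r$-index from $0$ to $i$ eliminates the second pair of terms against the first pair at the preceding level, leaving
$$\sum_{r=0}^{i} \mu_n^{r,l} = \beta_n^{i,l-1} - \beta_n^{i,l}.$$
Then telescoping in the $l$-index from $j+1$ upward collapses the sum to
$$\sum_{l>j} \big(\beta_n^{i,l-1} - \beta_n^{i,l}\big) = \beta_n^{i,j} - \lim_{l\to\infty} \beta_n^{i,l}.$$

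The second step is to interpret and dispose of the residual boundary term $\lim_{l\to\infty}\beta_n^{i,l}$. For the finite filtration ending at $\mathcal{X}_k$, this limit equals $\beta_n^{i,k}$, which counts the independent classes in $H_n(\mathcal{X}_i)$ that never die (the \emph{essential} classes born at or before $\mathcal{X}_i$). By Definition \ref{def:pers-diag}, these essential classes correspond exactly to points of $\mathrm{Dgm}_n(\mathcal{F})$ with infinite death, i.e.\ to multiplicities $\mu_n^{r,\infty}$ with $r \le i$, whose sum is $\beta_n^{i,k}$. Since the sum $\sum_{r\le i}\sum_{l>j}\mu_n^{r,l}$ in the statement ranges over all $l>j$, including the ``death at infinity'' contributions, these cancel the boundary term and yield exactly $\beta_n^{i,j}$.

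Before running the telescoping, I would briefly verify that each $\mu_n^{r,l}$ is a well-defined non-negative integer, by checking that $\beta_n^{i,j}$ is monotonically non-increasing in $j$ for fixed $i$ (the image $f_n^{i,j}(H_n(\mathcal{X}_i))$ can only shrink as more boundaries are added) and monotonically non-decreasing in $i$ for fixed $j$ (the image grows as the domain grows). This monotonicity ensures both the absolute convergence of the finite sums involved and the combinatorial meaning of the pairing numbers as class counts rather than formal differences.

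The main obstacle is really bookkeeping rather than conceptual: one must carefully track the conventions at the boundaries ($r = 0$ and $l = \infty$) so that the telescoping closes cleanly, and one must justify the identification between classes born at $\mathcal{X}_r$ that survive all of $\mathcal{X}_k$ and the ``$\mu_n^{r,\infty}$'' contribution in the persistence diagram. Once these conventions are fixed explicitly, the proof is essentially a two-index version of the standard argument that rank functions of a filtered vector space are recovered by integrating their associated ``derivative'' measure.
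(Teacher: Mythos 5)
The paper does not actually prove Proposition~\ref{prop:FLPH}; it is quoted from the literature with a citation, so there is no in-text argument to compare against. Your telescoping proof is the standard one and it is correct: substituting the defining expression for $\mu_n^{r,l}$, summing over $r\le i$ collapses (with the convention $\beta_n^{-1,\cdot}=0$) to $\beta_n^{i,l-1}-\beta_n^{i,l}$, and summing over $l>j$ then collapses to $\beta_n^{i,j}$ minus a terminal term. Your identification of that terminal term with the essential classes, cancelled by the death-at-infinity multiplicities, is the right resolution; note only that this step requires a convention the paper's Definition~\ref{def:pers-diag} never explicitly supplies (the diagram as written contains only the diagonal and finite off-diagonal points, and the filtration stops at $\mathcal{X}_k$). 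You should therefore state explicitly that you append a terminal stage in which every surviving class dies, i.e.\ set $\beta_n^{i,\infty}=0$ and let $l$ range over $\{j+1,\dots,k\}\cup\{\infty\}$; without that, the literal sum yields $\beta_n^{i,j}-\beta_n^{i,k}$.

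One genuine (though peripheral) slip: you assert that monotonicity of $\beta_n^{i,j}$ in each index separately ensures that the pairing numbers are non-negative class counts. It does not. The quantity $\mu_n^{r,l}$ is a mixed second difference, and monotonicity only shows that each of the two parenthesized terms is non-negative, not that their difference is. Non-negativity of $\mu_n^{r,l}$ requires a separate argument (e.g.\ the interval decomposition of the persistence module, or a direct rank computation). Fortunately, the telescoping identity you are proving is purely formal and does not depend on $\mu_n^{r,l}\ge 0$, so this does not affect the validity of the main argument — but the side remark should either be dropped or proved properly.
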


\smallskip

Moreover, from the FLPH, we can create a related function as follows.

\smallskip

\begin{definition}
Given a filtration $\mathcal{F} = \{\mathcal{X}_{i}\}_{i \in I_{k}^{*}}$, we define the \textit{Betti function} as
\def\test#1{#1|xxx}
\begin{equation}
\mathcal{B}_{n}(t) = \# \dgal{ (i,j) \in \mbox{Dgm}_{n}(\mathcal{F}) : i \le t < j}.
\end{equation}

The plot of the Betti function in the plane is called \textit{Betti curve}.
\end{definition}

\smallskip

It's clear that $\mathcal{B}_{n}(t) = \beta_{n}^{i,j}$, for $i \le t < j$. Figure \ref{fig:per-diag} shows the persistence diagram corresponding to the filtration presented in Example \ref{ex:filtration} (Figure \ref{fig:filtration}), and Figure \ref{fig:betti-curve} shows the corresponding Betti curves. 

An alternative to persistence diagrams to visualize the birth and death of topological features in a given filtration are the \textit{persistence barcodes} \citep{Carlsson}, in which, roughly speaking, each bar of length $j-i$ represents a topological feature that is born at $\mathcal{X}_{i}$ and dies entering $\mathcal{X}_{j}$. Figure \ref{fig:per-bar} shows the persistence barcodes corresponding to the aforementioned example.

\smallskip



\begin{figure}[h!]
\centering
\begin{subfigure}{.32\textwidth}
  \centering
  \includegraphics[scale=0.7]{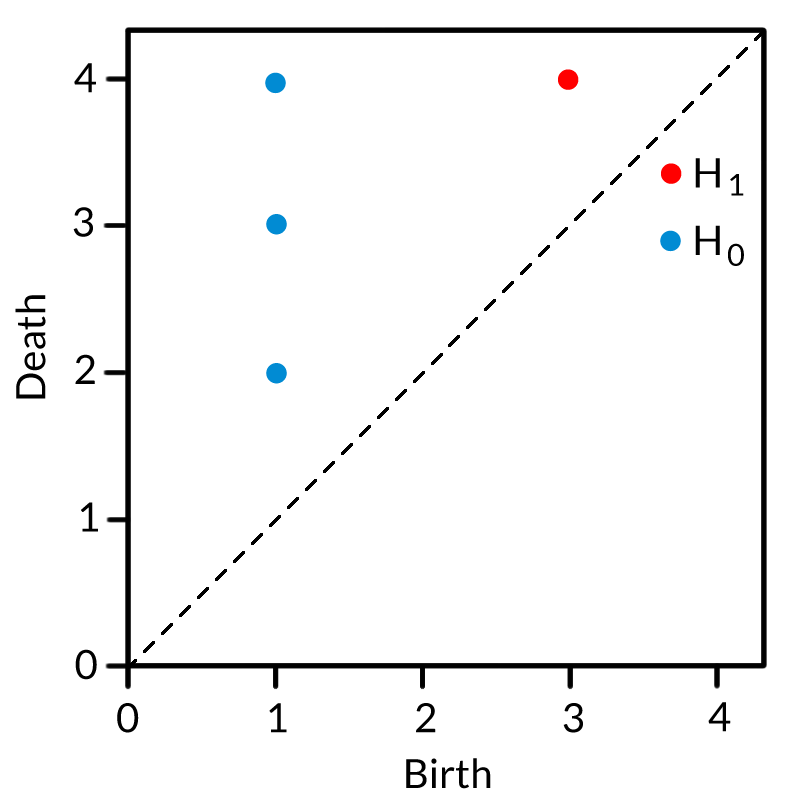}
  \caption{Persistence diagram.}
  \label{fig:per-diag}
\end{subfigure}%
\begin{subfigure}{.32\textwidth}
  \centering
  \includegraphics[scale=0.7]{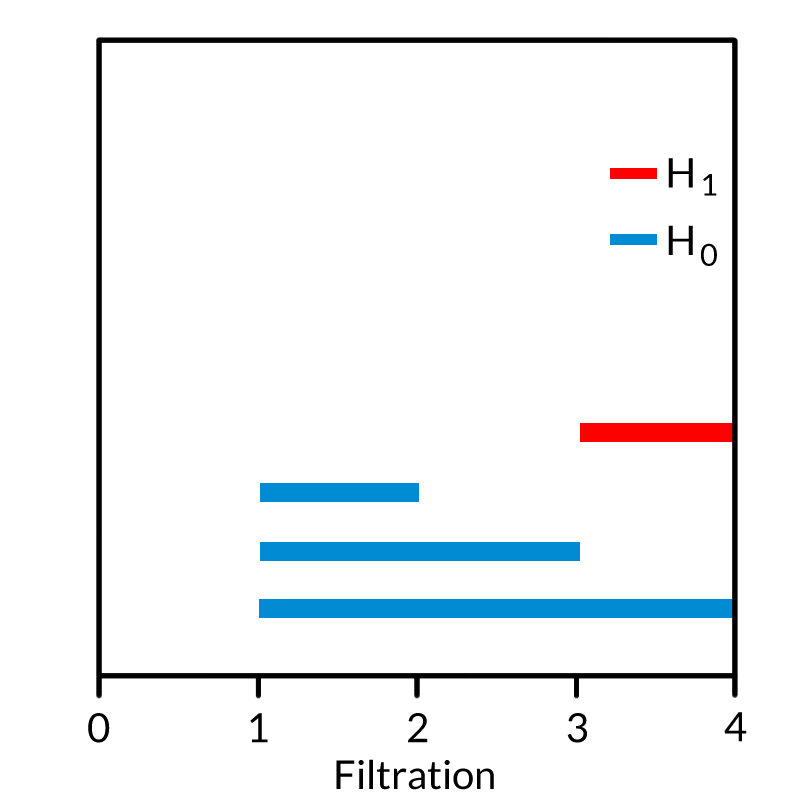}
  \caption{Persistence barcodes.}
  \label{fig:per-bar}
\end{subfigure}
\begin{subfigure}{.32\textwidth}
  \centering
  \includegraphics[scale=0.7]{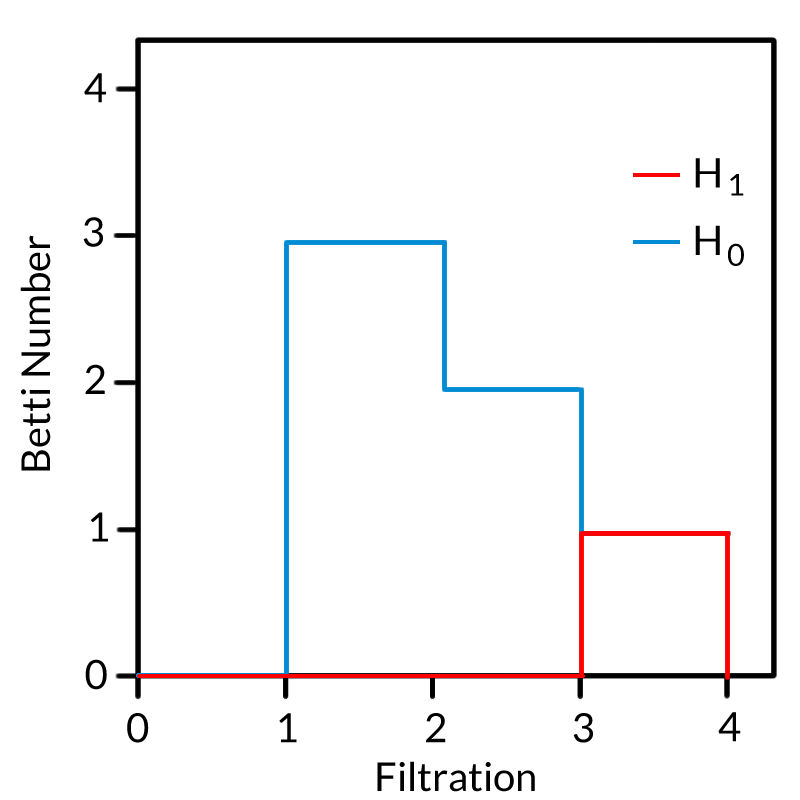}
  \caption{Betti curves.}
  \label{fig:betti-curve}
\end{subfigure}
\caption{The persistence diagram, persistence barcodes, and Betti curves correspond to the filtration of the Example \ref{ex:filtration} (see Figure \ref{fig:filtration}).}
\label{fig:persistence-diag-bar}
\end{figure}

\subsubsection{Distances for Persistence Diagrams and Betti Curves}

One can compare the topology of two different directed flag complexes by comparing the similarity of their persistence diagrams. The most common similarity measures between persistence diagrams are the \textit{bottleneck} and \textit{Wasserstein} distances.

\smallskip

\begin{definition}
Given two persistent diagrams, $P$ and  $Q$, let $\eta : P \rightarrow Q$ denote a bijection (perfect matching) between them, and let $||\cdot||_{\infty}$ denote the $\infty$-norm in the plane $\mathbb{R}^{2}$, i.e. $|| \mathbf{v} ||_{\infty} = \max_{i}(|v_{i}|)$, for $\mathbf{v} = (v_{1}, v_{2}) \in \mathbb{R}^{2}$. The \textit{bottleneck distance} between $P$ and $Q$ is define as
\begin{equation}\label{eq:bottleneck-dist}
d_{W_{\infty}}(P, Q) = \inf_{\eta : P \rightarrow Q} \sup_{x \in P} ||x - \eta(x)||_{\infty} .
\end{equation}
\end{definition}

\smallskip

We can verify that $d_{W_{\infty}}$ is indeed a distance since it satisfies the three conditions present in Definition \ref{def:metric} (formal definition of distance).

\smallskip

\begin{definition}
Given two persistent diagrams, $P$ and $Q$, and a bijection $\eta : P \rightarrow Q$ between them, the \textit{$p$-Wasserstein distance} between $P$ and $Q$ is defined as 
\begin{equation}\label{eq:wasserstein-dist}
d_{W_{p}}(P, Q) =  \inf_{\eta : P \rightarrow Q} \Bigg( \sum_{x \in P} ||x - \eta(x)||_{\infty}^{p} \Bigg)^{1/p}.
\end{equation}
\end{definition}

\smallskip

Note that when $p \rightarrow \infty$, the $p$-Wasserstein distance becomes the bottleneck distance; thus, we can identify the bottleneck distance as the $\infty$-Wasserstein distance.

A notable feature of the persistence diagrams is that, in a space provided with a Wasserstein distance, they are \textit{stable} (or robust) against perturbations (``noise"), i.e. small perturbations in the filtration produce small changes in the respective persistence diagram, as proved in various stability theorems \citep{Cohen-Steiner, Edelsbrunner}.

Furthermore, we can use the $L_{p}$-norm to define a distance between two Betti curves as follows.

\smallskip


\begin{definition}
Given two persistent diagrams, $P$ and $Q$, with respective Betti functions, $\mathcal{B}_{P}(x)$ and $\mathcal{B}_{Q}(x)$, we define de \textit{$p$-Betti distance} between their respective Betti curves as 
\begin{equation}\label{eq:betti-curve-dist}
d_{\mathcal{B}}(\mathcal{B}_{P}, \mathcal{B}_{Q}) = \Bigg( \int_{\mathbb{R}} |\mathcal{B}_{P}(x) - \mathcal{B}_{Q}(x)|^{p} dx \Bigg)^{1/p}.
\end{equation}
\end{definition}

\smallskip

In general, for computational purposes, the most used parameters are $p=1$ or $p=2$.

\subsection{Combinatorial Hodge Laplacian}
\label{sec:comb-hodge-laplacian}


As studied in Chapter \ref{chap:chap2}, the methods of spectral graph theory can capture structural properties of graphs by using eigenvalues and eigenvectors of their matrix representations, such as the adjacency and Laplacian matrices. Likewise, spectral simplicial theory, a generalization of the spectral graph theory for simplicial complexes, can reveal insightful structural information about simplicial complexes through the eigenvalues and eigenvectors associated with their \textit{combinatorial Laplacians} (also called \textit{Hodge Laplacians}, due to their connection with Hodge theory), which are a generalization to higher-orders of the graph Laplacian \citep{Friedman1998, Steenbergen}.

Before introducing the formal definition of the Hodge Laplacian, let's remember that, given a vector space with inner product, $(X, \langle \cdot , \cdot \rangle)$, and a linear operator $f: X \rightarrow X$, the \textit{adjoint operator} of $f$ is a linear operator $f^{*}:X \rightarrow X$ satisfying $\langle f(x), y \rangle = \langle x, f^{*}(y) \rangle$, for all $x, y \in X$.

Just as before, throughout this part, $\mathbb{K}$ will denote a fixed field and $\mathcal{X}$ will denote the directed flag complex of a given digraph \textit{without double edges}.

\smallskip

\begin{definition}
Let $(C_{\bullet}(\mathcal{X}, \mathbb{K}), \partial_{\bullet})$ denote the chain complex of $\mathcal{X}$. Let $\partial_{n}^{*}$ denote the adjoint operator of $\partial_{n}$, for $n \ge 0$. The \textit{(combinatorial) Hodge $n$-Laplacian operator}, $\mathcal{L}_{n}: C_{n}(\mathcal{X}, \mathbb{K}) \rightarrow C_{n}(\mathcal{X}, \mathbb{K}) $, is defined by
\begin{equation}\label{eq:laplacian-operator}
\mathcal{L}_{n} = \partial_{n+1} \circ \partial_{n+1}^{*} + \partial_{n}^{*} \circ \partial_{n}.
\end{equation}

\end{definition}


\smallskip

The higher-order boundary maps $\partial_{n}$ induce matrices $B_{n}$ that can be interpreted as higher-order incidence matrices between the directed simplices and their (co-)faces. For instance, the matrix $B_{1}$ is the vertex-to-arc incidence matrix (see Definition \ref{def:incident-matrix}), and $B_{2}$ is the arc-to-($2$-simplex) incidence matrix, and so on. Thus, let $[\partial_{n}] = B_{n}$ and $[\partial_{n}^{*}] = B_{n}^{T}$ be the matrix representations of the $n$-th boundary operator and its adjoint operator, respectively, the matrix representation of the Hodge $n$-Laplacian operator is given by
\begin{equation}\label{eq:hodge-laplacian}
[\mathcal{L}_{n}] = B_{n+1}B_{n+1}^{T} + B_{n}^{T}B_{n}.
\end{equation}

\smallskip

As commented before, the Hodge $n$-Laplacian is a generalization of the graph Laplacian for simplicial complexes. Indeed, for $n=0$, we have $[\mathcal{L}_{0}] = BB^{T}$ (graph Laplacian). Also, since $\partial_{n} \circ \partial_{n+1} = 0$, we have $B_{n}B_{n+1} = 0$, for all $n \ge 0$. Occasionally, the following notations are used: $\mathcal{L}_{n}^{u} = \partial_{n+1}\circ \partial_{n+1}^{*}$ (\textit{upper Laplacian}) and $\mathcal{L}_{n}^{l} = \partial_{n}^{*} \circ \partial_{n}$ (\textit{lower Laplacian}), thus $\mathcal{L}_{n} = \mathcal{L}_{n}^{u} + \mathcal{L}_{n}^{l}$. 

\smallskip

\begin{observation}
We denote the \textit{$n$-Laplacian spectrum}, $n \ge 0$, of the Hodge $n$-Laplacian matrix by $\{ \mu_{i}^{n} \}_{i=1}^{max}$ and, analogously to the graph Laplacian, we represent the eigenvalues of this spectrum in an increasing order: $\mu_{1}^{n} \le \hdots \le \mu_{max}^{n}$.
\end{observation}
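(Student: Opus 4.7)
The plan is to justify the implicit content of the observation, namely that the eigenvalues of $[\mathcal{L}_{n}]$ are real (so that the ordering $\mu_{1}^{n}\le\cdots\le\mu_{\max}^{n}$ actually makes sense) and in fact non-negative, in direct analogy with Proposition \ref{prop:eigenvalues} for the graph Laplacian $L=BB^{T}$.

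First, I would show that $[\mathcal{L}_{n}]$ is a real symmetric matrix. Starting from the explicit matrix expression (\ref{eq:hodge-laplacian}), one has
\begin{equation}
[\mathcal{L}_{n}]^{T}=(B_{n+1}B_{n+1}^{T}+B_{n}^{T}B_{n})^{T}=B_{n+1}B_{n+1}^{T}+B_{n}^{T}B_{n}=[\mathcal{L}_{n}],
\end{equation}
since each of the two summands is manifestly symmetric. The spectral theorem for real symmetric matrices then yields that the spectrum $\{\mu_{i}^{n}\}$ consists entirely of real numbers, which is precisely what permits writing them in the increasing chain $\mu_{1}^{n}\le\cdots\le\mu_{\max}^{n}$.

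Second, I would establish that $[\mathcal{L}_{n}]$ is positive semi-definite (Definition \ref{def:PSD}), which sharpens the conclusion to $0\le\mu_{1}^{n}\le\cdots\le\mu_{\max}^{n}$ and brings the statement fully in line with its $n=0$ analogue. For any vector $v\in\mathbb{R}^{|X_{n}|}$,
\begin{equation}
v^{T}[\mathcal{L}_{n}]v = v^{T}B_{n+1}B_{n+1}^{T}v+v^{T}B_{n}^{T}B_{n}v = \|B_{n+1}^{T}v\|^{2}+\|B_{n}v\|^{2}\ge 0,
\end{equation}
with $\|\cdot\|$ the Euclidean norm. Arguing exactly as in Proposition \ref{prop:eigenvalues}, if $v$ is an eigenvector of $[\mathcal{L}_{n}]$ with eigenvalue $\mu$, then $v^{T}[\mathcal{L}_{n}]v=\mu\|v\|^{2}\ge 0$ forces $\mu\ge 0$. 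This also recovers, as a sanity check, the known fact that $\mu_{1}^{n}=0$ whenever there exist nontrivial $n$-th homology classes, since any representative cycle lies simultaneously in $\ker\partial_{n}$ and $(\operatorname{Im}\partial_{n+1})^{\perp}$ and is therefore annihilated by both $\mathcal{L}_{n}^{l}$ and $\mathcal{L}_{n}^{u}$ (this is the discrete Hodge decomposition).

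There is essentially no obstacle here; the only mild subtlety is that the chain spaces were defined over an abstract field $\mathbb{K}$ (Definition \ref{def:chains}), whereas the adjoint and the matrix identity (\ref{eq:hodge-laplacian}) implicitly invoke a real inner product. The natural fix is to declare the indicator basis $\{\sigma_{i}^{(n)}\}$ orthonormal and take $\mathbb{K}=\mathbb{R}$ in the spectral discussion, under which $\partial_{n}^{*}$ is represented by $B_{n}^{T}$ and the two arguments above go through verbatim.
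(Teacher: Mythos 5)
Your argument is correct and coincides with what the paper does: the observation itself is purely notational, and the substantive content you supply (symmetry of $[\mathcal{L}_{n}]=B_{n+1}B_{n+1}^{T}+B_{n}^{T}B_{n}$, positive semi-definiteness via $v^{T}[\mathcal{L}_{n}]v=\|B_{n+1}^{T}v\|^{2}+\|B_{n}v\|^{2}\ge 0$, hence real non-negative eigenvalues) is exactly the paper's Proposition \ref{prop:hodge-laplacian-eigenvalues}, whose proof it declares analogous to Propositions \ref{prop:PSD} and \ref{prop:eigenvalues}. Your closing remark about fixing $\mathbb{K}=\mathbb{R}$ with an orthonormal simplex basis is a fair point the paper leaves implicit, but it does not change the argument.
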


\begin{proposition}\label{prop:hodge-laplacian-eigenvalues}
\textit{The Hodge $n$-Laplacian matrix is positive semi-definite and all its eigenvalues are non-negative.}
\end{proposition}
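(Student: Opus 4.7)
The plan is to mirror, at the level of the combinatorial Hodge $n$-Laplacian, the two-step argument already carried out in the text for the ordinary graph Laplacian (Proposition \ref{prop:PSD} and Proposition \ref{prop:eigenvalues}). The key observation is that the matrix $[\mathcal{L}_n] = B_{n+1}B_{n+1}^T + B_n^T B_n$ is already presented as a sum of two matrices of the form $MM^T$, so positive semi-definiteness should fall out by expressing the quadratic form as a sum of squared Euclidean norms.

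First I would verify that $[\mathcal{L}_n]$ is symmetric: both summands satisfy $(B_{n+1}B_{n+1}^T)^T = B_{n+1}B_{n+1}^T$ and $(B_n^T B_n)^T = B_n^T B_n$, so their sum is symmetric and hence its eigenvalues are real. Next, for an arbitrary real vector $v$ of the appropriate dimension, I would compute
\begin{equation}
v^T [\mathcal{L}_n] v = v^T B_{n+1} B_{n+1}^T v + v^T B_n^T B_n v = \lVert B_{n+1}^T v \rVert^2 + \lVert B_n v \rVert^2,
\end{equation}
where $\lVert \cdot \rVert$ is the Euclidean norm. Since both terms on the right-hand side are non-negative, $v^T [\mathcal{L}_n] v \ge 0$ for every $v$, which by Definition \ref{def:PSD} shows that $[\mathcal{L}_n]$ is positive semi-definite.

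For the statement on eigenvalues, I would essentially reuse the argument of Proposition \ref{prop:eigenvalues}. Let $v$ be an eigenvector of $[\mathcal{L}_n]$ with eigenvalue $\mu$. Then
\begin{equation}
v^T [\mathcal{L}_n] v = v^T (\mu v) = \mu \lVert v \rVert^2,
\end{equation}
and by the previous step the left-hand side is non-negative. Since $v \neq 0$ implies $\lVert v \rVert^2 > 0$, we conclude $\mu \ge 0$. I do not anticipate any serious obstacle here; the only subtle point is to make sure the dimensions of the incidence-type matrices $B_{n+1}$ and $B_n$ are consistent so that the matrix product $B_{n+1}B_{n+1}^T + B_n^T B_n$ is well defined and acts on the same space $C_n(\mathcal{X}, \mathbb{K})$, which was already implicit in the definition of $\mathcal{L}_n$.
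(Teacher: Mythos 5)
Your argument is correct and is exactly the route the paper intends: the text states that the proof is analogous to Propositions \ref{prop:PSD} and \ref{prop:eigenvalues}, and your explicit computation $v^{T}[\mathcal{L}_{n}]v = \lVert B_{n+1}^{T}v\rVert^{2} + \lVert B_{n}v\rVert^{2} \ge 0$ followed by the standard eigenvalue argument is precisely that adaptation. No gaps.
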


\smallskip

The proof of the previous proposition is analogous to the proof of the Proposition \ref{prop:PSD} together with the proof of the Proposition \ref{prop:eigenvalues}.

Furthermore, the kernel of the Hodge $n$-Laplacian operator associated with $\mathcal{X}$ is isomorphic to the $n$-th homology of $\mathcal{X}$, i.e.  $ker(\mathcal{L}_{n}) \cong H_{n}(\mathcal{X})$, which implies that the number of zero-eigenvalues of $[\mathcal{L}_{n}]$ is equal to the $n$-th Betti number \citep{Friedman1998}. Accordingly, the Hodge Laplacians provide valuable topological information about the complex.

\smallskip

\begin{example}\label{ex:dfc-hodge}
Consider the digraph $G$ as depicted in Figure \ref{laplacian1}. Its directed flag complex is $\mathcal{X} = \{ [0], [1], [2], [3], [0,1], [0,2], [1,2], [1,3], [2,3], [0,1,2], [1,2,3] \}$. Therefore, the higher-order incidence matrices $B_{n}$ associated with $\mathcal{X}$ are

\begin{figure}[h!]
    \centering
\includegraphics[scale=0.8]{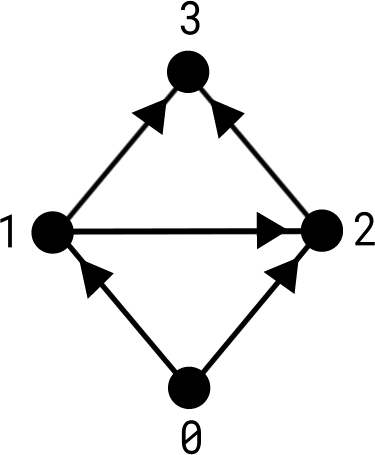}
 \caption{Digraph $G$.}
 \label{laplacian1}
\end{figure}

$$
{\small
B_{1} =  
\bbordermatrix{
  & [01] & [02] & [12] & [13] & [23] \cr
[0]  & 1 & 1 & 0 &  0 & 0 \cr
[1]  & -1 & 0 & 1 & 1 & 0 \cr
[2]  & 0 & -1 & -1 & 0 & 1 \cr
[3]  & 0 & 0 & 0 & -1 & -1 \cr
}, \hspace{0.1in}
B_{2} =  
\bbordermatrix{
  & [012] & [123] \cr
[01]  & 1 & 0 \cr
[02]  & -1 & 0  \cr
[12]  & 1 & 1  \cr 
[13]  & 0 & -1  \cr
[23]  & 0 & 1  \cr
}
}.
$$

\smallskip

Note that $B_{n} = 0$ for all $n \ge 3$. Thus, by the formula \ref{eq:hodge-laplacian}, the Hodge $n$-Laplacians associated with $\mathcal{X}$ are: 

$$
[\mathcal{L}_{0}] = B_{1}B_{1}^{T} = \begin{bmatrix}
2 & -1 & -1 & 0\\
-1 & 3 & -1 & -1 \\
-1 & -1 & 3 & -1 \\
 0 & -1 & -1 & 2 
\end{bmatrix},
$$

$$
[\mathcal{L}_{1}] = B_{2}B_{2}^{T} + B_{1}^{T}B_{1} = 
\begin{bmatrix}
3 & 0 & 0 & -1 & 0 \\
0 & 3 & 0 & 0 & -1 \\
0 & 0 & 4 & 0 & 0 \\
-1 & 0 & 0 & 3 & 0\\
0 & -1 & 0 & 0 & 3
\end{bmatrix}, \hspace{0.1in}
[\mathcal{L}_{2}] = B_{2}^{T}B_{2} = \begin{bmatrix}
3 & 1\\
1 & 3
\end{bmatrix},
$$

\noindent and  $[\mathcal{L}_{n}] = 0$ for all $n \ge 3$.
\end{example}



\section{Path Complexes of Digraphs}
\label{sec:path-complexes}

In the literature, there are different approaches to constructing complexes and homologies from digraphs. For instance, one can build homologies from the directed clique complexes of digraphs, as shown in Subsection \ref{sec:simplicial-homology}; one can also build Hochschild homologies out of the path algebras of digraphs \citep{Happel}. Nevertheless, a novel type of complexes and homologies associated with digraphs that generalizes the concept of directed clique complexes, called \textit{path complexes} and \textit{path homologies}, was introduced and developed by Grigor’yan et al. \citep{Grigoryan-2018, Grigoryan-2013, Grigoryan-2014a, Grigoryan-2015, Grigoryan-2020, Grigoryan-2014b, Grigoryan-2016, Grigoryan-2017}. Some interesting properties of this new formalism are that the chain complexes associated with path complexes might contain more digraph substructures than just directed cliques, and we can have non-trivial path homologies of all dimensions.

In the following, we present the formalism of path complexes and path homologies as presented in the aforementioned articles, preserving the original notation whenever possible, and also the new concept of \textit{$\partial$-invariant directed quasi-cliques}.


\subsection{Path Complexes}

\begin{definition}
Given a finite set of vertices $V$ and an integer $p \ge 0$, an \textit{elementary $p$-path} is any sequence of $p+1$ vertices $i_{k} \in V$, $k=0,...,p$ (not necessarily distinct), which will be denoted by $e_{i_{0}...i_{p}}$. Given a field $\mathbb{K}$, we denote by $\Lambda_{p} = \Lambda_{p}(V)$ the $\mathbb{K}$-vector space formed by all formal $\mathbb{K}$-linear combinations of elementary $p$-paths, i.e. $u = \sum_{i_{0},..., i_{p} \in V} u^{i_{0}...i_{p}}e_{i_{0}...i_{p}} \in \Lambda_{p}$, where $u^{i_{0}...i_{p}} \in \mathbb{K}$. The elements of $\Lambda_{p}$ are called \textit{$p$-paths}. Also, denoting the empty set as an element of $\Lambda_{-1}$ by $e$, we have $\Lambda_{-1} \cong \mathbb{K}$, since the elements of $\Lambda_{-1}$ are multiples of $e$, and we define $\Lambda_{-2} = \{0\}$.
\end{definition}

\begin{definition}
Given a digraph $G = (V,E)$, for any $p \ge 0$ we define the \textit{p-th boundary operator} $\partial_{p} : \Lambda_{p}(V) \rightarrow \Lambda_{p-1}(V)$ by

\begin{equation}\label{eq:boundary-map-path}
\partial_{p}(e_{i_{0}...i_{p}}) = \sum_{q=0}^{p} (-1)^{q} \hat{d}_{q}(e_{i_{0}...i_{p}}),
\end{equation}

\noindent where $\hat{d}_{q}(e_{i_{0}...i_{p}}) = e_{i_{0}...\hat{i}_{q}...i_{p}}$, i.e. $\hat{d}_{q}$ is the function that excludes the $q$-th node $i_{q}$ of the $p$-path $e_{i_{0}...i_{p}}$. In addition, we define $\partial_{-2} := 0$.
\end{definition}

\smallskip

The operator (\ref{eq:boundary-map-path}) satisfies the property $\partial_{p} \circ \partial_{p+1} = 0$, for all $p \ge 0$ (see \citep{Grigoryan-2020}, p. 567, for a proof).

\smallskip

\begin{definition}
An elementary $p$-path $e_{i_{0}...i_{p}}$ is said to be \textit{regular} if $i_{k} \neq i_{k+1}$ for all $k=0,...,p-1$, and it is called \textit{non-regular} otherwise. The elements of the subspace $\mathcal{R}_{p}(V) = \mbox{{span}} \{ e_{i_{0}...i_{p}} : i_{k} \neq i_{k+1}, \forall k=0,...,p-1 \} \subseteq \Lambda_{p}(V)$ are called \textit{regular $p$-paths}.
\end{definition} 

\smallskip

Although Definition \ref{def:path-walk-trail} states that all vertices of a directed path must be different, here we are adopting an abuse of notation and using the expression \textit{elementary path} as a synonym of \textit{walk}, and the expression \textit{regular elementary path} as a synonym of \textit{trail}.

\smallskip

\begin{definition} 
A \textit{path complex} over a finite set $V$ is a non-empty set $\mathcal{P} = \mathcal{P}(V)$ of elementary paths
on $V$ such that for any $n \ge 0$, if $e_{i_{0}...i_{n}} \in \mathcal{P}(V)$, then the truncated paths $e_{i_{0}...i_{n-1}}$ and $e_{i_{1}...i_{n}}$ belong to $\mathcal{P}(V)$ as well.
\end{definition}

\begin{definition} 
Let $G = (V,E)$ be a digraph. A regular elementary $p$-path $e_{i_{0}...i_{p}}$ on $V$ is called \textit{allowed} if the directed edge $(i_{k}, i_{k+1}) \in E$ for any $k=0,..,p-1$, and its called \textit{non-allowed} otherwise. We denote by $\mathcal{A}_{p}(V)$ the $\mathbb{K}$-vector subspace of $\Lambda_{p}(V)$ spanned by allowed elementary $p$-paths, i.e.
\begin{equation}
\mathcal{A}_{p}(V) = \mbox{span} \{ e_{i_{0}...i_{p}} : i_{0}...i_{p} \mbox{ is allowed} \}.
\end{equation}
\end{definition} 

\smallskip

Also, we denote by $\mathcal{P}_{p}(G)$ the set of all allowed $p$-paths and by $\mathcal{P}(G) = \bigcup_{p} \mathcal{P}_{p}(G)$ the path complex associated with the digraph $G=(V,E)$. In particular, $\mathcal{P}_{0}(G) = V$ and $\mathcal{P}_{1}(G) = E$.

Here we emphasize that, from now on, we will consider only path complexes associated with digraphs.

\smallskip

\begin{example}
Consider the digraph $G$ shown in Figure \ref{pathcomplex}. Its path complex is the following set of elementary $p$-paths, with $p=0,1,2,3$: 
$$
\mathcal{P}(G) = \{ e_{0}, e_{1}, e_{2}, e_{3}, e_{01}, e_{02}, e_{12}, e_{13}, e_{23}, e_{012}, e_{123}, e_{013}, e_{023}, e_{0123} \}.
$$

\begin{figure}[h!]
    \centering
\includegraphics[scale=0.82]{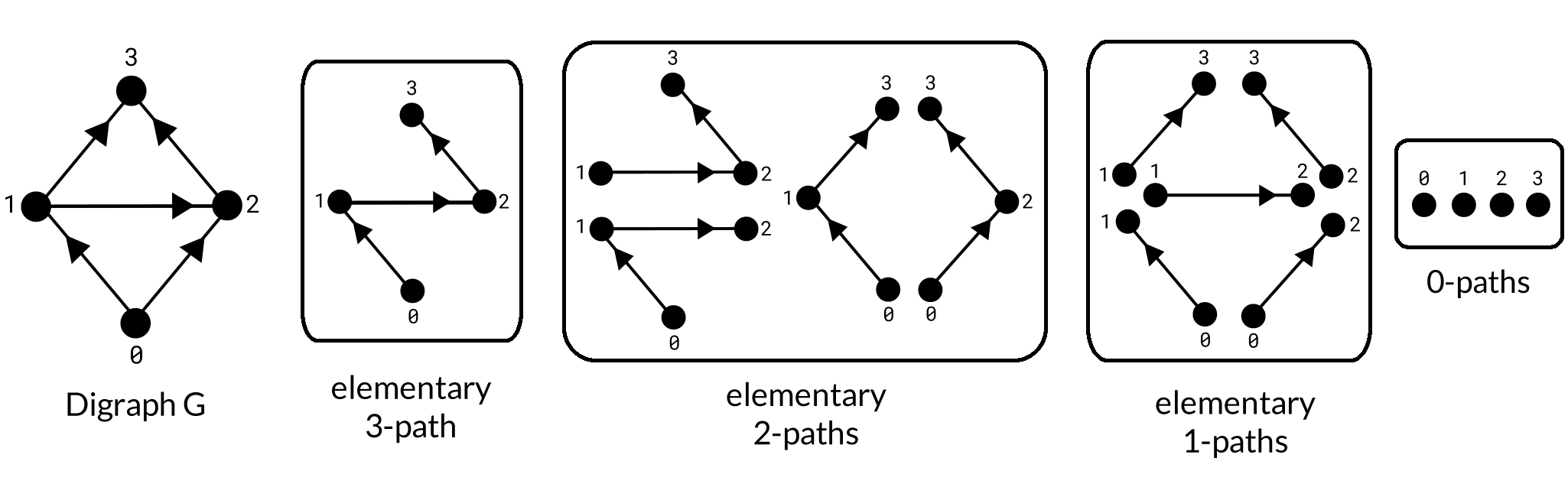}
 \caption{A digraph and its path complex.}
 \label{pathcomplex}
\end{figure}
\end{example}

Note that every $p$-path of a path complex is allowed. Restricting the operator $\partial_{p}$ on the subspace $\mathcal{A}_{p} \subseteq \Lambda_{p}$, $p \ge 0$, we can have $\partial_{p}\mathcal{A}_{p} \not\subset \mathcal{A}_{p-1}$, but we are interested in the case where the inclusion occurs, so we define the following subspace of $\mathcal{A}_{p}$.

\smallskip

\begin{definition}
Given a digraph $G = (V,E)$, consider the following subspace of $\mathcal{A}_{p}(V)$, $p \ge 0$:
\begin{equation}
\Omega_{p} = \Omega_{p}(G) := \{u \in \mathcal{A}_{p} : \partial_{p} u \in \mathcal{A}_{p-1}\}.
\end{equation}

The elements of $\Omega_{p}$ are called \textit{$\partial$-invariant $p$-paths}.
\end{definition}

\smallskip

Notice that $\partial_{p}\Omega_{p} \subseteq \Omega_{p-1}$, for all $p \ge 0$. In fact, by definition, $\partial_{p}u \in \mathcal{A}_{p-1}$ for all $u \in \Omega_{p}$, and since $\partial_{p-1}(\partial_{p}u) = 0 \in \mathcal{A}_{p-2}$, we have $\partial_{p} u \in \Omega_{p-1}$.

\begin{example}
A triangle is a sequence of three vertices $0, 1, 2$ such that the directed edges $(0,1)$, $(1,2)$, and $(0,2)$ exist (this coincides with the definition of a directed $3$-clique, but, as already commented, in this part we'll use the original nomenclature as exposed in \citep{Grigoryan-2013}). A triangle determines a $\partial$-invariant $2$-path, $e_{012} \in \Omega_{2}$, because $e_{012} \in \mathcal{A}_{2}$ and $\partial e_{012} = e_{12} - e_{02} + e_{01} \in \mathcal{A}_{1}$. Other examples of digraphs that determine $\partial$-invariant $2$-paths are double edges and squares (see Figure \ref{fig:inv-2-paths}).

\begin{figure}[h!]
    \centering
\includegraphics[scale=1.3]{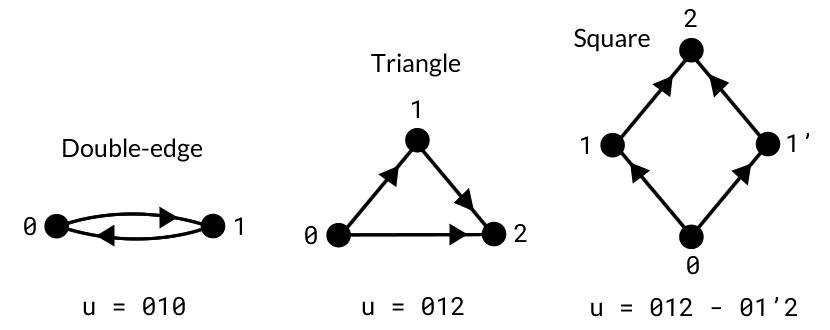}
 \caption{Examples of digraphs containing $\partial$-invariant $2$-paths.}
 \label{fig:inv-2-paths}
\end{figure}
\end{example}

Moreover, Grigor'yan et al. \citep{Grigoryan-2013} proved that the elements of $\Omega_{2}$ are linear combinations of triangles, squares, and double edges.

\subsubsection{Snakes and $\partial$-Invariant Directed Quasi-Cliques}

The presence or absence of $\partial$-invariant paths in a digraph can characterize some of its topological properties, since these paths are related to the \textit{path homology} of the digraph, as we will see in the next section. In view of this, it's worth looking for certain types of subdigraphs that contain $\partial$-invariant paths, such as \textit{snakes} and \textit{$\partial$-invariant directed quasi-cliques}.

\smallskip

\begin{definition}
For a given integer $p \ge 0$, a \textit{$p$-snake} is a digraph $G=(V, E)$, with $V = \{v_{0}, ..., v_{p}\}$, such that its arcs are $(v_{i}, v_{i+1})$ for all $i = 0,...,p-1$, and $(v_{j}, v_{j+2})$, for all $j = 0,...,p-2$.
\end{definition}

\begin{example}  
Figure \ref{fig:snakes} presents examples of $p$-snakes for $p=2,3,4$.
\begin{figure}[h!]
\centering
\begin{subfigure}{.28\textwidth}
  \centering
  \includegraphics[scale=0.69]{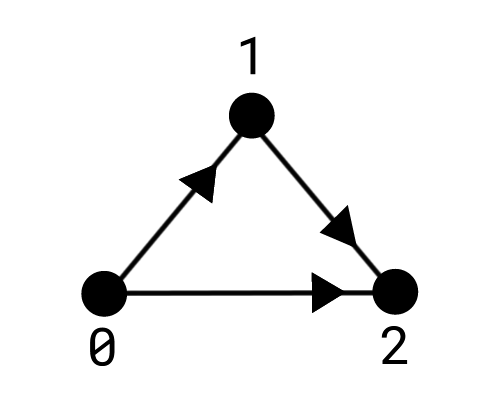}
  \caption{$2$-snake.}
  \label{fig:snakes1}
\end{subfigure}%
\begin{subfigure}{.28\textwidth}
  \centering
  \includegraphics[scale=0.69]{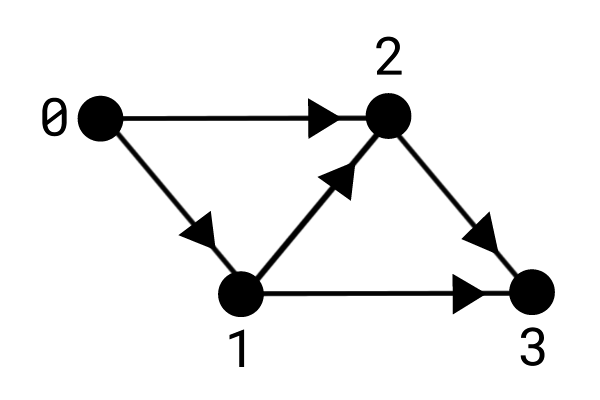}
  \caption{$3$-snake.}
  \label{fig:snakes2}
\end{subfigure}
\begin{subfigure}{.28\textwidth}
  \centering
  \includegraphics[scale=0.69]{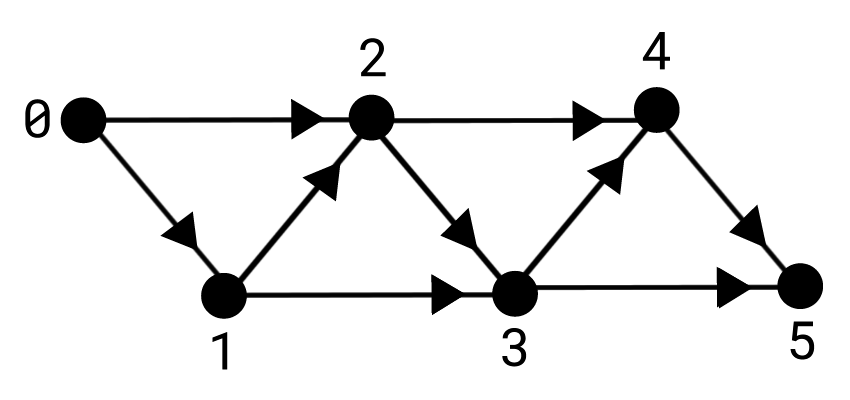}
  \caption{$4$-snake.}
  \label{fig:snakes3}
\end{subfigure}
\caption{Examples of $p$-snakes for $p=2,3,4$.}
\label{fig:snakes}
\end{figure}
\end{example}

In the next definition, we present a directed version of the definition of $\gamma$-quasi-clique (Definition \ref{def:quasi-clique}).

\smallskip

\begin{definition}\label{def:dir-quasi-clique}
Given a digraph $G=(V, E)$, a subdigraph $H = (V', E') \subseteq G$, with $|V'| = m$, is called a \textit{directed $\gamma$-quasi-clique} (or $\gamma$-DQC), for a parameter $0 < \gamma \le 1$, if  $\deg^{tot}_{H}(v) \ge \gamma (m-1)$, for all $v \in V'$.
\end{definition}

\smallskip

Note that the previous definition is equivalent to saying that a digraph is a directed $\gamma$-quasi-clique if its underlying undirected graph is a $\gamma$-quasi-clique.

\smallskip

\begin{definition}
Let $u \in \Omega_{p}$ be a $\partial$-invariant $p$-path. For a parameter $0 < \gamma \le 1$, the \textit{$\partial$-invariant $(u, \gamma)$-DQC} (or just \textit{$(u, \gamma)$-DQC}), is the directed $\gamma$-quasi-clique with the minimum amount of arcs which contains all the paths necessary to make $u$ $\partial$-invariant. In particular, if $u = e_{0...p}$ is an elementary $\partial$-invariant $p$-path, we denote its $(u, \gamma)$-DQC simply by $(p, \gamma)$-DQC.
\end{definition}

\smallskip

It is important to point out that every directed $(p+1)$-clique determines an elementary $p$-path (the concept of \textit{simplex-digraph} as proposed by \citep{Grigoryan-2013} coincides with the definition of directed clique), thus every $(p, \gamma)$-DQC is a subdigraph of a directed $(p+1)$-clique. Moreover, if the number of edges in the digraph $(p, \gamma)$-DQC is equal to $2p-1$, then $(p, \gamma)$-DQC coincides with the $p$-snake, and since the total degree of each node of a $p$-snake is greater than or equal to $2$, we can adopt $\gamma = 1/p$ to every $(p, \gamma)$-DQC. Table \ref{tab:table-inv} summarizes the three types of digraphs discussed here, which contain $\partial$-invariant paths.

\smallskip

\begin{table}[h!]
 \center
    \caption{Digraphs containing $\partial$-invariant paths.}
    \label{tab:table-inv}
    \begin{tabular}{M{4.1cm} M{4.1cm} M{4.1cm}}
      \toprule 
      \textbf{Directed $(p+1)$-Clique} & \textbf{$(u, \gamma)$-DQC} & \textbf{$p$-Snake} \\
      \midrule 
      Contains a $\partial$-invariant elementary $p$-path  & Contains an arbitrary $\partial$-invariant $p$-path $u$ & Contains a $\partial$-invariant elementary $p$-path\\
      \bottomrule 
    \end{tabular}
 
\end{table}

\begin{example}
Figures \ref{fig:DQC1} and \ref{fig:DQC2} present some examples of $(u, \gamma)$-DQC associated with some given $\partial$-invariant paths $u$. Note that the digraphs (a), (b), and (d) in Figure \ref{fig:DQC1} coincide with the snakes (a), (b), and (c) present in Figure \ref{fig:snakes}, respectively.

\begin{figure}[h!]
\centering
\begin{subfigure}{.21\textwidth}
  \centering
  \includegraphics[scale=1.2]{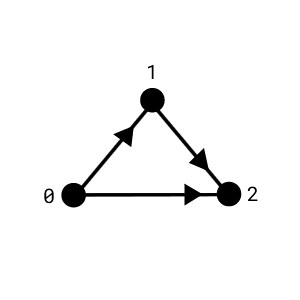}
  \caption{ {\small $(u_{1}, \gamma)$-DQC} }
  \label{fig:DQCa1}
\end{subfigure}%
\begin{subfigure}{.21\textwidth}
  \centering
  \includegraphics[scale=1.2]{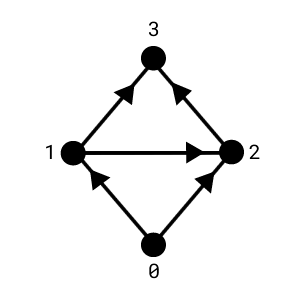}
  \caption{ {\small $(u_{2}, \gamma)$-DQC} }
  \label{fig:DQCb1}
\end{subfigure}
\begin{subfigure}{.21\textwidth}
  \centering
  \includegraphics[scale=1.2]{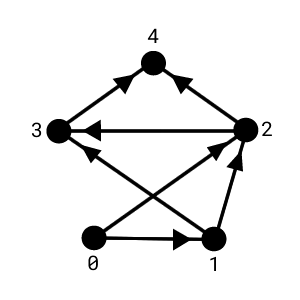}
  \caption{ {\small $(u_{3}, \gamma)$-DQC} }
  \label{fig:DQCc1}
\end{subfigure}
\begin{subfigure}{.21\textwidth}
  \centering
  \includegraphics[scale=1.2]{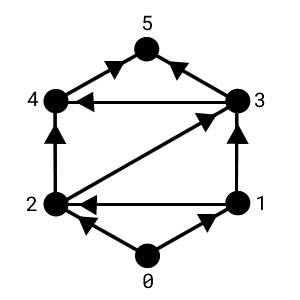}
  \caption{ {\small $(u_{4}, \gamma)$-DQC} }
  \label{fig:DQCd1}
\end{subfigure}
\caption{The ($u_{i}, \gamma)$-DQCs associated with elementary $\partial$-invariant $p$-paths, with $\gamma = 1/p$. (a) $u_{1} = e_{012}$. (b)$u_{2} = e_{0123}$. (c) $u_{3} = e_{01234}$ (d) $u_{4} = e_{012345}$.}
\label{fig:DQC1}
\end{figure}

\begin{figure}[h!]
\centering
\begin{subfigure}{.25\textwidth}
  \centering
  \includegraphics[scale=1.2]{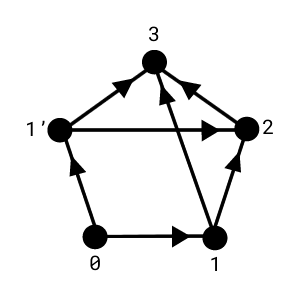}
  \caption{ {\small $(u_{1}, \gamma)$-DQC} }
  \label{fig:DQCa2}
\end{subfigure}%
\begin{subfigure}{.25\textwidth}
  \centering
  \includegraphics[scale=1.2]{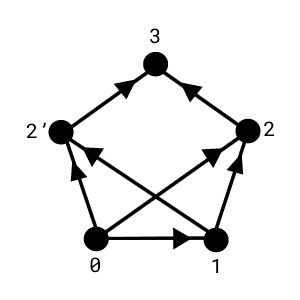}
  \caption{ {\small $(u_{2}, \gamma)$-DQC} }
  \label{fig:DQCb2}
\end{subfigure}
\begin{subfigure}{.25\textwidth}
  \centering
  \includegraphics[scale=1.2]{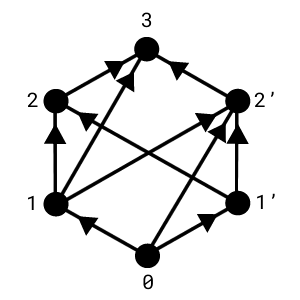}
  \caption{ {\small $(u_{3}, \gamma)$-DQC} }
  \label{fig:DQCc2}
\end{subfigure}
\caption{Examples of $(u, \gamma)$-DQCs associated with $\partial$-invariant $3$-paths. (a) $u = e_{0123} - e_{01'23}$. (b) $u_{2} = e_{0123} - e_{012'3}$. (c) $u_{3} = e_{0123} - e_{012'3} + e_{01'2'3}$.}
\label{fig:DQC2}
\end{figure}

\end{example}

\subsection{Path Homology}

In this part, we extend the concept of simplicial homology introduced for simplicial complexes in Subsection \ref{sec:simplicial-homology} to path complexes.

We begin by pointing out that the boundary operators  (\ref{eq:boundary-map-path}) restricted to the spaces $\Omega_{\bullet}$ satisfy the same properties as the boundary operators (\ref{eq:boundary-map}). Accordingly, by restricting $\partial_{p}$ to $\Omega_{p}$,  $p \ge 0$, we have  $\partial_{p} \circ \partial_{p+1} = 0$, and $\Ima \partial_{p+1} \subseteq \ker \partial_{p}$, thus we can define a chain complex $(\Omega_{\bullet}, \partial_{\bullet})$:

\[
   \hdots
   \xrightarrow[]
    {\partial_{p+1}}
    \Omega_{p}
    \xrightarrow[]
    {\partial_{p}}
     \Omega_{p-1}
     \xrightarrow[]
    {\partial_{p-1}}
    \hdots
     \xrightarrow[]
    {\partial_{2}}
    \Omega_{1}
    \xrightarrow[]
    {\partial_{1}}
    \Omega_{0}
    \xrightarrow[]{}
    \{0\}.
\]  

\smallskip

\begin{definition}\label{def:path-homology}
Given a digraph $G$ and boundary operators $\partial_{p}: \Omega_{p}(G) \rightarrow \Omega_{p-1}(G)$, $p\ge 0$, the \textit{$p$-th path homology} of $G$ is defined as the quotient space
\begin{equation}\label{eq:path-homology}
H_{p}(G) =  \ker \partial_{p} / \Ima \partial_{p+1}.
\end{equation}
\end{definition}

\smallskip

Here we adopt the same nomenclature of Subsection \ref{sec:simplicial-homology} by saying that the $p$-th Betti number corresponds to the dimension of the $p$-th path homology, and we denote $\beta_{p}(G) = \dim H_{p}(G)$. Furthermore, the Euler characteristic of $G$ is defined as the alternating sum of its Betti numbers:
\begin{equation}
\chi(G) = \sum_{p=0}^{\infty} (-1)^{p} \beta_{p}(G) = \sum_{p=0}^{\infty} (-1)^{p} \dim H_{p}(G).
\end{equation}

\smallskip

\begin{observation}
A version of persistent homology for path complexes, the \textit{persistent path homology} (PPH), was introduced in \citep{Chowdhury-2018b}. Lin et al. \citep{Lin-2019} generalized the path homology of digraphs for weighted digraphs by defining weighted path homologies through weighted boundary operators defined in an analogous way to the formula (\ref{eq:weig-boundary-map}), and, in addition, they proved that the corresponding persistent weighted path homology, in the case where the coefficients of the homologies belong to a field, is independent on the weights, but it will depend on the weights if the coefficients belong to a general ring.
\end{observation}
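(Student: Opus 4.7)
The plan is to reduce the weighted persistent path homology to the unweighted one via a diagonal change of basis that is invertible precisely when every weight is a unit in the coefficient ring. For each $p \ge 0$, I would define a diagonal operator $D_p : \Omega_p \to \Omega_p$ by $D_p(e_{i_0 \ldots i_p}) = \widetilde{\omega}(e_{i_0 \ldots i_p}) \, e_{i_0 \ldots i_p}$ on the elementary basis paths. A direct computation using the weighted boundary formula (analogous to (\ref{eq:weig-boundary-map})) then shows
\begin{equation*}
\partial_p^\omega = D_{p-1}^{-1} \circ \partial_p \circ D_p,
\end{equation*}
so that the family $\{D_p\}_{p \ge 0}$ is a chain map between $(\Omega_\bullet, \partial_\bullet^\omega)$ and $(\Omega_\bullet, \partial_\bullet)$ whenever each $D_p$ is invertible. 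Over a field $\mathbb{K}$, every nonzero weight is a unit, so $D_p$ is an isomorphism and therefore the weighted and unweighted chain complexes are isomorphic; passing to homology gives $H_p^\omega(G) \cong H_p(G)$ for all $p$.

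Next, I would upgrade this from a single chain isomorphism to an isomorphism of persistence modules. Given a filtration $\{\mathcal{P}(G_t)\}_{t}$ of path complexes induced by sublevel sets or by any filtering function on $G$, the operator $D_p$ restricts to each $\Omega_p(G_t)$ because it acts diagonally on the canonical basis of elementary allowed paths. Hence $D_\bullet$ commutes (up to the chain isomorphism above) with the inclusion-induced maps $\Omega_\bullet(G_s) \hookrightarrow \Omega_\bullet(G_t)$ for $s \le t$, producing a natural isomorphism between the weighted and unweighted persistence modules; natural isomorphisms of persistence modules preserve persistence diagrams and Betti functions, which establishes the field-coefficient claim.

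To justify the second half of the observation, I would exhibit a minimal counterexample over a non-field ring such as $\mathbb{Z}$ or $\mathbb{Z}/n\mathbb{Z}$, where the diagonal rescaling fails to be invertible. Concretely, I would consider a small digraph (for instance, a triangle, a double-edge, or a square determining an element of $\Omega_2$) assigned two different weight functions $\widetilde{\omega}_1, \widetilde{\omega}_2$ such that the ratios $\widetilde{\omega}(\sigma)/\widetilde{\omega}(\hat{d}_q \sigma)$ become non-units (e.g.\ divisors of zero or elements that do not divide the corresponding cycle coefficients). Computing $\ker \partial_p^\omega / \operatorname{Im} \partial_{p+1}^\omega$ directly should then yield non-isomorphic homology modules, and one can inflate this to a filtration (say by a threshold on $\widetilde{\omega}$) so that some points in the persistence diagram shift or disappear.

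The main technical obstacle is the filtration-compatibility step: one must be careful that the diagonal operators $D_p$ are defined consistently across the filtration and that a $\partial$-invariant path at one filtration level remains $\partial$-invariant after rescaling, so that $D_p(\Omega_p(G_t)) \subseteq \Omega_p(G_t)$ for every $t$. This is true because $\partial$-invariance is a linear condition preserved by nonzero scalar rescaling of basis elements, but verifying it requires tracking how $\widetilde{\omega}$ interacts with the auxiliary (non-allowed) summands that can appear in $\partial_p u$ for $u \in \Omega_p$; showing these cancel in both the weighted and unweighted settings is the subtle point of the argument.
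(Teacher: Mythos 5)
The statement you set out to prove is an \emph{Observation} in the thesis, not a theorem with a proof: the thesis merely cites Chowdhury and M\'emoli for persistent path homology and Lin et al.\ for the weight-(in)dependence result, and gives no argument of its own. So there is no internal proof to compare yours against; what you have written is a blind proof sketch of the theorem attributed to Lin et al. That said, your diagonal-rescaling strategy is the standard (and essentially the cited authors') route, and its core identity is correct: conjugating the unweighted boundary by the diagonal operators $D_p$ recovers the weighted boundary, and over a field with nonzero (hence invertible) weights this yields a chain isomorphism, an isomorphism of persistence modules for a filtration carrying a fixed global weight function, and therefore equal persistence diagrams.

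Two points need tightening. First, your opening display declares $D_p : \Omega_p \to \Omega_p$ and calls $\{D_p\}$ a chain map between $(\Omega_\bullet, \partial_\bullet^{\omega})$ and $(\Omega_\bullet, \partial_\bullet)$, but the weighted and unweighted $\partial$-invariant subspaces are in general \emph{different} subspaces of $\mathcal{A}_p$: rescaling the coefficients of $u$ can destroy exactly the cancellations of non-allowed terms that made $u$ $\partial$-invariant. The correct bookkeeping is $\Omega_p^{\omega} = D_p^{-1}(\Omega_p)$, since $\partial_p^{\omega} u \in \mathcal{A}_{p-1}$ holds precisely when $\partial_p(D_p u) \in \mathcal{A}_{p-1}$ (the diagonal operators preserve allowedness coordinate-wise), so $D_p$ restricts to an isomorphism from $\Omega_p^{\omega}$ onto $\Omega_p$ rather than acting as a self-map of a single space. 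You flag exactly this subtlety in your final paragraph, but the displayed formulation should be adjusted to match it. Second, for the ring-coefficient half of the observation your counterexample is a plan rather than an executed computation: to establish genuine weight-dependence one must actually exhibit a digraph, two weightings, and the resulting non-isomorphic homology modules (for instance torsion created over $\mathbb{Z}$ by a non-unit ratio $\widetilde{\omega}(\sigma)/\widetilde{\omega}(\hat{d}_i(\sigma))$), as the cited work does; as written, this part of the argument is incomplete.
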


\subsection{Combinatorial Hodge Laplacian of Path Complexes}

In Subsection \ref{sec:comb-hodge-laplacian} we have introduced the Hodge Laplacian operators for simplicial complexes. Nonetheless, one can build Hodge Laplacian operators for path complexes associated with digraphs in an analogous way \citep{Grigoryan-2022}. 

Consider an inner product in the spaces $\Lambda_{p}$, $p \ge 0$. Since $\Omega_{p}$ is a subspace of $\Lambda_{p}$, it inherits the inner product. Let $\partial_{p}: \Omega_{p} \rightarrow \Omega_{p-1}$ be the $p$-th boundary operator and $\partial_{p}^{*}: \Omega_{p-1} \rightarrow \Omega_{p}$ be its adjoint operator. The \textit{Hodge $p$-Laplacian operator}, $\mathcal{L}_{p}: \Omega_{p} \rightarrow \Omega_{p}$, is defined by
\begin{equation}\label{eq:laplacian-operator}
\mathcal{L}_{p} = \partial_{p+1} \circ \partial_{p+1}^{*} + \partial_{p}^{*} \circ \partial_{p}.
\end{equation}

The matrix representation of the Hodge $p$-Laplacian operator is given by
\begin{equation}\label{eq:hodge-laplacian-path}
[\mathcal{L}_{p}] = B_{p+1}B_{p+1}^{T} + B_{p}^{T}B_{p},
\end{equation}

\noindent where $B_{p} = [\partial_{p}]$ is the matrix representation of the operator $\partial_{p}$.

\smallskip

\begin{observation}
We point out that all additional observations from Subsection \ref{sec:comb-hodge-laplacian}  made for the Hodge Laplacians associated with simplicial complexes are equally valid for the case when they are obtained from path complexes.
\end{observation}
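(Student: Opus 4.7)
The plan is to mirror almost exactly the two-step argument the author used for Propositions \ref{prop:PSD} and \ref{prop:eigenvalues}, now applied to the decomposition $[\mathcal{L}_{n}] = B_{n+1}B_{n+1}^{T} + B_{n}^{T}B_{n}$ from Equation (\ref{eq:hodge-laplacian}). The key observation is that $[\mathcal{L}_{n}]$ is a sum of two matrices, each of which is manifestly of the form $MM^{T}$, so each is individually positive semi-definite, and the sum of two PSD matrices is PSD.

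First, I would note for bookkeeping that $[\mathcal{L}_{n}]$ is symmetric: $(B_{n+1}B_{n+1}^{T})^{T} = B_{n+1}B_{n+1}^{T}$ and $(B_{n}^{T}B_{n})^{T} = B_{n}^{T}B_{n}$, so their sum is symmetric. This guarantees that all eigenvalues are real, which is needed before we can speak about their signs.

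Second, to establish positive semi-definiteness, I would take an arbitrary vector $v \in \mathbb{R}^{N_{n}}$, where $N_{n}$ is the number of directed $n$-simplices in $\mathcal{X}$, and compute
\begin{equation}
v^{T}[\mathcal{L}_{n}]v \;=\; v^{T}B_{n+1}B_{n+1}^{T}v + v^{T}B_{n}^{T}B_{n}v \;=\; (B_{n+1}^{T}v)^{T}(B_{n+1}^{T}v) + (B_{n}v)^{T}(B_{n}v) \;=\; \lVert B_{n+1}^{T}v \rVert^{2} + \lVert B_{n}v \rVert^{2} \;\ge\; 0,
\end{equation}
using the Euclidean norm, exactly as in the proof of Proposition \ref{prop:PSD}. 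Hence $[\mathcal{L}_{n}]$ is positive semi-definite by Definition \ref{def:PSD}.

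Third, for the non-negativity of the eigenvalues, I would repeat the argument of Proposition \ref{prop:eigenvalues}: if $v \neq 0$ is an eigenvector with eigenvalue $\mu$, then $v^{T}[\mathcal{L}_{n}]v = \mu \lVert v \rVert^{2}$, and by the previous step the left-hand side is $\ge 0$; since $\lVert v \rVert^{2} > 0$, we conclude $\mu \ge 0$. There is essentially no obstacle here — the entire content is in recognizing that the Hodge Laplacian is a sum of two Gram-type matrices, which is already exposed by the formula (\ref{eq:hodge-laplacian}). The only minor subtlety worth flagging is that the two summands live naturally on $C_{n}$ but are built from boundary maps into $C_{n-1}$ and from $C_{n+1}$ respectively, so one should be careful that the matrix dimensions match (they do, both operators map $C_{n} \to C_{n}$), and that the inner product used to define adjointness is the standard one on the basis of directed $n$-simplices, which is implicit throughout Subsection \ref{sec:comb-hodge-laplacian}.
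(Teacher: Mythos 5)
Your proof is correct and is essentially the argument the paper intends: the paper's Proposition~\ref{prop:hodge-laplacian-eigenvalues} explicitly states that its proof is ``analogous to'' Propositions~\ref{prop:PSD} and~\ref{prop:eigenvalues}, and you have simply spelled out that analogy for the Gram-type decomposition, which carries over verbatim to the path-complex Laplacian of Equation~(\ref{eq:hodge-laplacian-path}) since $B_{p+1}B_{p+1}^{T}$ and $B_{p}^{T}B_{p}$ are each manifestly positive semi-definite. The only caveat is that the Observation nominally covers the \emph{other} remarks of Subsection~\ref{sec:comb-hodge-laplacian} as well (e.g., $B_{p}B_{p+1}=0$ and $\ker\mathcal{L}_{n}\cong H_{n}$), which you do not address --- but the paper itself only asserts or cites those, and your treatment of the central positive semi-definiteness claim, including the correctly flagged dependence of $[\partial_{p}^{*}]=B_{p}^{T}$ on the choice of an orthonormal basis for $\Omega_{p}$, matches the paper's route exactly.
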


\begin{example}\label{ex:path-hodge}
Consider the digraph $G$ as depicted in Figure \ref{laplacian1}. The higher-order incidence matrices $B_{p}$ are:
$$
B_{1} =  
\bbordermatrix{
  & e_{01} & e_{02} & e_{12} & e_{13} & e_{23} \cr
e_{0}  & 1 & 1 & 0 &  0 & 0 \cr
e_{1}  & -1 & 0 & 1 & 1 & 0 \cr
e_{2}  & 0 & -1 & -1 & 0 & 1 \cr
e_{3}  & 0 & 0 & 0 & -1 & -1 \cr
}, \hspace{0.1in}
B_{2} =  
\bbordermatrix{
  & e_{012} & e_{123} & (e_{013} - e_{023}) \cr
e_{01}  & 1 & 0 & 1 \cr
e_{02}  & -1 & 0 & -1 \cr
e_{12}  & 1 & 1 & 0 \cr 
e_{13}  & 0 & -1 & 1 \cr
e_{23}  & 0 & 1 & -1 \cr
},
$$
$$
B_{3} =  
\bbordermatrix{
 & e_{0123} \cr
e_{012}  & -1 \cr
e_{123}  & 1  \cr
(e_{013} - e_{023})  & 1\cr 
}.
$$

\smallskip

Note that $B_{p} = 0$ for all $p \ge 4$. Thus, by Equation (\ref{eq:hodge-laplacian-path}), the matrix representations of the Hodge $p$-Laplacians of $G$ are: 
$$
[\mathcal{L}_{0}] = B_{1}B_{1}^{T} = \begin{bmatrix}
2 & -1 & -1 & 0\\
-1 & 3 & -1 & -1 \\
-1 & -1 & 3 & -1 \\
 0 & -1 & -1 & 2 
\end{bmatrix}, \hspace{0.1in}
[\mathcal{L}_{1}] = B_{2}B_{2}^{T} + B_{1}^{T}B_{1} = 
\begin{bmatrix}
4 & -1&  0&  0& -1\\
-1 &  4&  0& -1&  0\\
0&  0&  4&  0&  0\\
0& -1&  0& 4& -1\\
-1&  0&  0& -1&  4
\end{bmatrix},
$$

$$
[\mathcal{L}_{2}] =  B_{3}B_{3}^{T} + B_{2}^{T}B_{2} = \begin{bmatrix}
4&  0&  1\\
0&  4& -1\\
1& -1&  5
\end{bmatrix}, \hspace{0.1in}
[\mathcal{L}_{3}] =  B_{3}^{T}B_{3} = \begin{bmatrix}
3\\
\end{bmatrix},
$$

\noindent and  $[\mathcal{L}_{p}] = 0$ for all $p \ge 4$.
\end{example}

\smallskip

Comparing Example \ref{ex:dfc-hodge} with Example \ref{ex:path-hodge}, we can verify that the Hodge Laplacians obtained from the directed flag complex of a given digraph may differ from the Hodge Laplacians obtained from the path complex of the same digraph.


\section{Directed Q-Analysis and Directed Higher-Order Adjacencies}
\label{sec:directed-q-analysis}

In this section, we briefly present the main concepts of the classical Q-Analysis and, subsequently, we present a directed version of Q-Analysis by introducing new concepts to deal with directed higher-order connectivity between directed simplices.

\subsection{A Brief Introduction to Q-Analysis}
\label{sec:q-analysis}

The set of ideas that later came to be known as Q-Analysis was first introduced by the physicist R. H. Atkin  \citep{Atkin1974b, Atkin1974a, Atkin1976}. Atkin's initial proposal was to develop mathematical tools to analyze structures associated with relations, specially relations within social systems \citep{Atkin1977}. His idea was to model social networks via simplicial complexes and study the connections among their simplices, highlighting, therefore, the importance of the concept of topological connectivity. Sometimes Q-Analysis is referred to as a ``language of structure." Since Atkin's seminal work, several extensions and applications of Q-Analysis have been proposed \citep{Barcelo, Coombs, Kramer}, and many other new concepts were introduced \citep{Johnson}.

One of the most important concepts coming from Q-Analysis is the concept of \textit{$q$-connectivity} in a simplicial complex, which is, essentially, a notion of \textit{higher-order connectivity}, i.e. connectivity in different levels of organization (or dimensional levels) of the complex. It is important to highlight that classical Q-Analysis was developed to deal solely with the higher-order connectivity without considering any directionality in the connections. 

This section is based mainly on the references \citep{Atkin1977, Earl, Johnson}. Also, throughout this part, $\mathcal{X}$ denotes an arbitrary simplicial complex.

\smallskip

\begin{definition}\label{def:q-nearness}
Given two simplices $\sigma^{(n)}, \tau^{(m)} \in \mathcal{X}$, they are called \textit{$q$-near}, with $0 \le q \le \min(n,m)$, if they share a $q$-face, and in this case we denote $\sigma^{(n)} \sim_{q} \tau^{(m)}$. In particular, we say that an $n$-simplex is $n$-near to itself.
\end{definition}

\begin{definition}\label{def:q-connectivity}
Given two simplices $\sigma^{(n)}, \tau^{(m)} \in \mathcal{X}$, they are called \textit{$q$-connected} if there exists a finite number of simplices $\alpha_{i}^{(n_{i})} \in \mathcal{X}$, let's put $i=1,...,l$, with $0 \le q \le \min(n,m,n_{1},...,n_{l})$, such that
\begin{equation}
\sigma^{(n)} \sim_{q_{0}} \alpha_{1}^{(n_{1})} \sim_{q_{1}} ...  \sim_{q_{l-1}} \alpha_{l}^{(n_{l})} \sim_{q_{l}} \tau^{(m)},
\end{equation}

\noindent where $q \le q_{j}$, for all $j=0,...,l$, and in this case we denote\footnote{As a convention, here we use the symbol $\sim_{q}$ to denote $q$-nearness and its bold version $\bm{\sim}_{\bm{q}}$ to denote $q$-connectivity.} $\sigma^{(n)} \bm{\sim}_{\bm{q}} \tau^{(m)}$. We call this sequence a \textit{chain of $q$-connection}. Also, we say that $\sigma^{(n)}$ and $\tau^{(m)}$ are $q$-connected by a \textit{chain} of length $l$. In particular, an $n$-simplex is said to be $n$-connected to itself by a chain of length $0$. 
\end{definition}

\begin{example}
Consider the simplicial complex shown in Figure \ref{fig:example-q-conn1} (which corresponds to the flag complex of the underlying graph). The simplices $\sigma$ and $\tau$ are $0$-connected by a chain of length $4$: $\sigma \sim_{1} \alpha_{1} \sim_{1} \alpha_{2} \sim_{1} \alpha_{3} \sim_{1} \alpha_{4} \sim_{0} \tau$. At the same time, they are $1$-connected  by a chain of length $5$: $\sigma \sim_{1} \alpha_{1} \sim_{1} \alpha_{5} \sim_{1} \alpha_{6} \sim_{1} \alpha_{7} \sim_{1} \alpha_{8} \sim_{1} \tau$.
\end{example}

\smallskip

Notice that $q$-connectivity does not imply $q$-nearness but the converse is true: two $q$-near simplices are $q$-connected (by a chain of length $0$) but two $q$-connected simplices may not be $q$-near. That is, $q$-nearness is a particular case of $q$-connectivity. Also, by definition, if two simplices are $q$-connected, then they are $q'$-connected for all $q' < q$. 

\begin{definition}
Given a simplicial complex $\mathcal{X}$, we denote by $\mathcal{X}_{q}$ the set of simplices in $\mathcal{X}$ with dimension $\ge q $, i.e.
\begin{equation}
\mathcal{X}_{q} = \{\sigma^{(n)} \in \mathcal{X} : q \le n \}.
\end{equation}
\end{definition}

\smallskip

A notable property of $q$-connectivity is that it is an \textit{equivalence relation} on $\mathcal{X}_{q}$.

\smallskip

\begin{proposition}
\textit{Given a simplicial complex $\mathcal{X}$, the relation  ``is $q$-connected to" ($\bm{\sim}_{\bm{q}}$) is an equivalence relation on the set $\mathcal{X}_{q}$.}
\end{proposition}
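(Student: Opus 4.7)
The plan is to verify the three defining properties of an equivalence relation (reflexivity, symmetry, transitivity) directly from Definitions \ref{def:q-nearness} and \ref{def:q-connectivity}.

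First I would handle reflexivity. For any $\sigma^{(n)} \in \mathcal{X}_{q}$, we have $n \ge q$, and by the last sentence of Definition \ref{def:q-connectivity}, $\sigma^{(n)}$ is $n$-connected to itself via a chain of length $0$. Since $q \le n$, and since any chain of $n$-connection trivially serves as a chain of $q$-connection (the inequality $q \le q_{j}$ is satisfied), $\sigma^{(n)} \bm{\sim}_{\bm{q}} \sigma^{(n)}$.

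Next I would address symmetry. Suppose $\sigma^{(n)} \bm{\sim}_{\bm{q}} \tau^{(m)}$ via a chain $\sigma^{(n)} \sim_{q_{0}} \alpha_{1}^{(n_{1})} \sim_{q_{1}} \cdots \sim_{q_{l-1}} \alpha_{l}^{(n_{l})} \sim_{q_{l}} \tau^{(m)}$. The $q$-nearness relation from Definition \ref{def:q-nearness} is visibly symmetric, since sharing a $q$-face is a symmetric condition between two simplices. Therefore, reversing the chain produces the valid chain $\tau^{(m)} \sim_{q_{l}} \alpha_{l}^{(n_{l})} \sim_{q_{l-1}} \cdots \sim_{q_{1}} \alpha_{1}^{(n_{1})} \sim_{q_{0}} \sigma^{(n)}$, which still satisfies $q \le q_{j}$ for all $j$, giving $\tau^{(m)} \bm{\sim}_{\bm{q}} \sigma^{(n)}$.

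Finally, for transitivity, suppose $\sigma^{(n)} \bm{\sim}_{\bm{q}} \tau^{(m)}$ and $\tau^{(m)} \bm{\sim}_{\bm{q}} \rho^{(r)}$ via chains of length $l$ and $l'$, respectively. I would simply concatenate the two chains at the common simplex $\tau^{(m)}$, obtaining a single chain of length $l + l' + 1$ from $\sigma^{(n)}$ to $\rho^{(r)}$ whose nearness indices all satisfy $q \le q_{j}$, hence $\sigma^{(n)} \bm{\sim}_{\bm{q}} \rho^{(r)}$. The only minor subtlety worth noting (and the closest thing to an obstacle in this otherwise routine proof) is ensuring that $\tau^{(m)} \in \mathcal{X}_{q}$ so that $m \ge q$, which is precisely why the relation is defined on $\mathcal{X}_{q}$ rather than on all of $\mathcal{X}$; this guarantees the concatenation point has sufficient dimension for the combined chain to qualify as a $q$-connection.
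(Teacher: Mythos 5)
Your proof is correct and follows essentially the same route as the paper, which simply defers to the argument of Proposition \ref{prop:connected} (reflexivity from self-nearness, symmetry from the symmetry of sharing a $q$-face, transitivity by concatenating chains at the common simplex). Your explicit remark that the concatenation point $\tau^{(m)}$ lies in $\mathcal{X}_{q}$, so $m \ge q$ and the combined chain still satisfies the dimension bound, is a nice touch the paper leaves implicit.
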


\smallskip

The proof of the previous proposition is analogous to the proof of the Proposition \ref{prop:connected}. Moreover, we define the \textit{$q$-connected components} of $\mathcal{X}$ as the elements of the quotient set $\mathcal{X}_{q}/\bm{\sim}_{\bm{q}}$, i.e. the equivalence classes (or \textit{$q$-connectivity classes}).

As a matter of fact, to perform a \textit{Q-analysis} on a simplicial complex means to compute its $q$-connected components for $0 \le q \le \dim \mathcal{X}$, and then summarize the number of these components existing at each level $q$ into a \textit{structure vector}.

\smallskip

\begin{definition}\label{def:structure-vector}
Let $f_{q}$ be the number of $q$-connected components of $\mathcal{X}$ for $0 \le q \le \dim \mathcal{X}$. The \textit{structure vector} (or \textit{first structure vector}) of $\mathcal{X}$ is the tuple

\begin{equation}\label{eq:structure-vector}
\mathbf{f}(\mathcal{X}) = (f_{0},..., f_{\dim \mathcal{X}}).
\end{equation}
\end{definition}

\smallskip

We point out that the nomenclature ``first structure vector" comes from the fact that there are other structure vectors defined for simplicial complexes, such as the ``second and third structure vectors" \citep{Andjelkovic2015}, but they'll be discussed in the next section. Also, since $f_{0}$ denotes the number of $0$-connected components of $\mathcal{X}$, it is equal to the $0$-th Betti number, i.e. $f_{0} = \beta_{0}$.


\smallskip

\begin{example}
Let $\mathcal{X}$ be the simplicial complex shown in Figure \ref{fig:example-q-conn1}. The corresponding structure vector is $\mathbf{f}(\mathcal{X}) = (2,2,11,1)$.
\end{example}

\smallskip

One way to summarize the connectivity between the simplices of a simplicial complex at a given level $q$ is through the idea of \textit{$q$-graph}.

\smallskip

\begin{definition}\label{def:q-graph}
The \textit{$q$-graph} of a simplicial complex $\mathcal{X}$, $0 \le q \le \dim \mathcal{X}$, is a graph in which every vertex corresponds to a simplex in $\mathcal{X}_{q}$ and there exists an edge between two vertices if and only if the simplices corresponding to these vertices are $q$-near.
\end{definition}

\smallskip

Now we present some concepts that are not only related to $q$-connectivity but also related to the complex itself.

\smallskip

\begin{definition}\label{def:stars}
The \textit{$q$-star} of a simplex $\sigma^{(n)} \in \mathcal{X}$, $0 \le q \le n$, is defined as the set of all simplices that are $q$-near with $\sigma^{(n)}$, i.e.
\begin{equation}\label{eq:star}
\mathrm{st}_{q}(\sigma^{(n)}) = \{ \tau^{(m)} \in \mathcal{X} : \sigma^{(n)} \sim_{q} \tau^{(m)} \}.
\end{equation}

When $q = n$, the $n$-star is the set of all the simplices having $\sigma^{(n)}$ as a face, and in this case we use the notation $\mathrm{st}^{*}(\sigma^{(n)}) = \mathrm{st}_{n}(\sigma^{(n)})$. Also, if $\mathcal{F}$ is a simplicial family obtained from $\mathcal{X}$, the set of all simplices that have at least one face in $\mathcal{F}$ is given by 
\begin{equation}
\mathrm{st}^{*}(\mathcal{F}) = \bigcup_{\sigma \in \mathcal{F}} \mathrm{st}^{*}(\sigma).
\end{equation}
\end{definition}

\begin{definition}\label{def:simp-hub}
Given a simplicial family $\mathcal{F}$, the \textit{hub} of $\mathcal{F}$ is the set formed by all the simplices that are common faces of the elements of $\mathcal{F}$, i.e.
\begin{equation}\label{eq:simp-hub}
\mathrm{hub}(\mathcal{F}) = \bigcap_{\sigma \in \mathcal{F}} \sigma.
\end{equation}
\end{definition}

\smallskip


A generalization for simplicial complexes of the idea of neighborhood of a node in a graph is the concept of \textit{link} of a simplex.

\smallskip

\begin{definition}\label{def:link}
The \textit{link} of a simplex $\sigma \in \mathcal{X}$ is defined as the set of all simplices $\tau$ such that $\sigma$ and $\tau$ are disjoint faces of the simplex $\sigma \cup \tau$, i.e.
\begin{equation}\label{eq:link}
\mathrm{lk}(\sigma) = \{ \tau \in \mathcal{X} : \sigma \cap \tau = \emptyset \mbox{ and } \sigma \cup \tau \in \mathcal{X} \}.
\end{equation}
\end{definition}

\begin{example}
Considering the simplicial complex presented in Figure \ref{fig:example-q-conn1}, we have the following examples: the $1$-star of the simplex $\alpha_{1}$ is $\mathrm{st}_{1}(\alpha_{1}) = \{ \sigma, \alpha_{2}, \alpha_{5}  \}$; the hub of the simplicial family $\mathcal{F} = \{ \alpha_{1},\alpha_{2}, \alpha_{3}, \alpha_{5}, \alpha_{6} \}$ is $\mbox{hub}(\mathcal{F}) = \{ 3 \}$; the link of the $1$-simplex $\{12, 13\}$ is the $1$-simplex $\{10, 11\}$.
\end{example}

\begin{figure}[h!]
\centering
  \includegraphics[scale=0.8]{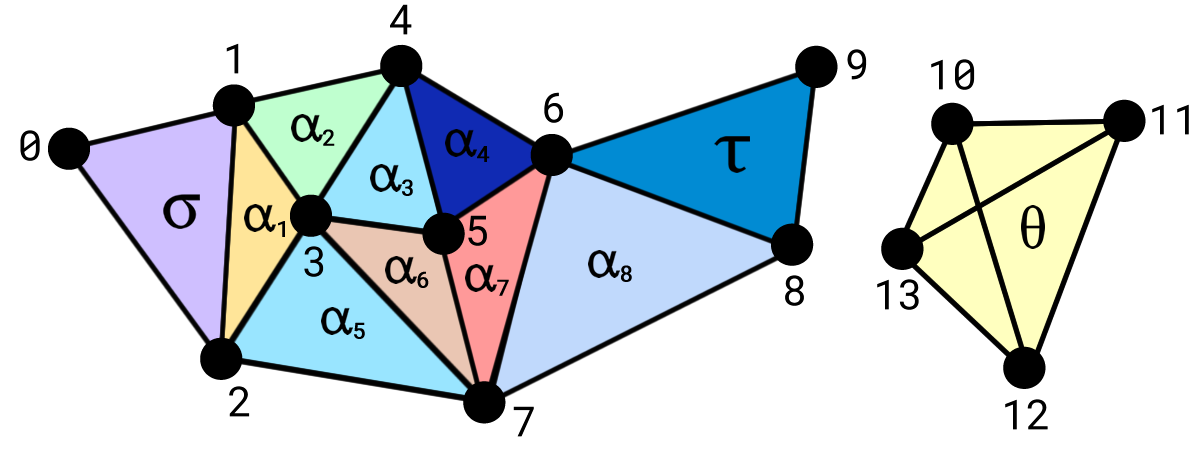}
\caption{A simplicial complex.}
\label{fig:example-q-conn1}
\end{figure}

\subsection{Directed Q-Analysis and Directed Higher-Order Adjacencies}
\label{sec:dir-q-analysis}

As we have seen in the previous section, Atkin's Q-Analysis defines $q$-connectivity between two simplices based solely on the face shared by them and does not say anything about the directionality of this connection. Recently, however, H. Riihimaki \citep{Riihimaki} introduced a formalism to treat the $q$-connectivity between directed simplices taking directionality into account, thus creating a directed analogue of Q-Analysis for directed flag complexes. 

Moreover, unlike adjacencies between vertices in a graph, when we are dealing with simplices, we can distinguish between two types of adjacencies: \textit{lower} and \textit{upper} adjacencies. Lower adjacencies compare how two simplices share their faces, and upper adjacencies tell us how they are nested in other higher-dimensional simplices \citep{Estrada, Goldberg, Serrano2020}. Accordingly, we need appropriate definitions of lower and upper adjacencies for directed simplices, so that the directionality of the connection between them is taken into account, and this is the aim of this section.

In what follows, we present the concept of \textit{$(q, \hat{d_{i}}, \hat{d_{j}})$-connectivity} as introduced in \citep{Riihimaki}, and then we extend the concepts of lower, upper, and general adjacencies as presented in \citep{Serrano2020} to directed simplices. Throughout this part, $\mathrm{dFl}(G)$ will denote the directed flag complex of a given simple digraph $G$ \textit{without double edges}.

\subsubsection{$(q, \hat{d_{i}}, \hat{d_{j}})$-Connectivity Between Directed Simplices}

The direction of the connection between two directed simplices is based on a slightly modified face map as defined below.

\smallskip

\begin{definition}\label{def:mod-face-map}
Given a directed simplex $\sigma^{(n)} = [v_{0},...,v_{n}] \in \mathrm{dFl}(G)$, the $i$-th face map $\hat{d_{i}}$ is defined as 
\begin{equation}\label{eq:face-map}
\hat{d}_{i}(\sigma^{(n)}) = \begin{cases}
[v_{0},...,\hat{v}_{i},...,v_{n}], \mbox{ if } i < n,\\
[v_{0},...,v_{n-1}, \hat{v}_{n}], \mbox{ if } i \ge n.
\end{cases}
\end{equation}

\end{definition}

\smallskip

\begin{definition}\label{def:qij-nearness}
Let $\sigma^{(n)}, \tau^{(m)} \in \mathrm{dFl}(G)$ be two directed simplices. For $0 \le q \le \min(n,m)$, we say that $\sigma^{(n)}$ is \textit{$(q, \hat{d}_{i},\hat{d}_{j})$-near} to $\tau^{(m)}$ if either of the following conditions is satisfied:
\begin{enumerate}
\item $\sigma^{(n)} \subseteq \tau^{(m)}$;
\item $\hat{d_{i}}(\sigma^{(n)}) \supseteq \alpha^{(q)} \subseteq \hat{d_{j}}(\tau^{(m)})$, for some $\alpha^{(q)} \in \mathrm{dFl}(G)$ (i.e. if they share a $q$-face).
\end{enumerate}

\end{definition}

\smallskip

The previous definition established a concept of directionality in the connection between two directed simplices based on \textit{how} their faces are shared. However, a more precise and descriptive definition/notation is needed to make analogies, for the higher-order setting, with some definitions made for directed networks. Accordingly, we propose the following definitions/notations.

\smallskip

\begin{definition}\label{def:qij-nearness-new}
For two directed simplices  $\sigma^{(n)}, \tau^{(m)} \in \mathrm{dFl}(G)$ and for $0 \le q \le \min(n,m)$, we have the following definitions:

\begin{enumerate}
\item $\sigma^{(n)}$ is said to be \textit{in-$q$-near} (or \textit{$(-)$-$q$-near}) to  $\tau^{(m)}$ if they are $(q, d_{i}(\sigma^{(n)}), d_{j}(\tau^{(m)}))$-near with $i \ge j$, for at least one pair $i,j$. In this case, we denote $\sigma^{(n)} \sim_{q}^{-} \tau^{(m)}$.

\item $\sigma^{(n)}$ is said to be \textit{out-$q$-near} (or \textit{$(+)$-$q$-near}) to $\tau^{(m)}$ if they are $(q, d_{i}(\sigma^{(n)}), d_{j}(\tau^{(m)}))$-near with $i \le j$, for at least one pair $i,j$. In this case, we denote $\sigma^{(n)} \sim_{q}^{+} \tau^{(m)}$.

\item $\sigma^{(n)}$ is said to be \textit{bidirectionally $q$-near} (or \textit{$(\pm)$-$q$-near}) to  $\tau^{(m)}$ if $\sigma^{(n)} \sim_{q}^{-} \tau^{(m)}$ and $\sigma^{(n)} \sim_{q}^{+} \tau^{(m)}$. In this case, we denote $\sigma^{(n)} \sim_{q}^{\pm} \tau^{(m)} = \tau^{(m)}  \sim_{q}^{\pm} \sigma^{(n)}$.

\end{enumerate}
\end{definition}

\smallskip

It's clear from the definitions that $\sigma^{(n)} \sim_{q}^{-} \tau^{(m)} = \tau^{(m)} \sim_{q}^{+} \sigma^{(n)} $ and $\sigma^{(n)} \sim_{q}^{+} \tau^{(m)} = \tau^{(m)} \sim_{q}^{-} \sigma^{(n)} $. Also, if $\sigma^{(n)} \subseteq \tau^{(m)}$, by definition, $\sigma^{(n)} \sim_{q}^{\pm} \tau^{(m)}$.

\bigskip

\noindent {\bf Notation:} We use the notation $\sigma^{(n)} \sim_{q}^{\bullet} \tau^{(m)}$, where $\bullet \in \{-,+,\pm \}$, to represent the respective connectivity relation by replacing $\bullet$ with its respective symbol.

\bigskip

Now that we have introduced the concept of $(q, \hat{d}_{i},\hat{d}_{j})$-nearness and established the notations, let's introduce the concept of \textit{$(q, \hat{d_{i}}, \hat{d_{j}})$-connectivity} using the new notations.

\smallskip

\begin{definition}\label{def:dir-q-connectivity}
Given two directed simplices $\sigma^{(n)}, \tau^{(m)} \in \mathrm{dFl}(G)$, we say that $\sigma^{(n)}$ is \textit{$(\bullet)$-$q$-connected} to $\tau^{(m)}$, where $\bullet \in \{-,+\}$, if there exists a finite number of simplices $\alpha_{i}^{(n_{i})} \in \mathrm{dFl}(G)$, let's put $i=1,...,l$, with $0 \le q \le \min(n,m,n_{1},...,n_{l})$, such that
\begin{equation}
\sigma^{(n)} \sim_{q_{0}}^{\bullet} \alpha_{1}^{(n_{1})} \sim_{q_{1}}^{\bullet}  ...  \sim_{q_{l-1}}^{\bullet}  \alpha_{l}^{(n_{l})} \sim_{q_{l}}^{\bullet}  \tau^{(m)},
\end{equation}

\noindent where $q \le q_{j}$, for all $j=0,...,l$, and in this case we denote $\sigma^{(n)} \bm{\sim}_{\bm{q}}^{\bullet} \tau^{(m)}$. We say that $\sigma^{(n)}$ is $(\bullet)$-$q$-connected to $\tau^{(m)}$ by a \textit{directed $(\bullet)$-$q$-chain} of length $l$. We denote $\sigma^{(n)} \bm{\sim}_{\bm{q}}^{\pm} \tau^{(m)}$ if $\sigma^{(n)} \bm{\sim}_{\bm{q}}^{+} \tau^{(m)}$ and $\sigma^{(n)} \bm{\sim}_{\bm{q}}^{-} \tau^{(m)}$ and we say they are \textit{$(\pm)$-$q$-connected}. In particular, a directed $n$-simplex is said to be $(\pm)$-$q$-connected to itself by a directed $(\pm)$-$q$-chain of length $0$. Finally, for $\bullet \in \{-,+\}$, if $\sigma^{(n)}$ is not $(\bullet)$-$q$-connected to $\tau^{(m)}$ for all $\bullet$, we say that they are \textit{$q$-disconnected}.
\end{definition}

\smallskip

Notice that, just as in the case of $q$-connectivity, if two directed simplices are $(\bullet)$-$q$-near, $\bullet \in \{-,+\}$, then they are $(\bullet)$-$q$-connected by a directed $(\bullet)$-chain of length $0$. Moreover, if two directed simplices are $(\bullet)$-$q$-connected, then they are $(\bullet)$-$q'$-connected for all $q' < q$.

\smallskip

\begin{example}
Consider the directed flag complex shown in Figure \ref{fig:dfc-example}. We have the following relations:

\smallskip

1) $\theta \sim_{0}^{\pm} \tau$, since $\hat{d}_{0}(\theta) \supseteq  [2] \subseteq \hat{d}_{2}(\tau)$ and $\hat{d}_{2}(\theta) \supseteq  [2] \subseteq \hat{d}_{1}(\tau)$.

2) $\sigma \sim_{0}^{\pm} \tau$, since $\hat{d}_{1}(\sigma) \supseteq  [2] \subseteq \hat{d}_{2}(\tau)$ and $\hat{d}_{2}(\sigma) \supseteq  [2] \subseteq \hat{d}_{1}(\tau)$.

3) $\theta \sim_{1}^{+} \sigma$, since $\hat{d}_{0}(\theta) \supseteq  [2,6] \subseteq \hat{d}_{2}(\sigma)$.

4) $\alpha \sim_{1}^{+} \theta$, since $\hat{d}_{0}(\alpha) \supseteq  [1,6] \subseteq \hat{d}_{1}(\theta)$.

5) $\alpha \bm{\sim}_{\bm{0}}^{+} \tau$, since $\alpha \sim_{1}^{+} \theta \sim_{0}^{+} \tau$.

\smallskip

\begin{center}
\begin{tabular}{ c }
$\hat{d}_{i}(\alpha)$ \\
\hline
$\hat{d}_{0}(\alpha) = [1,6]$ \\
$\hat{d}_{1}(\alpha) = [0,6]$ \\
$\hat{d}_{2}(\alpha) = [0,1]$ \\
\end{tabular}
\quad
\begin{tabular}{ c }
$\hat{d}_{i}(\theta)$ \\
\hline
$\hat{d}_{0}(\theta) = [2,6]$ \\
$\hat{d}_{1}(\theta) = [1,6]$ \\
$\hat{d}_{2}(\theta) = [1,2]$ \\
\end{tabular}
\quad
\begin{tabular}{ c }
$\hat{d}_{i}(\sigma)$ \\
\hline
$\hat{d}_{0}(\sigma) = [6,7]$ \\
$\hat{d}_{1}(\sigma) = [2,7]$ \\
$\hat{d}_{2}(\sigma) = [2,6]$ \\
\end{tabular}
\quad
\begin{tabular}{ c }
$\hat{d}_{i}(\tau)$ \\
\hline
$\hat{d}_{0}(\tau) = [3,4,5]$ \\
$\hat{d}_{1}(\tau) = [2,4,5]$ \\
$\hat{d}_{2}(\tau) = [2,3,5]$ \\
$\hat{d}_{3}(\tau) = [2,3,4]$ \\
\end{tabular}
\end{center}

\begin{center}
\begin{tabular}{ c }
$(q, \hat{d}_{i}(\theta), \hat{d}_{j}(\tau))$ \\
\hline
$(0, \hat{d}_{0}, \hat{d}_{1})$ \\
$(0, \hat{d}_{0}, \hat{d}_{2})$ \\
$(0, \hat{d}_{0}, \hat{d}_{3})$ \\
$(0, \hat{d}_{2}, \hat{d}_{1})$ \\
$(0, \hat{d}_{2}, \hat{d}_{2})$ \\
$(0, \hat{d}_{2}, \hat{d}_{3})$ \\
\end{tabular}
\quad
\begin{tabular}{ c }
$(q, \hat{d}_{i}(\sigma), \hat{d}_{j}(\tau))$ \\
\hline
$(0, \hat{d}_{1}, \hat{d}_{1})$ \\
$(0, \hat{d}_{1}, \hat{d}_{2})$ \\
$(0, \hat{d}_{1}, \hat{d}_{3})$ \\
$(0, \hat{d}_{2}, \hat{d}_{1})$ \\
$(0, \hat{d}_{2}, \hat{d}_{2})$ \\
$(0, \hat{d}_{2}, \hat{d}_{3})$ \\
\end{tabular}
\quad
\begin{tabular}{ c }
$(q, \hat{d}_{i}(\theta), \hat{d}_{j}(\sigma))$ \\
\hline
$(0, \hat{d}_{0}, \hat{d}_{0})$ \\
$(0, \hat{d}_{0}, \hat{d}_{1})$ \\
$(0, \hat{d}_{0}, \hat{d}_{2})$ \\
$(0, \hat{d}_{1}, \hat{d}_{0})$ \\
$(0, \hat{d}_{1}, \hat{d}_{2})$ \\
$(0, \hat{d}_{2}, \hat{d}_{1})$ \\
$(0, \hat{d}_{2}, \hat{d}_{2})$ \\
$(1, \hat{d}_{0}, \hat{d}_{2})$ \\
\end{tabular}
\quad
\begin{tabular}{ c }
$(q, \hat{d}_{i}(\alpha), \hat{d}_{j}(\theta))$ \\
\hline
$(0, \hat{d}_{0}, \hat{d}_{0})$ \\
$(0, \hat{d}_{0}, \hat{d}_{1})$ \\
$(0, \hat{d}_{0}, \hat{d}_{2})$ \\
$(0, \hat{d}_{1}, \hat{d}_{0})$ \\
$(0, \hat{d}_{1}, \hat{d}_{1})$ \\
$(0, \hat{d}_{2}, \hat{d}_{1})$ \\
$(0, \hat{d}_{2}, \hat{d}_{2})$ \\
$(1, \hat{d}_{0}, \hat{d}_{1})$ \\
\end{tabular}
\end{center}

\begin{figure}[h!]
\centering
  \includegraphics[scale=0.8]{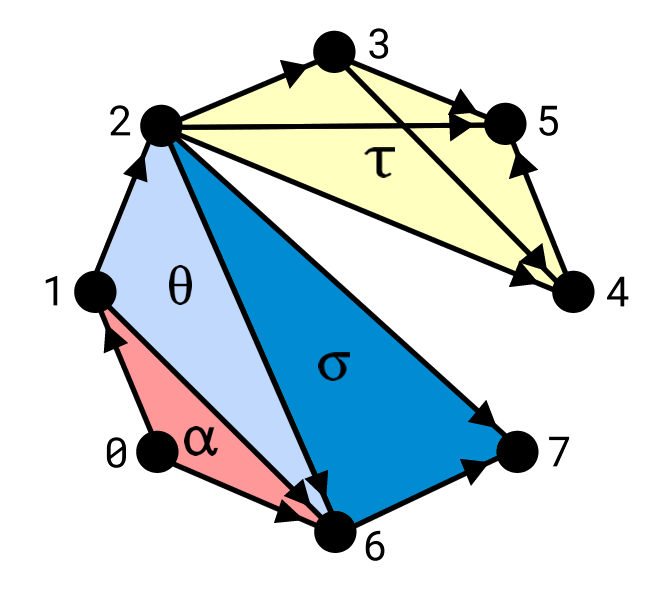}
\caption{A directed flag complex.}
\label{fig:dfc-example}
\end{figure}

\end{example}

\subsubsection{Lower, Upper, and General Adjacencies}

In what follows, we extend the definitions of lower, upper, and general adjacencies as exposed in \citep{Serrano2020} to directed simplices.

\smallskip

\begin{definition}
For two directed simplices $\sigma^{(n)}, \tau^{(m)} \in \mathrm{dFl}(G)$ and for $0 \le q \le \min(n,m)$, we have the following definitions:

\begin{enumerate}
\item $\sigma^{(n)}$ is \textit{lower $(\bullet)$-$q$-adjacent} to  $\tau^{(m)}$, where $\bullet \in \{-, +, \pm\}$, if and only if $\sigma^{(n)}$ is $(\bullet)$-q-near to $\tau^{(m)}$ , i.e.
$$
 \sigma^{(n)} \sim_{L_{q}}^{\bullet} \tau^{(m)} \iff \sigma^{(n)}  \sim_{q}^{\bullet} \tau^{(m)}.
$$

\item $\sigma^{(n)}$ is \textit{strictly lower $(\bullet)$-$q$-adjacent} to $\tau^{(m)}$, where $\bullet \in \{-, +, \pm\}$, if and only if $\sigma^{(n)} \sim_{L_{q}}^{\bullet} \tau^{(m)}$ and $\sigma^{(n)}$ is not ($\star$)-$(q+1)$-near to $\tau^{(m)}$, for all $\star \in \{-,+,\pm\}$, i.e.
$$
 \sigma^{(n)} \sim_{L_{q^{*}}}^{\bullet} \tau^{(m)} \iff \sigma^{(n)} \sim_{L_{q}}^{\bullet} \tau^{(m)} \mbox{ and }  \sigma^{(n)} \not\sim_{L_{q+1}}^{\star} \tau^{(m)}, \mbox{ } \forall \star.
$$
\end{enumerate}
\end{definition}

\smallskip

We point out that lower $(\bullet)$-$q$-adjacency and $(\bullet)$-$q$-nearness are exactly the same definitions, thus we propose \textit{lower $(\bullet)$-$q$-nearness} as an alternative nomenclature of $(\bullet)$-$q$-nearness. Also, note that for vertices they are only lower $(\bullet)$-$0$-adjacent to themselves.

Besides the lower adjacency, that is, how two simplices share their faces, we could also think about how simplices are nested in other simplices of greater dimensions, and for this, we have the idea of \textit{upper adjacency}. To extend upper adjacency to directed simplices, we need to define precisely how the directionality is taken into account between two directed simplices that are faces of other directed simplices of greater dimensions, thus, based on Definition \ref{def:qij-nearness}, we propose the definition of \textit{upper $(p, \hat{d}_{i}, \hat{d}_{j})$-nearness} as follows.

\smallskip

\begin{definition}
Let $\sigma^{(n)}, \tau^{(m)} \in \mathrm{dFl}(G)$ be two directed simplices. For $n,m < p \le \dim \mathrm{dFl}(G)$, $\sigma^{(n)}$ is said to be \textit{upper $(p, \hat{d}_{i}, \hat{d}_{j})$-near} to $\tau^{(m)}$ if the following condition is true:
$$
\sigma^{(n)} = \hat{d_{i}}(\Theta^{(n+1)}) \subseteq \Theta^{(p)} \supseteq  \hat{d_{j}}(\Theta^{(m+1)}) = \tau^{(m)}, \mbox{ for some } \Theta^{(n+1)}, \Theta^{(m+1)} \subseteq \Theta^{(p)} \in \mathrm{dFl}(G).
$$
\end{definition}

\smallskip

\begin{definition}
For two directed simplices $\sigma^{(n)}, \tau^{(m)} \in \mathrm{dFl}(G)$ and for $n,m  < p \le \dim \mathrm{dFl}(G)$, we have the following definitions:

\begin{enumerate}
\item $\sigma^{(n)}$ is \textit{upper $(-)$-$p$-adjacent} to $\tau^{(m)}$  if and only if $\sigma^{(n)}$ is upper $(p, \hat{d}_{i}, \hat{d}_{j})$-near to $\tau^{(m)}$ and $i \ge j$, for at least one pair $i,j$, i.e.
$$
 \sigma^{(n)} \sim_{U_{p}}^{-} \tau^{(m)} \iff \sigma^{(n)} \mbox{ is upper } (p, \hat{d}_{i}, \hat{d}_{j})\mbox{-near to } \tau^{(m)} \mbox{ with } i \ge j.
$$

\item $\sigma^{(n)}$ is  \textit{upper $(+)$-$p$-adjacent} to $\tau^{(m)}$  if and only if $\sigma^{(n)}$ is upper $(p, \hat{d}_{i}, \hat{d}_{j})$-near to $\tau^{(m)}$ and $i \le j$, for at least one pair $i,j$, i.e.
$$
 \sigma^{(n)} \sim_{U_{p}}^{+} \tau^{(m)} \iff \sigma^{(n)} \mbox{ is upper } (p, \hat{d}_{i}, \hat{d}_{j})\mbox{-near to } \tau^{(m)} \mbox{ with } i \le j.
$$

\item $\sigma^{(n)}$ is  \textit{upper $(\pm)$-$p$-adjacent} to $\tau^{(m)}$  if and only if $\sigma^{(n)}$ is upper $(-)$-$p$-adjacent and upper $(+)$-$p$-adjacent to $\tau^{(m)}$, i.e.
$$
 \sigma^{(n)} \sim_{U_{p}}^{\pm} \tau^{(m)} \iff \sigma^{(n)} \sim_{U_{p}}^{-} \tau^{(m)} \mbox{ and }  \sigma^{(n)} \sim_{U_{p}}^{+} \tau^{(m)}.
$$

\item $\sigma^{(n)}$ is  \textit{strictly $(\bullet)$-$p$-upper adjacent} to $\tau^{(m)}$, where $\bullet \in \{-,+,\pm\}$, if and only if $\sigma^{(n)} \sim_{U_{p}}^{\bullet} \tau^{(m)}$ and $\sigma^{(n)}$ is not upper $(\star)$-$(p+1)$-adjacent to $\tau^{(m)}$, for all $\star \in \{-,+,\pm\}$, i.e.
$$
 \sigma^{(n)} \sim_{U_{p^{*}}}^{\bullet} \tau^{(m)} \iff \sigma^{(n)} \sim_{U_{p}}^{\bullet} \tau^{(m)} \mbox{ and } \sigma^{(n)} \not\sim_{U_{p+1}}^{\star} \tau^{(m)}, \mbox{ } \forall \star.
$$
\end{enumerate}
\end{definition}

\smallskip

Note that if $(v,u)$ is an arc in the graph $G$, then $v \sim_{U_{1}}^{+} u$. On the other hand, if $(u,v)$ is an arc in $G$, then $v \sim_{U_{1}}^{-} u$.

Now that we have introduced the lower and upper adjacencies, let's define the \textit{general adjacencies}, which take both lower and upper adjacencies into account.

\smallskip

\begin{definition}\label{def:general-adj}
For two directed simplices $\sigma^{(n)}, \tau^{(m)} \in \mathrm{dFl}(G)$, we have the following definitions:

\begin{enumerate}
\item $\sigma^{(n)}$ is \textit{$(\bullet)$-$q$-adjacent} to $\tau^{(m)}$, where $\bullet \in \{-,+,\pm\}$, if and only if $\sigma^{(n)}$ is strictly lower $(\bullet)$-$q$-adjacent to $\tau^{(m)}$ and $\sigma^{(n)}$ is not upper $(\star)$-$p$-adjacent to $\tau^{(m)}$, with $p = n+m-q$, for all $\star \in \{-, +, \pm\}$, i.e.
$$
 \sigma^{(n)} \sim_{A_{q}}^{\bullet} \tau^{(m)} \iff \sigma^{(n)} \sim_{L_{q^{*}}}^{\bullet} \tau^{(m)}  \mbox{ and } \sigma^{(n)} \not\sim^{\star}_{U_{p}} \tau^{(m)}, \mbox{ } \forall \star.
$$

\item $\sigma^{(n)}$ is \textit{maximal $(\bullet)$-$q$-adjacent} to $\tau^{(m)}$, where $\bullet \in \{-,+,\pm\}$, if and only if  $\sigma^{(n)}$ is $(\bullet)$-$q$-adjacent to $\tau^{(m)}$ and $\sigma^{(n)}$ is not a face of any other directed simplex which is $(\star)$-$q$-adjacent to $\tau^{(m)}$,  for all $\star \in \{-, +, \pm\}$, i.e.
$$
\sigma^{(n)} \sim_{A_{q^{*}}}^{\bullet} \tau^{(m)} \iff \sigma^{(n)} \sim_{A_{q}}^{\bullet} \tau^{(m)} \mbox{ and } \sigma^{(n)} \not\subset \sigma^{(r)}, \mbox{ } \forall \sigma^{(r)} \mbox{ : } \sigma^{(r)} \sim_{A_{q}}^{\star} \tau^{(m)}, \mbox{ } \forall \star.
$$
\end{enumerate}
\end{definition}

\begin{observation}
The quantity $p=m+n-q$ comes from the fact that if $\sigma^{(n)} \sim_{L_{q^{*}}}^{\bullet} \tau^{(m)}$, then they share $(q+1)$ vertices and thus the smallest directed simplex which might contain $\sigma^{(n)}$ and $\tau^{(m)}$ as faces must have $(n+1)+(m+1) - (q+1)$ vertices, i.e. must have a dimension equal to $n + m - q$.
\end{observation}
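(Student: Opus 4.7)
The plan is to unpack the strict lower adjacency condition in terms of shared vertices and then apply inclusion–exclusion on the vertex sets of $\sigma^{(n)}$ and $\tau^{(m)}$. First I would recall that $\sigma^{(n)} \sim_{L_{q^*}}^{\bullet} \tau^{(m)}$ means, by definition, that $\sigma^{(n)}$ is lower $(\bullet)$-$q$-near to $\tau^{(m)}$ but is not $(\star)$-$(q+1)$-near to $\tau^{(m)}$ for any $\star \in \{-,+,\pm\}$. Translating this into the language of Definition \ref{def:qij-nearness}, the first condition gives a common face $\alpha^{(q)}$ of dimension $q$ (hence with $q+1$ vertices), so $|\sigma^{(n)} \cap \tau^{(m)}| \ge q+1$. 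The failure of $(q+1)$-nearness then forces $|\sigma^{(n)} \cap \tau^{(m)}| \le q+1$, since otherwise the intersection (with the inherited ordering) would itself contain a common $(q+1)$-face, contradicting strictness. Combining the two inequalities yields $|\sigma^{(n)} \cap \tau^{(m)}| = q+1$.

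Next I would apply inclusion–exclusion to the underlying vertex sets,
\begin{equation*}
|\sigma^{(n)} \cup \tau^{(m)}| \;=\; |\sigma^{(n)}| + |\tau^{(m)}| - |\sigma^{(n)} \cap \tau^{(m)}| \;=\; (n+1) + (m+1) - (q+1) \;=\; n+m-q+1,
\end{equation*}
and observe that any directed simplex $\Theta^{(p)} \in \mathrm{dFl}(G)$ containing both $\sigma^{(n)}$ and $\tau^{(m)}$ as faces must have $\sigma^{(n)} \cup \tau^{(m)} \subseteq \Theta^{(p)}$, and therefore at least $n+m-q+1$ vertices. The smallest such $\Theta^{(p)}$, if it exists in $\mathrm{dFl}(G)$, is the one whose vertex set equals $\sigma^{(n)} \cup \tau^{(m)}$ with an ordering compatible with the orderings inherited from $\sigma^{(n)}$ and $\tau^{(m)}$; this has dimension $p = n+m-q$, which is precisely the bound appearing in Definition \ref{def:general-adj}.

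The only subtle step, and the one I would treat most carefully, is the directed nature of the construction: even when the vertex union has the right cardinality, there is no guarantee a priori that it spans a directed simplex in $\mathrm{dFl}(G)$, since the orderings from $\sigma^{(n)}$ and $\tau^{(m)}$ might be incompatible, or the required arcs between vertices of $\sigma^{(n)} \setminus \tau^{(m)}$ and $\tau^{(m)} \setminus \sigma^{(n)}$ might be missing in $G$. The observation only asserts that \emph{if} such a containing $\Theta^{(p)}$ exists, its minimal dimension is $n+m-q$; accordingly the upper adjacency clause in Definition \ref{def:general-adj} tests precisely this minimal value of $p$, which is the key point to emphasize when concluding the justification.
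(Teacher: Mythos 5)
Your argument is correct and follows essentially the same route as the paper's own justification: exact sharing of $q+1$ vertices from strict lower $q$-adjacency, followed by the vertex count $(n+1)+(m+1)-(q+1)$ for the smallest common coface, giving dimension $n+m-q$. Your explicit inclusion–exclusion step and the closing caveat that such a containing directed simplex need not exist in $\mathrm{dFl}(G)$ are sound elaborations of what the paper states more tersely, but they do not change the underlying argument.
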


\smallskip

As will be proved in the next proposition, if a maximal directed simplex is strictly lower $(\bullet)$-$q$-adjacent to another maximal directed simplex, then this adjacency is actually a maximal $(\bullet)$-$q$-adjacency. Let's first introduce some useful definitions. 

\smallskip

\begin{definition}\label{def:set-maximal-simplices}
The set of all maximal directed simplices of $\mathrm{dFl}(G)$ is defined by
\begin{equation}
\mathrm{dFl}^{\ast}(G) = \{ \sigma^{(n)} \in \mathrm{dFl}(G) : \sigma^{(n)} \mbox{ is maximal } \}.
\end{equation}

The set of all maximal directed simplices of $\mathrm{dFl}(G)$ whose dimensions are greater than or equal to $0 \le q \le \dim \mathrm{dFl}(G)$ is defined by
\begin{equation}
\mathrm{dFl}^{\ast}_{q}(G) = \{ \sigma^{(n)} \in \mathrm{dFl}^{\ast}(G) : q \le n \}.
\end{equation}
\end{definition}

\begin{proposition}\label{prop:maximal-adjacency}
\textit{For two maximal directed simplices $\sigma^{(n)}, \tau^{(m)} \in \mathrm{dFl}^{\ast}(G)$ and for $\bullet \in \{-,+,\pm \}$, we have the following equivalence:}
$$
\sigma^{(n)} \sim^{\bullet}_{L_{q^{*}}} \tau^{(m)} \iff \sigma^{(n)} \sim^{\bullet}_{A_{q^{*}}} \tau^{(m)}.
$$
\end{proposition}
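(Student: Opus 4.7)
The two directions are very asymmetric: the backward implication is immediate from the nested structure of the definitions, while the forward implication is where the hypothesis that both simplices are maximal actually does work. The plan is to unfold Definition \ref{def:general-adj} in the obvious direction first, and then use maximality to upgrade a strictly lower adjacency to a maximal one.

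First I would handle the easy direction. Assume $\sigma^{(n)} \sim^{\bullet}_{A_{q^{*}}} \tau^{(m)}$. By the definition of maximal $(\bullet)$-$q$-adjacency this entails $\sigma^{(n)} \sim^{\bullet}_{A_{q}} \tau^{(m)}$, and by the definition of $(\bullet)$-$q$-adjacency this in turn entails $\sigma^{(n)} \sim^{\bullet}_{L_{q^{*}}} \tau^{(m)}$. So no use of maximality is needed here; it is a one-line chain of implications.

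For the forward direction, assume $\sigma^{(n)} \sim^{\bullet}_{L_{q^{*}}} \tau^{(m)}$ with both $\sigma^{(n)}, \tau^{(m)} \in \mathrm{dFl}^{\ast}(G)$. I must verify the two extra conditions that separate strict lower adjacency from maximal adjacency, and each is blocked purely by maximality of $\sigma^{(n)}$. First, suppose for contradiction that $\sigma^{(n)} \sim^{\star}_{U_{p}} \tau^{(m)}$ for some $\star$, where $p = n + m - q$. By the definition of upper $(p, \hat{d}_{i}, \hat{d}_{j})$-nearness there would exist $\Theta^{(p)} \in \mathrm{dFl}(G)$ with $n < p$ and $\sigma^{(n)} \subseteq \Theta^{(p)}$, making $\sigma^{(n)}$ a proper face of $\Theta^{(p)}$, contradicting $\sigma^{(n)} \in \mathrm{dFl}^{\ast}(G)$. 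This yields $\sigma^{(n)} \sim^{\bullet}_{A_{q}} \tau^{(m)}$. Second, suppose for contradiction that there exists $\sigma^{(r)} \in \mathrm{dFl}(G)$ with $\sigma^{(n)} \subset \sigma^{(r)}$ and $\sigma^{(r)} \sim^{\star}_{A_{q}} \tau^{(m)}$ for some $\star$; here $r > n$, so again $\sigma^{(n)}$ is a proper face of another simplex, contradicting maximality. Hence $\sigma^{(n)} \sim^{\bullet}_{A_{q^{*}}} \tau^{(m)}$, completing the equivalence.

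Nothing in this argument is delicate: once the definitions are laid out side by side the proof is essentially bookkeeping, and maximality of $\tau^{(m)}$ is not even used (it is only maximality of $\sigma^{(n)}$ that rules out the two failure modes). The only potential obstacle is verifying that the definition of upper $p$-adjacency really forces $\sigma^{(n)}$ to be a proper face of some ambient simplex of dimension strictly greater than $n$; I would spell out this observation explicitly since it is the single structural fact doing all the work.
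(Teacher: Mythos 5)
Your proof is correct and follows essentially the same route as the paper's, which simply observes that maximality prevents either simplex from being a face of anything else and concludes; you have merely unfolded the definitions to show explicitly which failure modes maximality rules out. Your added observation that only the maximality of $\sigma^{(n)}$ is strictly needed is accurate and a mild sharpening of the paper's one-line argument.
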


\begin{proof}
Suppose $\sigma^{(n)} \sim^{\bullet}_{L_{q^{*}}} \tau^{(m)}$. Since $\sigma^{(n)}, \tau^{(m)} \in \mathrm{dFl}^{\ast}(G)$, both simplices are not faces of any other simplices in $\mathrm{dFl}(G)$, then $\sigma^{(n)} \sim^{\bullet}_{A_{q^{*}}} \tau^{(m)}$. On the other hand, if $\sigma^{(n)} \sim^{\bullet}_{A_{q^{*}}} \tau^{(m)}$, by definition, $\sigma^{(n)} \sim^{\bullet}_{L_{q^{*}}} \tau^{(m)}$.
\end{proof}

\smallskip

\subsubsection{Directed Simplicial $q$-Walks and $q$-Distances} 

In Definition \ref{def:dir-q-connectivity}, we implicitly defined $(\bullet)$-$q$-connectivity in terms of \textit{lower} $(\bullet)$-$q$-adjacency; therefore, from now on, we will refer to this connectivity as \textit{lower $(\bullet)$-$q$-connectivity}. In what follows, we introduce the concept of \textit{maximal $(\bullet)$-$q$-connectivity}, which is the basis for defining the idea of \textit{maximal directed simplicial $q$-walk}.

\smallskip

\begin{definition}\label{def:maximal-q-connectivity}
Given two directed simplices $\sigma^{(n)}, \tau^{(m)} \in \mathrm{dFl}(G)$, we say that $\sigma^{(n)}$ is \textit{maximal $(\bullet)$-$q$-connected} to $\tau^{(m)}$, where $\bullet \in \{-,+\}$, if there exists a finite number of simplices $\alpha_{i}^{(n_{i})} \in \mathrm{dFl}(G)$, let's put $i=1,...,l$, with $0 \le q \le \min(n,m,n_{1},...,n_{l})$, such that
\begin{equation}
\sigma^{(n)} \sim_{A_{q_{0}^{*}}}^{\bullet} \alpha_{1}^{(n_{1})} \sim_{A_{q_{1}^{*}}}^{\bullet}  ...  \sim_{A_{q_{l-1}^{*}}}^{\bullet}  \alpha_{l}^{(n_{l})} \sim_{A_{q_{l}^{*}}}^{\bullet}  \tau^{(m)},
\end{equation}

\noindent where $q \le q_{j}$, for all $j=0,...,l$, and in this case we denote $\sigma^{(n)} \bm{\sim}_{\bm{A_{q^{*}}}}^{\bullet} \tau^{(m)}$. We say that there is a \textit{maximal directed simplicial $q$-walk} of length $l$ from $\sigma^{(n)}$ to $\tau^{(m)}$ if $\sigma^{(n)} \bm{\sim}_{\bm{A_{q^{*}}}}^{+} \tau^{(m)}$ or from $\tau^{(m)}$ to $\sigma^{(n)}$ if $\sigma^{(n)} \bm{\sim}_{\bm{A_{q^{*}}}}^{-} \tau^{(m)}$. We denote $\sigma^{(n)} \bm{\sim}_{\bm{A_{q^{*}}}}^{\pm} \tau^{(m)}$ if $\sigma^{(n)} \bm{\sim}_{\bm{A_{q^{*}}}}^{+} \tau^{(m)}$ and $\sigma^{(n)} \bm{\sim}_{\bm{A_{q^{*}}}}^{-} \tau^{(m)}$. Moreover, for $\bullet \in \{-,+, \pm\}$, if $\sigma^{(n)}$ is not maximal $(\bullet)$-$q$-connected to $\tau^{(m)}$ for all $\bullet$,  we say that they are \textit{maximal $q$-disconnected}.
\end{definition}

\begin{remark}\label{rem:lower_case}
If we replace the maximal adjacency $A_{q^{*}}$ by the lower adjacency $L_{q}$ throughout Definition~\ref{def:maximal-q-connectivity}, we obtain the analogous notion of \textit{lower $(\bullet)$-$q$-connectivity}. All remaining conventions (directed simplicial $q$-walks, the $(-, +, \pm)$ notations, and $q$-disconnectedness) carry over with $A_{q^{*}}$ replaced by $L_{q}$ and  ``maximal'' replaced by ``lower'' throughout.
\end{remark}

\smallskip   
    
Notice that if $\sigma \bm{\sim}_{\bm{A_{q^{*}}}}^{+} \tau$ (respec. $\bm{L_{q}}$) with $\sigma = \tau$, then we have a \textit{maximal} (respec. \textit{lower}) \textit{directed simplicial $q$-cycle}.

\smallskip

\begin{definition}
Given two directed simplices $\sigma, \tau \in \mathrm{dFl}(G)$, the (\textit{lower}) \textit{maximal directed simplicial $q$-distance} from $\sigma$ to $\tau$, denoted by $\vec{d}_{q}(\sigma, \tau)$, is equal to the length of the shortest (lower) maximal directed simplicial $q$-walk from $\sigma$ to $\tau$. If $\sigma$ and $\tau$ are $q$-disconnected, then we define $\vec{d}_{q}(\sigma, \tau) = \infty$.
\end{definition}

\begin{remark}
When not explicitly stated, we may use the notations $\vec{d}_{q}^{L}(\sigma, \tau)$ and $\vec{d}_{q}^{A}(\sigma, \tau)$ to distinguish between the lower and maximal cases.
\end{remark}
 
\smallskip

As a matter of fact, $\vec{d}_{q}$ is a quasi-distance (see Definition \ref{def:metric}), since the property of symmetry is not necessarily satisfied.

\subsubsection{Weakly and Strongly $q$-Connected Components and Structure Vectors}

As already commented in Subsection \ref{sec:q-analysis}, performing a Q-analysis on a simplicial complex consists of calculating its $q$-connected components and then constructing its structure vector. However, similarly when we are dealing with digraphs, for directed flag complexes we have two types of ``$q$-connected components," namely: \textit{weakly $q$-connected components} and \textit{strongly $q$-connected components}. Before defining these kinds of ``$q$-connected components," let's define one more subset of $\mathrm{dFl}(G)$.

\smallskip

\begin{definition}\label{def:set-q-simplices}
The set of the directed simplices of $\mathrm{dFl}(G)$ whose dimension is greater or equal to $0 \le q \le \dim \mathrm{dFl}(G)$ is defined by
\begin{equation}
\mathrm{dFl}_{q}(G) = \{ \sigma^{(n)} \in \mathrm{dFl}(G) : q \le n \}.
\end{equation}
\end{definition}

\smallskip

In analogy with what was exposed in \citep{Serrano2020}, a directed simplex is not maximal $(\bullet)$-$q$-connected to itself, $\bullet \in \{-, +, \pm\}$, thus this relation is not reflexive. Accordingly, in order to obtain an equivalence relation, we introduce the following relation:
\begin{equation}\label{eq:strong-q-connected-rel}
(\sigma^{(n)}, \tau^{(m)}) \in S^{s}_{q} \iff \begin{cases}
\sigma^{(n)} \bm{\sim}_{\bm{A_{q^{*}}}}^{\pm} \tau^{(m)}, \\
\mbox{or } \sigma^{(n)} = \tau^{(m)}.
\end{cases}
\end{equation}

One can verify that (\ref{eq:strong-q-connected-rel}) is indeed an equivalence relation by following the steps of the proof of Proposition \ref{prop:wcc-scc}.

\smallskip

\begin{definition}\label{def:strongly-q-connected-comp}
The  \textit{maximal strongly $q$-connected components} of $\mathrm{dFl}(G)$ are the equivalence classes of the quotient set $\mathcal{K}^{s}_{q} = \mathrm{dFl}_{q}(G)/ S^{s}_{q}$, which are the equivalence classes of maximal $(\pm)$-$q$-connected directed simplices.
\end{definition}

\smallskip

Furthermore, if we disregard the directionality of the connections between the directed simplices in the relation (\ref{eq:strong-q-connected-rel}), we obtain exactly the maximal $q$-connectivity as defined in \citep{Serrano2020} for undirected simplices; thus, we define the following equivalence relation:
\begin{equation}\label{eq:weak-q-connected-rel}
(\sigma^{(n)}, \tau^{(m)}) \in S^{w}_{q} \iff \begin{cases}
\sigma^{(n)} \bm{\sim}_{\bm{A_{q^{*}}}} \tau^{(m)}, \\
\mbox{or } \sigma^{(n)} = \tau^{(m)},
\end{cases}
\end{equation}

\noindent where $\sigma^{(n)} \bm{\sim}_{\bm{A_{q^{*}}}} \tau^{(m)}$ denotes the maximal $q$-connectivity between $\sigma^{(n)}$ and $\tau^{(m)}$. One can verify that (\ref{eq:weak-q-connected-rel}) is indeed an equivalence relation by following the steps of the proof of Proposition \ref{prop:connected}.

\smallskip

\begin{definition}\label{def:weakly-q-connected-comp}
The  \textit{maximal weakly $q$-connected components} of $\mathrm{dFl}(G)$ are the equivalence classes of the quotient set $\mathcal{K}^{w}_{q} = \mathrm{dFl}_{q}(G)/ S^{w}_{q}$, which are the equivalence classes of maximal $q$-connected directed simplices.
\end{definition}

\smallskip

\begin{remark}
The previous definitions apply equally to the lower ($\pm$)-$q$-adjacency, since it's reflexive by definition.
\end{remark}

\smallskip

In Subsection \ref{sec:q-analysis} we have defined the (first) structure vector (Definition \ref{def:structure-vector}) based on the number of $q$-connected components of a simplicial complex $\mathcal{X}$, nonetheless, Andjelkovic et al. \citep{Andjelkovic2015} described two additional different structure vectors, the \textit{second} and the \textit{third} structure vectors, which are based, respectively, on the number of simplices in the set $\mathcal{X}_{q}$ and the ``degree of connectedness" among the simplices at level $q$. In what follows, we extend these vectors for the directed case.

\smallskip

\begin{definition}\label{def:dir-structure-vectors}
Let $\mathrm{dFl}(G)$ be a directed flag complex with $N = \dim \mathrm{dFl}(G)$. We define the following structure vectors associated with $\mathrm{dFl}(G)$:

\begin{enumerate}
\item The \textit{first weak/strong structure vector} is $F^{1} = (F^{1}_{0},..., F^{1}_{N})$, where $F^{1}_{q}$ is the number of maximal/lower weakly/strongly $q$-connected components.

\item The \textit{second structure vector} is $F^{2} = (F^{2}_{0},..., F^{2}_{N})$, where $F^{2}_{q}$ is the number of directed simplices in the set $\mathrm{dFl}_{q}(G)$.

\item The \textit{third weak/strong structure vector} is $F^{3} = (F^{3}_{0},..., F^{3}_{N})$, where $F^{3}_{q} = 1 - F^{1}_{q}/F^{2}_{q}$, with $F^{1}_{q}$ being the $q$-th element of the first weak or strong structure vector. The quantity $F^{3}_{q}$ can be interpreted as the ``degree of directed connectedness" among the directed simplices at level $q$.

\end{enumerate}
\end{definition}

\smallskip

Notice that we have two different ``first" structure vectors, one for the maximal/lower weakly $q$-connected components and one for the maximal/lower strongly $q$-connected components, and the same occurs for the ``third" structure vector, but we have only one ``second" structure vector.

\subsubsection{Maximal and Lower $q$-Digraphs}

Previously, we have defined $\mathrm{dFl}^{\ast}_{q}(G)$ as the set of all maximal directed simplices with dimensions greater than or equal to $q$, so when we advance in the level $q$, that is, \textit{change the level of organization} of the complex (see Figure \ref{fig:DFC-level-of-organization}), we obtain a new perspective on its higher-order topology, and, consequently, we may gain insights about the higher-order topology (or the clique organization) of the underlying network.

\smallskip

\begin{figure}[h!]
\centering
\begin{subfigure}{.23\textwidth}
  \centering
  \includegraphics[scale=0.46]{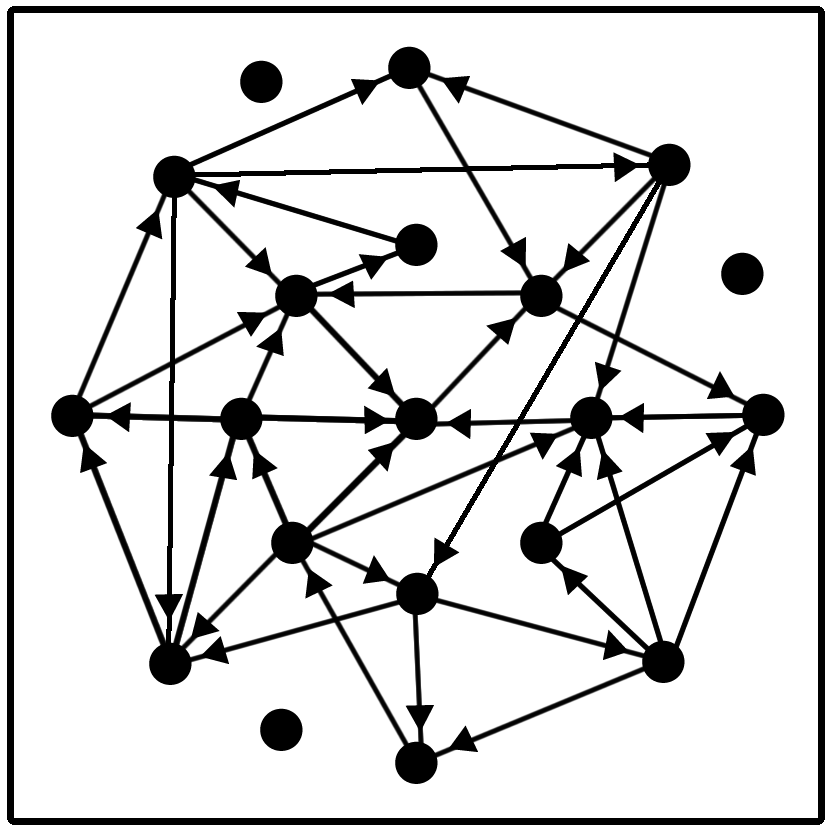}
  \caption{Digraph $G$.}
  \label{high-adj1}
\end{subfigure}%
\begin{subfigure}{.22\textwidth}
  \centering
  \includegraphics[scale=0.46]{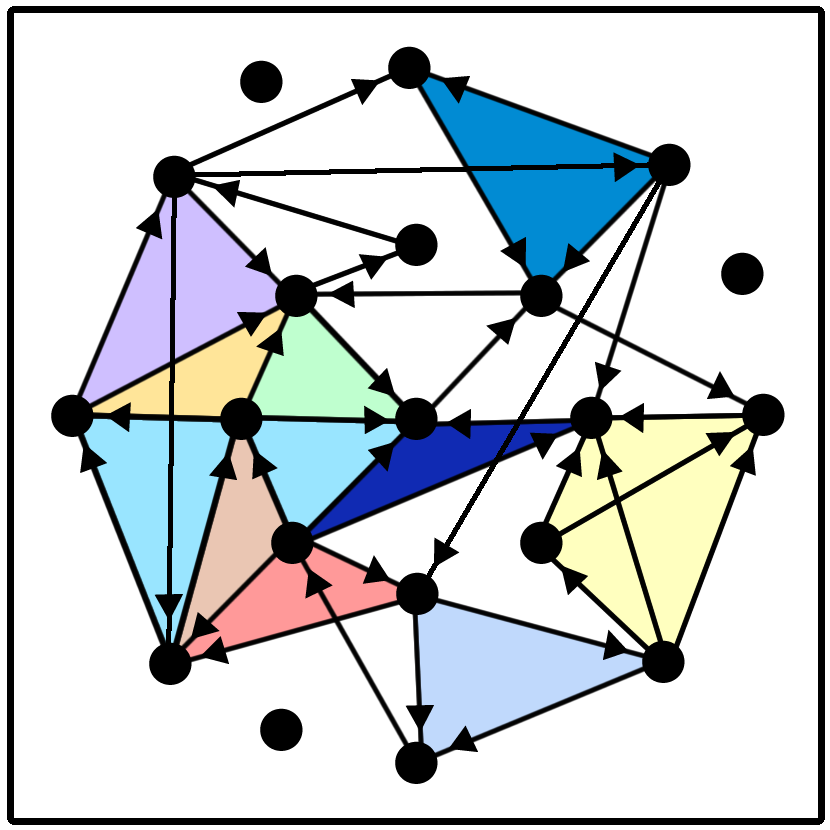}
  \caption{$\mathrm{dFl}^{\ast}_{0}(G)$.}
  \label{high-adj2}
\end{subfigure}
\begin{subfigure}{.22\textwidth}
  \centering
  \includegraphics[scale=0.46]{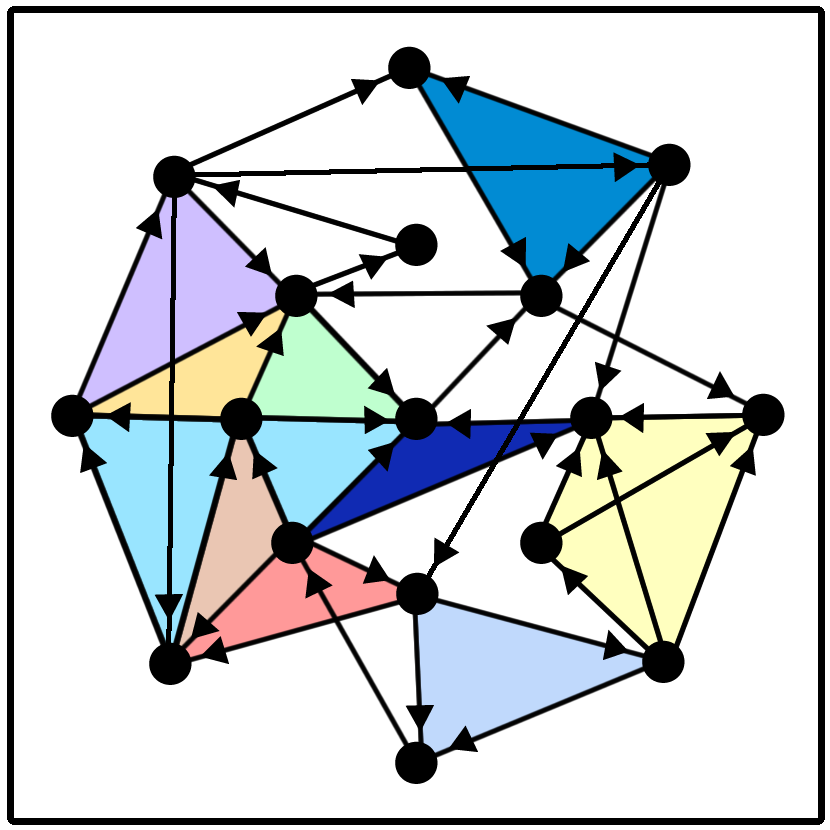}
  \caption{$\mathrm{dFl}^{\ast}_{1}(G)$.}
  \label{high-adj2}
\end{subfigure}
\begin{subfigure}{.22\textwidth}
  \centering
  \includegraphics[scale=0.46]{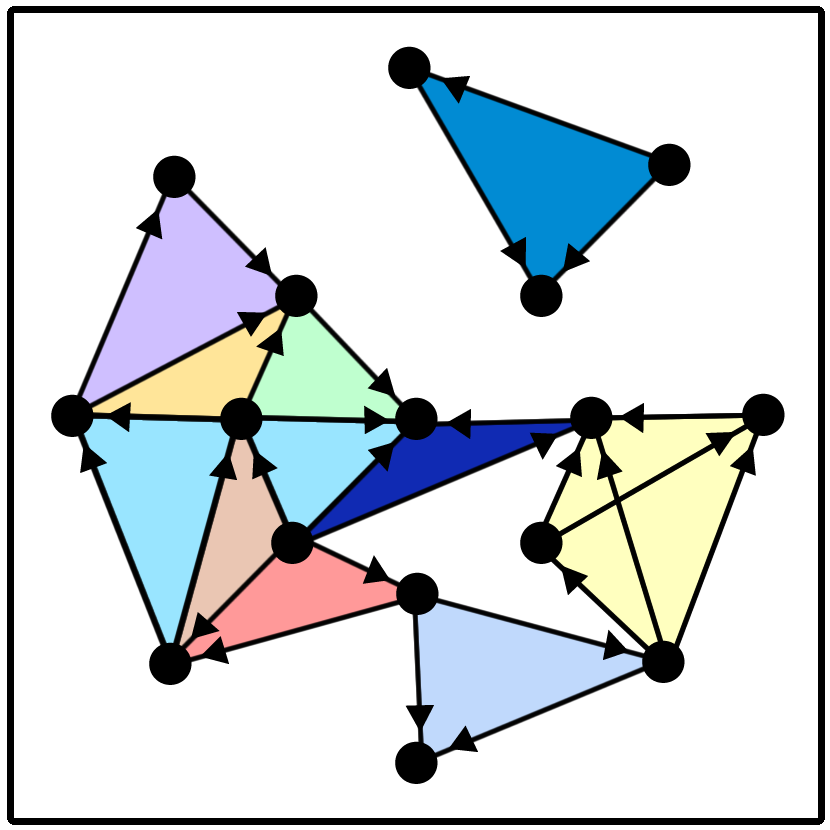}
  \caption{$\mathrm{dFl}^{\ast}_{2}(G)$.}
  \label{high-adj1}
\end{subfigure}%
\caption{Graphical representation of the maximal directed simplices for each level $q=0,1,2$.}
\label{fig:DFC-level-of-organization}
\end{figure}

\smallskip

Accordingly, we can go further and consider the directed higher-order connectivity among the simplices in each level $q$, and then define a ``higher-order digraph" for each of these levels. A definition of ``higher-order digraph" was already introduced in \citep{Caputi} with the concept of \textit{$q$-digraph}; however, since we have introduced different types of adjacencies, here we formalize two definitions: \textit{maximal} and \textit{lower $q$-digraph}.

\smallskip

\begin{definition}\label{def:q-digraph}
The \textit{maximal} (respec. \textit{lower}) \textit{$q$-digraph} of $\mathrm{dFl}(G)$, denoted by $\mathcal{G}_{q}^{A}$ (respec. $\mathcal{G}_{q}^{L}$), is the digraph whose vertices are the simplices of $\mathrm{dFl}^{\ast}_{q}(G)$ and for each pair $\sigma, \tau \in \mathrm{dFl}^{\ast}_{q}(G)$ there is a directed edge from $\sigma$ to $\tau$ if $\sigma \sim^{+}_{A_{q^{*}}} \tau$ (respec. $\sigma \sim^{+}_{L_{q}} \tau$), with $0 \le q \le \dim \mathrm{dFl}(G)$.
\end{definition}

\smallskip

In addition, when $\sigma \sim^{+}_{A_{q^{*}}} \tau$ (respec. $\sigma \sim^{+}_{L_{q}} \tau$), for $\sigma, \tau \in \mathrm{dFl}^{\ast}_{q}(G)$, we say that there is a \textit{maximal} (respec. \textit{lower}) \textit{$q$-arc} from $\sigma$ to $\tau$ and in this case we denote $(\sigma, \tau)_{A}$ (respec. $(\sigma, \tau)_{L}$). When the context is clear, we will simply denote $(\sigma, \tau)$ and call it a \textit{$q$-arc}. Also, we may use the notation $\mathcal{G}^{A}_{q} = (\mathcal{V}_{q}, \mathcal{E}^{A}_{q})$ (respec. $\mathcal{G}^{L}_{q} = (\mathcal{V}_{q}, \mathcal{E}^{L}_{q})$), where $\mathcal{V}_{q} = \mathrm{dFl}^{\ast}(G)$ and $\mathcal{E}^{A}_{q}$ (respec. $\mathcal{E}^{L}_{q}$) is the set of all maximal (respec. lower) $q$-arcs $(\sigma, \tau)$.

Analogously to graph theory, the maximal and lower $q$-digraphs can be represented in terms of maximal and lower \textit{$q$-adjacency matrices}, defined as follows.

\smallskip

\begin{definition}\label{def:q-adjacency-matrix}
Let $\mathcal{G}^{A}_{q}$ be the maximal $q$-digraph of $\mathrm{dFl}(G)$. The \textit{maximal $q$-adjacency matrix} of $\mathcal{G}^{A}_{q}$, denoted by $\mathcal{H}^{A}_{q} = \mathcal{H}_{q}^{A}(\mathcal{G}_{q})$, is a real square matrix whose entries are given by
\begin{equation}\label{eq:q-adjacency-matrix}
\big( \mathcal{H}^{A}_{q}\big)_{ij} = \begin{cases}
1, \mbox{ if } \sigma_{i} \sim^{+}_{A_{q^{*}}} \sigma_{j},\\
0, \mbox{ if } i = j \mbox{ or } \sigma_{i} \not\sim^{+}_{A_{q^{*}}} \sigma_{j}.
\end{cases}
\end{equation}

Similarly, we define the \textit{lower $q$-adjacency matrix} of $\mathcal{G}^{L}_{q}$ by
\begin{equation}\label{eq:lower-q-adjacency-matrix}
\big( \mathcal{H}^{L}_{q}\big)_{ij} = \begin{cases}
1, \mbox{ if } \sigma_{i} \sim^{+}_{L_{q}} \sigma_{j},\\
0, \mbox{ if } i = j \mbox{ or } \sigma_{i} \not\sim^{+}_{L_{q}} \sigma_{j}.
\end{cases}
\end{equation}
\end{definition}

\begin{example}
Consider the directed flag complex shown in Figure \ref{fig:dfc1}. Figures \ref{fig:q-digraph1}, \ref{fig:q-digraph2}, and \ref{fig:q-digraph3} represent its respective lower $q$-digraphs $\mathcal{G}^{L}_{q}$ for $q=0,1,2$.

\begin{figure}[h!]
\centering
\begin{subfigure}{.23\textwidth}
  \centering
  \includegraphics[scale=0.46]{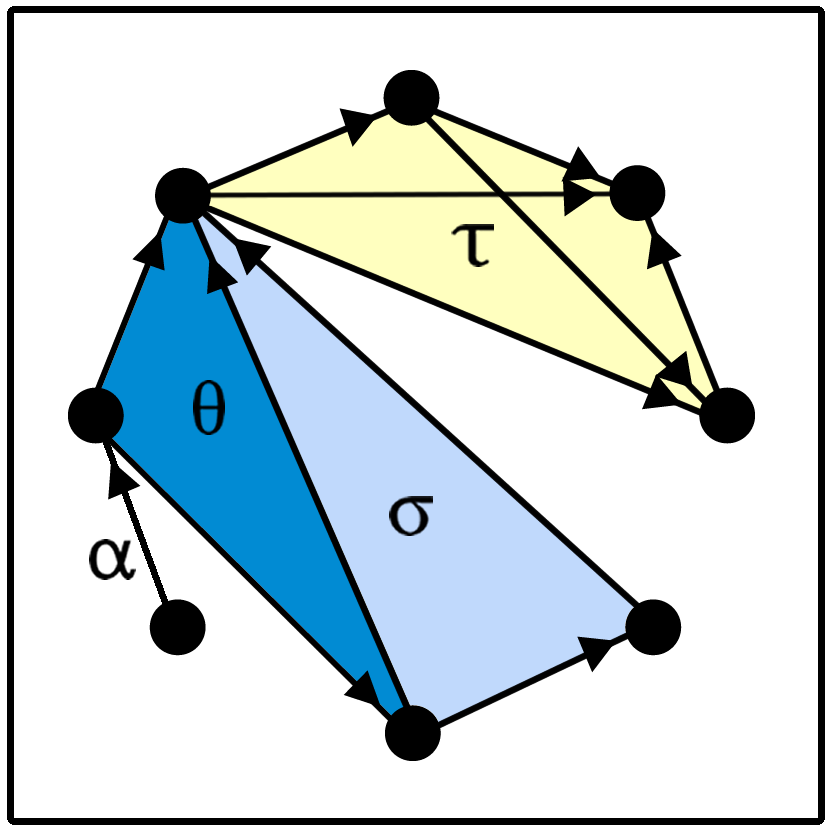}
  \caption{$\mathrm{dFl}(G)$.}
  \label{fig:dfc1}
\end{subfigure}%
\begin{subfigure}{.22\textwidth}
  \centering
  \includegraphics[scale=0.46]{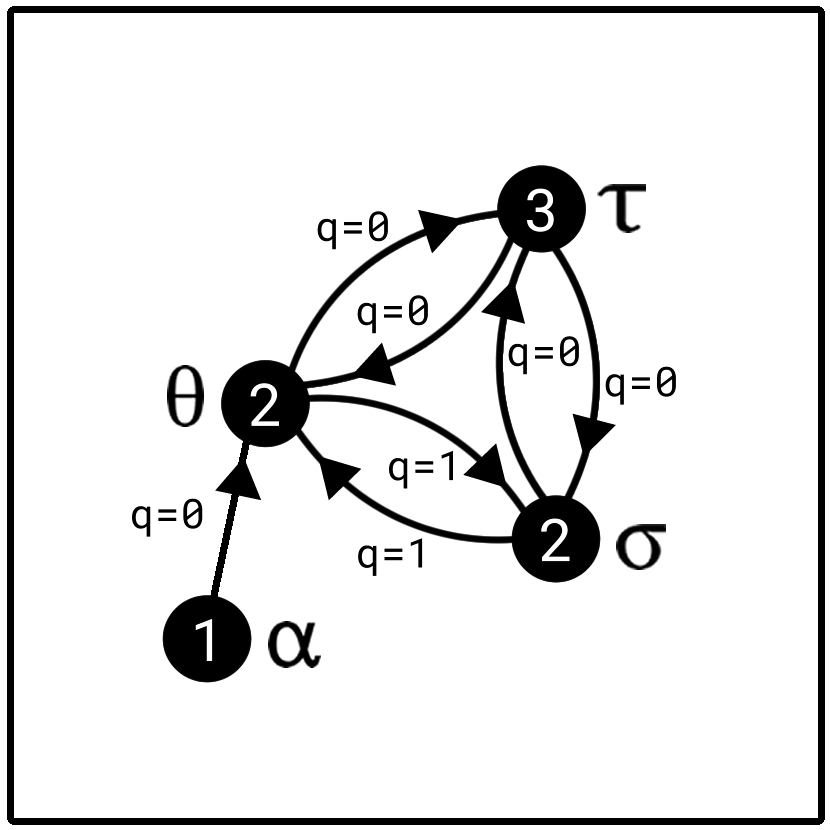}
  \caption{$\mathcal{G}^{L}_{0}$.}
  \label{fig:q-digraph1}
\end{subfigure}
\begin{subfigure}{.22\textwidth}
  \centering
  \includegraphics[scale=0.46]{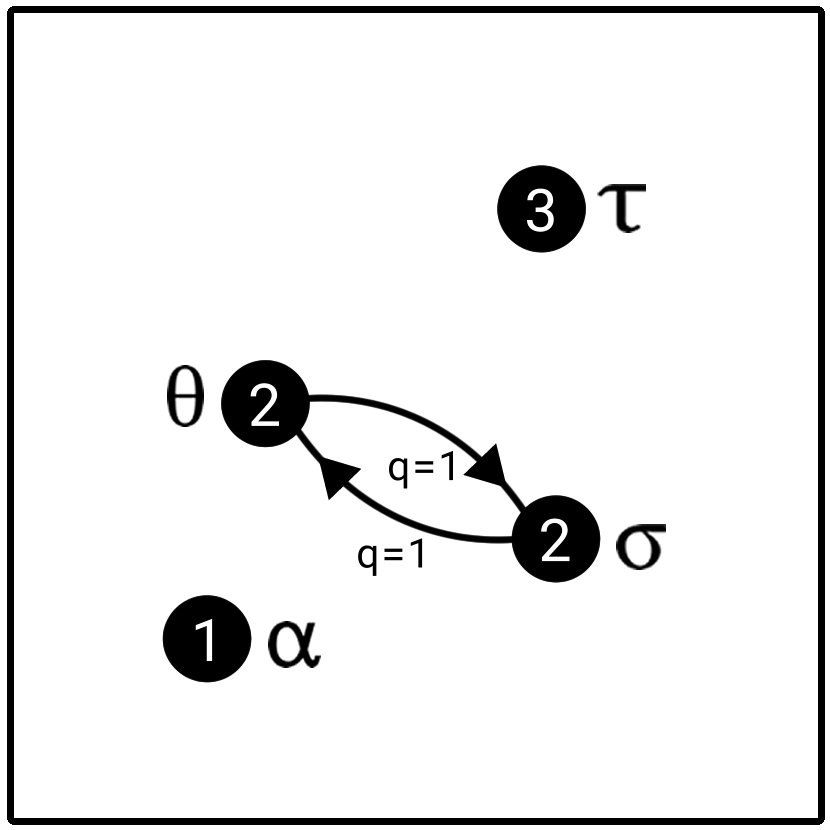}
  \caption{$\mathcal{G}^{L}_{1}$.}
  \label{fig:q-digraph2}
\end{subfigure}
\begin{subfigure}{.22\textwidth}
  \centering
  \includegraphics[scale=0.46]{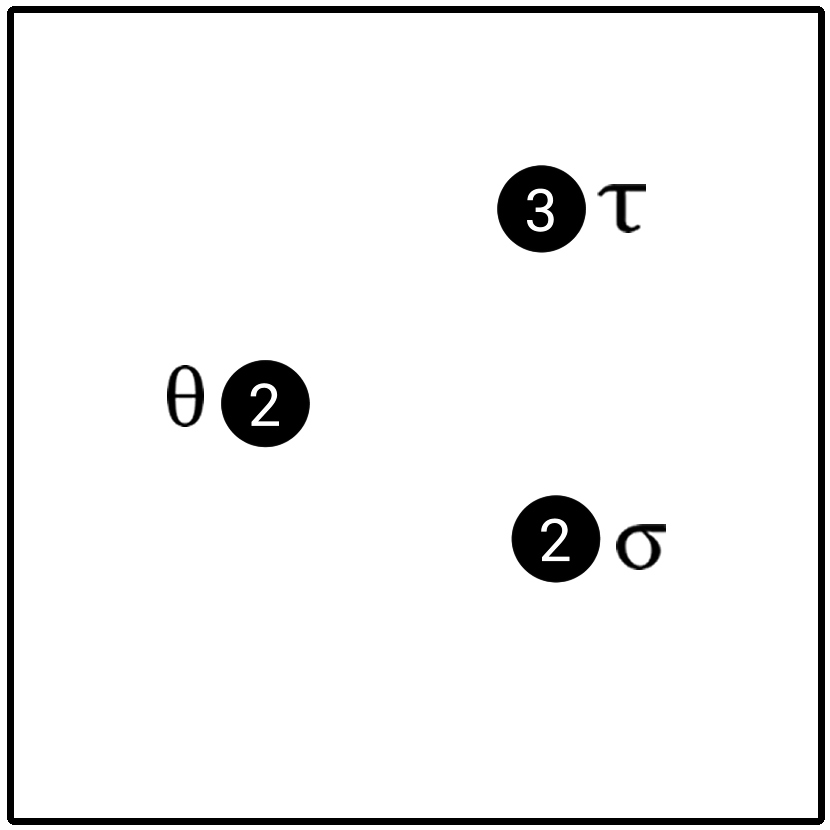}
  \caption{$\mathcal{G}^{L}_{2}$.}
  \label{fig:q-digraph3}
\end{subfigure}%
\caption{A directed flag complex and its respective lower $q$-digraphs, for $q=0,1,2$. The numbers inside the nodes represent the dimensions of the respective directed simplices.}
\label{fig:q-levels-digraphs}
\end{figure}
\end{example}

\smallskip

Note that, by Proposition \ref{prop:maximal-adjacency}, we can replace the maximal $q$-adjacency in the expression (\ref{eq:q-adjacency-matrix}) with the strictly lower $q$-adjacency. From now on, we will adopt the generic notations $\mathcal{G}_{q} = (\mathcal{V}_{q},\mathcal{E}_{q})$ and $\mathcal{H}_{q} = \mathcal{H}_{q}(\mathcal{G}_{q})$ designating both the lower and maximal variants.

\smallskip

\begin{observation}
The weakly and strongly connected components of a maximal/lower $q$-digraph $\mathcal{G}_{q}$ are equivalent to the maximal/lower weakly and strongly $q$-connected components as defined in Definition \ref{def:weakly-q-connected-comp} and Definition \ref{def:strongly-q-connected-comp}, respectively, where the set $\mathrm{dFl}(G)$ is replaced by $\mathrm{dFl}^{*}(G)$. Moreover, the largest weakly $q$-connected component of $\mathcal{G}_{q}$ is called its \textit{giant $q$-component}.
\end{observation}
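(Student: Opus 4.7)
The plan is to unravel both equivalence relations on each side and verify that they coincide, using Proposition \ref{prop:maximal-adjacency} as the bridge. That proposition ensures that the $q$-arcs of $\mathcal{G}_{q}$, defined via $\sigma \sim^{+}_{A_{q^{*}}} \tau$, coincide with the strictly lower $(+)$-$q$-adjacency when restricted to maximal simplices, so every walk in $\mathcal{G}_{q}$ can be read either as a chain of maximal $(+)$-$q$-adjacencies or as a chain of strictly lower $(+)$-$q$-adjacencies. Moreover, since every vertex of $\mathcal{G}_{q}$ lies in $\mathrm{dFl}^{\ast}_{q}(G)$, the dimension constraint $q \le \min(n,m,n_{1},\ldots,n_{l})$ from Definition \ref{def:maximal-q-connectivity} is automatically satisfied along any such walk.

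For the strong case, I would argue as follows. A directed walk $\sigma = \alpha_{0} \to \alpha_{1} \to \cdots \to \alpha_{l} = \tau$ in $\mathcal{G}_{q}$ is, by Definition \ref{def:q-digraph}, a sequence of maximal simplices of dimension $\ge q$ with $\alpha_{i} \sim^{+}_{A_{q^{*}}} \alpha_{i+1}$; this is precisely the chain realizing $\sigma \bm{\sim}^{+}_{A_{q^{*}}} \tau$ with the ambient set restricted to $\mathrm{dFl}^{\ast}(G)$. Two maximal simplices lie in the same strongly connected component of $\mathcal{G}_{q}$ iff they are equal or there are directed walks both ways; translating each direction yields $\sigma \bm{\sim}^{+}_{A_{q^{*}}} \tau$ and $\tau \bm{\sim}^{+}_{A_{q^{*}}} \sigma$, i.e.\ $\sigma \bm{\sim}^{\pm}_{A_{q^{*}}} \tau$, which together with the equality case is exactly the relation $S^{s}_{q}$ from Definition \ref{def:strongly-q-connected-comp} restricted to $\mathrm{dFl}^{\ast}(G)$.

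For the weak case, an undirected edge in $\mathcal{G}_{q}$ between $\sigma$ and $\tau$ exists iff $(\sigma,\tau) \in \mathcal{E}_{q}$ or $(\tau,\sigma) \in \mathcal{E}_{q}$, i.e.\ iff $\sigma \sim^{+}_{A_{q^{*}}} \tau$ or $\sigma \sim^{-}_{A_{q^{*}}} \tau$, which is exactly the assertion that the two maximal simplices are strictly lower $q$-adjacent without regard to direction. Concatenating such edges, an undirected walk in $\mathcal{G}_{q}$ realizes a chain of direction-forgetful maximal $q$-adjacencies between maximal simplices, which coincides with the relation $\bm{\sim}_{A_{q^{*}}}$ used to define $S^{w}_{q}$ in Definition \ref{def:weakly-q-connected-comp} once the ambient set is taken to be $\mathrm{dFl}^{\ast}(G)$.

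The main delicate point will be making precise the ``undirected'' maximal $q$-adjacency $\sim_{A_{q^{*}}}$ used in the definition of $S^{w}_{q}$: the paper introduces it only implicitly by ``disregarding directionality'' in Equation (\ref{eq:strong-q-connected-rel}), so one must carefully verify that the step ``either $\sim^{+}_{A_{q^{*}}}$ or $\sim^{-}_{A_{q^{*}}}$'' is precisely what yields an edge in the underlying undirected graph of $\mathcal{G}_{q}$. Once that identification is spelled out, the rest of the argument reduces to a direct unrolling of definitions together with Proposition \ref{prop:maximal-adjacency}.
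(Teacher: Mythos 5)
The paper states this as an Observation and supplies no proof, so your attempt can only be judged on its own terms, and there is a genuine gap in the step where you claim that a directed walk in $\mathcal{G}_{q}$ ``is precisely the chain realizing $\sigma \bm{\sim}_{\bm{A_{q^{*}}}}^{+} \tau$.'' That identification holds in only one direction. A walk in $\mathcal{G}_{q}$ does yield a chain in the sense of Definition \ref{def:maximal-q-connectivity} in which every step is a maximal $(+)$-adjacency at level exactly $q$. But Definition \ref{def:maximal-q-connectivity} allows the individual steps to be maximal adjacencies $\sim_{A_{q_{j}^{*}}}^{\bullet}$ at arbitrary levels $q_{j} \ge q$, and a step at a level $q_{j} > q$ is in general \emph{not} an arc of $\mathcal{G}_{q}$: strict lower $q$-adjacency explicitly excludes pairs that are $(q+1)$-near, so two maximal simplices sharing a face of dimension $q_{j}>q$ are not maximal $q$-adjacent. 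Concretely, take the digraph of Example \ref{ex:dfc-hodge}, whose directed flag complex has exactly two maximal simplices $\sigma=[0,1,2]$ and $\tau=[1,2,3]$ sharing the $1$-face $[1,2]$, and set $q=0$. Then $\sigma \sim_{A_{1^{*}}}^{+} \tau$, so the single-step chain at level $q_{0}=1\ge 0$ places $\sigma$ and $\tau$ in the same class of $S^{w}_{0}$ restricted to $\mathrm{dFl}^{\ast}(G)$; yet $\sigma$ and $\tau$ are $1$-near and hence not strictly lower $0$-adjacent, so $\mathcal{G}_{0}$ has no arc between them and consists of two isolated vertices, i.e. two weakly connected components. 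Your argument therefore proves that connectivity in $\mathcal{G}_{q}$ implies maximal $q$-connectivity, but the converse — which is half of the claimed equivalence — does not follow and in fact fails under the literal reading of the definitions.

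To close the gap you would need to do one of two things: either read Definitions \ref{def:weakly-q-connected-comp} and \ref{def:strongly-q-connected-comp} as restricting every chain step to level exactly $q$ (under that reading your unrolling, together with Proposition \ref{prop:maximal-adjacency}, goes through essentially verbatim), or prove that a maximal $q_{j}$-adjacency with $q_{j}>q$ between two maximal simplices can always be replaced by a directed simplicial $q$-walk in $\mathcal{G}_{q}$, which the example above refutes. The delicate point you do flag — how to formalize the direction-forgetful relation $\bm{\sim}_{\bm{A_{q^{*}}}}$ entering $S^{w}_{q}$ as adjacency in the underlying undirected graph of $\mathcal{G}_{q}$ — is legitimate but secondary; the level mismatch is the substantive obstruction and should be addressed explicitly before the equivalence of components can be asserted.
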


\smallskip

In the case where we have a weighted directed flag complex obtained from a weighted digraph, by definition, the corresponding maximal/lower $q$-digraphs will be node-weighted digraphs, since their nodes represent directed simplices. Accordingly, in order to obtain edge-weighted digraphs, we need a node-to-edge weight function. In the literature, there is a myriad of methods to transform a node-weighted digraph into an edge-weighted digraph \citep{DelaCruzCabrera}, however, since the relations in a digraph can be non-symmetric, we would like a non-symmetric transformation function, thus, for a given $(\mathrm{dFl}(G), \widetilde{\omega})$ and for a given $q$-arc $(\sigma_{i}, \sigma_{j})$, we consider the following node-to-edge weight functions:
\begin{equation}\label{eq:node-to-edge1}
f(\widetilde{\omega}(\sigma_{i}), \widetilde{\omega}(\sigma_{j})) = \widetilde{\omega}(\sigma_{i}),
\end{equation}
\begin{equation}\label{eq:node-to-edge2}
f(\widetilde{\omega}(\sigma_{i}), \widetilde{\omega}(\sigma_{j})) = \widetilde{\omega}(\sigma_{j}).
\end{equation}

\smallskip

This leads us to extend our definitions of maximal/lower $q$-digraphs to the weighted case as follows.

\smallskip


\begin{definition}\label{def:weig-q-digraph}
Given a weighted digraph $G^{\omega}$, the \textit{weighted maximal} (respec. \textit{lower}) \textit{$q$-digraph} of $(\mathrm{dFl}(G^{\omega}), \widetilde{\omega})$, denoted by $\mathcal{G}_{q}^{\widetilde{\omega}}$, is the digraph whose vertices are the simplices of $\mathrm{dFl}^{\ast}_{q}(G^{\omega})$ and for each pair $\sigma, \tau \in \mathrm{dFl}^{\ast}_{q}(G^{\omega})$ there is a weighted arc from $\sigma$ to $\tau$ if $\sigma \sim^{+}_{A_{q^{*}}} \tau$ (respec. $\sigma \sim^{+}_{L_{q}} \tau$), with $0 \le q \le \dim \mathrm{dFl}(G^{\omega})$, such that the weight of the arc is given by a node-to-edge weight function.
\end{definition}

\smallskip

In addition, we may use the notation $\mathcal{G}_{q}^{\widetilde{\omega}} = (\mathcal{V}_{q}, \mathcal{E}_{q}, \widetilde{\omega})$, where $\mathcal{V}_{q} = \mathrm{dFl}_{q}^{\ast}(G^{\omega})$, $\mathcal{E}_{q}$ is the set of all $q$-arcs $(\sigma, \tau)$, and $\widetilde{\omega}$ is the product-weight function.

\smallskip

\begin{definition}
Let $\mathcal{G}_{q}^{\widetilde{\omega}}$ be the weighted maximal (respec. lower) $q$-digraph of $(\mathrm{dFl}(G^{\omega}), \widetilde{\omega})$. The \textit{weighted maximal} (respec. \textit{lower}) \textit{$q$-adjacency matrix} of $\mathcal{G}_{q}^{\widetilde{\omega}}$, denoted by $\mathcal{H}^{\widetilde{\omega}}_{q} = \mathcal{H}^{\widetilde{\omega}}_{q}(\mathcal{G}^{\widetilde{\omega}}_{q})$, is a real square matrix whose entries are given by
\begin{equation}\label{eq:weig-q-adjacency-matrix}
\big(\mathcal{H}^{\widetilde{\omega}}_{q}\big)_{ij} = \begin{cases}
f(\widetilde{\omega}(\sigma_{i}), \widetilde{\omega}(\sigma_{j})), \mbox{ if } \sigma_{i} \sim^{+}_{A_{q^{*}}} \sigma_{j} \mbox{ (respec. } \sigma_{i} \not\sim^{+}_{L_{q}} \sigma_{j}),\\
0, \mbox{ if } i = j \mbox{ or } \sigma_{i} \not\sim^{+}_{A_{q^{*}}} \sigma_{j} \mbox{ (respec. } \sigma_{i} \not\sim^{+}_{L_{q}} \sigma_{j}),
\end{cases}
\end{equation}

\noindent where $f$ is some non-symmetric node-to-edge weight function.
\end{definition}

\begin{example}
Consider the (non-normalized) weighted digraph $G^{\omega}$ as depicted in Figure \ref{fig:example-weig-digraph-dfc1}. Figure \ref{fig:example-weig-digraph-dfc2} depicts its weighted directed flag complex $(\mathrm{dFl}(G^{\omega}), \widetilde{\omega})$ in which the simplices are shown with their respective (non-normalized) weights. The weight function $\widetilde{\omega}$ is the product-weight function (\ref{eq:prod-weight}) such that the edge-to-node function is given by Definition \ref{def:weights-edge-node}. Thus, considering the node-to-edge weight function (\ref{eq:node-to-edge1}), the corresponding weighted lower $0$-digraph $\mathcal{G}_{0}^{\widetilde{\omega}}$ is the weighted digraph presented in Figure \ref{fig:example-weig-digraph-dfc3}. Also, both lower $0$-adjacency matrices of $\mathcal{G}_{0}$ and $\mathcal{G}_{0}^{\widetilde{\omega}}$ were computed in (\ref{eq:two-q-adjancecy-matrices}).

\begin{figure}[h!]
\centering
\begin{subfigure}{.33\textwidth}
  \centering
  \includegraphics[scale=0.75]{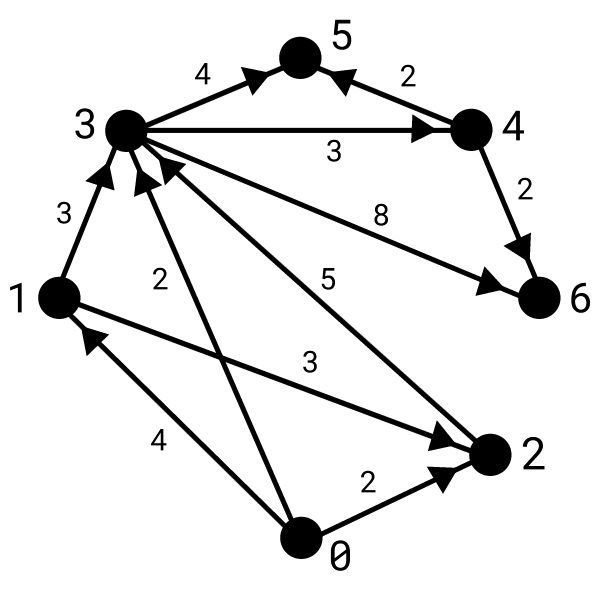}
  \caption{$G^{\omega}$.}
  \label{fig:example-weig-digraph-dfc1}
\end{subfigure}%
\begin{subfigure}{.33\textwidth}
  \centering
  \includegraphics[scale=0.75]{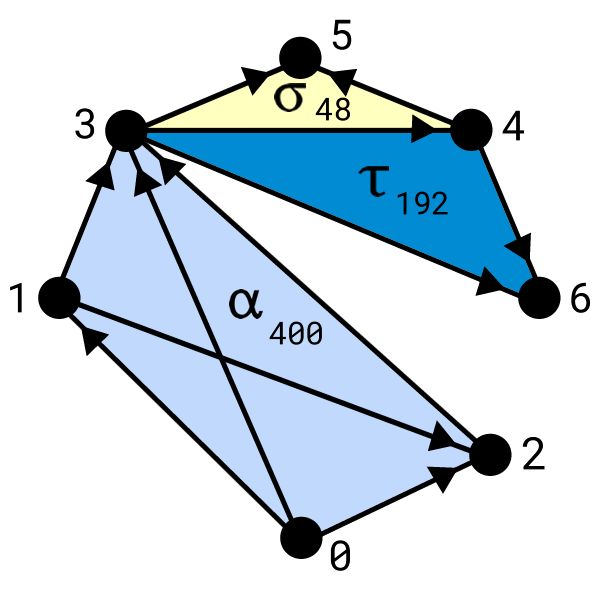}
  \caption{$(\mathrm{dFl}(G^{\omega}), \widetilde{\omega})$.}
  \label{fig:example-weig-digraph-dfc2}
\end{subfigure}%
\begin{subfigure}{.33\textwidth}
  \centering
  \includegraphics[scale=0.8]{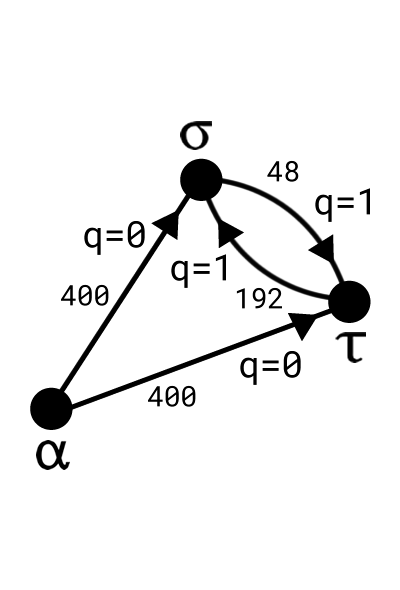}
  \caption{$\mathcal{G}^{\widetilde{\omega}}_{0}$.}
  \label{fig:example-weig-digraph-dfc3}
\end{subfigure}
\caption{A weighted digraph together with its weighted directed flag complex and its weighted lower $0$-digraph.}
\label{fig:digraph-and-dfc}
\end{figure}

\begin{equation}\label{eq:two-q-adjancecy-matrices}
\mathcal{H}_{0} =  
\bbordermatrix{
& \sigma & \tau & \alpha \cr
\sigma & 0  & 1 & 0\cr 
\tau   & 1  & 0 & 0\cr 
\alpha & 1  & 1 & 0\cr 
}, \hspace{0.12in}
\mathcal{H}^{\widetilde{\omega}}_{0} =
\bbordermatrix{
& \sigma & \tau & \alpha \cr
\sigma & 0  & 48 & 0\cr 
\tau   & 192  & 0 & 0\cr 
\alpha & 400  & 400 & 0\cr 
}.
\end{equation}
\end{example}

\medskip

Furthermore, the directed simplicial $q$-distance between two vertices of a maximal/lower $q$-digraph can be written in terms of the entries of its maximal/lower $q$-adjacency matrix, $\mathcal{H}_{q} = (h_{\sigma \tau})$:
\begin{equation}\label{eq:simp-weig-distance-entries}
\vec{d}_{q}(\sigma, \tau) = \sum_{\sigma', \tau' \in s^{q}_{\sigma \rightarrow \tau}} h_{\sigma' \tau'},
\end{equation}

\noindent where $s^{q}_{\sigma \rightarrow \tau}$ is the shortest directed simplicial $q$-walk from $\sigma$ to $\tau$. Similarly, for a weighted maximal/lower $q$-digraph, we can define the \textit{weighted directed simplicial $q$-distance} in terms of the entries of its weighted maximal/lower $q$-adjacency matrix: 
\begin{equation}\label{eq:simp-weig-distance-entries}
\vec{d}_{q}^{\omega}(\sigma, \tau) = \sum_{\sigma', \tau' \in s^{q}_{\sigma \rightarrow \tau}(F)} F\big(f(\widetilde{\omega}(\sigma'), \widetilde{\omega}(\tau')\big),
\end{equation}

\noindent where $f$ is some non-symmetric node-to-edge weight function, $F$ is a weight-to-distance function, and $s^{q}_{\sigma \rightarrow \tau}(F)$ is the shortest directed simplicial $q$-walk from $\sigma$ to $\tau$ with respect to $F$.

\subsubsection{Stars, Hubs, and Links}

In this part, we extend the definitions of stars, hubs, and links introduced in Subsection \ref{sec:q-analysis} to directed simplices.

\smallskip

\begin{definition}
Given a directed simplex $\sigma^{(n)} \in \mathrm{dFl}(G)$, for $0 \le q \le n$ and for $\bullet \in \{ -,+,\pm \}$, we have the following definitions:

\begin{enumerate}
\item The \textit{lower $(\bullet)$-$q$-star} of $\sigma^{(n)}$ is the set defined by
\begin{equation}\label{def:dir-stars}
\mathrm{st}_{L_{q}} ^{\bullet}(\sigma^{(n)}) = \{ \tau^{(m)} \in \mathrm{dFl}(G) : \sigma^{(n)}  \sim^{\bullet}_{L_{q}}  \tau^{(m)}  \}.
\end{equation}

\item The \textit{strictly lower $(\bullet)$-$q$-star} of $\sigma^{(n)}$ is the set defined by
\begin{equation}\label{eq:lower-star}
\mathrm{st}_{L_{q^{*}}} ^{\bullet}(\sigma^{(n)}) = \{ \tau^{(m)} \in \mathrm{dFl}(G) : \sigma^{(n)}  \sim^{\bullet}_{L_{q^{*}}} \tau^{(m)}  \}.
\end{equation}

\item The \textit{upper $(\bullet)$-$q$-star} of $\sigma^{(n)}$ is the set defined by
\begin{equation}
\mathrm{st}_{U_{q}} ^{\bullet}(\sigma^{(n)}) = \{ \tau^{(m)} \in \mathrm{dFl}(G) : \sigma^{(n)}  \sim^{\bullet}_{U_{q}}  \tau^{(m)}  \}.
\end{equation}

\item The \textit{strictly upper $(\bullet)$-$q$-star} of $\sigma^{(n)}$ is the set defined by
\begin{equation}
\mathrm{st}_{U_{q^{*}}} ^{\bullet}(\sigma^{(n)}) = \{ \tau^{(m)} \in \mathrm{dFl}(G) : \sigma^{(n)}  \sim^{\bullet}_{U_{q^{*}}} \tau^{(m)}  \}.
\end{equation}

\item The \textit{$(\bullet)$-$q$-star} of $\sigma^{(n)}$ is the set defined by
\begin{equation}
\mathrm{st}_{A_{q}} ^{\bullet}(\sigma^{(n)}) = \{ \tau^{(m)} \in \mathrm{dFl}(G) : \sigma^{(n)}  \sim^{\bullet}_{A_{q}}  \tau^{(m)}  \}.
\end{equation}

\item The \textit{maximal $(\bullet)$-$q$-star} of $\sigma^{(n)}$ is the set defined by
\begin{equation}
\mathrm{st}_{A_{q^{*}}} ^{\bullet}(\sigma^{(n)}) = \{ \tau^{(m)} \in \mathrm{dFl}(G) : \sigma^{(n)}  \sim^{\bullet}_{A_{q^{*}}}  \tau^{(m)}  \}.
\end{equation}

\end{enumerate}
\end{definition}

\smallskip

Notice that $\mathrm{st}^{\pm}(\sigma) = \mathrm{st}^{+}(\sigma) \cap \mathrm{st}^{-}(\sigma)$, for any of the $q$-stars defined above.

\smallskip

The definition of the hub of a simplicial family of directed simplices obtained from $\mathrm{dFl}(G)$ is exactly the same as the Definition \ref{def:simp-hub}, since it is defined as the set of faces that are shared by the elements of the family, regardless of the direction of the connection between them. We formalize this fact as follows.

\smallskip

\begin{definition}\label{def:dir-hub}
Let $\mathcal{F}(G)$ denote a simplicial family of directed simplices obtained from $\mathrm{dFl}(G)$. The \textit{hub} of $\mathcal{F}(G)$ is the set formed by all directed simplices that are common faces of the elements of $\mathcal{F}(G)$, i.e.

\begin{equation}
\mathrm{hub}(\mathcal{F}(G)) = \bigcap_{\sigma \in \mathcal{F}(G)} \sigma.
\end{equation}

\end{definition}

\smallskip

Analogously to Definition \ref{def:link} (link of a simplex), we can generalize the idea of in- and out-neighborhood of a node in a digraph to directed simplices through the concepts of \textit{in-} and \textit{out-link}. 

\smallskip

\begin{definition}\label{def:dir-link}
The \textit{in-link} and \textit{out-link} of a directed simplex $\sigma^{(n)} \in \mathrm{dFl}(G)$ are defined, respectively, by
\begin{equation}\label{eq:in-link}
\mathrm{lk}^{-}(\sigma^{(n)}) = \{ \tau^{(m)} \in \mathrm{dFl}(G) |\sigma^{(n)} \cap \tau^{(m)} = \emptyset, \sigma^{(n)}\sim^{-}_{U_{p}}  \tau^{(m)}  \},
\end{equation}
\begin{equation}\label{eq:out-link}
\mathrm{lk}^{+}(\sigma^{(n)}) = \{ \tau^{(m)} \in \mathrm{dFl}(G) |\sigma^{(n)} \cap \tau^{(m)} = \emptyset, \sigma^{(n)} \sim^{+}_{U_{p}} \tau^{(m)}  \},
\end{equation}

\noindent where $p=n+m+1$.
\end{definition}

\smallskip


Notice that if $\sigma$ is a simplex in the underlying flag complex of $ \mathrm{dFl}(G)$, then $\mathrm{lk}(\sigma) = \mathrm{lk}^{-}(\sigma) \cup \mathrm{lk}^{+}(\sigma) - (\mathrm{lk}^{-}(\sigma) \cap \mathrm{lk}^{+}(\sigma))$.

\smallskip

\begin{example}
Considering the directed flag complex depicted in Figure \ref{fig:dfc-example-stars-degree}, we have the following examples: the in- and out-link of the arc $[0,3]$ are $\mathrm{lk}^{-}([0,3]) = \{ [4], [1,9] \}$ and $\mathrm{lk}^{+}([0,3]) = \{ [1,9] \}$; the maximal $(\bullet)$-$1$-star of $\sigma_{1}$ are $\mathrm{st}_{A_{1^{*}}}^{-}(\sigma_{1}) = \{ \sigma_{2} \}$ and $\mathrm{st}_{A_{1^{*}}}^{+}(\sigma_{1}) = \emptyset$; the maximal $(\bullet)$-$1$-star of $\sigma_{6}$ are $\mathrm{st}_{A_{1^{*}}}^{-}(\sigma_{6}) = \{  \sigma_{7} \}$ and $\mathrm{st}_{A_{1^{*}}}^{+}(\sigma_{6}) = \{ \sigma_{5}, \sigma_{3} \}$.

\begin{figure}[h!]
\centering
  \includegraphics[scale=0.8]{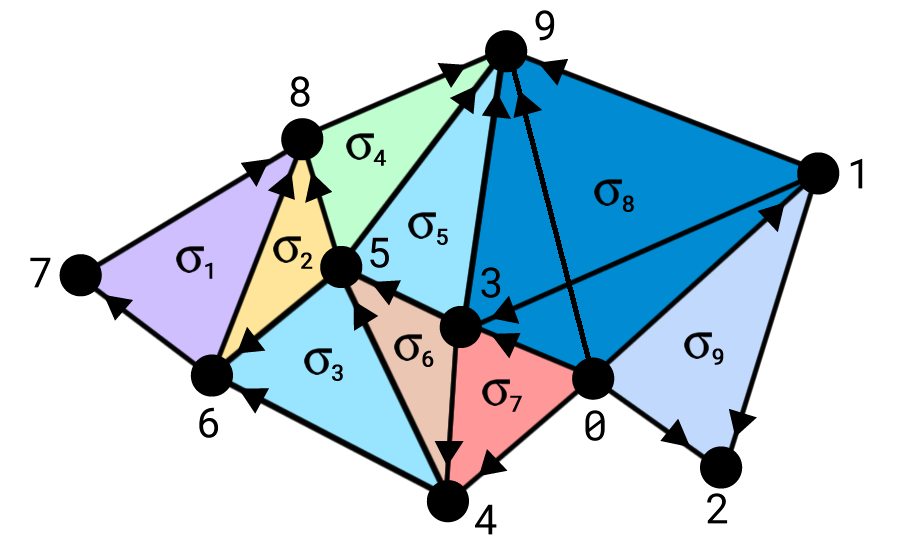}
\caption{A directed flag complex.}
\label{fig:dfc-example-stars-degree}
\end{figure}

\end{example}

\subsubsection{Lower, Upper, and General Degrees}

Based on the previous definitions of $q$-stars, in this last part, we extend the definitions of lower, upper, and general degrees to directed simplices.

\smallskip

\begin{definition}
Given a directed simplex $\sigma^{(n)} \in \mathrm{dFl}(G)$, for $0 \le q \le n$ and for $\bullet \in \{ -,+,\pm \}$, we have the following definitions:

\begin{enumerate}
\item The \textit{lower $(\bullet)$-$q$-degree} of $\sigma^{(n)}$ is defined by 
\begin{equation}
 \deg^{\bullet}_{L_{q}}(\sigma^{(n)}) = |\mathrm{st}_{L_{q}}^{\bullet}(\sigma^{(n)})| = \# \{ \tau^{(m)} \in \mathrm{dFl}(G) : \sigma^{(n)} \sim_{L_{q}}^{\bullet} \tau^{(m)} \}.
\end{equation}

\item The \textit{strictly lower $(\bullet)$-$q$-degree} of $\sigma^{(n)}$ is defined by
\begin{equation}
 \deg^{\bullet}_{L_{q^{*}}}(\sigma^{(n)}) = |\mathrm{st}_{L_{q^{*}}}^{\bullet}(\sigma^{(n)})| = \# \{ \tau^{(m)} \in \mathrm{dFl}(G) : \sigma^{(n)} \sim_{L_{q^{*}}}^{\bullet} \tau^{(m)} \}.
\end{equation}

\item The \textit{upper $(\bullet)$-$q$-degree} of $\sigma^{(n)}$ is defined by 
\begin{equation}
 \deg^{\bullet}_{U_{q}}(\sigma^{(n)}) = |\mathrm{st}_{U_{q}}^{\bullet}(\sigma^{(n)})| = \# \{ \tau^{(m)} \in \mathrm{dFl}(G) : \sigma^{(n)} \sim_{U_{q}}^{\bullet} \tau^{(m)} \}.
\end{equation}

\item The \textit{strictly upper $(\bullet)$-$q$-degree} of $\sigma^{(n)}$ is defined by
\begin{equation}
 \deg^{\bullet}_{U_{q^{*}}}(\sigma^{(n)}) = |\mathrm{st}_{U_{q^{*}}}^{\bullet}(\sigma^{(n)})| = \# \{ \tau^{(m)} \in \mathrm{dFl}(G) : \sigma^{(n)} \sim_{U_{q^{*}}}^{\bullet} \tau^{(m)} \}.
\end{equation}

\item The \textit{$(\bullet)$-$q$-degree} of $\sigma^{(n)}$ is defined by 
\begin{equation}
\deg^{\bullet}_{A_{q}}(\sigma^{(n)}) = |\mathrm{st}_{A_{q}}^{\bullet}(\sigma^{(n)})| = \# \{ \tau^{(m)} \in \mathrm{dFl}(G) : \sigma^{(n)} \sim_{A_{q}}^{\bullet} \tau^{(m)} \}.
\end{equation}

\item The \textit{maximal $(\bullet)$-$q$-degree} of $\sigma^{(n)}$ is defined by 
\begin{equation}
\deg^{\bullet}_{A_{q^{*}}}(\sigma^{(n)}) = |\mathrm{st}_{A_{q^{*}}}^{\bullet}(\sigma^{(n)})| = \# \{ \tau^{(m)} \in \mathrm{dFl}(G) : \sigma^{(n)} \sim_{A_{q^{*}}}^{\bullet} \tau^{(m)} \}.
\end{equation}

\end{enumerate}
\end{definition}

\smallskip


Notice that if $\sigma$ is a simplex in the underlying flag complex of $\mathrm{dFl}(G)$, then its $q$-degree, for any of the lower, upper, and general adjacencies, is equal to
\begin{equation}
\deg(\sigma) =  \deg^{-}(\sigma) +  \deg^{+}(\sigma) -  \deg^{\pm}(\sigma).
\end{equation}

\begin{example}
Considering the directed flag complex depicted in Figure \ref{fig:dfc-example-stars-degree}, we have the following examples: $\deg^{-}_{L_{1}}(\sigma_{1}) = 1$ and $\deg^{+}_{L_{1}}(\sigma_{1}) = 0$;  $\deg^{-}_{L_{1}}(\sigma_{6}) = 2$ and $\deg^{+}_{L_{1}}(\sigma_{6}) = 2$.
\end{example}

\smallskip

It's important to note that if we are considering solely the elements of $\mathrm{dFl}_{q}^{\ast}(G)$, i.e. the maximal directed simplices, then the maximal/lower $(\bullet)$-$q$-degrees, $\bullet \in \{-, +\}$, can be written in terms of the entries of the $q$-adjacency matrix of the maximal/lower $q$-digraph, $\mathcal{H}_{q} = (h_{\sigma\tau})$:
\begin{equation}\label{eq:simp-weig-in-degree}
\deg^{-}_{q}(\sigma) = \sum_{\tau \in \mathrm{st}^{-}_{q}(\sigma)} h_{\tau \sigma},
\end{equation}

\begin{equation}\label{eq:simp-weig-out-degree}
\deg^{+}_{q}(\sigma) = \sum_{\tau \in \mathrm{st}^{+}_{q}(\sigma)} h_{\sigma \tau},
\end{equation}

\noindent where the generic notations $\deg^{\bullet}_{q}$ and $\mathrm{st}^{\bullet}_{q}$ designate either the maximal or the lower variants. On the other hand, in the case where we have a weighted directed flag complex $(\mathrm{dFl}(G^{\omega}), \widetilde{\omega})$, the \textit{weighted maximal/lower $(\bullet)$-$q$-degrees}, $\bullet \in \{-, +\}$, can be written as:
\begin{equation}\label{eq:simp-weig-in-degree}
\deg^{\omega, -}_{q}(\sigma) = \sum_{\tau \in \mathrm{st}^{-}_{q}(\sigma)} f(\widetilde{\omega}(\tau), \widetilde{\omega}(\sigma)),
\end{equation}

\begin{equation}\label{eq:simp-weig-out-degree}
\deg^{\omega, +}_{q}(\sigma) = \sum_{\tau \in \mathrm{st}^{+}_{q}(\sigma)} f(\widetilde{\omega}(\sigma), \widetilde{\omega}(\tau)),
\end{equation}

\noindent where $f$ is some non-symmetric node-to-edge weight function and $\deg^{\omega, \bullet}_{q}$ is a generic notation for both variants.






\chapter[Quantitative Approaches to Digraph-Based Complexes]{Quantitative Approaches to Digraph-Based Complexes}
\label{chap:chap5}

\epigraph{En un mot, pour tirer la loi de l'expérience, if faut généraliser; c'est une nécessité qui s'impose à l'observateur le plus circonspect. (In one word, to draw the rule from experience, one must generalize; this is a necessity that imposes itself on the most circumspect observer.)}{--- Henri Poincaré \citep{PoincareValue}}

\bigskip

In this chapter, we extend several quantifiers and similarity comparison methods defined for (weighted and unweighted) graphs in Chapter \ref{chap:chap2}, such as distance-related measures, measures of centrality, segregation, spectrum-related measures, graph kernels, etc., which take into account their abstract, algebraic, and topological properties, to (weighted and unweighted) digraph-based complexes, specially for directed flag complexes, and, in addition, we introduce some new measures.

Specifically, we extend all of the quantitative methods defined previously for graphs to maximal/lower $q$-digraphs obtained from directed flag complexes, so we can quantify the higher-order connectivity and topological properties of the directed network at each level $q$, i.e. we can analyze quantitatively each level of organization of the directed cliques in the network taking into account their directed higher-order connectivities.


\section{Quantitative Graph Theory and Beyond}
\label{sec:QGT-beyond}

In this first section, we present some conceptual descriptions and examples related to quantitative graph theory; subsequently, we discuss some recently introduced quantitative methods for simplicial complexes and their importance as a basis for a quantitative theory of digraph-based complexes.

\subsection{Quantitative Graph Theory}

According to Dehmer et al. \citep{Dehmer-2017}, the ``quantitative graph theory (QGT) deals with the quantification of structural aspects of graphs, instead of characterizing graphs only descriptively." The QGT can be considered a new branch within graph theory \citep{Dehmer-2017, Dehmer-2014}, and the reason why its introduction was necessary is that most of the methods from classical graph theory for graph analysis are descriptive.

Moreover, we can divide the QGT into two broad categories, namely: \textit{graph characterization} and \textit{comparative graph analysis} \citep{Dehmer-2014}. In what follows, we present a description of these categories and provide some examples.

\bigskip
\noindent \textbf{Graph Characterization:}
Graph characterization is concerned with describing some property of the network through a local or global numerical graph invariant. Examples of numerical graph invariants are: distance-based measures (characteristic path length (\ref{eq:CPL}), eccentricity (\ref{eq:eccentricity})); centrality measures (closeness centrality (\ref{eq:closeness-centrality}), betweenness centrality (\ref{eq:betweenness-centrality})); complexity/information-theoretic measures (graph entropy (\ref{eq:distribution-entropy})); and spectrum-based measures (graph energy (\ref{eq:graph-energy})).

\bigskip
\noindent \textbf{Comparative Graph Analysis:}
Comparative graph analysis is concerned with methods to compare the structural similarity or structural distance between two or more graphs. As already commented in Section \ref{sec:graph-similarity}, there are two classes of methods of graph similarity comparison: \textit{statistical comparison methods} and \textit{distance-based comparison algorithms}. Examples of the first class are: applying numerical invariants in different groups of graphs and comparing them using statistical tests; and examples of the second class are: graph edit distances; graph kernels; and structure distances.

\bigskip

In the past two paragraphs, we presented just a few examples among a myriad of measures and methods used to characterize and compare graphs in several different scientific disciplines, and it is up to the reader to look over the references if they want to learn more \citep{Dehmer-2017, Dehmer-2014}. 

\smallskip

\subsection{Beyond QGT: Quantitative Simplicial Theory}

In view of the recent developments of quantitative methods for the analysis of simplicial complexes, we can draw an analogy with QGT and propose a ``quantitative simplicial theory," with the same conceptual characteristics described in the previous section for QGT, i.e. as the field which ``deals with the quantification of structural aspects of simplicial complexes, instead of characterizing simplicial complexes only descriptively."

Making another analogy with QGT, we can divide the ``quantitative simplicial theory" into two categories: \textit{simplicial characterization} and \textit{simplicial similarity comparison methods}. The description of these categories is analogous to those of the QGT; thus, below we present some examples and the respective references of the methods belonging to each of these categories:

\bigskip
\noindent \textbf{Simplicial Characterization:} simplical distances and simplical eccentricities \citep{Johnson, Serrano2020}; simplicial centralities \citep{Estrada, Serrano2020}; simplicial clustering coefficients \citep{Maletic, Serrano2020}; simplicial entropies \citep{Baccini, Dantchev, Maletic2012}; discrete curvatures \citep{Yamada2023}; simplicial energies \citep{Knill}.

\bigskip
\noindent \textbf{Simplicial Similarity Comparison Methods:} distances between persistent diagrams \citep{Edelsbrunner}; distances between vectorized persistence summaries \citep{Fasy}; simplicial kernels \citep{Martino, Zhang2020}; distances between structure vectors.

\bigskip

Most of these simplicial quantitative methods can be extended to directed flag complexes, and some of them can also be extended to path complexes, as we will see in the next sections.


\section{Simplicial Characterization Measures}

In this section, we introduce novel simplicial analogues of the digraph measures (graph invariants) presented in Section \ref{sec:graph-measures} to maximal and lower $q$-digraphs (henceforth simply referred to as $q$-digraphs) associated with directed flag complexes. We use a similar textual organizational structure as used in the aforementioned section, that is, we start by presenting the distance-based simplicial measures, followed by the simplicial centralities, simplicial segregation measures, simplicial entropies, discrete curvatures, and finally the spectrum-related simplicial measures. It is worth pointing out that all these measures are based on the directed high-order connectivity of the complex in a certain way.

\paragraph{Conventions and notations.} Throughout the next subsections, $\mathrm{dFl}(G)$ will denote the directed flag complex of a given simple digraph $G$ \textit{without double edges}, and $\mathcal{G}_{q} = (\mathcal{V}_{q}, \mathcal{E}_{q})$ will denote its (maximal or lower) $q$-digraph with (maximal or lower) $q$-adjacency matrix (henceforth simply referred to as $q$-adjacency matrix) $\mathcal{H}_{q} = (h_{\sigma \tau})$, for $0 \le q \le \dim \mathrm{dFl}(G)$. For each of the measures that we will define in the sections below, corresponding maximal and lower variants are obtained analogously by replacing the adjacency $A_{q^{*}}$ with $L_{q^{*}}$ and vice versa (e.g., $\mathcal{H}_{q}^{A}$ by $\mathcal{H}_{q}^{L}$, $\vec{d}^{A}_{q}$ by $\vec{d}^{L}_{q}$, and $\deg^{\bullet}_{A_{q^{*}}}$ by $\deg^{\bullet}_{L_{q^{*}}}$), and we will compress this using the generic notations as described in Table~\ref{tab:lower-maximal-conventions} below.

{\renewcommand{\arraystretch}{1.35}
\begin{table}[h!]
\centering
\caption{Notation conventions for maximal and lower variants of the simplicial measures.}
\begin{tabular}{ccc}
\hline
\textbf{Generic notation} & \textbf{Maximal variant} & \textbf{Lower variant}\\
\hline
$\mathcal{G}_{q}$        & $\mathcal{G}_{q}^{A}$        & $\mathcal{G}_{q}^{L}$ \\
$\mathcal{E}_{q}$        & $\mathcal{E}_{q}^{A}$        & $\mathcal{E}_{q}^{L}$ \\
$(\sigma, \tau)$  & $(\sigma, \tau)_{A}$   & $(\sigma, \tau)_{L}$  \\
$\mathcal{H}_{q}$        & $\mathcal{H}_{q}^{A}$        & $\mathcal{H}_{q}^{L}$ \\
$\vec{d}_{q}$            & $\vec{d}_{q}^{A}$            & $\vec{d}_{q}^{L}$ \\
$\mathrm{st}^{\bullet}_{q}$ & $\mathrm{st}^{\bullet}_{A_{q^{*}}}$ & $\mathrm{st}^{\bullet}_{L_{q}}$ \\
$\deg^{\bullet}_{q}$ & $\deg^{\bullet}_{A_{q^{*}}}$ & $\deg^{\bullet}_{L_{q}}$ \\
\hline
\end{tabular}
\label{tab:lower-maximal-conventions}
\end{table}
}

\subsection{Distance-Based Simplicial Measures}
\label{sec:simp-distance-measures}

In this part, we extend the distance-based measures as defined for digraphs in Subsection \ref{subsec:Distance-Related-Measures} to $q$-digraphs. These new simplicial measures can be seen as measures of higher-order global integration of a directed network, that is, how the network is integrated at various levels of organization.

\subsubsection{Average Shortest Directed Simplicial q-Walk Length}

The \textit{average shortest directed simplicial q-walk length} of $\mathcal{G}_{q}$ is defined as the simplicial analogue of the directed version of the average shortest path length (\ref{eq:CPL}), i.e.
\begin{equation}\label{eq:average-shortest-simp-q-walk}
\vec{L}_{q}(\mathcal{G}_{q}) = \frac{1}{|\mathcal{V}_{q}|} \sum_{\sigma \in \mathcal{V}_{q}}\frac{\sum_{\tau \in \mathcal{V}_{q}, \tau \neq \sigma} \vec{d}_{q}(\sigma, \tau)}{|\mathcal{V}_{q}|-1} = \sum_{\substack{\sigma, \tau\in \mathcal{V}_{q} \\ \sigma\neq \tau}} \frac{\vec{d}_{q}(\sigma, \tau)}{|\mathcal{V}_{q}|(|\mathcal{V}_{q}|-1)}.
\end{equation}

For a weighted $q$-digraph $\mathcal{G}^{\widetilde{\omega}}_{q}$, the weighted version of the formula (\ref{eq:average-shortest-simp-q-walk}) is obtained by replacing $\vec{d}_{q}$ with $\vec{d}_{q}^{\omega}$. Also, for computational purposes, we may consider $\vec{d}_{q}(\sigma, \tau) = 0$ instead of $\vec{d}_{q}(\sigma, \tau) = \infty$ when $(\sigma, \tau) \not\in \mathcal{E}_{q}$, otherwise $|\mathcal{V}_{q}|$ must be replaced with the order of the giant $q$-component.

\subsubsection{Directed Simplicial q-Eccentricity}

In the literature, there are different ways to define the eccentricity of a simplex \citep{Atkin1977, Johnson}. Here, however, we extend the definition of simplicial eccentricity as proposed in \citep{Serrano2020} to directed simplices. Following the formula (\ref{eq:eccentricity}), for a strongly $q$-connected $q$-digraph $\mathcal{G}_{q}$, we define the \textit{directed simplicial $q$-eccentricity} of $\sigma \in \mathcal{V}_{q}$ as the maximum directed simplicial $q$-distance from $\sigma$ to any other $\tau \in \mathcal{V}_{q}$, i.e.
\begin{equation}\label{eq:simp-eccentricity}
\vec{\mathrm{ecc}}_{q}(\sigma) = \displaystyle \max_{\tau \in \mathcal{V}_{q}} \vec{d}_{q}(\sigma, \tau).
\end{equation}

Considering the formula (\ref{eq:simp-eccentricity}), the \textit{directed simplicial $q$-diameter} and the \textit{directed simplicial $q$-radius} of $\mathcal{G}_{q}$ are defined in an analogous way to the formulas (\ref{eq:diameter}) and (\ref{eq:radius}), respectively.

\subsubsection{Directed Simplicial Global q-Efficiency}

The \textit{directed simplicial global q-efficiency} of $\mathcal{G}_{q}$ is defined as the simplicial analogue of the directed version of the global efficiency (\ref{eq:global-efficiency}), i.e.
\begin{equation}\label{eq:simp-global-efficiency}
   \vec{E}_{glob}^{q}(\mathcal{G}_{q}) = \frac{1}{|\mathcal{V}_{q}|} \sum_{\sigma \in \mathcal{V}_{q}}\frac{\sum_{\tau \in \mathcal{V}_{q}, \tau \neq \sigma} \vec{d}_{q}^{-1}(\sigma, \tau)}{|\mathcal{V}_{q}|-1}.
\end{equation}
  
For a weighted $q$-digraph $\mathcal{G}^{\widetilde{\omega}}_{q}$, the weighted version of the formula (\ref{eq:simp-global-efficiency}) is obtained by replacing $\vec{d}_{q}$ with $\vec{d}_{q}^{\omega}$.



\subsubsection{Simplicial q-Communicability}

The \textit{simplicial $q$-communicability} between two directed simplices $\sigma, \tau \in \mathcal{V}_{q}$ is defined as the simplicial analogue of the communicability between two vertices in a digraph (\ref{eq:communicability-2}), and therefore it can be written in terms of the powers of the $q$-adjacency matrix as
\begin{equation}\label{eq:simp-communicability}
CM_{q}(\sigma, \tau) = \sum^{\infty}_{k=0} \frac{(\mathcal{H}_{q}^{k})_{\sigma\tau}}{k!} = (\exp(\mathcal{H}_{q}))_{\sigma\tau}.
\end{equation}

Be aware that the entire discussion on the properties of the adjacency matrices of digraphs made in Subsection \ref{sec:AGT} are equally valid for the $q$-adjacency matrices, therefore $(\mathcal{H}_{q}^{k})_{\sigma\tau}$ represents the number of directed simplicial $q$-walks of length $k$ from $\sigma$ to $\tau$.

\subsubsection{Simplicial q-Returnability}

The \textit{simplicial $q$-returnability} of $\mathcal{G}_{q}$ is defined as the simplicial analogue of the returnability of a digraph (\ref{eq:communicability-2}), and therefore it can be written in terms of the powers of the $q$-adjacency matrix as
\begin{equation}\label{eq:simp-returnability}
K_{r,q}(\mathcal{G}_{q}) = \sum^{\infty}_{k=2} \frac{\mathrm{Tr}(\mathcal{H}_{q}^{k})}{k!} = \mathrm{Tr}(\exp(\mathcal{H}_{q})) - |\mathcal{V}_{q}|.
\end{equation}

Moreover, we can define the \textit{relative simplicial $q$-returnability} as the simplicial analogue of the relative returnability (\ref{eq:relative-returnability}), i.e.
\begin{equation}\label{eq:relative-returnability}
K_{r, q}'(\mathcal{G}_{q}) = \frac{\mathrm{Tr}(\exp(\mathcal{H}_{q})) - |\mathcal{V}_{q}|}{\mathrm{Tr}(\exp(\mathcal{H}_{q}')) - |\mathcal{V}_{q}|},
\end{equation}

\noindent where $\mathcal{H}_{q}'$ is the $q$-adjacency matrix of the underlying $q$-graph.

\subsection{Simplicial Centrality Measures}
\label{sec:simp-centrality-measures}

In this part, we extend the measures of centrality as defined for digraphs in Subsection \ref{subsec:centrality-measures} to $q$-digraphs. These new simplicial measures can be interpreted as measures that try to quantify the ``importance," ``influence," or ``centrality" of a directed clique within the higher-order settings, that is, its ``centrality" at various levels of organization of the network.

\subsubsection{Simplicial Degree Centralities}

The \textit{simplicial in-$q$-degree centrality} of $\sigma \in \mathcal{V}_{q}$ is defined as the simplicial analogue of the in-degree centrality of a node (\ref{eq:in-degree-centrality}), i.e.
\begin{equation}\label{eq:simp-in-degree}
 C_{\deg_{q}}^{-}(\sigma) = \frac{\deg^{-}_{q}(\sigma) }{|\mathcal{V}_{q}| - 1}.
\end{equation}

In the same way, the \textit{simplicial out-$q$-degree centrality} of $\sigma$ is defined as the simplicial analogue of the out-degree centrality of a node (\ref{eq:out-degree-centrality}), i.e.
\begin{equation}\label{eq:simp-out-degree}
 C_{\deg_{q}}^{+}(\sigma) = \frac{\deg^{+}_{q}(\sigma) }{|\mathcal{V}_{q}| - 1}.
\end{equation}

Also, if $\sigma$ is a simplex in the corresponding underlying $q$-graph, then its $q$-degree centrality can be written as
\begin{equation}
 C_{\deg_{q}}(\sigma) = C_{\deg_{q}}^{-}(\sigma) + C_{\deg_{q}}^{+}(\sigma) - \frac{\deg^{\pm}_{q}(\sigma)}{|\mathcal{V}_{q}| - 1}.
\end{equation}

To obtain the \textit{weighted simplicial in-$q$-degree centrality} we simply replace $\deg^{-}_{q}$ with $\deg^{\omega, -}_{q}$ (see formula (\ref{eq:simp-weig-in-degree})), and to obtain the \textit{weighted simplicial out-$q$-degree centrality} we replace $\deg^{+}_{q}$ with $\deg^{\omega, +}_{q}$ (see formula (\ref{eq:simp-weig-out-degree})).

\begin{remark}
For the maximal variant, since $\sigma \in \mathrm{dFl}_{q}^{\ast}(G)$, the maximal $q$-degree in the formulas (\ref{eq:simp-in-degree}) and  (\ref{eq:simp-out-degree}) can be replace by the strictly lower $q$-degree.
\end{remark}

\subsubsection{Directed Simplicial Closeness Centrality}

The \textit{directed simplicial $q$-closeness centrality} of $\sigma \in \mathcal{V}_{q}$ is defined as the simplicial analogue of the directed version of the closeness centrality (\ref{eq:closeness-centrality}), i.e.
\begin{equation}\label{eq:simp-closeness}
\vec{Cl}_{q}(\sigma) = \frac{1}{\sum_{\substack{\tau \in \mathcal{V}_{q} \\ \tau \neq \sigma}} \vec{d}_{q}(\sigma, \tau)}.
\end{equation}

The formula (\ref{eq:simp-closeness}) is defined for weakly $q$-connected $q$-digraphs, thus if $N_{q}$ is the order of 
the giant $q$-component of $\mathcal{G}_{q}$, we define the \textit{normalized directed simplicial $q$-closeness centrality} by
\begin{equation}\label{eq:normalized-simp-closeness}
\vec{Cl}_{q}(\sigma) = \frac{N_{q} - 1}{\sum_{\substack{\tau \in \mathcal{V}_{q} \\ \tau \neq \sigma}} \vec{d}_{q}(\sigma, \tau)}.
\end{equation}

For the weighted case, we simply replace $\vec{d}_{q}$ with $\vec{d}_{q}^{\omega}$.

\smallskip

\begin{remark}
For the corresponding underlying $q$-graph of $\mathcal{G}_{q}$, the formula (\ref{eq:simp-closeness}) corresponds to the $q$-closeness centrality as introduced in \citep{Serrano2020} for simplicial complexes.
\end{remark}

\subsubsection{Directed Simplicial Harmonic Centrality}

The \textit{directed simplicial $q$-harmonic centrality} of $\sigma \in \mathcal{V}_{q}$ is defined as the simplicial analogue of the directed version of the harmonic centrality (\ref{eq:harmonic-centrality}), i.e.
\begin{equation}\label{eq:simp-harmonic}
\vec{HC}_{q}(\sigma) = \sum_{\substack{\tau \in \mathcal{V}_{q} \\ \tau \neq \sigma}} \frac{1}{\vec{d}_{q}(\sigma, \tau)},
\end{equation}

\noindent where the convention $1/\infty = 0$ is adopted. Unlike the directed simplicial $q$-closeness centrality, the directed simplicial $q$-harmonic centrality can be computed for disconnected $q$-digraphs. Also, for the weighted case, we simply replace $\vec{d}_{q}$ with $\vec{d}_{q}^{\omega}$.

\subsubsection{Directed Simplicial Betweenness Centrality}

The \textit{directed simplicial $q$-betweenness centrality} of $\sigma \in \mathcal{V}_{q}$ is defined as the simplicial analogue of the directed version of the between centrality (\ref{eq:betweenness-centrality}), i.e.
\begin{equation}\label{eq:simp-betweenness}
\vec{B}_{q}(\sigma) = \sum_{\substack{\tau, \tau' \in \mathcal{V}_{q} \\ \tau' \neq \tau \neq \sigma}} \frac{\vec{l}^{q}_{\tau'\tau}(\sigma)}{\vec{l}^{q}_{\tau'\tau}},
\end{equation}

\noindent where $\vec{l}^{q}_{\tau'\tau}(\sigma)$ is the number of shortest directed simplicial $q$-walks from $\tau'$ to $\tau$ passing through $\sigma$, and $\vec{l}^{q}_{\tau'\tau}$ is the total number of shortest directed simplicial $q$-walks from $\tau'$ to $\tau$.

The formula (\ref{eq:simp-betweenness}) is defined for weakly $q$-connected $q$-digraphs, thus if $N_{q}$ is the order of the giant $q$-component of $\mathcal{G}_{q}$, we define the \textit{normalized directed simplicial $q$-betweenness centrality} of $\sigma$ by 
\begin{equation}\label{eq:normalized-simp-betweenness}
\vec{B}_{q}(\sigma) = \frac{1}{(N_{q} - 1)(N_{q}-2)} \sum_{\substack{\tau, \tau' \in \mathcal{V}_{q} \\ \tau' \neq \tau \neq \sigma}} \frac{\vec{l}^{q}_{\tau'\tau}(\sigma)}{\vec{l}^{q}_{\tau'\tau}}.
\end{equation}

For the weighted case, we simply consider the weighted version $\vec{l}^{\omega, q}_{\tau'\tau}(\sigma)$, where the shortest directed simplicial $q$-walks are computed in relation to a weight-to-distance function.

\begin{remark}
For the corresponding underlying $q$-graph of $\mathcal{G}_{q}$, the formula (\ref{eq:simp-betweenness}) corresponds to the $q$-betweenness centrality as introduced in \citep{Serrano2020} for simplicial complexes.
\end{remark}

\subsubsection{Simplicial Reaching Centrality}

The \textit{simplicial local $q$-reaching centrality} of $\sigma \in \mathcal{V}_{q}$ is defined as the simplicial analogue of the local reaching centrality (\ref{eq:local-reaching-centrality}), i.e.
\begin{equation}\label{eq:simp-local-reaching-centrality}
C_{R, q}(\sigma) = \frac{r_{\mathcal{G}_{q}}(\sigma)}{|\mathcal{V}_{q}|-1},
\end{equation}

\noindent where $r_{\mathcal{G}_{q}}(\sigma)$ is the number of vertices in $\mathcal{G}_{q}$ which are reachable from $\sigma$. Define $C_{R, q}^{max} = \max_{\sigma \in \mathcal{V}_{q}} C_{R, q}(\sigma)$. The \textit{simplicial global $q$-reaching centrality} is defined as the simplicial analogue of the global reaching centrality (\ref{eq:global-reaching-centrality}), i.e. 
\begin{equation}\label{eq:simp-global-reaching-centrality}
GRC_{q}(\mathcal{G}_{q}) = \frac{\sum_{\sigma \in \mathcal{V}_{q}} [C_{R, q}^{max} - C_{R, q}(\sigma)]}{|\mathcal{V}_{q}|-1}.
\end{equation}

\smallskip

\begin{example}\label{ex:simp-centralities}
Consider the directed flag complex presented in Figure \ref{fig:example-centrality}. Table \ref{tab:table5} presents the values of the following simplicial centralities (lower variants) for all maximal simplices of this complex, for $q=0,1$: in/out-$q$-degree centrality, (non-normalized) directed simplicial $q$-betweenness centrality, and directed simplicial $q$-harmonic centrality. Note that at the level $q=1$ we only have directed $q$-connectivity between $\sigma_{1}$ and $\sigma_{2}$, $\sigma_{1}$ and $\theta_{2}$, and $\theta_{1}$ and $\theta_{2}$. Also, note that $\theta_{1}$ has the largest $\vec{B}_{0}$ and the largest $C^{+}_{deg_0}$, suggesting that it may be the most central simplex in the complex at the level $q=0$, but at the level $q=1$, $\theta_{2}$ may be the most central, since it has the largest $\vec{B}_{1}$ and the largest $C^{-}_{deg_1}$.

\begin{figure}[h!]
\centering
  \includegraphics[scale=0.7]{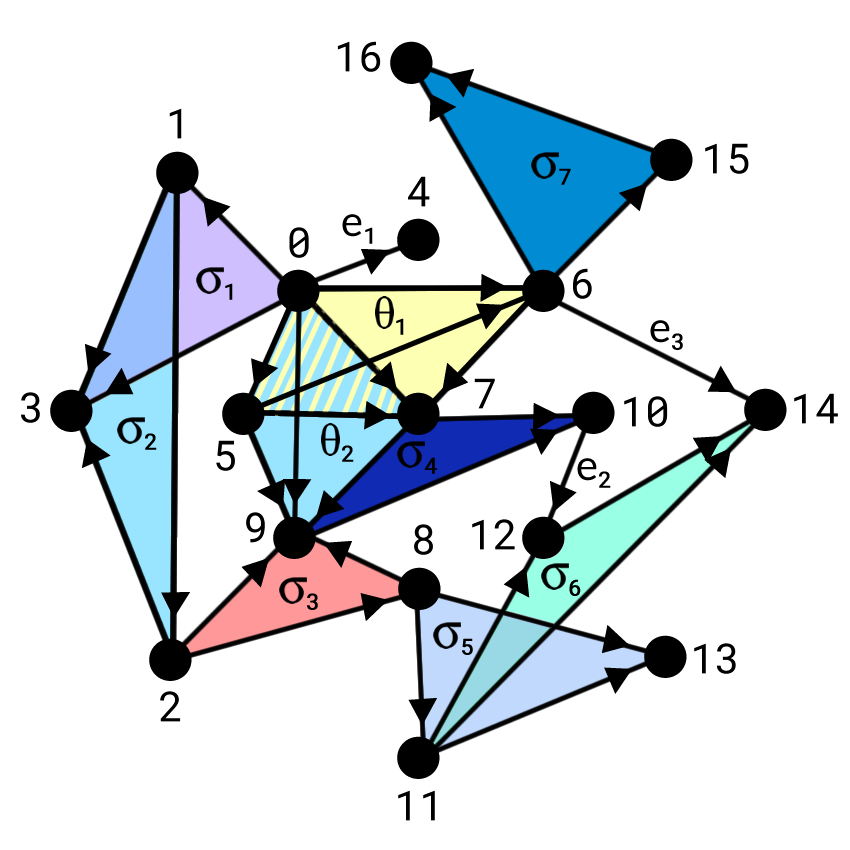}
\caption{A directed flag complex.}
\label{fig:example-centrality}
\end{figure}

\begin{table}[h!]
  \center
    \caption{Directed simplicial centralities for $q=0,1$.}
    \label{tab:table5}
    \begin{tabular}{ c c c c c c c c c }
      \toprule 
\textbf{  } &  ${\bf C^{-}_{deg_{0}} }$ & ${\bf C^{+}_{deg_{0}} }$ & ${\bf C^{-}_{deg_{1}} }$ & ${\bf C^{+}_{deg_{1}} }$ & $\vec{\mathbf{B}}_{0}$ & $\vec{\mathbf{B}}_{1}$ & $\vec{\mathbf{HC}}_{0}$ & $\vec{\mathbf{HC}}_{1}$  \\
      \midrule 
$e_{1}$ & 0.27 & 0.27 & 0.0 & 0.0 & 1.0 & 0.0 & 6.25 & 0.0\\
$e_{2}$ & 0.18 & 0.09 & 0.0 & 0.0 & 1.5 & 0.0 & 5.83 & 0.0\\
$e_{3}$ & 0.27 & 0.27 & 0.0 & 0.0 & 17.34 & 0.0 & 6.42 & 0.0\\
$\sigma_{1}$ & 0.27 & 0.36 & 0.0 & 0.09 & 5.5 & 0.0 & 6.25 & 0.0\\
$\sigma_{2}$ & 0.18 & 0.09 & 0.09 & 0.0 & 1.0 & 0.0 & 5.75 & 1.0\\
$\sigma_{3}$ & 0.36 & 0.36 & 0.0 & 0.0 & 26.17 & 0.0 & 6.99 & 0.0\\
$\sigma_{4}$ & 0.27 & 0.36 & 0.09 & 0.09 & 15.17 & 0.0 & 6.58 & 1.5\\
$\sigma_{5}$ & 0.18 & 0.18 & 0.0 & 0.0 & 9.8 & 0.0 & 5.83 & 0.0\\
$\sigma_{6}$ & 0.27 & 0.27 & 0.0 & 0.0 & 18.67 & 0.0 & 6.34 & 0.0\\
$\sigma_{7}$ & 0.18 & 0.18 & 0.0 & 0.0 & 0.0 & 0.0 & 5.58 & 0.0\\
$\theta_{1}$ & 0.45 & 0.54 & 0.0 & 0.09 & 34.17 & 0.0 & 7.49 & 0.0\\
$\theta_{2}$ & 0.45 & 0.36 & 0.18 & 0.09 & 13.67 & 1.0 & 7.58 & 2.0\\
      \bottomrule 
    \end{tabular}
  
\end{table}

\end{example}

\subsection{Simplicial Segregation Measures}
\label{sec:simp-segregation-measures}

In this part, we extend the measures of segregation as defined for digraphs in Subsection \ref{subsec:segration-measures} to $q$-digraphs. These new simplicial measures can be interpreted as measures that attempt to quantify the tendency of directed cliques to segregate into higher-order clusters or higher-order communities in a directed network.


\subsubsection{Directed Simplicial Clustering Coefficients}

The \textit{average directed simplicial $q$-clustering coefficient} of $\mathcal{G}_{q}$ is defined as the simplicial analogue of the average directed $q$-clustering coefficient (\ref{eq:dir-clustering-coef}), i.e.
\begin{equation}\label{eq:simp-average-cc}
\vec{\bar{C}}_{q}(\mathcal{G}_{q}) = \frac{1}{|\mathcal{V}_{q}|} \sum_{\sigma \in \mathcal{V}_{q}} \frac{\vec{T}_{q}(\sigma)}{\deg_{q}^{tot}(\sigma)(\deg_{q}^{tot}(\sigma) - 1) - 2\deg_{q}^{\pm}(\sigma)},
\end{equation}

\noindent where $\deg_{q}^{tot}(\sigma) = \deg_{q}^{-}(\sigma) + \deg_{q}^{+}(\sigma)$ and $\vec{T}_{q}(\sigma)$ is the number of directed triangles at the level $q$ containing $\sigma$, which can be written in terms of the $q$-adjacency matrix entries in an analogous way to the formula (\ref{eq:dir-number-triangles}):
\begin{equation}\label{eq:dir-simp-number-triangles}
\vec{T}(\sigma)  = \frac{1}{2} \sum_{\tau', \tau \in \mathcal{V}_{q}} (h_{\sigma \tau'} + h_{\tau' \sigma})(h_{\sigma \tau} + h_{\tau \sigma})(h_{\tau' \tau} + h_{\tau \tau'}).
\end{equation}

Moreover, Maletic et al. \citep{Maletic} proposed a simplicial clustering coefficient of a simplex based on its dimension and on the dimensions of the simplices of its neighborhood; we can adapt this coefficient for a simplex $\sigma^{(n)}$ in the underlying $q$-graph of $\mathcal{G}_{q}$ as
\begin{equation}\label{eq:simp-clustering-coef-under}
C_{q}(\sigma^{(n)}) = \sum_{\tau^{(m)} \in \mathrm{st}_{q}(\sigma)} \frac{2^{1+f_{\sigma\tau}} - 1}{2^{n} + 2^{m} - 1},
\end{equation}

\noindent where $f_{\sigma\tau}$ is the dimension of the face shared between $\sigma$ and $\tau$, and $\mathrm{st}_{q}(\sigma) = \mathrm{st}^{-}_{q}(\sigma) \cup \mathrm{st}^{+}_{q}(\sigma) - (\mathrm{st}^{-}_{q}(\sigma) \cap \mathrm{st}^{+}_{q}(\sigma))$ (the directed stars are computed considering the corresponding $\sigma$ in $\mathcal{G}_{q}$). 

Accordingly, in what follows, we propose directed variants of this simplicial coefficient for a directed simplex $\sigma^{(n)}$ in $\mathcal{G}_{q}$. The \textit{simplicial $(\bullet)$-$q$-clustering coefficient}, with $\bullet \in \{ -,+,\pm \}$, is defined by
\begin{equation}\label{eq:in-out-clustering-coef}
\vec{C}_{q}^{\bullet}(\sigma^{(n)}) = \sum_{\tau^{(m)} \in \mathrm{st}^{\bullet}_{q}(\sigma)} \frac{2^{1+f_{\sigma\tau}} - 1}{2^{n} + 2^{m} - 1}.
\end{equation}

Also, the simplicial clustering coefficient (\ref{eq:simp-clustering-coef-under}) can be written in terms of the simplicial $(\bullet)$-$q$-clustering coefficients as
\begin{equation}
C_{q}(\sigma) = \vec{C}_{q}^{-}(\sigma) + \vec{C}_{q}^{+}(\sigma) - \vec{C}_{q}^{\pm}(\sigma).
\end{equation}

\smallskip






\subsubsection{Directed Simplicial Rich-Club Coefficients}

For an integer $k \ge 0$, the \textit{simplicial in-$q$-degree rich-club coefficient} of $\mathcal{G}_{q}$ is defined as the simplicial analogue of the in-degree rich-club coefficient (\ref{eq:in-rich-club-coef}), i.e.
\begin{equation}\label{eq:simp-in-rich-club}
\phi_{q, k}^{-}(\mathcal{G}_{q}) = \frac{{E}_{>k}^{-}(q)}{F_{>k}^{-}(q)(F_{>k}^{-}(q) - 1)},
\end{equation}

\noindent where $F^{-}_{>k}(q)$ is the number of vertices $\sigma \in \mathcal{V}_{q}$ having $\deg_{q}^{-}(\sigma) > k$, and $E^{-}_{>k}(q)$ is the number of $q$-arcs connecting those $F^{-}_{>k}(q)$ vertices.

Analogously, the \textit{simplicial out-$q$-degree rich-club coefficient} of $\mathcal{G}_{q}$ is defined as the simplicial analogue of the out-degree rich-club coefficient (\ref{eq:out-rich-club-coef}), i.e.
\begin{equation}\label{eq:simp-out-rich-club}
\phi_{q, k}^{+}(\mathcal{G}_{q}) = \frac{{E}_{>k}^{+}(q)}{F_{>k}^{+}(q)(F_{>k}^{+}(q) - 1)},
\end{equation}

\noindent where $F^{+}_{>k}(q)$ is the number of vertices $\sigma \in \mathcal{V}_{q}$ having $\deg_{q}^{+}(\sigma) > k$, and $E^{+}_{>k}(q)$ is the number of $q$-arcs connecting those $F^{+}_{>k}(q)$ vertices.

\subsubsection{Directed Simplicial Local Efficiency}

The \textit{directed simplicial local $q$-efficiency} of $\sigma \in \mathcal{V}_{q}$ is defined as the simplicial analogue of the directed version of the local efficiency (\ref{eq:local-efficiency}), i.e.
\begin{equation}\label{eq:simp-local-efficiency}
 \vec{E}^{q}_{loc}(\sigma) = \frac{1}{|\mathcal{V}_{q}|} \sum_{\sigma \in \mathcal{V}_{q}} \vec{E}^{q}_{glob}(\mathcal{G}_{q}(\sigma)),
\end{equation}

\noindent where $\mathcal{G}_{q}(\sigma)$ is the induced subdigraph of $\mathcal{G}_{q}$ formed by elements of the set $\mathrm{st}^{-}_{q}(\sigma) \cup \mathrm{st}^{+}_{q}(\sigma) - \mathrm{st}^{\pm}_{q}(\sigma) \cap \mathrm{st}^{+}_{q}(\sigma))$, excluding $\sigma$. For the weighted case, we simply replace $\vec{E}^{q}_{glob}$ with its weighted version.


\subsection{Simplicial Entropies}
\label{sec:simp-entropies}

In the literature, there are different ways to define entropies for simplicial complexes \citep{Baccini, Dantchev, Maletic2012}. However, here we propose novel simplicial entropies associated with $q$-digraphs, namely: the \textit{simplicial $q$-structural entropy} and the \textit{simplicial in/out-$q$-degree distribution entropies}.

\subsubsection{Simplicial Structural Entropy}

Let us start by introducing a new concept. The \textit{relative $q$-communicability} of a directed simplex $\sigma \in \mathcal{V}_{q}$, denoted by $RCM_{q}(\sigma)$, is defined as the fraction of all directed simplicial $q$-walks in $\mathcal{G}_{q}$ that starts in $\sigma$, i.e.
\begin{equation}\label{eq:relative-q-communicability}
RCM_{q}(\sigma) = \frac{\sum_{\tau \in \mathcal{V}_{q}} CM_{q}(\sigma, \tau)}{\sum_{\tau' \in \mathcal{V}_{q}} \sum_{\tau'' \in \mathcal{V}_{q}} CM_{q}(\tau', \tau'')}.
\end{equation}

\smallskip

The \textit{simplicial $q$-structural entropy} of $\mathcal{G}_{q}$ is defined as the Shannon entropy of the relative $q$-communicabilities:
\begin{equation}\label{eq:simp-struc-entropy}
H^{str}_{q}(\mathcal{G}_{q}) = - \sum_{\sigma \in \mathcal{V}_{q}}  RCM_{q}(\sigma) \log_{2}  RCM_{q}(\sigma).
\end{equation}

\smallskip

This measure can be roughly interpreted as the ``degree of higher-order structural disorder" in the network for each level $q$. 


\subsubsection{Simplicial Degree Distribution Entropy}

Let $\delta^{-}_{q}(k)$ be the number of directed simplices having in-$q$-degree $k$ and $\delta^{+}_{q}(k)$ be the number of directed simplices having out-$q$-degree $k$. The \textit{in-$q$-degree distribution} and the \textit{out-$q$-degree distribution} are defined, respectively, by 

\begin{equation}
p^{-}_{q}(k) = \frac{\delta^{-}_{q}(k)}{|\mathcal{V}_{q}|},
\end{equation}

\begin{equation}
p^{+}_{q}(k) = \frac{\delta^{+}_{q}(k)}{|\mathcal{V}_{q}|}.
\end{equation}

We define the \textit{simplicial in-$q$-degree distribution entropy} of $\mathcal{G}_{q}$ as the simplicial analogue of the in-degree distribution entropy (\ref{eq:in-distribution-entropy}), i.e.
\begin{equation}\label{eq:in-dist-entropy}
H^{-}_{q}(\mathcal{G}_{q}) = - \sum_{k=1}^{n}  p_{q}^{-}(k) \log_{2} p_{q}^{-}(k).
\end{equation}

Similarly, the \textit{simplicial out-$q$-degree distribution entropy} of $\mathcal{G}_{q}$ is defined as the simplicial analogue of the out-degree distribution entropy (\ref{eq:out-distribution-entropy}), i.e.
\begin{equation}\label{eq:out-dist-entropy}
H^{+}_{q}(\mathcal{G}_{q}) = - \sum_{k=1}^{n}  p_{q}^{+}(k) \log_{2} p^{+}_{q}(k).
\end{equation}

Roughly speaking, these entropies can be interpreted as measures of the ``degree of higher-order disorder" (\textit{in relation to} the inner or outer higher-order flux in the case of $H^{-}$ or $H^{+}$, respectively) in the network for each level $q$. Also, similarly to the in/out-degree distributions, $H^{-}_{q}$ and $H^{+}_{q}$ reach their minimum when all directed simplices have the same in/out-$q$-degree, and reach their maximum when $p^{-}_{q}(k)=p^{+}_{q}(k)=1/(|\mathcal{V}_{q}|-1)$, for all $k$.



\subsection{Forman-Ricci Curvature}
\label{sec:discrete-curvatures}



In Riemannian geometry, there are several notions of curvature associated with Riemannian manifolds \citep{Manfredo}. One of these curvatures is the Ricci curvature, which tries to quantify the ``non-flatness" of a Riemannian manifold, or in which ``degree" it deviates from being locally Euclidean \citep{Manfredo, Eidi}. Robin Forman \citep{Forman2003} was the first person to propose a discrete notion of Ricci curvature for cell complexes (known as \textit{Forman-Ricci curvature}). Since then, a myriad of generalized Ricci curvatures for discrete structures have been proposed, such as discrete curvatures for graphs \citep{Saucan, Sreejith2016}, digraphs \citep{Sreejith2017}, hypergraphs \citep{Eidi, Leal}, and simplicial complexes \citep{Yamada2023}. 

Sreejith et al. \citep{Sreejith2016} proposed a version of Forman's discrete version of Ricci curvature for (unweighted and weighted) undirected graphs, and it was later extended to (unweighted and weighted) directed graphs \citep{Sreejith2017}. These generalized curvatures are edge-centric local measures of geometrical characterization that can help us gain insights into the network organization.

In what follows, we present the mathematical formalism of the Forman-Ricci curvature for directed graphs as it was introduced in \citep{Sreejith2017}. Let  $e = (v_{1}, v_{2})$ be an arc of a digraph $G$, and let $\omega(e)$, $\omega(v_{1})$, and $\omega(v_{1})$ denote the weights associated with $e$, $v_{1}$, and $v_{2}$, respectively. The \textit{Forman-Ricci curvature} of the arc $e$ is defined as
\begin{equation}\label{eq:frc-digraph}
  F(e) = \omega(e) \Bigg( \frac{\omega(v_{1})}{ \omega(e) } - \sum_{e_{v_1} \sim e} \frac{\omega(v_{1})}{\sqrt{ \omega(e)\omega(e_{v_1})}} \Bigg) + \omega(e) \Bigg( \frac{\omega(v_{2})}{ \omega(e) } - \sum_{e_{v_2} \sim e}  \frac{\omega(v_{2})}{\sqrt{ \omega(e)\omega(e_{v_2})}} \Bigg),
\end{equation}

\medskip

\noindent where $e_{v_1} \sim e$ represents the arcs whose heads coincide with $v_1$ (i.e., the incoming arcs at node $v_1$) and $e_{v_2} \sim e$ represents the arcs whose tails coincide with $v_2$ (i.e. the outgoing arcs at node $v_2$), excluding the arc $e$.

In view of this, we define the \textit{$q$-Forman-Ricci curvature} for a $q$-arc $E=(\sigma_{1}, \sigma_{2})$ in a $q$-digraph $\mathcal{G}_{q}$ as a straightforward extension of the formula (\ref{eq:frc-digraph}), i.e.
\begin{equation}\label{eq:frc-dir-simp}
  F_{q}(E) = \omega(E) \Bigg( \frac{\omega(\sigma_{1})}{ \omega(E) } - \sum_{E_{\sigma_1} \sim E} \frac{\omega(\sigma_{1})}{\sqrt{ \omega(E)\omega(E_{\sigma_1})}} \Bigg) + \omega(E) \Bigg( \frac{\omega(\sigma_{2})}{ \omega(E) } - \sum_{E_{\sigma_2} \sim E}  \frac{\omega(\sigma_{2})}{\sqrt{ \omega(E)\omega(E_{\sigma_2})}} \Bigg),
\end{equation}

\noindent where $\omega(\sigma_{1})$, $\omega(\sigma_{2})$, and $\omega(\sigma)$ are the weights associated with $\sigma_1$, $\sigma_1$, and $\omega(E)$, respectively, and $E_{\sigma_1} \sim E$ represents the $q$-arcs whose heads coincide with $\sigma_1$ (i.e. the incoming $q$-arcs at node $\sigma_1$) and $E_{\sigma_2} \sim E$ represents the $q$-arcs whose tails coincide with $\sigma_2$ (i.e. the outgoing $q$-arcs at node $\sigma_2$), excluding the $q$-arc $E$.

Now we can construct a straightforward extension of the \textit{in} and \textit{out} Forman-Ricci curvatures of a node in a digraph, as proposed in \citep{Sreejith2017}, to a node in a $q$-digraph as follows. Let $E_{I, \sigma}$ and $E_{O, \sigma}$ denote the set of all incoming $q$-arcs of $\sigma$ (i.e. the $q$-arcs whose heads coincide with $\sigma$) and the set of all outgoing $q$-arcs of $\sigma$ (i.e. the $q$-arcs whose tails coincide with $\sigma$), respectively. The \textit{in-$q$-Forman-Ricci curvature} and the  \textit{out-$q$-Forman-Ricci curvature} of a directed simplex $\sigma$ are defined, respectively, by 
\begin{equation}\label{eq:in-forman-ricci}
F_{q}^{-}(\sigma) = \sum_{E \in E_{I, \sigma}} F_{q}(E),
\end{equation}
  
\begin{equation}\label{eq:out-forman-ricci}
F_{q}^{+}(\sigma) = \sum_{E \in E_{O, \sigma}} F_{q}(E).
\end{equation}
  
In addition, we define the \textit{total $q$-flow} through $\sigma$ as 
\begin{equation}\label{eq:frc-total}
F_{q}(\sigma) = F_{q}^{-}(\sigma) + F_{q}^{+}(\sigma).
\end{equation}

It's important to note that for a weighted $q$-digraph the weights associated with the nodes are given by the product-weight function (\ref{eq:prod-weight}), and the weights associated with the $q$-arcs are given by some non-symmetric node-to-edge function, such as the functions (\ref{eq:node-to-edge1}) and (\ref{eq:node-to-edge2}). On the other hand, for an unweighted $q$-digraph, there are several ways to assign weights to its nodes and $q$-arcs, for instance, we can identify the weight of a node with its in- or out-degree, and use the function (\ref{eq:node-to-edge1}) to transform the node weights into $q$-arc weights.

\smallskip

\begin{example}
Figure \ref{fig:frc-img-1} presents a directed flag complex and its $q$-digraphs, for $q=0,1$. We computed the $q$-Forman-Ricci curvature for the $q$-arcs $E_{12} = (\sigma_{1},\sigma_{2})$ and $E_{23} = (\sigma_{2},\sigma_{3})$, and the in- and out-$q$-Forman-Ricci curvatures for the nodes $\sigma_{1}$, $\sigma_{2}$, and $\sigma_{3}$. The node weights were obtained through the function (\ref{eq:weight-func2}), where we considered the arc weights of the underlying digraph equal to $1$, and the $q$-arc weights were obtained through the node-to-edge function (\ref{eq:node-to-edge1}).

\begin{figure}[h!]
    \centering
\includegraphics[scale=0.98]{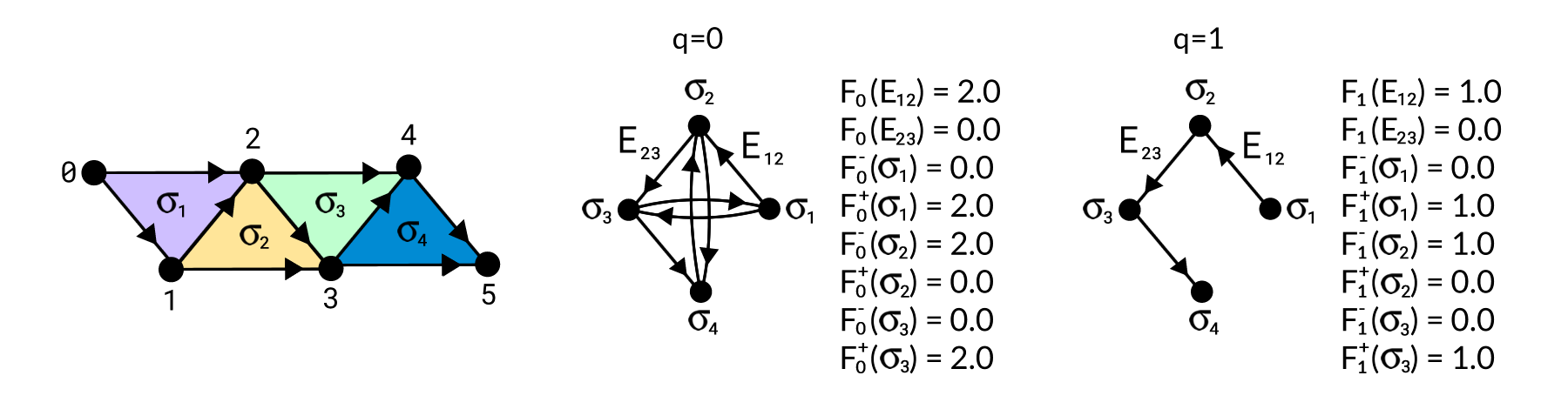}
    \caption{$q$-Forman-Ricci curvatures for $q$-arcs and in- and out-$q$-Forman-Ricci curvatures for nodes in the $q$-digraphs associated with the directed flag complex shown on the left side, for $q=0,1$.}
 \label{fig:frc-img-1}
\end{figure}
\end{example}

Furthermore, from formula (\ref{eq:frc-dir-simp}), we can infer that $q$-arcs connecting nodes with high in-$q$-degrees to nodes with high out-$q$-degrees have highly negative curvature values.

  
 \smallskip



\subsection{Spectrum-Related Simplicial Measures}
\label{sec:simp-spectrum-measures}

In this last part, we extend the spectrum-based measures as defined for digraphs in Subsection \ref{subsec:spectrum-measures} to directed flag complexes and some of them to path complexes. 
These new simplicial measures are related to the spectrum of the $q$-adjacency matrices or the spectrum of the Hodge Laplacian matrices, and they can be interpreted as measures that try to quantify the global ``higher-order structures" of the network through its ``higher-order spectra."






\subsubsection{Simplicial Energy}


Let $\{ \varsigma_{k}^{q} \}_{k}$ be the \textit{singular $q$-values} of $\mathcal{H}_{q}$ (i.e., the square roots of the eigenvalues of the matrix $\mathcal{H}_{q}^{T}\mathcal{H}_{q}$). We define the \textit{simplicial $q$-energy} of $\mathcal{G}_{q}$ as the simplicial analogue of the graph energy (\ref{eq:graph-energy}), i.e. as the trace norm of the matrix $(\mathcal{H}_{q}\mathcal{H}_{q}^{T} )^{1/2}$: 
\begin{equation}\label{eq:simp-graph-energy}
\varepsilon_{q}(\mathcal{G}_{q}) = || ( \mathcal{H}_{q}\mathcal{H}_{q}^{T} )^{1/2}||_{*} = \mathrm{Tr}( (\mathcal{H}_{q}\mathcal{H}_{q}^{T})^{1/2}) = \mathrm{Tr}( (\mathcal{H}_{q}^{T}\mathcal{H}_{q})^{1/2}) = \sum_{k} \varsigma_{k}^{q}.
\end{equation}

A different concept of energy of simplicial complexes related to their Euler characteristics was defined by O. Knill \citep{Knill}.


\subsubsection{Directed Simplicial Katz Centralities}

Let $\mathcal{H}_{q} \in \mathbb{R}^{n_{q} \times n_{q}}$. We define the \textit{simplicial in-$q$-Katz centrality} of $\sigma \in \mathcal{V}_{q}$ as the simplicial analogue of the directed version of the Katz centrality which considers the incoming arcs (\ref{eq:dir-katz-centrality-in}), i.e.
\begin{equation}\label{eq:simp-dir-katz-centrality-in}
K^{-}_{q}(\sigma) = \big[ \langle 1| (I_{n_{q}} - \alpha \mathcal{H}_{q})^{-1}  \big]_{\sigma},
\end{equation}

\noindent and we define the \textit{simplicial out-$q$-Katz centrality} of $\sigma \in \mathcal{V}_{q}$ as the simplicial analogue of the directed version of the Katz centrality that considers the outgoing arcs (\ref{eq:dir-katz-centrality-out}), i.e.
\begin{equation}\label{eq:simp-dir-katz-centrality-out}
K^{+}_{q}(\sigma) = \big[ (I_{n_{q}} - \alpha \mathcal{H}_{q})^{-1} |1 \rangle  \big]_{\sigma},
\end{equation}

\noindent where $|1\rangle = (1,...,1)^{T}$ and $I_{n_{q}}$ is the $n_{q} \times n_{q}$ identity matrix. The subscript $\sigma$ in the brackets represents the position corresponding to $\sigma$ in the vector inside the brackets. 

The attenuation factor must be $\alpha \neq 1/\lambda_{1}^{q}$, where $\lambda_{1}^{q}$ is the largest eigenvalue of $\mathcal{H}_{q}$. Analogously to the graph Katz centrality, typically the attenuation factor is chosen to be $\alpha < 1/\lambda_{1}^{q}$.

\subsubsection{Simplicial Eigenvector Centralities}

The \textit{right simplicial $q$-eigenvector centrality} of $\sigma \in \mathcal{V}_{q}$ is defined as the simplicial analogue of the eigenvector centrality (\ref{eq:eigenvector-centrality}), i.e. as the entry corresponding to $\sigma$ of the right eigenvector associated with the largest eigenvalue ($\lambda_{1}^{q}$) of $\mathcal{H}_{q}$:
\begin{equation}\label{eq:right-simp-eigenvector-centrality}
C_{e, r}^{q}(\sigma) = \big( v^{q}_{1} \big)_{\sigma} = \Bigg( \frac{1}{\lambda_{1}^{q}} \mathcal{H}_{q} v^{q}_{1} \Bigg)_{\sigma}.
\end{equation}

The \textit{left simplicial $q$-eigenvector centrality} of $\sigma$ is defined in an analogous way, but considering the left eigenvector, i.e.
\begin{equation}\label{eq:left-simp-eigenvector-centrality}
C_{e, l}^{q}(\sigma) = \big( v^{q}_{1} \big)_{\sigma} = \Bigg( \frac{1}{\lambda_{1}^{q}} \mathcal{H}^{T}_{q} v^{q}_{1} \Bigg)_{\sigma}.
\end{equation}

Similarly as observed for the graph eigenvector centrality in Subsection \ref{subsec:spectrum-measures}, the Perron-Frobenius theorem (Theorem \ref{theo:perron-frobenius}) guarantees that the right and left eigenvectors associated with $\lambda_{1}^{q}$ are non-negative, thus the right and left simplicial $q$-eigenvalue centralities of every $\sigma$ are non-negative.


\subsubsection{Simplicial Spectral Entropy}


Let $\mathcal{X}$ be a path complex or a directed flag complex associated with a digraph without double edges. Let $\{\mu_{i}^{n}\}_{i}$ be the spectrum of the Hodge $n$-Laplacian matrix of $\mathcal{X}$, $[\mathcal{L}_{n}]$, with multiplicities, and let
\begin{equation}\label{eq:n-spectral-prob}
p(\mu_{i}^{n}) = \frac{\mu_{i}^{n}}{\sum_{i} \mu_{i}^{n}}
\end{equation}

\noindent be the \textit{$n$-eigenvalue probabilities}, i.e. the contribution of $\mu_{i}^{n}$ in the Hodge $n$-Laplacian spectrum (Proposition \ref{prop:hodge-laplacian-eigenvalues} guarantees that $\mu_{i}^{n} \ge 0$, $\forall i$). Assuming the conventions $0/0=0$ and $0 \log_{2} 0 = 0$, we define the \textit{simplicial $n$-spectral entropy} of $\mathcal{X}$ as the Shannon entropy of the $n$-eigenvalue probabilities, i.e.
\begin{equation}\label{eq:simp-spectral-entropy}
S_{n}(\mathcal{X}) = - \sum_{i} p(\mu_{i}^{n}) \log_{2} p(\mu_{i}^{n}).
\end{equation}

\smallskip

Alternative definitions of spectral entropies associated with simplicial complexes were proposed by Maletić and Rajković \citep{Maletic2012} and Baccini et al. \citep{Baccini}. 

\smallskip




\section{Simplicial Similarity Comparison Methods}
\label{sec:simp-distances-kernels}

In this section, we introduce several similarity comparison methods, such as \textit{topological structure distances} and \textit{simplicial kernels}, for digraph-based complexes based on their algebraic/topological/structural properties, such as the number of weakly/strongly $q$-connected components, the number of directed simplices of certain dimensions, Betti numbers, etc. Also, we propose a \textit{simplicial spectral distance} to compare complexes via their Hodge Laplacian spectra.

Throughout this part, all directed flag complexes are considered to be associated with simple digraphs \textit{without double edges}.

\bigskip
\subsection{Topological Structure Vectors and Structure Distances}

In Subsection \ref{sec:dir-q-analysis}, we introduced structure vectors associated with directed flag complexes, each capturing a specific structural property of the complex. Here we introduce additional structure vectors that take into account other topological features of these complexes, for instance, the Betti numbers $\beta_{n} = \dim H_{n}$ (which are associated with the topology of the network) and the lengths of the bars in the persistence barcodes, and we present specific structure vectors for path complexes; subsequently, we propose a general formula for similarity comparison between two digraph-based complexes based on these novel structure vectors.

\smallskip

\begin{definition}\label{def:topological-structure-vectors-1}
Given a directed flag complex $\mathcal{X}$, with $\dim \mathcal{X} = N$, we define five \textit{topological structure vectors} associated with it, namely:

\begin{enumerate}
\item The \textit{$1$st topological structure vector} is defined as $\mathrm{Str}_{1}(\mathcal{X}) = (s^{1}_{0},..., s^{1}_{N})$, where $s^{1}_{i}$ is the number of directed $i$-simplices contained in $\mathcal{X}$, $i=0,...,N$, i.e. $\mathrm{Str}_{1}(\mathcal{X})$ is equal to the second structure vector as introduced in Definition \ref{def:dir-structure-vectors}.

\item The \textit{$2$nd topological structure vector} is defined as $\mathrm{Str}_{2}(\mathcal{X}) = (s^{2}_{0},..., s^{2}_{N})$, where $s^{2}_{q}$ is the number of weakly $q$-connected components of $\mathcal{X}$, $q=0,...,N$, i.e. $\mathrm{Str}_{2}(\mathcal{X})$ is equal to the first weak structure vector as introduced in Definition \ref{def:dir-structure-vectors}.

\item The \textit{$3$rd topological structure vector} is defined as $\mathrm{Str}_{3}(\mathcal{X}) = (s^{3}_{0},..., s^{3}_{N})$, where $s^{3}_{q}$ is the number of strongly $q$-connected components of $\mathcal{X}$, $q=0,...,N$, i.e. $\mathrm{Str}_{3}(\mathcal{X})$ is equal to the first strong structure vector as introduced in Definition \ref{def:dir-structure-vectors}.
   
\item The \textit{$4$th topological structure vector} is defined as $\mathrm{Str}_{4}(\mathcal{X}) = (s^{4}_{0}, ..., s^{4}_{max})$, where $s^{4}_{i} = \beta_{i}$ is the $i$-th Betti number of $\mathcal{X}$, $i=0,..., max$.

\item The \textit{$5$th topological structure vector} is defined as $\mathrm{Str}_{5}(\mathcal{X}) = (s^{5}_{0}, ..., s^{5}_{max})$, where $s^{5}_{i}$ is the average bar length in the $i$-th persistence barcode of $\mathcal{X}$, $i=0,...,max$.

\end{enumerate}
\end{definition}

\smallskip

We could consider more structure vectors (such as the third weak/strong structure vector introduced in Definition \ref{def:dir-structure-vectors}), but here we will only deal with these five vectors.

In the case where $\mathcal{X}$ is path complex, since we did not define any kind of ``higher-order connectivity" between its elementary $p$-paths, we do not have an equivalent for path complexes of the $2$nd and $3$rd topological structure vectors as defined above. Accordingly, we define different topological structure vectors for path complexes as follows.

\smallskip

\begin{definition}\label{def:topological-structure-vectors-2}
Let $\mathcal{P}$ be a path complex and let's denote by $N$ the length of its largest elementary $p$-path. We define five \textit{topological structure vectors} associated with $\mathcal{P}$, namely:

\begin{enumerate}

\item The \textit{$1$st topological structure vector} is defined as $\mathrm{Str}_{1}(\mathcal{P}) = (s^{1}_{0},..., s^{1}_{N})$, where $s^{1}_{i}$ is the number of elementary $i$-paths in $\mathcal{P}$, $i=0,...,N$.

\item The \textit{$2$nd topological structure vector} is defined as $\mathrm{Str}_{2}(\mathcal{P}) = (s^{2}_{0},..., s^{2}_{N})$, where $s^{2}_{i}$ is the number of $(i, 1/i)$-DQCs associated with the elementary $i$-paths of $\mathcal{P}$, $i=0,...,N$.

\item The \textit{$3$rd topological structure vector} is defined as $\mathrm{Str}_{3}(\mathcal{P}) = (s^{3}_{0},..., s^{3}_{N})$, where $s^{3}_{i} = \dim\Omega_{i}$, $i=0,...,N$.
   
\item The \textit{$4$th topological structure vector} is defined as $\mathrm{Str}_{4}(\mathcal{P}) = (s^{4}_{0},..., s^{4}_{max})$, where $s_{i}^{4} = \beta_{i}$ is the $i$-th Betti number, i.e. the dimension of the $i$-th path homology of $\mathcal{P}$, $i=0,...,max$.

\item The \textit{$5$th topological structure vector} is defined as $\mathrm{Str}_{5}(\mathcal{P}) = (s^{5}_{0},..., s^{5}_{max})$, where $s^{5}_{i}$ is the average bar length in the $i$-th persistence barcode of $\mathcal{P}$, $i=0,...,max$.
    
\end{enumerate}
\end{definition}

\smallskip

Due to the computational cost of finding $\partial$-invariant $p$-paths, in the $2$nd topological structure vector only the elementary $p$-paths of $\mathcal{P}$ were considered.

In addition, note that if $G$ is a digraph without double edges and $\mathcal{X}$ and $\mathcal{P}$ are its directed flag complex and its path complex, respectively, the $k$-th entry of the vector $(\mathrm{Str}_{2}(\mathcal{P}) - \mathrm{Str}_{1}(\mathcal{X}))$ is equal to the number of $\partial$-invariant elementary $k$-paths that do not belong to any directed $k$-clique.

\smallskip

\begin{remark}
For a weighted directed flag complex, if one wishes to take the weights into account, it is enough to compute the premetric Dowker complex (see Definition \ref{def:dowker-complex}).
\end{remark}

\smallskip

We now propose a general formula for similarity comparison between two directed flag complexes, or between two path complexes, based on the topological structure vectors defined above.

\smallskip

\begin{definition}
Given two path complexes or two directed flag complexes, $\mathcal{X}_{1}$ and $\mathcal{X}_{2}$, let $\mathrm{Str}_{n}(\mathcal{X}_{1})=(s^{1,n}_{0},...,s^{1,n}_{N^{n}_{1}})$ and $\mathrm{Str}_{n}(\mathcal{X}_{2})=(s^{2,n}_{0},...,s^{2,n}_{N^{n}_{2}})$ be their respective $n$-th topological structure vectors. Without loss of generality, suppose $N_{1}^{n} \ge N_{2}^{n}$. Then consider the new vector $\mathrm{Str}^{\ast}_{n}(\mathcal{X}_{2})=(s^{2,n}_{0},...,s^{2,n}_{N^{n}_{1}})$, such that $s^{2,n}_{k}=0$ for all $N^{n}_{2} < k \le N^{n}_{1}$. Let $|| \cdot ||_{2}$ denote the Euclidean norm (or $2$-norm) in $\mathbb{R}^{N^{n}_{1}}$. We define the (normalized) \textit{$n$-th topological structure distance} between $\mathcal{X}_{1}$ and $\mathcal{X}_{2}$ by

\begin{equation}\label{eq:topological-distance}
\widehat{T}^{n}_{tsd}(\mathcal{X}_{1}, \mathcal{X}_{2}) = \begin{cases}
T^{n}_{tsd}(\mathcal{X}_{1}, \mathcal{X}_{2}), \mbox{ if } \mathcal{X}_{1} \neq \emptyset \mbox{ or } \mathcal{X}_{2} \neq \emptyset,\\
0, \mbox{ if } \mathcal{X}_{1} = \mathcal{X}_{2} = \emptyset,
\end{cases}
\end{equation}

\noindent where

\begin{equation}\label{eq:topological-distance2}
T^{n}_{tsd}(\mathcal{X}_{1}, \mathcal{X}_{2}) = \frac{||\mathrm{Str}_{n}(\mathcal{X}_{1}) - \mathrm{Str}^{\ast}_{n}(\mathcal{X}_{2})||_{2}}{|| \mathrm{Str}_{n}(\mathcal{X}_{1}) ||_{2} + || \mathrm{Str}^{\ast}_{n}(\mathcal{X}_{2}) ||_{2}}.
\end{equation} 

\smallskip

The vectors $\mathrm{Str}_{n}$ correspond to those defined in Definition \ref{def:topological-structure-vectors-1} when $\mathcal{X}_{1}$ and $\mathcal{X}_{2}$ are directed flag complexes, and correspond to those defined in Definition \ref{def:topological-structure-vectors-2} when $\mathcal{X}_{1}$ and $\mathcal{X}_{2}$ are path complexes.
\end{definition}

\smallskip

It can be easily verified that $0 \le \widehat{T}_{tsd}^{n} \le 1$. Indeed, since all entries of the vectors $\mathrm{Str}_{n}$ are non-negative, by the triangular inequality of the Euclidean norm, we have
$||\mathrm{Str}_{n}(\mathcal{X}_{1}) - \mathrm{Str}^{\ast}_{n}(\mathcal{X}_{2}) ||_{2} \le ||\mathrm{Str}_{n}(\mathcal{X}_{1})||_{2} + ||\mathrm{Str}^{\ast}_{n}(\mathcal{X}_{2})||_{2}$.



\subsection{Simplicial Kernels}


As we already discussed in Section \ref{sec:graph-similarity}, graph kernels are distance-based algorithms that produce a similarity score as the output of the comparison between two graphs. Martino et al. \citep{Martino} proposed four kernels for simplicial complexes, namely: \textit{histogram cosine kernel}, \textit{weighted Jaccard kernel}, \textit{edit kernel}, and \textit{stratified edit kernel}. The first two kernels are based on the count of simplices belonging simultaneously to both simplicial complexes being compared, and the last two kernels are based on the counting of edit operations. In what follows, we propose adaptations of these four simplicial kernels to digraph-based complexes. 

\smallskip

\begin{definition}
Given two path complexes or two directed flag complexes, $\mathcal{X}_{1}$ and $\mathcal{X}_{2}$, let $\mathrm{Str}_{1}(\mathcal{X}_{1})=(s^{1,1}_{0},...,s^{1,1}_{N^{1}_{1}})$ and $\mathrm{Str}_{1}(\mathcal{X}_{2})=(s^{2,1}_{0},...,s^{2,1}_{N^{1}_{2}})$ be their respective $1$st topological structure vectors. Without loss of generality, suppose $N_{1}^{1} \ge N_{2}^{1}$. Then consider the new vector $\mathrm{Str}^{\ast}_{1}(\mathcal{X}_{2})=(s^{2,1}_{0},...,s^{2,1}_{N^{1}_{1}})$, such that $s^{2,1}_{k}=0$ for all $N^{1}_{2} < k \le N^{1}_{1}$. Let $\langle \cdot, \cdot \rangle$ denote the Euclidean inner product in $\mathbb{R}^{N^{1}_{1}}$. The (normalized) \textit{histogram cosine kernel} (HCK) is defined by
\begin{equation}\label{eq:HCK}
K_{HC}(\mathcal{X}_{1}, \mathcal{X}_{2}) = \frac{\langle \mathrm{Str}_{1}(\mathcal{X}_{1}), \mathrm{Str}_{1}^{\ast}(\mathcal{X}_{2}) \rangle}{\sqrt{\langle \mathrm{Str}_{1}(\mathcal{X}_{1}), \mathrm{Str}_{1}(\mathcal{X}_{1}) \rangle} \sqrt{\langle \mathrm{Str}_{1}^{\ast}(\mathcal{X}_{2}), \mathrm{Str}_{1}^{\ast}(\mathcal{X}_{2}) \rangle}}.
\end{equation}

The $1$st topological structure vectors correspond to those defined in Definition \ref{def:topological-structure-vectors-1} when $\mathcal{X}_{1}$ and $\mathcal{X}_{2}$ are directed flag complexes, and correspond to those defined in Definition \ref{def:topological-structure-vectors-2} when $\mathcal{X}_{1}$ and $\mathcal{X}_{2}$ are path complexes.
\end{definition}

\smallskip

\begin{definition}
Given two path complexes or two directed flag complexes, $\mathcal{X}_{1}$ and  $\mathcal{X}_{2}$, the \textit{Jaccard kernel} is defined as
\begin{equation}\label{eq:jaccard-kernel}
K_{J}(\mathcal{X}_{1}, \mathcal{X}_{2}) = 
\begin{cases}
1 - \frac{|\mathcal{X}_{1} \cap \mathcal{X}_{2}|}{|\mathcal{X}_{1} \cup \mathcal{X}_{2}|}, \mbox{ if } \mathcal{X}_{1} \neq \emptyset \mbox{ or } \mathcal{X}_{2} \neq \emptyset,\\
0, \mbox{ if } \mathcal{X}_{1} = \mathcal{X}_{2} = \emptyset,
\end{cases}
\end{equation}

\noindent where $|\mathcal{X}_{1} \cap \mathcal{X}_{2}|$ represents the cardinality of the intersection and $|\mathcal{X}_{1} \cup \mathcal{X}_{2}|$ represents the cardinality of the union.
\end{definition}

\smallskip

The Jaccard kernel is a normalized similarity distance. In fact, since $|\mathcal{X}_{1} \cap \mathcal{X}_{2}| \le |\mathcal{X}_{1} \cup \mathcal{X}_{2}|$, we have $0 \le K_{J} \le 1$, and $K_{J} = 0$ when $\mathcal{X}_{1} = \mathcal{X}_{2}$ and $K_{J} = 1$ when $\mathcal{X}_{1} \cap \mathcal{X}_{2} = \emptyset$.

\smallskip

\begin{definition}
Given two path complexes or two directed flag complexes, $\mathcal{X}_{1}$ and  $\mathcal{X}_{2}$, let $e(\mathcal{X}_{1}, \mathcal{X}_{2})$ be an edit distance between them (i.e. a distance based on the number of changes necessary to convert one into the other - see Section \ref{sec:graph-similarity}). We define the (normalized) \textit{edit kernel} as
\begin{equation}\label{eq:edit-kernel1}
K_{E}(\mathcal{X}_{1}, \mathcal{X}_{2}) = 
\begin{cases}
1 - \bar{e}(\mathcal{X}_{1}, \mathcal{X}_{2}), \mbox{ if } \mathcal{X}_{1} \neq \emptyset \mbox{ or } \mathcal{X}_{2} \neq \emptyset,\\
0, \mbox{ if } \mathcal{X}_{1} = \mathcal{X}_{2} = \emptyset,
\end{cases}
\end{equation}

\noindent where
\begin{equation}\label{eq:edit-kernel2}
\bar{e}(\mathcal{X}_{1}, \mathcal{X}_{2}) = \frac{ 2e(\mathcal{X}_{1}, \mathcal{X}_{2}) }{ |\mathcal{X}_{1}| + |\mathcal{X}_{2}| + e(\mathcal{X}_{1}, \mathcal{X}_{2}) }.
\end{equation}
\end{definition}

\smallskip

\begin{definition}
Given two directed flag complexes (or two path complexes), $\mathcal{X}_{1}$ and  $\mathcal{X}_{2}$, let $\mathcal{D}$ denote the set of all different dimensions (or lengths) of the simplices (or elementary paths) present in these two complexes. The \textit{stratified edit kernel} (SEK) is defined as
\begin{equation}\label{eq:stratified-edit-kernel}
K_{SE}(\mathcal{X}_{1}, \mathcal{X}_{2}) = \frac{1}{|\mathcal{D}|}\sum_{k \in \mathcal{D}}  K_{E}(\mathcal{X}_{1}^{k}, \mathcal{X}_{2}^{k}),
\end{equation}

\noindent where $\mathcal{X}_{1}^{k} \subseteq \mathcal{X}_{1}$ and $\mathcal{X}_{2}^{k} \subseteq \mathcal{X}_{2}$ are the subsets of all directed $k$-simplices (or elementary $k$-paths) in the respective complexes, and $K_{E}$ is the edit kernel (\ref{eq:edit-kernel1}).
\end{definition}

\smallskip

\subsection{Simplicial Spectral Distance}

The distances between two digraph-based complexes defined in the previous two sections are mainly based on the structural properties of these complexes. Now we present a novel distance based on their Hodge Laplacian spectra, i.e. a \textit{higher-order spectral distance}, which is also connected to their topological structures via their spectra (see Subsection \ref{sec:comb-hodge-laplacian}).

Let $\mathcal{X}_{1}$ and $\mathcal{X}_{2}$ be two directed flag complexes (or two path complexes) with Hodge $n$-Laplacian matrices $[\mathcal{L}_{n}^{1}]$ and $[\mathcal{L}_{n}^{2}]$, respectively. Let $\mu^{n}_{0}(\mathcal{X}_{1}) \le... \le \mu^{n}_{N_{1}}(\mathcal{X}_{1})$ and $\mu^{n}_{0}(\mathcal{X}_{2}) \le... \le \mu^{n}_{N_{2}}(\mathcal{X}_{2})$ be the spectra of $[\mathcal{L}_{n}^{1}]$ and $[\mathcal{L}_{n}^{2}]$, respectively. Without loss of generality, suppose $N_{1} \le N_{2}$. The \textit{simplicial $n$-spectral distance} between $\mathcal{X}_{1}$ and $\mathcal{X}_{1}$ is defined as

\begin{equation}\label{eq:simp-spectral-distance}
\widehat{D}_{\mathcal{L}_{n}}(\mathcal{X}_{1}, \mathcal{X}_{2}) = \begin{cases}
D_{\mathcal{L}_{n}}(\mathcal{X}_{1}, \mathcal{X}_{2}), \mbox{ if } [\mathcal{L}_{n}^{1}] \neq 0 \mbox{ or } [\mathcal{L}_{n}^{2}] \neq 0,\\
0, \mbox{ if } [\mathcal{L}_{n}^{1}] = [\mathcal{L}_{n}^{2}] = 0 ,
\end{cases}
\end{equation}
\noindent where
\begin{equation}
D_{\mathcal{L}_{n}}(\mathcal{X}_{1}, \mathcal{X}_{2}) = \frac{1}{S_{n}(\mathcal{X}_{1}, \mathcal{X}_{2})} \Bigg( \sum^{N_{1}}_{k=0} | \mu^{n}_{k}(\mathcal{X}_{1}) - \mu^{n}_{k}(\mathcal{X}_{2})| + \sum^{N_{2}}_{k=N_{1}+1} |\mu^{n}_{k}(\mathcal{X}_{2})| \Bigg),
\end{equation}
\noindent with the normalization term
\begin{equation}
S_{n}(\mathcal{X}_{1}, \mathcal{X}_{2}) = \sum^{N_{1}}_{k=0} |\mu^{n}_{k}(\mathcal{X}_{1})| + \sum^{N_{2}}_{k=0} |\mu^{n}_{k}(\mathcal{X}_{2})|.
\end{equation}

\smallskip

It can be easily verified that $0 \le \widehat{D}_{n} \le 1$. Indeed, by Proposition \ref{prop:hodge-laplacian-eigenvalues}, for all $n$, all eigenvalues of the Hodge $n$-Laplacian matrices are non-negative, thus  $| \mu^{n}_{k}(\mathcal{X}_{1}) - \mu^{n}_{k}(\mathcal{X}_{2})| \le | \mu^{n}_{k}(\mathcal{X}_{1})| + |\mu^{n}_{k}(\mathcal{X}_{2})|$, for all $k$.

Moreover, notice that the distance (\ref{eq:simp-spectral-distance}) compare solely the spectra of the Hodge $n$-Laplacians of $\mathcal{X}_{1}$ and $\mathcal{X}_{2}$ for a same order $n$. 

\section{Examples with Random Digraphs}

In this last part, we present some examples of applications of the simplicial characterization measures and simplicial similarity comparison distances to random digraphs built through the four random digraph models introduced in Section \ref{sec:random-graphs}.

\smallskip

\begin{example}
Figure \ref{fig:ex-random-digraphs} presents four sparse (density $\le 0.5$) random digraphs of same order $n=20$ obtained through four different random models, namely: $k$-regular (KR) model with parameter $k=6$; Erdős-Rényi $G(n,p)$ (ER) model with parameter $p=0.2$;  Watts-Strogatz (WS) model with parameters $k = 10$ and $p = 0.2$; and Barabási-Albert (BA) model with parameters $\alpha =0.41, \beta=0.54, \gamma=0.05, \delta_{in}=0.2,$ and $\delta_{out} = 0.0$. Also, Figure \ref{fig:ex-random-digraphs} depicts the respective $q$-digraphs associated with these four digraphs, for $q=0,1,2$. As a convention, let's refer to the original digraphs as $(-1)$-digraphs, i.e. we will use $q=-1$ to designate the original networks.

We applied eighteen simplicial characterization measures (lower variants) (described in Table \ref{tab:measures-analysis})\footnote{We used the parameter $k=6$ for the simplicial in- and out-$q$-degree rich-club coefficients.} to these four random digraphs and their respective $q$-digraphs, $q=0,1,2$, whose results are presented in Table \ref{tab:ex1-meas}, and we performed a pairwise comparison between the original digraphs using nine different simplicial similarity comparison distances (described in Table \ref{tab:distances-analysis})\footnote{For the bottleneck, Wasserstein, Betti, 4th, and 5th topological structure distances we only considered the 0-th Betti numbers. For the simplicial $n$-spectral distance, we considered $n=1$.}, whose results are presented in Table \ref{tab:ex1-dist}. To compute the measures, since some of them depend on the order (N) of the network (i.e., are \textit{N-dependent} \citep{vanWijk}), we used a fixed N equal to the order of the largest network among all levels $q$.

The idea of this example is to present a simple numerical analysis of the results just to get a sense of how the simplicial measures behave at each level $q$ for digraphs with different topologies, and not to perform a rigorous statistical analysis. With that said, inspecting the results, we obtained some interesting insights:

\begin{itemize}
\item Distance-based simplicial measures: The BA network presents the lowest values for the global $q$-efficiency (measure 2) and for the average shortest $q$-walk length (measure 1) in relation to other networks at all levels $q$, which might suggest that information is propagated less efficiently in this network than in the other networks, including at higher-order levels of network topological organization.

\item Simplicial centrality measures: The in- and out-$q$-degree centralities (measures 4 and 5, respectively), as expected, are greater at the level $q=0$ for all networks and become smaller as we increase the level $q$ since the $q$-digraphs become more disconnected (sparse). However, the global $q$-reaching centrality (measure 8) increases for some networks, which might suggest that higher-order structures have a greater influence on the information flow.

\item Simplicial segregation measures: The ER and BA networks have high in-$q$-rich-club coefficients (measure 10) in the original networks, which might suggest the presence of densely connected nodes with high in-degrees; however, this property is not preserved at other levels $q$. Also, the out-$q$-rich-club coefficients (measure 11) are equal to zero for all the original networks and at level  $q=2$, but they are greater than zero at levels $q=0$ and $q=1$. 


\item Simplicial similarity comparison distances: For most of the distances, the pairs KR-BA, ER-BA, and WS-BA produced values closest to 1, which might suggest that these pairs have the most different topologies compared to each other. However, most distances produced values close to 0 for the pairs KR-WS and ER-WS, which might suggest that their topologies are similar.
\end{itemize}
\end{example}

\begin{figure}[h!]
    \centering
\includegraphics[scale=0.67]{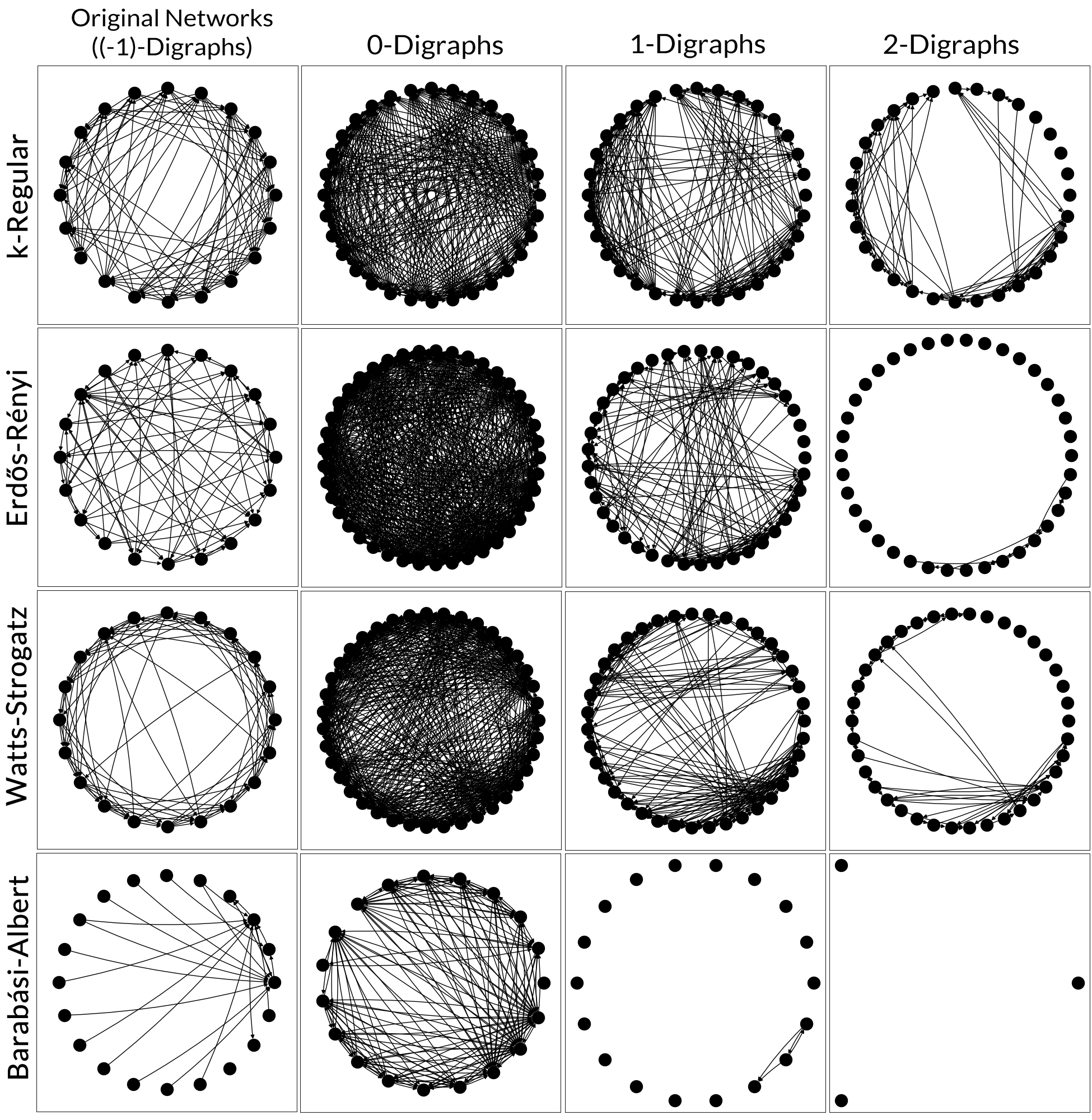}
 \caption{Random digraphs and their respective $q$-digraphs  ($q=0,1,2$).}
 \label{fig:ex-random-digraphs}
\end{figure}

\begin{table}[h!]
\center
\tiny
    \caption{Results of the simplicial characterization measures for the random digraph models at each level $q=-1, 0,1,2$. See Table \ref{tab:measures-analysis} for the measure ids.}
    \label{tab:ex1-meas}
    \begin{tabular}{c c c c c c c c c c c c c c c c c c c c}
     \toprule 
      \multicolumn{1}{c}{} & \multicolumn{4}{c}{\textbf{$q=-1$}} & \multicolumn{4}{c}{\textbf{$q=0$}} & \multicolumn{4}{c}{\textbf{$q=1$}} & \multicolumn{4}{c}{\textbf{$q=2$}} \\
\cmidrule(rl){2-5} \cmidrule(rl){6-9} \cmidrule(rl){10-13} \cmidrule(rl){14-17} \textbf{Measure} & \textbf{KR} & \textbf{ER} & \textbf{WS} & \textbf{BA}  & \textbf{KR} & \textbf{ER} & \textbf{WS} & \textbf{BA} & \textbf{KR} & \textbf{ER} & \textbf{WS} & \textbf{BA} & \textbf{KR} & \textbf{ER} & \textbf{WS} & \textbf{BA}\\
      \midrule 
1 & $2.03$ & $1.66$ & $2.22$ & $0.49$ & $1.57$ & $1.55$ & $1.58$ & $1.54$ & $2.35$ & $3.12$ & $2.6$ & $0.67$ & $1.94$ & $1.0$ & $6.06$  & $0.0$\\
2 & $0.09$ & $0.07$ & $0.08$ & $0.02$ & $0.29$ & $0.49$ & $0.43$ & $0.09$ & $0.21$ & $0.23$ & $0.27$ & $0.0$ & $0.08$ & $0.01$ & $0.08$  & $0.0$\\
3 & $0.0$ & $0.0$ & $0.01$ & $0.01$ & $0.2$ & $0.26$ & $0.18$ & $0.24$ & $0.16$ & $0.31$ & $0.13$ & $0.21$ & $0.1$ & $0.19$ & $0.12$  & $0.0$\\
$4^{*}$ & $0.12$ & $0.16$ & $0.14$ & $0.18$ & $0.37$ & $0.57$ & $0.45$ & $0.33$ & $0.24$ & $0.2$ & $0.22$ & $0.04$ & $0.14$ & $0.06$ & $0.08$  & $0.0$\\
$5^{*}$  & $0.12$ & $0.18$ & $0.1$ & $0.06$ & $0.37$ & $0.55$ & $0.43$ & $0.33$ & $0.24$ & $0.16$ & $0.18$ & $0.04$ & $0.1$ & $0.04$ & $0.06$  & $0.0$\\
$6^{*}$  & $12.3$ & $12.5$ & $12.6$ & $12.3$ & $24.5$ & $34.0$ & $30.0$ & $16.5$ & $19.1$ & $21.1$ & $22.5$ & $2.0$ & $9.83$ & $4.0$ & $9.02$  & $0.0$\\
$7^{*}$  & $53.3$ & $66.9$ & $45.7$ & $36.0$ & $50.6$ & $54.6$ & $44.7$ & $48.9$ & $251$ & $301$ & $157$ & $0.0$ & $169$ & $6.0$ & $348$  & $0.0$\\
8 & $0.24$ & $0.27$ & $0.24$ & $0.07$ & $0.23$ & $0.15$ & $0.17$ & $0.23$ & $0.24$ & $0.2$ & $0.18$ & $0.04$ & $0.36$ & $0.07$ & $0.27$  & $0.0$\\
9 & $0.12$ & $0.09$ & $0.1$ & $0.03$ & $0.36$ & $0.5$ & $0.4$ & $0.29$ & $0.34$ & $0.23$ & $0.25$ & $0.04$ & $0.17$ & $0.0$ & $0.06$  & $0.0$\\
10 & $0.0$ & $12.83$ & $0.0$ & $11.0$ & $0.44$ & $0.48$ & $0.43$ & $0.58$ & $0.67$ & $2.84$ & $1.06$ & $0.0$ & $0.0$ & $0.0$ & $0.0$  & $0.0$\\
11 & $0.0$ & $0.0$ & $0.0$ & $0.0$ & $0.44$ & $0.45$ & $0.43$ & $0.75$ & $0.55$ & $13.25$ & $1.68$ & $0.0$ & $0.0$ & $0.0$ & $0.0$  & $0.0$\\
12 & $4.28$ & $4.3$ & $4.45$ & $5.11$ & $4.95$ & $5.2$ & $5.26$ & $4.07$ & $4.75$ & $4.9$ & $5.11$ & $5.49$ & $4.52$ & $5.46$ & $5.11$  & $0.0$\\
13 & $1.51$ & $2.01$ & $1.8$ & $0.66$ & $2.9$ & $3.77$ & $3.17$ & $1.82$ & $2.69$ & $2.94$ & $3.2$ & $0.38$ & $2.03$ & $0.78$ & $1.96$  & $0.0$\\
14 & $1.21$ & $1.73$ & $1.46$ & $1.26$ & $2.78$ & $3.97$ & $3.09$ & $1.54$ & $2.76$ & $2.76$ & $2.81$ & $0.24$ & $2.15$ & $0.82$ & $1.87$  & $0.0$\\
15 & -$360$ & -$408$ & -$258$ & -$381$ & -$126$ & -$782$ & -$759$ & -$60$ & -$103$ & -$28$ & -$74$ & $0.0$ & -$69$ & $0.0$ & -$15$  & $0.0$\\
16 & -$360$ & -$785$ & -$298$ & -$471$ & -$139$ & -$790$ & -$750$ & -$63$ & -$134$ & -$25$ & -$87$ & $0.0$ & -$132$ & $0.0$ & -$22$  & $0.0$\\
17 & $33.4$ & $30.2$ & $33.0$ & $9.23$ & $78.7$ & $106$ & $104$ & $33.3$ & $61.6$ & $63.1$ & $71.3$ & $2.73$ & $37.2$ & $8.49$ & $34.6$  & $0.0$\\
$18^{*}$  & $2.41$ & $2.25$ & $2.18$ & $1.98$ & $1.65$ & $1.67$ & $1.66$ & $29.3$ & $8.76$ & $3.65$ & $4.59$ & $1.22$ & $2.15$ & $1.33$ & $1.51$  & $0.0$\\
      \bottomrule 
    \end{tabular}
{\footnotesize $^{*}$  The results correspond to the maximum values obtained among all nodes. }
\end{table}

\begin{table}[h!]
\footnotesize
  \center
    \caption{Results of pairwise comparisons of the random digraphs via the simplicial distances. See Table \ref{tab:distances-analysis} for the distance ids.}
    \label{tab:ex1-dist}
    \begin{tabular}{c c c c c c c}
     \toprule 
     \textbf{Distance} & \textbf{KR-ER} & \textbf{KR-WS} & \textbf{KR-BA} & \textbf{ER-WS} & \textbf{ER-BA} & \textbf{WS-BA} \\
      \midrule 
1 & $0.0$ & $0.0$ & $0.5$ & $0.0$ & $0.5$ & $0.5$\\
2 & $0.0$ & $0.0$ & $0.707$ & $0.0$ & $0.707$ & $0.707$\\
3 & $0.0$ & $0.0$ & $1.0$ & $0.0$ & $1.0$ & $1.0$\\
4 & $0.559$ & $0.362$ & $0.895$ & $0.254$ & $0.677$ & $0.794$\\
5 & $0.0$ & $0.0$ & $1.0$ & $0.0$ & $1.0$ & $1.0$\\
6 & $0.0$ & $0.0$ & $0.0$ & $0.0$ & $0.0$ & $0.0$\\
7 & $0.762$ & $0.898$ & $0.393$ & $0.953$ & $0.741$ & $0.582$\\
8 & $0.972$ & $0.981$ & $0.988$ & $0.939$ & $0.982$ & $0.986$\\
9 & $0.432$ & $0.299$ & $0.874$ & $0.408$ & $0.740$ & $0.845$\\
      \bottomrule 
    \end{tabular}

\end{table}

\begin{example}
In this example, we performed a statistical analysis on four global simplicial characterization measures (lower variants) (namely,  average shortest directed simplicial $q$-walk length (measure 1)),  directed simplicial global simplicial $q$-efficiency (measure 2),  simplicial global $q$-reaching centrality (measure 8), and  average directed simplicial $q$-clustering coefficient (measure 9) (measure ids are in accordance with Table \ref{tab:measures-analysis})) computed on $q$-digraphs, for $q=-1,0,1,2$, obtained from samples produced by Monte Carlo simulations in the four random digraph models described in Section \ref{sec:random-graphs} to investigate whether the properties associated with the network topology of each model persist at each level $q$. Due to the limitations of computation time and processing power, we limited ourselves to small and sparse networks. The number of nodes was set to $n = 20$ and, to avoid dense digraphs, we limited the parameters in such a way that the digraph densities were always $\le 0.5$. We ran a Monte Carlo simulation on the parameters of each digraph model using uniform distributions:

\begin{itemize}
\item k-Regular (KR): $k \sim U(1,10)$ (integer part);

\item Erdős-Rényi $G(n,N)$ (ER): $N \sim U(0,190)$ (integer part);

\item Watts-Strogatz (WS): $k=10$, $p \sim U(0,1)$; 

\item Barabási-Albert (BA): $\delta_{in} \sim U(0,1)$, $\delta_{out} = 0; \alpha = 0.41, \beta = 0.54, \gamma = 0.05$.
\end{itemize}

The mean of each measure was obtained from 50 simulations carried out for each model, repeated 20 times. Since some of the measures are N-dependent, we used a fixed N equal to the order of the largest network among all $q$. Figure \ref{fig:ex2-random-digraphs} presents the means and standard deviations obtained for each measure. Subsequently, we compared the means obtained for each random model, for each level $q=-1,0,1,2$, and for each measure, using ANOVA. We found statistically significant differences ($p<0.05$) in all four measures at all levels $q$, except for measure 1 at level $q=0$ ($F=0.1513, p=0.135$). Also, as a post hoc test, we performed a Bonferroni's multiple comparison test and we found statistically significant differences ($p<0.05$) for all pairs of models, for all measures, at all levels $q$, except for the pair ER-KR in measure 1 ($q=1$: $p=0.069$; $q=2$: $p=0.778$), measure 8 ($q=-1$: $p=0.596$, $q=2$: $p=0.151$), and measure 9 ($q=-1$: $p=0.416$; $q=1$: $p=0.052$; $q=2$: $p=0.324$), and for the pair KR-WS in measure 1 ($q=-1$: $p=0.361$) and measure 2 ($q=0$: $p=0.542$).

Moreover, we notice that the ER networks presented a global  $q$-reaching centrality significantly larger than the other network models at levels $q=0,1,2$, which might suggest that the higher-order structures of ER networks propagated information more efficiently than the higher-order structures of the other networks. Also, for measures 8 and 9, it was not possible to find any significant difference between the KR and ER networks in the original networks ($q=-1$), as well as between the KR and WS networks in measures 1 and 2, but it was possible to find significant differences between these networks at other higher-order levels $q$.

We may conclude that the differences between the global topological properties for each random model are, in general, preserved in higher-order structural and connectivity levels, but may differ in different ways at each level. Furthermore, we saw that for some measures it is not possible to identify topological/structural, and/or functional differences between one network model and another, but these differences may be evident at higher-order levels of topological organization (clique organization), and this is one of the advantages of taking the $q$-digraphs associated with the networks into account in the analysis.
\end{example}

\begin{figure}[h!]
    \centering
\includegraphics[scale=0.66]{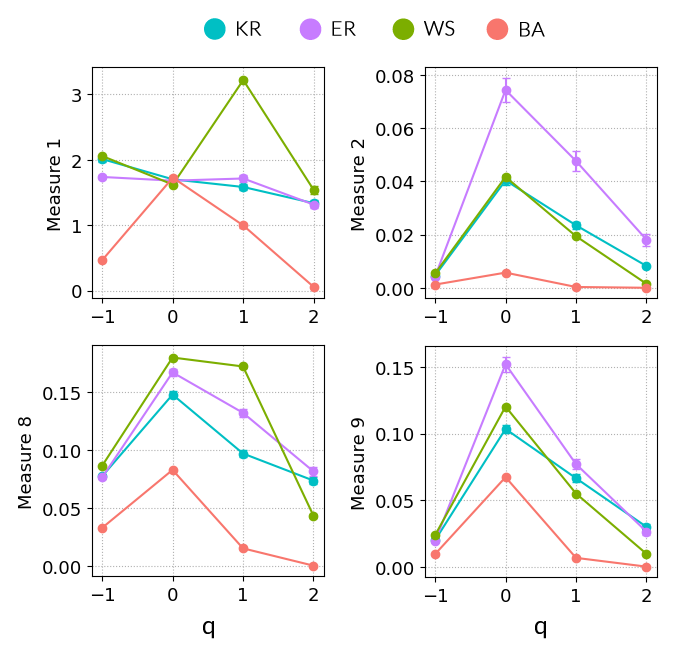}
 \caption{Means and standard deviations of the four simplicial measures computed for the $q$-digraphs, $q=-1,0,1,2$, corresponding to each random digraph model. See Table \ref{tab:measures-analysis} for the measure ids.}
 \label{fig:ex2-random-digraphs}
\end{figure}

\part[Brain Connectivity Networks and a Quantitative Graph/ Simplicial Analysis of Epileptic Brain Networks]{Brain Connectivity Networks and a Quantitative Graph/Simplicial Analysis of Epileptic Brain Networks}
\label{partII}

\chapter[Brain Connectivity Networks]{Brain Connectivity Networks}
\label{chap:chap3}

\epigraph{Since all models are wrong the scientist cannot obtain a ``correct" one by excessive elaboration. On the contrary following William of Occam he should seek an economical description of natural phenomena.}{--- George E. P. Box \citep{Box}}

\bigskip


It is a well-known fact that brain activities depend on several different brain areas rather than being isolated occurrences in one or more distinct brain regions. Neuroanatomical structures located in different regions interact to process information, thus creating structural and functional brain networks. The signals from these neurophysiological activities can be captured by techniques such as functional magnetic resonance imaging (fMRI) and electroencephalography (EEG), which can then be subjected to mathematical and statistical methods, such as partial directed coherence (PDC), which can estimate the causal relationship between two regions by characterizing the influence that one exerts on the other through the concept of Granger causality; these connections can be represented by directed graphs, and thus can be analyzed by tools from graph theory and computational algebraic topology.

In this chapter, we discuss the general aspects of brain connectivity networks, starting with the biophysical principles of brain signals, going through the common methods for acquiring these signals, especially EEG, and then discussing brain connectivity estimators, with special attention to PDC and its variants. Afterwards, we discuss the different types of brain connectivity and briefly discuss the applications of graph theoretical analysis (GTA) and topological data analysis (TDA) in modern network neuroscience research.

\section{Biophysical Principles of Brain Signals}
\label{sec:brain-signals}

The human brain is part of a larger system called the \textit{central nervous system} (CNS). The \textit{nervous system} consists of the \textit{peripheral nervous system} (PNS) and the CNS, and its basic units are the \textit{neurons}  \citep{Kandel}. Neurons are specialized cells in the nervous system that are responsible for transmitting information. Each neuron is composed of a cell body (soma) and a nucleus, dendrites, and an axon, as is schematically represented in Figure \ref{fig:neuron}. To be more precise, Figure \ref{fig:neuron} presents just a general schematic representation of a neuron, since neurons may vary substantially in size and shape, and they may be classified according to their specific morphology. For instance, some neurons have axons that are only a fraction of millimeters long and others can have axons that are many centimeters long; some neurons known as \textit{pyramidal neurons} (discovered by Santiago Ramón y Cajal (1852 - 1934)) have a pyramid-shaped cell body, and they form the most common class of neurons in the neocortex, accounting for approximately 75-90$\%$ of all neurons \citep{Trappenberg}. 

The transmission of information between neurons occurs through connections called \textit{synapses}. The neurons that receive inputs from other neurons are called \textit{postsynaptic} neurons and the neurons that give inputs to other neurons are called \textit{presynaptic} neurons, i.e. postsynaptic neurons receive inputs through synapses from presynaptic neurons. It is clear from the previous passage that synaptic connections are not symmetrical. Moreover, synapses can be electrical or chemical. Electrical synapses allow the direct flow of ionic currents from one neuron to another. Chemical synapses, however, involve the release of \textit{neurotransmitters} from the synaptic vesicles of the presynaptic neuron into the \textit{synaptic cleft} (the small gap between the neurons). These neurotransmitters then bind to receptors on the postsynaptic neuron, producing changes in the membrane potential of this neuron (known as \textit{postsynaptic potentials} (PSPs)), leading to the initiation (in which case the PSPs are known as \textit{excitatory} PSPs) or inhibition (in which case the PSPs are known as \textit{inhibitory} PSPs) of a new action potential \citep{Dayan, Kandel}.

An \textit{action potential} (or \textit{spike}) is an electrical impulse that travels down the axon of a neuron. When a neuron emits an action potential, it is said to be ``firing." The \textit{resting potential} is the potential inside the neuron when it is not firing, and it is approximately -70$mV$. An action potential is initiated when the membrane potential of the neuron reaches a certain threshold, causing an influx of sodium ions ($\mathrm{Na}^{+}$) into the cell. This influx of positively charged sodium ions results in a rapid increase of the membrane potential (depolarization), generating an action potential, followed by a rapid decrease (repolarization) of this potential, reaching values lower than the resting potential for a brief period of time (hyperpolarization), and then returning to the resting potential (see Figure \ref{fig:action-potential}) \citep{Trappenberg}. Once the action potential reaches the end of the axon, it triggers the release of neurotransmitters into the synaptic cleft, which bind to receptors on the postsynaptic neuron and lead to the transmission of information \citep{Dayan}.


\begin{figure}[h!]
\centering
\begin{subfigure}{.5\textwidth}
  \centering
  \includegraphics[scale=0.85]{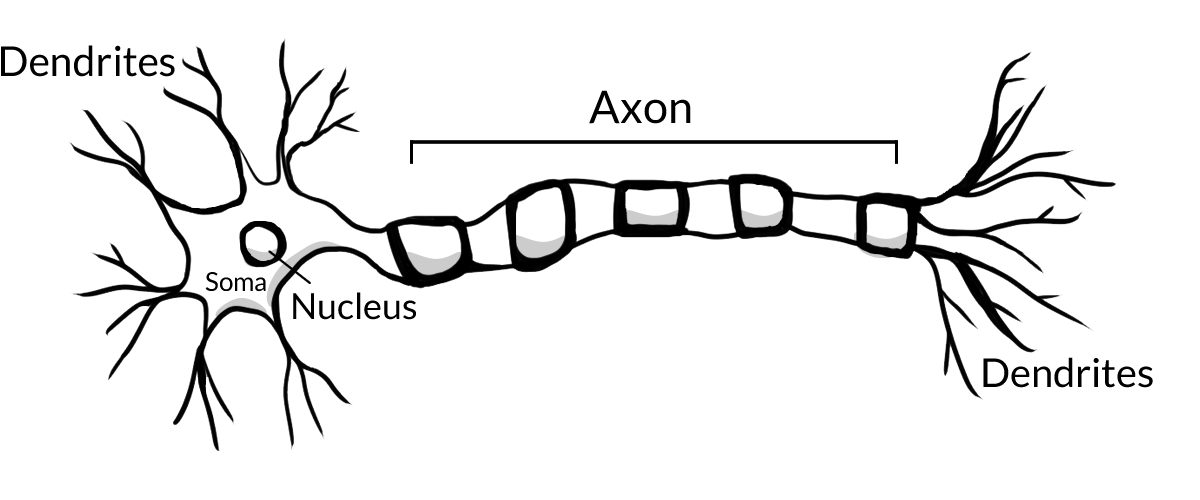}
  \caption{Neuron (schematic representation).}
  \label{fig:neuron}
\end{subfigure}
\begin{subfigure}{.4\textwidth}
  \centering
  \includegraphics[scale=0.9]{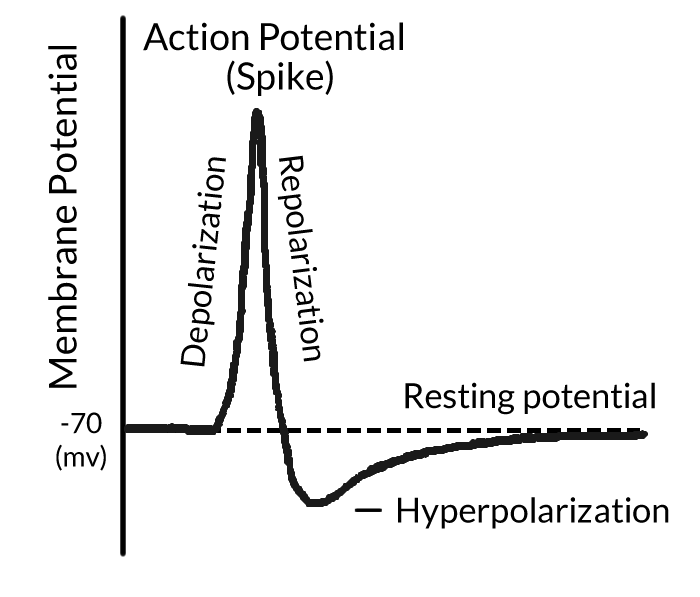}
  \caption{Action potential.}
  \label{fig:action-potential}
\end{subfigure}
\caption{Graphical representation of a neuron and an action potential.}
\label{fig:neuron-ap}
\end{figure}

In the previous paragraphs, we discussed very briefly the biophysical bases of brain activity at the cellular level. However, the brain has a multiscale organization (multiple levels of organization), from molecules to synapses, cells, neuronal networks, and brain areas \citep{Churchland, Frohlich}. This is an important feature of the brain because, although there are (invasive) techniques to detect signals from just a few neurons, we are often interested in the functioning of the brain at a global level. We can study brain functions by utilizing the signals produced by the activity of populations of neurons, which can be recorded by non-invasive devices. The acquisition of brain signals is a fundamental aspect of neuroscience research and clinical diagnosis.

Numerous methods have been developed to measure the neurophysiological activity of the brain, all of which rely on specific signals resulting from the dynamics of brain processes, for example: \textit{electroencephalography} (EEG) involves placing electrodes on the scalp to detect electrical activity \citep{Schomer};  \textit{magnetoencephalography} (MEG) measures the magnetic fields produced by brain's electrical activity, and it requires highly sensitive devices called SQUIDs (superconducting quantum interference devices) \citep{Papanicolaou}; \textit{functional magnetic resonance imaging} (fMRI) detects changes in blood flow and oxygenation levels in the brain \citep{Huettel}; \textit{positron emission tomography} (PET) is a nuclear imaging technique that requires the injection of a radioactive tracer into the bloodstream for subsequent measurement of the tracer's distribution and concentration in the brain \citep{Heurling}.

\section{Electroencephalography}
\label{sec:electroencephalography}

Electroencephalography (EEG) is a non-invasive and economically feasible technique that allows the recording of the brain's electrical activity through electrodes attached to the scalp. In a 1929 report, titled \textit{On the Electroencephalogram of Man}, the German neuropsychiatrist Hans Berger (1873–1941) described for the first time a successful observation of a human EEG, and was later considered the father of EEG \citep{Libenson, Schomer}.

The electrical signals captured by the electrodes are generated by large populations of neurons, since only these large populations are able to generate electric potentials strong enough to be captured by scalp EEG because of the distance between the scalp electrodes and the neurons as well as the high electrical resistance of the medium between them (e.g., skin, skull, dura mater, and cerebral spinal fluid). It is thought that the extracellular potentials captured by EEG are produced by postsynaptic potentials (PSPs) of pyramidal neurons \citep{Schomer}. 


The effects caused by the recording of electrical potentials at a distance from their sources, as occurs in EEG recordings, are known as \textit{volume conduction} \citep{Rutkove}. Generally speaking, a medium, for example, skin, skull, dura mater, and cerebrospinal fluid, will occupy the space between an electrode placed on the scalp and a cerebral source of electrical potential. Electrical signals conducted across this medium result in volume conduction, since the electrical impulses dissipate and refract as a result of this medium conducting them. Volume conduction effects are present in all EEG recordings \citep{vandenBroek}, and they might affect the resulting EEG signals.




The standard system for the arrangement of electrodes on the head is known as \textit{international 10-20 system}\footnote{There are alternative systems to the 10-20 system, such as the 10-10 system, that can accommodate extra electrodes for a more detailed EEG.}. It is implemented by positioning electrodes on the scalp using standard coordinates that are determined relatively to three reference points on the skull, namely: the \textit{nasion} (the deepest point between the nose and the forehead), the \textit{inion} (the lowest posterior point of the skull above the neck), and the \textit{pre-auricular points} (commonly denoted by A1 and A2). These electrodes are spaced in ratios of $10\%$ or $20\%$ of the length of the distances between these three reference points. The electrode locations are indicated by a combination of a \textit{letter} and a \textit{number}. The letter indicates the corresponding cortical area (Fp: pre-frontal, F: frontal, T: temporal, P: parietal, O: occipital, C: central; the letter ``z'' is used to indicate the electrodes in the center of the head (Fz, Cz, and Pz)). The number indicates the relative position; odd and even numbers correspond to the left and right hemispheres, respectively \citep{Frohlich, Libenson}. Figure \ref{fig:eeg} presents the configuration of 21 electrodes corresponding to the international 10-20 system.


EEG recordings are differential recordings, in which the difference between the voltages in two electrodes is measured by a differential amplifier. In other words, the differential amplifier receives the voltages from two electrodes and returns the difference between them, highly amplified. The patterns of chosen electrode pairs (channels) can vary, and these variations are called \textit{montages}. Each signal is the amplified difference between the voltages of two electrodes, for example, in a longitudinal bipolar montage, we have the pairs T5-O1, Fz-Cz, etc. (see Figure \ref{fig:eeg-ref1}), and in a referential (or monopolar) montage, e.g., in an ipsilateral ear montage (see Figure \ref{fig:eeg-ref2}), we have the pairs T5-Ref, Fz-Ref, etc., where ``Ref" is the electrode of reference (e.g., A1 or A2).

\begin{figure}[h!]
\centering
\begin{subfigure}{.45\textwidth}
  \centering
  \includegraphics[scale=0.85]{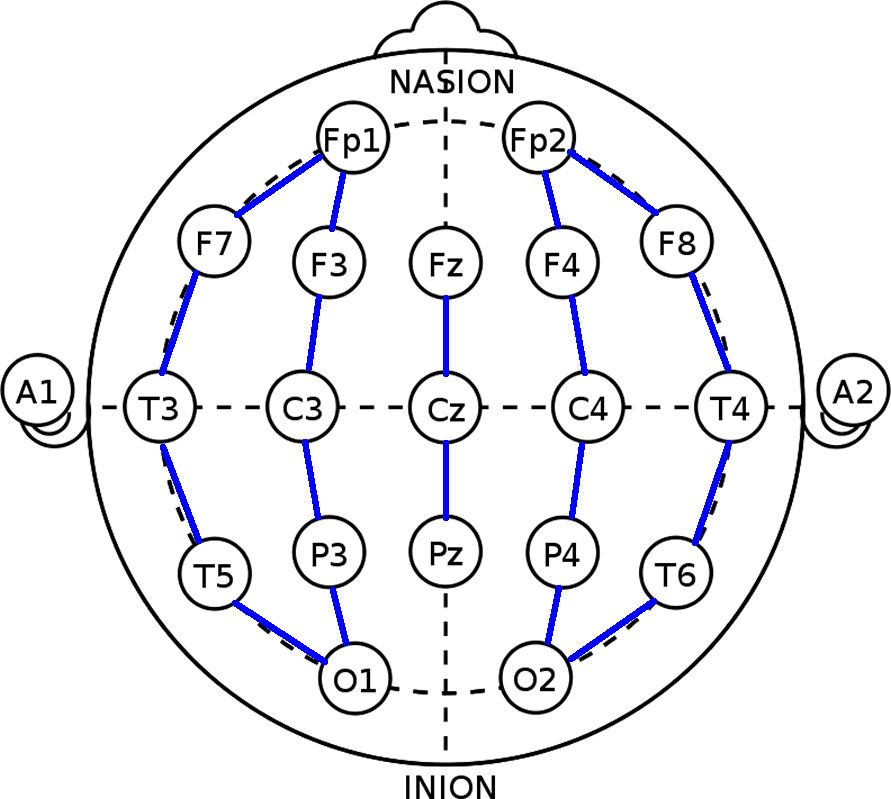}
  \caption{Longitudinal bipolar montage.}
  \label{fig:eeg-ref1}
\end{subfigure}%
\begin{subfigure}{.45\textwidth}
  \centering
  \includegraphics[scale=0.85]{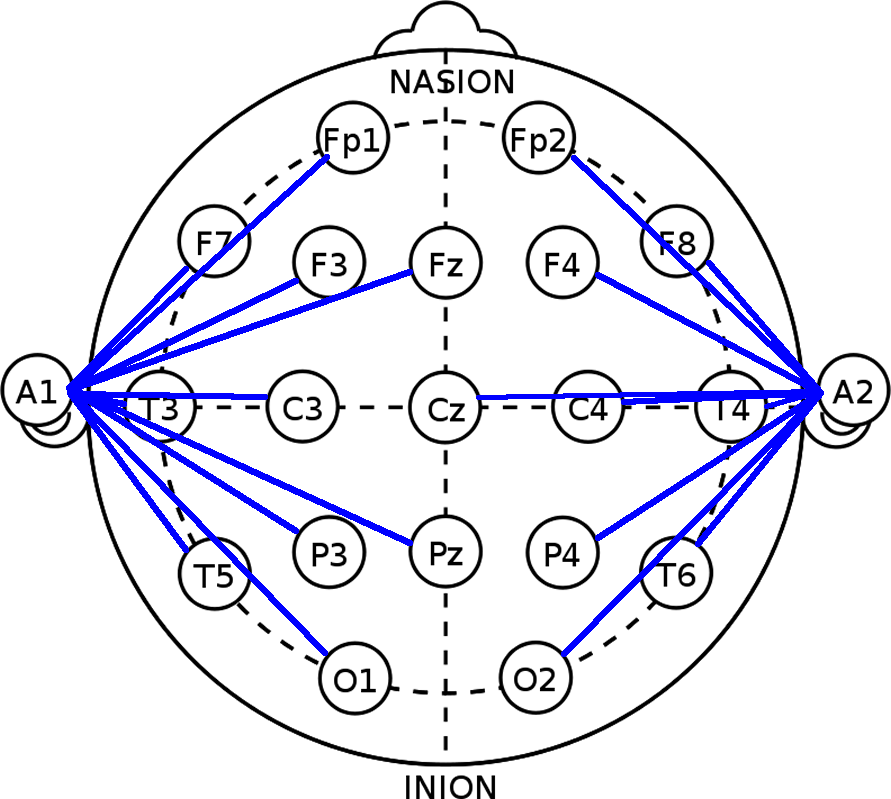}
  \caption{Ipsilateral ear reference montage. }
  \label{fig:eeg-ref2}
\end{subfigure}
\caption{International 10-20 electrode system, with 21 electrodes.}
\label{fig:eeg}
\end{figure} 

The frequencies of EEG signals can be divided into five frequency bands: delta [0.1 Hz, 4 Hz), theta [4 Hz, 8 Hz), alpha [8 Hz, 14 Hz), beta [14 Hz, 30 Hz), and gamma ($\ge$ 30 Hz), where [ , ) denotes semi-closed intervals, i.e. we do not include the highest frequency of each interval. In the literature, we can find other slightly different ranges for these frequency bands \citep{Chiang, Cociu, Quraan, Schomer}, but here we will consider this convention. Typically, the clinically significant EEG frequency components have a range of 0.1 to 100 Hz \citep{Schomer}.




In the previous paragraphs, we described the non-invasive EEG modality known as \textit{scalp EEG}. However, two invasive EEG modalities that require surgically implanted intracranial electrodes are the \textit{electrocorticography} (ECoG) and the \textit{stereoelectroencephalography} (sEEG) \citep{Parvizi}. Moreover, scalp EEG has high temporal resolution, which allows the measurement of electrical activities that occur in the order of milliseconds; however, it has low spatial resolution, while ECoG and sEEG have high temporal and spatial resolutions.


In clinical practice, it is common to use Video-EEG, which is the combination of EEG with video recording to allow simultaneous monitoring of brain electrical activity and the patient's behavior.

\subsection{Stationarity of Signals}

When we are analyzing real-world signals, such as electrophysiological signals, a concept that must be taken into consideration is the concept of \textit{stationarity}.

Stationarity \citep{Marple} refers to a property of a signal that remains constant over time. A stationary (wide-sense stationary) signal has statistical properties that do not change with time, such as its \textit{mean} and \textit{autocorrelation}. This means that the behavior of the signal is predictable and can be analyzed using statistical tools. For instance, if a signal is stationary, its power spectral density can be estimated using the autocorrelation function, which provides a measure of the signal's frequency content. 

In practice, however, most of the signals are non-stationary. In particular, EEG signals are essentially non-stationary due to the time-varying nature of the underlying brain processes \citep{Klonowski, vonBunau}. One way to tackle the problem of non-stationarity is to consider short segments of the signal as locally stationary (short-window methods), thus the methods that require stationarity can be applied from segment to segment.

\subsection{EEG Artifacts}


An \textit{artifact} in an EEG is a component of the signal that is not a brain signal, i.e., it is a component of the signal that does not correspond to a true cerebral activity \citep{Libenson}. EEG artifacts may be caused by electrical signals originating from the patient's extracerebral physiological activities (\textit{physiological artifacts}) or by electrical signals originating from external sources, such as equipment and the environment (\textit{non-physiological artifacts}); common examples of these two types of artifacts are:

\begin{itemize}
\item \textbf{Physiological artifacts:} Eye blinks, eye movements, tongue movement (glossokinetic artifact), cardiac activity, and muscle contractions.

\item \textbf{Non-physiological artifacts:} Electromagnetic interference from AC power lines (line frequencies typically are 50 Hz or 60 Hz), electrode detachment, EEG cable movement, and interference from other external electronic equipment.
\end{itemize}

The presence of artifacts in the EEG data may reduce the statistical power of the analysis; therefore, it is essential in the EEG preprocessing stage, before the analysis \textit{per se}, to identify and remove or attenuate artifacts. There are several techniques used for artifact handling, for instance, low-frequency filters (to remove very low frequencies, e.g., frequencies $<$ 1 Hz), high-frequency filters (to remove high frequencies, e.g., frequencies $>$ 70 Hz), notch filters (to eliminate signals at a specific frequency, e.g., 50 Hz or 60 Hz frequencies), regression methods, wavelet transform, principal component analysis (PCA), and independent component analysis (ICA) (to remove eye blinks, eye movements, and muscle artifacts) \citep{Jiang}.

\subsection{Clinical Uses and Limitations of EEG and Video-EEG}

As already discussed previously, EEG is a tool to assess the brain's electrical activity, and Video-EEG combines EEG with video recording to monitor both the brain's electrical activity and the patient's behavior. Despite the diverse clinical uses, in what follows, we focus on describing some important uses and limitations of EEG and Video-EEG in the diagnosis and treatment of epilepsy \citep{Montenegro}:

\begin{itemize}
\item \textbf{Diagnosing epilepsy:} EEG is commonly used to diagnose epilepsy and differentiate it from other neuronal disorders. Video-EEG can provide additional information, such as the timing and duration of seizures, which can help in the diagnosis and treatment of epilepsy.

\item \textbf{Monitoring seizures:} EEG and Video-EEG can be used to monitor seizures and track their frequency, duration, and severity. This information can be used to adjust medication dosages and evaluate the effectiveness of treatment, or as presurgical evaluation for patients with drug-resistant epilepsy (DRE).

\end{itemize}

Nevertheless, there are some limitations in the use of EEG and Video-EEG, such as:

\begin{itemize}

\item \textbf{Limited spatial resolution and sensitivity:} EEG provides a measure of overall brain activity but has limited spatial resolution. It cannot identify the precise location of epileptiform activity occurring far from the cranial vault, i.e. epileptiform activity originating from deep brain structures.


\item \textbf{False-positives:} EEG can produce false-positive results, indicating the presence of epileptiform activity when none is present. 

\item \textbf{Inconvenience:} EEG and Video-EEG require the placement of electrodes on the scalp, which can be uncomfortable and time-consuming for patients.

\end{itemize}

EEG and Video-EEG are useful tools in clinical practice, especially in the diagnosis and treatment of epilepsy, but they have some limitations that must be considered when interpreting results.

\section{From Brain Signals to Brain Connectivity}
\label{sec:brain-signals-connectivity}

Brain signals captured by brain mapping modalities, such as fMRI, and electrophysiological methods, such as EEG, provide information about the complex and dynamic activities of the brain. These activities involve the interaction of several brain areas, connected by both anatomical and functional associations \citep{Sporns-2010}. Brain connectivity comprises the description of how different brain structures interact and influence or are influenced by each other over time \citep{Friston-1994, Friston-2011}, encompassing \textit{structural connectivity} (anatomical connections), \textit{functional connectivity} (statistical dependencies), and \textit{effective connectivity} (causal influences) \citep{Sporns-Scholar}. 


The connections and interactions between different brain areas can be measured and interpreted through connectivity estimators. These estimators are mathematical and statistical methods that can be computed from data obtained from fMRI, EEG, MEG, etc. Different approaches can be used, consequently, we can have directed and non-directed, bivariate and multivariate, linear and non-linear, time-domain and frequency-domain connectivity estimators \citep{Chiarion}. 

The most common connectivity estimators are correlation and coherence. They are examples of linear, bivariate estimators. Examples of non-linear, bivariate approaches include mutual information, transfer entropy, and phase synchronization \citep{Pereda}. All these measures can be used to infer functional connectivity. Several estimators such as transfer entropy, Granger causality index (GCI) \citep{Brovelli, Geweke}, directed coherence (DC) \citep{Saito}, partial directed coherence (PDC) \citep{Baccala-2001}, and directed transfer function (DTF) \citep{Kaminski1}, are based on the concept of Granger causality, which is a cause-effect relation concept (the past values of one time series can predict current values of another), suggesting a directional influence or causality between brain regions, and thus they can be applied to infer effective connectivity. GCI, PDC, and DTF are directed, model-based, multivariate estimators, and whereas GCI is defined in the time-domain, DTF and PDC are defined in the frequency-domain. Moreover, information-theoretic forms of PDC and DTF, the information PDC (iPDC) and the information DTF (iDTF), which reinterpreted these estimators in terms of the mutual information, were introduced in \citep{Takahashi2, Takahashi}. Another very popular model-based estimator used to infer causal interactions (but not based on Granger causality), and therefore used to infer effective connectivity, is the dynamic causal modeling (DCM) \citep{Friston2003}, which is a Bayesian model-based approach formulated in terms of stochastic differential equations or ordinary differential equations. As is reasonable to expect, all of the previous estimators have limitations and are influenced by the quality of the data, data variability, volume conduction effects, and data preprocessing steps \citep{Chiarion}.

As already discussed in Chapter \ref{chap:chap1}, in this thesis we are interested in effective connectivity networks obtained through the iPDC estimator, thus in the next sections we will introduce all the mathematical formalism behind the PDC and its variants, beginning with the theory of vector autoregressive models and the concept of Granger causality.

\section{A Brief Introduction to Multivariate Autoregressive Models}
\label{sec:MVAR}

This section is based on the textbook by H. Lütkepohl \citep{Lutkepohl}; however, here we adopt the notation and some observations presented in \citep{Sameshima1}. 


\textit{Multivariate autoregressive models} (MVAR models), also called \textit{vector autoregressive models} (VAR models), are stochastic process models used to capture the relationship between various quantities as they change over time. Formally, they are defined as follows.

\smallskip

\begin{definition} 
A \textit{VAR model of order} $p \in \mathbb{N}$, denoted by VAR($p$), is given by
\begin{equation}\label{var1}
X(n) = \nu + A_1 X(n-1) + A_2 X(n-2) + \cdots + A_p X(n-p) + W(n),
\end{equation}

\noindent where $p$ represents the number of lags (past values) in the model, $X(n) = (x_{1}(n),...,$ $x_{N}(n))^{T}$ is an $N \times 1$ random vector, $n \in \mathbb{Z}$, $A_{i}$ are $N \times N$ fixed coefficient matrices,  $\nu = (\nu_{1},..., \nu_{N})^{T}$ is an $N \times 1$ fixed intercept vector, and $W(n) = (w_{1}(n),..., w_{N}(n))^{T}$ is an $N$-dimensional innovation process (or white noise), i.e. $E(W(n)) = 0$, $E(W(n)W(n)^{T}) = \Sigma_{W}$ and $E(W(n)W(m)^{T}) = 0$, for $n \neq m$, such that $E(\cdot)$ is the mean and $\Sigma_{W}$ is the covariance matrix of $W(n)$. Here, $\Sigma_{W}$ is assumed to be nonsingular.
\end{definition}

\smallskip

Explicitly, Equation (\ref{var1}) is written as
\begin{equation}
\begin{bmatrix}x_{1}(n) \\ \vdots \\ x_{N}(n)\end{bmatrix} = \begin{bmatrix}\nu_{1} \\ \vdots \\ \nu_{N}\end{bmatrix} + \sum_{r=1}^{p} A_{r} \begin{bmatrix}x_{1}(n - r) \\ \vdots \\ x_{N}(n - r)\end{bmatrix} + \begin{bmatrix}w_{1}(n) \\ \vdots \\ w_{N}(n)\end{bmatrix},
\end{equation}

\noindent where 

\begin{equation}
 A_{r} =\begin{bmatrix}
a_{11}(r) &  \cdots & a_{1N}(r) \\
\vdots &  \ddots & \vdots \\
a_{N1}(r)  & \cdots & a_{NN}(r) \end{bmatrix}.
\end{equation}

\medskip

The coefficients $a_{ij}(r)$ of the matrices $A_{r}$ represent the linear interaction effect of $x_{j}(n - r)$ on $x_{i}(n)$. The stochastic innovation processes $w_{i}(n)$ represent the part of the dynamic behavior that cannot be predicted from the past observations of the processes. Therefore, $w_{i}(n)$ are not time correlated but can exhibit instantaneous correlations among each other, which are described by their covariance matrix \citep{Sameshima1}:

\begin{equation}
 \Sigma_{W} = \begin{bmatrix}
\sigma_{11}(r) &  \cdots & \sigma_{1N}(r) \\
\vdots &  \ddots & \vdots \\
\sigma_{N1}(r)  & \cdots & \sigma_{NN}(r) \end{bmatrix},
\end{equation}

\noindent where $\sigma_{ij} = \mathrm{Cov}(w_{i}(n), w_{j}(n))$.

A multivariate time series can be regarded as a (finite) realization of a vector stochastic process, thus it can be modeled by a VAR($p$) model. There are numerous algorithms to estimate the VAR($p$) model parameters from the input multivariate time series, such as the Arfit, multichannel Levinson, Viera-Morf, and Nuttall–Strand algorithms \citep{Marple, Schlogl}. The quality of the fitting, however, depends on the model order selection. This involves determining an optimal number for $p$, i.e. an optimal number of lags to include in the model. Choosing an appropriate $p$ is essential, since a very small $p$ can lead to loss of information about the series, and a large $p$ can cause overfitting. There are several criteria that can be used for this purpose, including: Akaike’s information criterion (AIC), Hannan–Quinn’s criterion, and Bayesian–Schwarz’s criterion \citep{Lutkepohl}.

Now, let us formally define the concept of \textit{stationarity} for $X(n)$ as a multivariate time series.

\smallskip

\begin{definition} 
We say that a multivariate time series $X(n) = X_{n} = (x_{1}(n),...,$ $x_{N}(n))^{T}$ is \textit{stationary} if
\begin{equation}
E(X_{n}) = \mu = (\mu_{1},...,\mu_{N})^{T}, \hspace{0.1in} \forall n, \hspace{0.1in}  \mbox{and}
\end{equation}
\begin{equation}
E[(X_{n} - \mu)(X_{n - h} - \mu)^{T}] = E[(X_{n + h} - \mu)(X_{n} - \mu)^{T}], \hspace{0.1in} \forall n \hspace{0.1in}  \mbox{and} \hspace{0.1in}  h = 0,1,2,....
\end{equation}

That is, if its first (mean) and second moments are \textit{time invariant}.
\end{definition}

\section{Granger Causality}
\label{sec:granger}

In a 1969 paper \citep{Granger}, the econometrician Clive Granger introduced the concept of causality for stationary stochastic processes that can easily be extended to time series \citep{Lutkepohl}. The concept of \textit{Granger causality} (or \textit{G-causality}) lies in the idea that the cause cannot come after the effect. For time series, if we say that the series $x(n)$ ``Granger-causes" (or ``G-causes") the series $y(n)$, then the past values of $x(n)$ must contain information that helps to predict $y(n)$, in addition to the information contained in the past values of $y(n)$:
\begin{equation}
y(n) = \alpha + \beta_1 y(n-1) + \beta_2 y(n-2) + \cdots + \delta_1 x(n-1) + \delta_2 x(n-2) + \cdots + w(n).
\end{equation}

An interesting property of Granger causality is that \textit{it is not reciprocal}, i.e. $x(n)$ Granger-cause $y(n)$ does not imply that $y(n)$ Granger-cause $x(n)$. This makes it a suitable approach to directional causal processes, such as brain connectivity.

Formally, for time series, Granger causality is defined as follows. Let $x_{n} = x(n)$ and $y_{n} = y(n)$ be two time series and let $\{\Omega_{n}, n = 0, 1, 2,...\}$ be a set of relevant information accumulated up to $n$ (including $n$), containing at least $x_{n}$ and $y_{n}$. Set $\bar{\Omega}_{n} = \{\Omega_{s} : s < n\}$, $\bar{\bar{\Omega}}_{n} = \{\Omega_{s} : s \le n\}$, and let $\bar{x}_{n}$, $\bar{y}_{n}$, and $\bar{\bar{x}}_{n}$ be similar definitions. Given the information set $B$, let $P_{n}(y | B)$ be the predictor of $y_{n}$ with the minimum mean square error (MSE); let $\sigma^{2}(y | B)$ be the corresponding MSE of the predictor.

\smallskip

\begin{definition}
We say that $x_{n}$ \textit{Granger-causes} $y_{n}$ if

\begin{equation}
\sigma^{2}(y_{n} | \bar{\Omega}_{n}) < \sigma^{2}(y_{n} | \bar{\Omega}_{n} - \bar{x}_{n}).
\end{equation}

In other words, $y_{n}$ can be better predicted if we use all available information about the past of both $x_{n}$ and $y_{n}$.
\end{definition}

\begin{definition}
We say that $x_{n}$ \textit{instantaneously causes} $y_{n}$ in the Granger sense if
\begin{equation}
\sigma^{2}(y_{n} | \bar{\Omega}_{n}, \bar{\bar{x}}_{n}) < \sigma^{2}(y_{n} | \bar{\Omega}_{n}).
\end{equation}

That is, the present value of $y_{n}$ is better predicted if the present value of $x_{n}$ is taken into account.
\end{definition}

\begin{observation}\label{Obs-Test-GC}
Granger causality can be tested through linear prediction models \citep{Lutkepohl}. For instance, for a VAR($p$) model (\ref{var1}), testing for the existence of Granger causality from $x_{i}(n)$ to $x_{j}(n)$ is equivalent to verifying the hypothesis
\begin{equation}\label{GC-test}
H: a_{ij}(r) = 0, \hspace{0.2in} \forall r = 1,..., p,
\end{equation}

\noindent i.e. $x_{i}(n)$ Granger-causes $x_{j}(n)$ if there is at least some coefficient $a_{ij}(r)$ different from zero. Moreover, note that $a_{ij}(r) = 0$ does not imply that $a_{ji}(r) = 0$, due to the non-reciprocity of Granger causality.
\end{observation}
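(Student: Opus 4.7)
The plan is to establish the equivalence between the absence of Granger causality from $x_i$ to $x_j$ and the vanishing of a specific block of VAR coefficients, by exploiting the orthogonality of the innovation process to the past of the system. First I would isolate the $j$-th scalar equation of the VAR($p$) model in Equation~(\ref{var1}), writing
\[
x_j(n) = \nu_j + \sum_{r=1}^{p} \sum_{k=1}^{N} a_{jk}(r)\, x_k(n-r) + w_j(n),
\]
and note that, since $E[W(n)W(m)^T]=0$ for $n\neq m$, the innovation $w_j(n)$ is uncorrelated with every element of $\bar{\Omega}_n$. Hence the minimum-MSE linear predictor of $x_j(n)$ based on $\bar{\Omega}_n$ is exactly $\hat{x}_j(n) = \nu_j + \sum_{r,k} a_{jk}(r)\, x_k(n-r)$, with prediction error variance $\sigma^2(x_j(n)\mid\bar{\Omega}_n) = \sigma_{jj}$ (here I am adopting the natural reading that the relevant coefficients for causality from $i$ to $j$ are $a_{ji}(r)$, since $a_{jk}(r)$ encodes the effect of $x_k(n-r)$ on $x_j(n)$).

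Next I would compare this predictor with the optimal linear predictor based on the reduced information set $\bar{\Omega}_n \setminus \bar{x}_i$. For the forward direction, suppose $a_{ji}(r)=0$ for every $r=1,\ldots,p$. Then the right-hand side of the VAR equation for $x_j(n)$ involves only past values of $\{x_k : k \neq i\}$, so $\hat{x}_j(n)$ is already measurable with respect to $\bar{\Omega}_n \setminus \bar{x}_i$. Consequently $\sigma^2(x_j(n)\mid\bar{\Omega}_n) = \sigma^2(x_j(n)\mid\bar{\Omega}_n \setminus \bar{x}_i) = \sigma_{jj}$, which by the definition of Granger causality means that $x_i$ does not Granger-cause $x_j$.

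The converse direction is where the technical obstacle lies. If some $a_{ji}(r^*)\neq 0$, I would use the Hilbert-space characterization of the minimum-MSE predictor as an orthogonal projection onto the closed linear span of the admissible past. The strict inequality $\sigma^2(x_j(n)\mid\bar{\Omega}_n) < \sigma^2(x_j(n)\mid\bar{\Omega}_n \setminus \bar{x}_i)$ requires that $x_i(n-r^*)$ carry information about $x_j(n)$ not already contained in the other variables' pasts; equivalently, that $x_i(n-r^*)$ is not the projection of itself onto the reduced past. This is guaranteed by the stability/invertibility of the VAR (so that $X(n)$ admits a convergent $\mathrm{MA}(\infty)$ representation driven by the innovations) together with the nonsingularity of $\Sigma_W$, which ensures that no component of the innovation vector can be written as a linear combination of the others. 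Making this step rigorous—i.e., ruling out pathological degeneracies in the joint covariance structure of the stationary process—is the main obstacle.

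Finally, assuming the sample size is large enough for standard asymptotic inference, the hypothesis $H: a_{ji}(r)=0,\ \forall r=1,\ldots,p$, becomes a linear restriction on the VAR coefficient matrices, so it can be tested via a Wald, likelihood-ratio, or $F$-type statistic applied to the least-squares (or maximum-likelihood) estimates of $A_1,\ldots,A_p$. The conclusion in the observation then follows from the equivalence established above, modulo the usual caveat that what is being detected is linear Granger causality in the context of the fitted VAR structure.
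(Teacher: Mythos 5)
The paper does not actually prove this Observation; it is asserted with a pointer to L\"utkepohl, so there is no internal argument to compare yours against. What you have written is, in substance, the standard derivation that the cited reference supplies, and it is correct in outline. Two points are worth flagging. First, your index choice is the right one and exposes a genuine inconsistency in the statement as printed: the paper's own convention (stated just after the displayed form of the coefficient matrices) is that $a_{ij}(r)$ encodes the effect of $x_{j}(n-r)$ on $x_{i}(n)$, so causality \emph{from} $x_{i}$ \emph{to} $x_{j}$ is governed by the block $a_{ji}(r)$, exactly as you take it; the Observation's hypothesis $H$ has its indices transposed relative to the direction of causality it names (or else it is silently using the PDC convention in which the pair $(i,j)$ always reads ``from $j$ to $i$''). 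Second, your forward direction is complete: when the block vanishes, the full-information optimal linear predictor is already measurable with respect to the reduced past, and optimality forces the two prediction variances to coincide at $\sigma_{jj}$. For the converse you honestly flag the one delicate step --- that a nonzero coefficient must produce a \emph{strict} variance reduction --- and the ingredients you name do close it: nonsingularity of $\Sigma_{W}$ (assumed in the paper's VAR definition) together with stationarity/stability implies the spectral density is nonsingular, so the past of $x_{i}$ is not almost surely a linear functional of the pasts of the remaining components, and the projection argument goes through. The only caveat worth adding is that this equivalence between the coefficient restrictions and the predictor definition of Granger causality is specific to one-step-ahead linear prediction with the full information set $\bar{\Omega}_{n}$; for multi-step causality in systems with more than two variables the zero restrictions on $A_{1},\dots,A_{p}$ are necessary but no longer sufficient, which is the sense in which the Observation should be read.
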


\section{Partial Directed Coherence and its Variants}
\label{sec:PDC}

In this section, we introduce the mathematical formalism behind the partial directed coherence and its variants, specifically the generalized partial directed coherence and the information partial directed coherence, together with their asymptotic properties. The main reference for this part is \citep{Sameshima1}.

\subsection{Partial Directed Coherence}
\label{subsec:pdc}

The concept of \textit{partial directed coherence} (PDC) was first proposed by Baccalá and Sameshima \citep{Baccala3, Baccala-2001, Baccala5} as a multivariate generalization of the \textit{directed coherence} (DC) proposed by Saito and Harashima \citep{Saito}. PDC is a quantifier based on the Granger causality concept, and can be considered a representation of Granger causality in the frequency-domain. 

Formally, PDC is defined as follows. Consider a VAR($p$) model given by
\begin{equation}\label{Var-1}
X(n) = \sum^{p}_{r=1} A_{r}X(n - r) + W(n),
\end{equation}

\noindent where $X(n) = (x_{1}(n),..., x_{N}(n))^{T}$ is a stationary multivariate time series (e.g., $X(n)$ may represents $N$ channels of EEG signals in time $n$) and $W(n) = (w_{1}(n),..., w_{N}(n))^{T}$ is an $N$-dimensional stationary Gaussian innovation process (with covariance matrix $\Sigma_{W}$). The order $p$ may be estimated by one of the order selection criteria mentioned in Section \ref{sec:MVAR}.

By properly obtaining the coefficients $a_{ij}(r)$ of the matrices $A_{r}$ (see Section \ref{sec:MVAR}), we can define a frequency-domain representation of (\ref{Var-1}) by defining the matrix $A(f)$ as follows:
\begin{equation}
A(f) = \sum^{p}_{r=1} A(r)e^{-i2\pi fr}.
\end{equation}

Defining $\bar{A}(f) = I - A(f) = [\bar{a_{1}}(f)\bar{a_{2}}(f)...\bar{a_{m}}(f)]$, where $\bar{a}_{j}(f)$ represents the $j$-th column of the matrix $\bar{A}(f)$, the entries of $\bar{A}(f)$ are given by
\begin{equation}
\bar{A}_{ij}(f) =
\begin{cases}
1 - \sum^{p}_{r=1} a_{ij}(r)e^{-i2\pi fr}, &\text{if i = j,}\\
 - \sum^{p}_{r=1} a_{ij}(r)e^{-i2\pi fr} , &\text{otherwise.}\\
\end{cases}
\end{equation}

\smallskip

\begin{definition}
The \textit{partial directed coherence} (PDC) from $j$ to $i$ at frequency $f$ is defined by
\begin{equation}\label{PDC}
\pi_{ij}(f) := \frac{\bar{A}_{ij}(f)}{\sqrt{\bar{a}_{j}^{H}(f)\bar{a}_{j}(f)}},
\end{equation}

\noindent where the superscript $H$ denotes the Hermetian transpose.
\end{definition}

\smallskip

Note that because of the dependence on $a_{ij}(r)$ in Equation (\ref{PDC}), the nullity of $\pi_{ij}(f)$ at a given frequency implies the lack of G-causality from $j$ to $i$. Also, the expression $\pi_{ij}(f)$ denotes the \textit{direction} and \textit{intensity} of the information flow from $j$ to $i$ at frequency $f$, and it satisfies the following normalization properties:
\begin{equation}\label{No1}
0 \le |\pi_{ij}(f)|^{2} \le 1,
\end{equation}
\begin{equation}\label{No2}
\sum_{i = 1}^{N} |\pi_{ij}(f)|^{2}  = 1, \hspace{0.2in} \forall j = 1,..., N.
\end{equation}

\smallskip

\begin{observation}
In the previous paragraphs we introduced the PDC as a quantifier based on the coefficients of a VAR model, however, the PDC does not depend on this specific model, but can be formulated through other multivariate models, such as the \textit{vector moving average} (VMA) model and the \textit{vector autoregressive moving average} (VARMA) model \citep{Baccala2022}.
\end{observation}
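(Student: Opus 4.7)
The plan is to argue that PDC is fundamentally a functional of the spectral (transfer-function) structure of a stationary multivariate process rather than of any particular time-domain parameterization. I would invoke the Wold decomposition: any purely nondeterministic stationary $N$-dimensional process $X(n)$ admits a one-sided MA representation $X(n) = \sum_{s=0}^{\infty} \Psi_s W(n-s)$ with $\Psi_0 = I$ and innovation $W(n)$. Under invertibility (roots of $\det \Psi(z)$ outside the unit disk), this is equivalent to an AR($\infty$) representation $\sum_{r=0}^{\infty} \Pi_r X(n-r) = W(n)$ with $\Pi_0 = I$. A VARMA($p,q$) model $\bar{A}(L)X(n) = B(L) W(n)$ sits between the VAR case ($B(L)=I$) and the VMA case ($\bar{A}(L)=I$), and under the standard invertibility/stability assumptions it admits both representations.

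First, I would pass to the frequency domain and define the generalized matrix $\tilde{A}(f) := B^{-1}(f)\bar{A}(f)$, where $B(f) = \sum_{s=0}^{q} B_s e^{-i2\pi fs}$ and $\bar{A}(f) = I - \sum_{r=1}^{p} A_r e^{-i2\pi fr}$. A direct computation shows that $\tilde{A}(f) = I - \sum_{r=1}^{\infty} \Pi_r e^{-i2\pi fr}$, i.e., $\tilde{A}(f)$ is precisely the object one would obtain by starting from the induced AR($\infty$) representation. One then defines
\begin{equation}
\pi_{ij}(f) = \frac{\tilde{A}_{ij}(f)}{\sqrt{\tilde{a}_{j}^{H}(f)\tilde{a}_{j}(f)}},
\end{equation}
which is PDC in exactly the form (\ref{PDC}) but written in terms of $\tilde{A}(f)$. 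Specializing to $B(f) = I$ recovers Definition (\ref{PDC}) verbatim, and specializing to $\bar{A}(f) = I$ yields the VMA version with $\tilde{A}(f) = B^{-1}(f)$.

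Second, I would verify that the normalization properties (\ref{No1})–(\ref{No2}) are preserved, which is immediate since each column $\tilde{a}_{j}(f)$ is divided by its own Euclidean norm, and then check that Observation \ref{Obs-Test-GC} extends: absence of G-causality from $x_j$ to $x_i$ must correspond to the vanishing of $\tilde{A}_{ij}(f)$ for all $f$. The main obstacle I anticipate lies in this last step. In the pure VAR case, G-causality reduces cleanly to $a_{ij}(r)=0$ for all $r$, but for VARMA models the coefficients $\Pi_r$ of the induced AR($\infty$) are nonlinear combinations of both AR and MA parameters, so one must show that the joint predictor structure of the VARMA model implies that the linear projection of $x_i(n)$ onto $\bar{\Omega}_n \setminus \bar{x}_{j,n}$ places no weight on past $x_j$ values precisely when the $(i,j)$-entry of $\tilde{A}(f)$ vanishes identically in $f$. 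Establishing this equivalence rigorously under the invertibility hypothesis (so that the AR($\infty$) reformulation is valid) is where the work actually lies, and I would defer to the detailed treatment in \citep{Baccala2022} for the VMA and VARMA cases.
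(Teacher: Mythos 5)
The paper does not actually prove this statement: it is presented as an \emph{Observation} and justified entirely by the citation to Baccal\'a and Sameshima (2022), so there is no in-text argument to compare yours against. Your sketch supplies the mathematical skeleton the paper omits, and it is the standard (and almost certainly the cited reference's) route: pass from the VARMA model $\bar{A}(L)X(n)=B(L)W(n)$ to its inverse-transfer-function form, set $\tilde{A}(f)=B^{-1}(f)\bar{A}(f)$, and observe that the PDC definition only ever uses column-wise normalization of such a matrix, so the definition, the normalization properties (\ref{No1})--(\ref{No2}), and the two special cases (VAR when $B(f)=I$, VMA when $\bar{A}(f)=I$) all go through verbatim. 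You also correctly locate where the genuine work lies, namely that Granger non-causality from $x_j$ to $x_i$ must be shown equivalent to $\tilde{A}_{ij}(f)\equiv 0$ when the $\Pi_r$ are nonlinear functions of the AR and MA parameters; under the invertibility hypothesis this equivalence does hold (non-causality is characterized by the vanishing of the $(i,j)$ entries of the induced AR($\infty$) coefficients at every lag), and deferring its detailed verification to the reference is no weaker than what the paper itself does.

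One minor inconsistency to fix: having written the AR($\infty$) representation as $\sum_{r=0}^{\infty}\Pi_r X(n-r)=W(n)$ with $\Pi_0=I$, the frequency-domain matrix is $\tilde{A}(f)=I+\sum_{r\ge 1}\Pi_r e^{-i2\pi fr}$, not $I-\sum_{r\ge 1}\Pi_r e^{-i2\pi fr}$; the minus sign belongs to the convention $X(n)=\sum_{r\ge 1}\Pi_r X(n-r)+W(n)$ used for the VAR case in Equation (\ref{var1}). This is purely a sign convention and does not affect the substance of your argument.
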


\subsection{Generalized PDC}

A scaling-invariant version of PDC, called \textit{generalized partial directed coherence} (gPDC), was introduced by  Baccalá et al. \citep{Baccala8}. As stated in their article, the central problem of connectivity analysis is to analyze the hypothesis
\begin{equation}\label{GC-test}
H_{0}: \pi_{ij}(f) = 0,
\end{equation} 

\noindent whose rejection implies the existence of a directed connection from $x_{j}(n)$ to $x_{i}(n)$, which cannot be explained by other series observed simultaneously. Considering a scenario where a series $y(n)$ Granger-causes $x(n)$, if $y(n)$ is amplified by a constant $\alpha$, and taking $u(n) = \alpha y(n)$, we would eventually have $|\pi_{xu}(f)|^{2} \rightarrow 0$, as $\alpha$ grows. To solve this problem, a generalization of the PDC, which makes it invariant to eventual gains affecting a time series, was defined as follows.

\smallskip

\begin{definition}
The \textit{generalized partial directed coherence} (gPDC) from $j$ to $i$ at frequency $f$ is defined by
\begin{equation}\label{gPDC}
\pi_{ij}^{(w)}(f) := \frac{\frac{1}{\sigma_{i}^{2}}\bar{A}_{ij}(f)}{\sqrt{\bar{a}_{j}^{H}(f) \diag{\sigma_{i}^{2}}^{-1}\bar{a}_{j}(f)}},
\end{equation}

\noindent where $\diag{\sigma_{i}^{2}}$ is the diagonal matrix of the variances $\sigma_{i}^{2}$ of the innovation processes $w_{i}(n)$.
\end{definition}

Equation (\ref{gPDC}) preserves the normalization properties (\ref{No1}) and (\ref{No2}).

\subsection{Information PDC}

The \textit{information partial directed coherence} (iPDC), introduced by Takahashi et al. \citep{Takahashi2, Takahashi}, is a modification of the PDC expression that formalizes the relationship between PDC and information flow. In the following, we expose the formal definition of the iPDC according to the above-mentioned articles.

Let $x = \{x(n)\}_{n \in \mathbb{Z}}$ and $y = \{ y(n)\}_{n \in \mathbb{Z}}$ be two discrete-time stochastic processes. We can evaluate the relationship between $x$ and $y$ through the \textit{mutual information rate} (MIR) (see \citep{Covers} for more details on information theory), which compares the joint probability density with the product of the marginal probability densities of $x$ and $y$:
\begin{equation}\label{mir1}
\mathrm{MIR}(x, y) = \lim_{m\to\infty} \frac{1}{m + 1} E\bigg[ \log \frac{dP(x(1),..., x(m), y(1),..., y(m))}{dP(x(1),..., x(m))dP(y(1),..., y( m))} \bigg],
\end{equation}

\noindent where $dP$ denotes the probability density function. Note that if $\mathrm{MIR}(x, y) = 0$, then $x$ and $y$ are independent.

Let $\omega = 2\pi f$ (angular frequency) and let $S_{xx}(\omega)$ and $S_{yy}(\omega)$ be the auto-spectrum of $x$ and $y$, respectively, and $S_{xy}(\omega)$ the cross-spectrum. The \textit{coherence} between $x$ and $y$ is given by
\begin{equation}\label{coer}
C_{xy}(\omega) = \frac{S_{xy}(\omega)}{\sqrt{S_{xx}(\omega) S_{yy}(\omega)}}.
\end{equation}

For stationary Gaussian processes, Equation (\ref{mir1}) is related to the coherence $C_{xy}$ through the expression
\begin{equation}\label{mir2}
\mbox{MIR}(x, y) = - \frac{1}{4\pi} \int_{-\pi}^{\pi} \log(1 - |C_{xy}(\omega)|^ {2})\,d\omega.
\end{equation}

Now, consider a VAR model given by (\ref{Var-1}), with $p = +\infty$, and such that $\Sigma_{W} = E(w(n)w(n)^{T})$ is the positive-definite covariance matrix of the $N$-dimensional stationary Gaussian process $W(n)$. A sufficient condition for the existence of this VAR model, with $p = +\infty$, is that the spectral density matrix associated with $x$ is invertible at all frequencies and uniformly bounded above and below. Hence, the entries of the matrix $\bar{A}(\omega) = I - A(\omega) = [\bar{a}_{1}(\omega)\bar{a}_{2}(\omega) ...\bar{a}_{m}(\omega)]$ are given by
\begin{equation}
\bar{A}_{ij}(\omega) =
\begin{cases}
1 - \sum^{+\infty}_{r=1} a_{ij}(r)e^{-i\omega r}, &\text{if i = j,}\\
 - \sum^{+\infty}_{r=1} a_{ij}(r)e^{-i\omega r} , &\text{otherwise.}\\
\end{cases}
\end{equation}

\smallskip

\begin{definition}
The \textit{informational partial directed coherence} (iPDC) from $j$ to $i$ is defined by
\begin{equation}\label{iPDC}
\iota\pi_{ij}(\omega) := \frac{\sigma_{i}^{-1/2}\bar{A}_{ij}(\omega)}{\sqrt{ \bar{a}_{j}^{H}(\omega) \Sigma_{W}^{-1}\bar{a}_{j}(\omega)}},
\end{equation}

\noindent where $\omega \in [-\pi, \pi)$ and $\sigma_{i} = E(w_{i}^{2}(n))$.
\end{definition}

\begin{theorem}
\textit{Let $X(n)$ be the stationary multivariate time series satisfying the VAR model (\ref{Var-1}). Then we have}
\begin{equation}\label{iPDC2}
\iota\pi_{ij}(\omega) = C_{w_{i}\eta_{j}}(\omega),
\end{equation}

\noindent \textit{where $\eta_{j}(n) = x_{j}(n) - E[x_{j}(n)|\{x_{r}(m), r \neq j, m\in \mathbb{Z}\}]$ is the partial process associated with $x_{j}(n)$, given the remaining series $\{x_{r}(m)\}_{r \neq j, m \in \mathbb{Z}}$.}
\end{theorem}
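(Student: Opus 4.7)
The plan is to compute both sides of the claimed identity directly from the spectral structure of the VAR model and show they coincide. First, I would translate the VAR model into a frequency-domain relation. Since $\bar{A}(L) X(n) = W(n)$ (where $L$ is the lag operator), in the spectral representation one has $\bar{A}(\omega) dZ_X(\omega) = dZ_W(\omega)$, so the spectral density matrix of $X$ satisfies
\begin{equation*}
S_X(\omega) = \bar{A}^{-1}(\omega)\,\Sigma_W\,\bar{A}^{-H}(\omega),\qquad S_X^{-1}(\omega) = \bar{A}^{H}(\omega)\,\Sigma_W^{-1}\,\bar{A}(\omega).
\end{equation*}
In particular, the $(j,j)$ entry of $S_X^{-1}(\omega)$ is exactly $\bar{a}_j^{H}(\omega)\,\Sigma_W^{-1}\,\bar{a}_j(\omega)$, which is the denominator of $\iota\pi_{ij}(\omega)$ (up to the square root). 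This identifies the inner product appearing in the iPDC with a spectral object.

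Next, I would establish the key lemma: the partial process $\eta_j$ has spectral density
\begin{equation*}
S_{\eta_j}(\omega) \;=\; \frac{1}{[S_X^{-1}(\omega)]_{jj}} \;=\; \frac{1}{\bar{a}_j^{H}(\omega)\,\Sigma_W^{-1}\,\bar{a}_j(\omega)},
\end{equation*}
and, moreover, $S_{x_k\eta_j}(\omega) = S_{\eta_j}(\omega)\,\delta_{kj}$ for every $k$. This is the classical interpretation of the inverse spectral density as encoding partial correlations in the frequency domain, and it follows from the Hilbert-space orthogonality $\eta_j \perp \{x_r(m): r\neq j,\, m\in\mathbb{Z}\}$ together with $x_j = \eta_j + P(x_j\mid\{x_r\}_{r\neq j})$. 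Carrying this out cleanly, via spectral increments and the projection formula, is the main technical obstacle of the proof, since one must justify that the finite-order regression interpretation extends to the (possibly infinite-order) stationary Gaussian setting; this is where invertibility and uniform boundedness of $S_X(\omega)$ are used.

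With this in hand, the remainder is a short algebraic computation. From $W(\omega) = \bar{A}(\omega) X(\omega)$ and the orthogonality in the previous step,
\begin{equation*}
S_{w_i\eta_j}(\omega) \;=\; \sum_{k} \bar{A}_{ik}(\omega)\, S_{x_k\eta_j}(\omega) \;=\; \bar{A}_{ij}(\omega)\, S_{\eta_j}(\omega).
\end{equation*}
Since $w_i$ is white with $S_{w_i}(\omega) = \sigma_i$, substituting into the definition (\ref{coer}) of coherence yields
\begin{equation*}
C_{w_i\eta_j}(\omega) \;=\; \frac{\bar{A}_{ij}(\omega)\, S_{\eta_j}(\omega)}{\sqrt{\sigma_i\, S_{\eta_j}(\omega)}} \;=\; \frac{\sigma_i^{-1/2}\bar{A}_{ij}(\omega)}{\sqrt{1/S_{\eta_j}(\omega)}} \;=\; \frac{\sigma_i^{-1/2}\bar{A}_{ij}(\omega)}{\sqrt{\bar{a}_j^{H}(\omega)\,\Sigma_W^{-1}\,\bar{a}_j(\omega)}},
\end{equation*}
which is precisely $\iota\pi_{ij}(\omega)$ by (\ref{iPDC}). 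Thus the main work is concentrated in the partial-process spectral identity, after which the result follows by direct substitution.
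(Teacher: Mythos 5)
Your proposal is correct. Note that the paper itself does not prove this theorem: it states it and defers to the cited articles of Takahashi et al.\ for the argument. Your route --- identifying $\bar{a}_{j}^{H}(\omega)\Sigma_{W}^{-1}\bar{a}_{j}(\omega)$ with $[S_{X}^{-1}(\omega)]_{jj}$ via the spectral factorization of the VAR, invoking the classical partial-spectrum identity $S_{\eta_{j}}(\omega)=1/[S_{X}^{-1}(\omega)]_{jj}$ together with $S_{x_{k}\eta_{j}}=S_{\eta_{j}}\delta_{kj}$, and then filtering $W=\bar{A}(L)X$ to get $S_{w_{i}\eta_{j}}=\bar{A}_{ij}S_{\eta_{j}}$ before substituting into the coherence --- is exactly the standard argument used in the original reference, and your closing algebra reproduces the definition of $\iota\pi_{ij}(\omega)$ correctly. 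You rightly flag that the only genuinely technical step is justifying the partial-process spectral identity in the infinite-order stationary Gaussian setting, which is where the assumed invertibility and uniform boundedness of $S_{X}(\omega)$ are consumed; making that lemma precise (e.g., via the Hilbert-space projection onto the closed span of $\{x_{r}(m):r\neq j\}$) is all that separates your plan from a complete proof.
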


\smallskip

The previous theorem shows that the iPDC from $j$ to $i$ measures the amount of common information between the partial process $\eta_{ij}$ and the innovation $w_{i}$. In fact, replacing (\ref{iPDC2}) in Equation (\ref{mir2}) we get
\begin{equation}\label{mir3}
\mbox{MIR}(w_{i}, \eta_{j}) = - \frac{1}{4\pi} \int_{-\pi}^{\pi} \log(1 - |\iota\pi_{ij}(\omega)|^{2})\,d\omega.
\end{equation}

\smallskip

\subsection{General Expression for all PDC Variants}

As noticed by Baccalá et al. \citep{Baccala9}, the PDC from $j$ to $i$ (\ref{PDC}), and its two variants, gPDC (\ref{gPDC}) and iPDC (\ref{iPDC}), can all be obtained from the same general formula for a given frequency $f$:
\begin{equation}\label{general}
\pi_{ij}(f) = \frac{1}{s} \frac{\bar{A}_{ij}(f)}{\sqrt{ \bar{a}_{j}^{H}(f ) S \bar{a}_{j}(f)}},
\end{equation}

\noindent where the variables $s$ and $S$ are given according to the Table \ref{tab:unified-pdc}.

\begin{table}[h!]
  \begin{center}
    \caption{Variables $s$ and $S$ according to the PDC type.}
    \label{tab:unified-pdc}
    \begin{tabular}{c c c c}
      \toprule 
Variable & PDC & gPDC & iPDC\\
      \midrule 
$s$ & 1 & $\sigma_{i}^{-1/2}$ & $\sigma_{i}^{-1/2}$\\
$S$ & $I$ & $\diag{\sigma_{i}^{2}}^{-1}$ & $\Sigma_{W}^{-1}$ \\
      \bottomrule 
    \end{tabular}
  \end{center}
\end{table}

Sometimes the notation $\diag{\sigma_{i}^{2}} = (I \odot \Sigma_{W})$ is used, where ``$\odot$" denotes the Hadamard product of matrices (element-wise product).

Other forms of PDC have been introduced over time, such as the \textit{time-varying generalized orthogonalized PDC} (tv-gOPDC) \citep{Omidvarnia}, which tries to reduce the effect of volume conduction, and the \textit{total PDC} (tPDC) \citep{Baccala11}, which takes into account the instantaneous Granger causality.

\subsection{Asymptotic Properties of the PDC and its Variants}
\label{sec:asymp-pdc}

As stated in \citep{Sameshima1}, the problem of statistical connectivity inference, whether performed in the time or frequency-domain, actually involves two distinct problems:

\begin{itemize}

\item \textit{The connectivity detection problem}: To detect the presence of significant connectivity at a given frequency.
    
\item \textit{The connectivity quantification problem}: To determine the confidence interval of the estimated value when it is significant at a given frequency.
    
\end{itemize}

In \citep{Takahashi-2007}, it was shown that both problems can be rigorously examined from the perspective of asymptotic statistics.

The availability of confidence intervals based on a single trial makes it possible to consistently compare connection strengths under various experimental situations without the need to do repeated experiments based on ANOVA for inference \citep{Sameshima1}. In view of this, Baccalá et al. \citep{Baccala9} worked to determine the asymptotic behavior of the three previously introduced forms of PDC, showing that significant values of PDC (gPDC, iPDC) are asymptotically Gaussian, and this normality is not verified when there is no connectivity.

In what follows, we summarize the results obtained in \citep{Baccala9}, omitting the proofs and some details. Consider the VAR model (\ref{Var-1}). Let $n_{s}$ be the number of observed data points. Let $\theta^{T} = \alpha^{T} \sigma^{T}$ be the vector of parameters, where $\sigma = \mbox{vec}(\Sigma_{W})$ and $\alpha = \mbox{vec}(A_{1}...A_{p})$, with $\mbox{vec}(\cdot)$ denoting the \textit{vectorization operator} (i.e., the operator that converts a matrix into a column vector, by stacking all columns of the matrix). Note that the vector $\theta$ incorporates the dependence of $a_{ij}$ and $\sigma_{ij}$ according to the chosen PDC type.

The \textit{confidence intervals} and the \textit{limit of the null hypothesis} for the general form of the PDC (\ref{general}) can be calculated by dividing its parameter dependence on the parameter vector $\theta$, considering its decomposition into numerator and denominator:

\begin{equation}\label{frac-pdc}
|\pi_{ij}(f)|^{2} = \pi(\theta) = \frac{\pi_{n}(\theta)}{\pi_{d}(\theta)},
\end{equation}

\noindent where the subscripts $n$ and $d$ indicate ``numerator" and ``denominator", respectively. Thus, the following results are valid:

\begin{itemize}

\item \textit{Confidence Intervals}: For a large $n_{s}$, Equation (\ref{frac-pdc}) is asymptotically normal, i.e.
\begin{equation}\label{assymp}
\sqrt{n_{s}}(|\hat{\pi}_{ij}(f)|^{2} - |\pi_{ij}(f)|^{2}) \rightarrow \mathcal{N}(0, \gamma^{2}(f)),
\end{equation}

\noindent where $\gamma^{2}(f)$ is a frequency-dependent variance which depends on the PDC type.
    
 \item \textit{Null hypothesis threshold}: The variance $\gamma^{2}(f)$ is identically zero under the null hypothesis
\begin{equation}\label{H0}
H_{0}: |\pi_{ij}(f)|^{2} = 0,
\end{equation} 

\noindent therefore, Equation (\ref{assymp}) is no longer valid, i.e. asymptotic normality is no longer satisfied. This requires the consideration of the next term of the Taylor expansion of the asymptotic expression of (\ref{frac-pdc}), which has a dependency $\mathcal{O}(n_{s}^{-1})$. The resulting distribution corresponds to a linear combination of at least two $\chi_{1}^{2}$, with appropriate and frequency-dependent multiplication coefficients:
\begin{equation}\label{assymp}
n_{s}\bar{a}_{j}^{H}(f)S\bar{a}_{j}(f) (|\hat{\pi}_{ij}(f)|^{2} - |\pi_{ij}(f)|^{2}) \xrightarrow[]{d} \sum_{k=1}^{q} l_{k}(f)\chi_{1}^{2},
\end{equation}

\noindent where the coefficients $l_{k}(f)$ depend only on the numerator $\pi_{n}(\theta)$, and ``$\xrightarrow[]{d}$" \hspace{0.01in} designates the convergence in distribution.

\end{itemize}

 That is, when the null hypothesis (\ref{H0}) is not rejected (lack of connectivity), the PDC tends to a distribution $\chi_{1}^{2}$, and when (\ref{H0}) is rejected, the PDC tends to a normal distribution.

\subsection{Examples with Simulations}
\label{sec:examples-pdc}

Given a multivariate time series $X(n) = (x_{1}(n),..., x_{N}(n))$, by calculating the PDC between $x_{i}(n)$ and $x_{j}(n)$, for a given frequency $f$, for all $i$ and $j$, a weighted connectivity digraph can be constructed by drawing an arc from $x_j$ to $x_i$ (which correspond to the nodes) if and only if $|\pi_{ij}(f)|^{2} \neq 0$ (where $\pi_{ij}(f)$ represents the general formula (\ref{general})), where the arc weights are equal to the values $|\pi_{ij}(f)|^{2}$. The same applies to DTF.

In the following example, which is based on the article \citep{Baccala10}, we applied the iPDC and iDTF estimators in a toy model consisting of seven ($N=7$) linear difference equations as exposed in \citep{Baccala3}. As commented previously, like PDC, DTF is a multivariate estimator based on the concept of Granger causality, which can describe the causal influence of one time series on another at a given frequency, and an information-theoretic form of DTF, the iDTF, was introduced in \citep{Takahashi2, Takahashi}. The aim of this example is twofold: 1) to present how iPDC can be used to create a connectivity digraph; 2) to compare the differences in the connections generated by these estimators, which are direct and indirect in nature for iDFT, and only direct for iPDC. We emphasize that in this text we do not discuss in details the DTF/iDTF estimators, and it's up to the reader to consult \citep{Kaminski1, Takahashi} for more details.

\smallskip

\begin{example}\label{ex:example-idtf-ipdc}
Figures \ref{fig:idtf} and \ref{fig:ipdc} present the directed graphs produced by the iDTF and iPDC estimator (only statistically significant ($p < 0.01$) connections were considered), respectively, when applied to the following VAR model \citep{Baccala3}:
\begin{equation}
\begin{cases}
x_{1}(n) = 0.95\sqrt{2}x_{1}(n-1) - 0.9025x_{1}(n-2) + 0.5x_{5}(n-2) + w_{1}(n)\\
x_{2}(n) = -0.5x_{1}(n-1) + w_{2}(n)\\
x_{3}(n) = 0.4x_{1}(n-4) - 0.4x_{2}(n-2)  + w_{3}(n)\\
x_{4}(n) = -0.5x_{3}(n-1) + 0.25\sqrt{2}x_{4}(n-1) + 0.25\sqrt{2}x_{5}(n-1) + w_{4}(n)\\
x_{5}(n) = -0.25\sqrt{2}x_{4}(n-1) -0.25\sqrt{2}x_{5}(n-1) + w_{5}(n)\\
x_{6}(n) = 0.95\sqrt{2}x_{6}(n-1) - 0.9025\sqrt{2}x_{6}(n-2) + w_{6}(n)\\
x_{7}(n) = -0.1x_{6}(n-2) + + w_{7}(n)\\
\end{cases}
\end{equation}

\begin{figure}[h!]
\centering
\begin{subfigure}{.41\textwidth}
  \centering
  \includegraphics[scale=0.55]{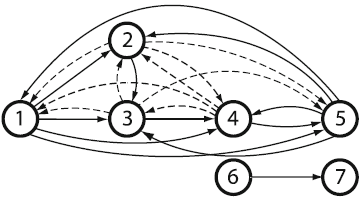}
  \caption{iDTF network.}
  \label{fig:idtf}
\end{subfigure}%
\begin{subfigure}{.41\textwidth}
  \centering
  \includegraphics[scale=0.55]{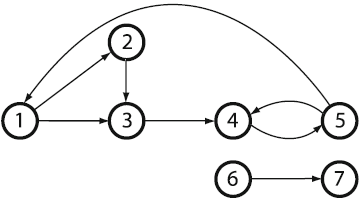}
  \caption{ iPDC network.}
  \label{fig:ipdc}
\end{subfigure}
\caption{Examples of iDTF and iPDC networks. Dashed lines represent weak connections (adapted from \citep{Baccala10}).}
\label{fig:graphs}
\end{figure}

The iDTF is an influenceability (reachability) estimator, i.e., it is unable to differentiate between direct and indirect interactions, whereas the iPDC is a directed connectivity estimator, i.e., it can detect direct interactions.
\end{example}

\section{Brain Connectivity Networks}
\label{sec:brain-connectivity}

In this last section, we describe the different types of brain connectivity networks, presenting some examples; afterwards, we present some concerns related to brain connectivity networks estimated from neurophysiological data, and we conclude with a literature review and some considerations about the state of the art of graph theoretical analysis and topological data analysis in modern network neuroscience research.

\subsection{The Different Types of Brain Connectivity Networks}

Brain connectivity refers to the ways in which different brain structures interact and influence or are influenced by each other during sensory, motor, or cognitive tasks. As mentioned previously, brain connectivity encompasses three main types of connectivity, namely: \textit{structural connectivity}, \textit{functional connectivity}, and \textit{effective connectivity} \citep{Friston-1994, Friston-2011, Horwitz, Lang, Sporns-Scholar, Sporns-2010}.


\textit{Structural (anatomical) connectivity} refers to a set of anatomical connections that link neural elements. These connections vary in scale, from large-scale networks of interregional routes to small circuits of individual neurons \citep{Sporns-2010}. They form the \textit{connectome} through which synapses between neighboring neurons and nerve fibers connect spatially distinct brain regions \citep{Sporns-2012, Sporns-2013, Sporns-2005}. At shorter time scales (seconds to minutes), anatomical connections can be considered as static, but at longer time scales (hours to days), they can be thought of as plastic or dynamic \citep{Sporns-2010}. Structural networks of the brain can be estimated through techniques such as diffusion tensor imaging (DTI) and fiber tractography \citep{Mukherjee}.

\textit{Functional connectivity} refers to the temporal interdependence of activation patterns of anatomically separate brain areas \citep{Sporns-2010}. It displays the statistical relationships between different and remote-located populations of neurons. It is dependent on statistical metrics, such as correlation, covariance, or spectral coherence \citep{Lang}. Since statistical relationships are highly reliant on time, the basis of functional connectivity analysis is time series data of neurophysiological activity, which can be extracted from EEG, MEG, fMRI, or other techniques. Unlike structural connectivity, functional connectivity does not necessarily depend on the anatomical connections of neuronal units. As mentioned in Section \ref{sec:brain-signals-connectivity}, examples of functional connectivity network estimators include correlation, coherence, mutual information, transfer entropy, and phase synchronization.

\textit{Effective connectivity} reflects the causal relationships between activated brain regions by characterizing the influence that one brain structure has on another \citep{Sporns-2010}. It may depict the directed effects inside a neuronal network by combining structural and effective connections. Effective connectivity is also time-dependent. Moreover, methods based on Granger causality may be used to infer causality from time series data of neurophysiological signals. As discussed in Section \ref{sec:brain-signals-connectivity}, examples of effective connectivity network estimators include transfer entropy, GCI, DC, PDC, DTF, and DCM.

Although the previous classification between functional and effective connectivity categories is widely accepted in the literature, it may not be accurate enough to describe multivariate models. Accordingly, Baccalá and Sameshima \citep{Baccala10} proposed an alternative classification based on two novel connectivity categories: \textit{G-connectivity} and \textit{G-influentiability}. G-connectivity (G stands for Granger) describes the direct, immediate, and active coupling between brain structures, but excludes active interactions that occur through intermediate (indirect) structures, and may be estimated by PDC (see Figure \ref{fig:effective-a}). G-influentiability, on the other hand, describes both direct and indirect active connections, and may be estimated by DTF (reachability) (see Figure \ref{fig:effective-b}). This description allows us to classify connectivity networks according to the nature of their links, as shown in Table \ref{tab:G-connectivity}.


\smallskip

\begin{table}[h!]
  \begin{center}
    \caption{G-connectivity/G-influentiability classification.}
    \label{tab:G-connectivity}
    \begin{tabular}{c c c}
      \toprule 
      & \textbf{Direct} & \textbf{Indirect} \\
      \midrule 
      Active & $PDC \neq 0$ &  $PDC = 0$ and $DTF \neq 0$ \\
      Inactive & $PDC = 0$  &  $DTF =0$   \\
      \bottomrule 
    \end{tabular}
  \end{center}
\end{table}

The dynamics of functional and effective connectivity networks may be analyzed through sliding window techniques. These techniques include selecting a temporal window of a specified length and using the data within it to estimate the connectivity network through a chosen measure. A (discrete) temporal network is created by shifting the window in time by a certain number of data points and repeating the procedure. This may be thought of as a quantification of the dynamics of the measure's behavior \citep{Hutchison}.

\smallskip

\begin{example}\label{ex:example1-structural-conn}
In this example, we constructed structural and functional connectivity networks based on results extracted from the article \citep{Cociu}. The authors used EEG, fMRI, and DTI data from three patients diagnosed with autism spectrum disorder (ASD) to investigate how structural brain networks correlate with functional brain networks. Here we used three regions of interest (ROIs) determined in their study, namely: the Precuneus/Posterior Cingulate Cortex (PCUN/PCC), the Left Parietal Cortex (LPC), and the Right Parietal Cortex (RPC). The functional connectivities between these ROIs (nodes) were estimated through several functional connectivity measures from EEG signals, for five frequency bands ($\delta$(1-4 Hz), $\theta$(4-8 Hz), $\alpha$(8-12 Hz), $\beta$(12-30 Hz), $\gamma$(30-45 Hz)), and we chose the results corresponding to the \textit{coherence} in the delta band (COH-$\delta$). The structural connectivities were obtained through DTI analysis, from which the number of white matter fibre tracts connecting the ROIs was estimated. Figure \ref{fig:structural-a} presents a structural connectivity network, along with its weighted adjacency matrix, in which the edge weights correspond to the number of tracts between the ROIs for to the results obtained for patient 1 (subject 1). Figure \ref{fig:structural-b} presents a functional connectivity network, along with its weighted adjacency matrix, in which the edge weights correspond to the coherence (COH-$\delta$) between the ROIs for the results obtained for the same patient.

\begin{figure}[h!]
\centering
\begin{subfigure}{.5\textwidth}
  \centering
  \includegraphics[scale=0.4]{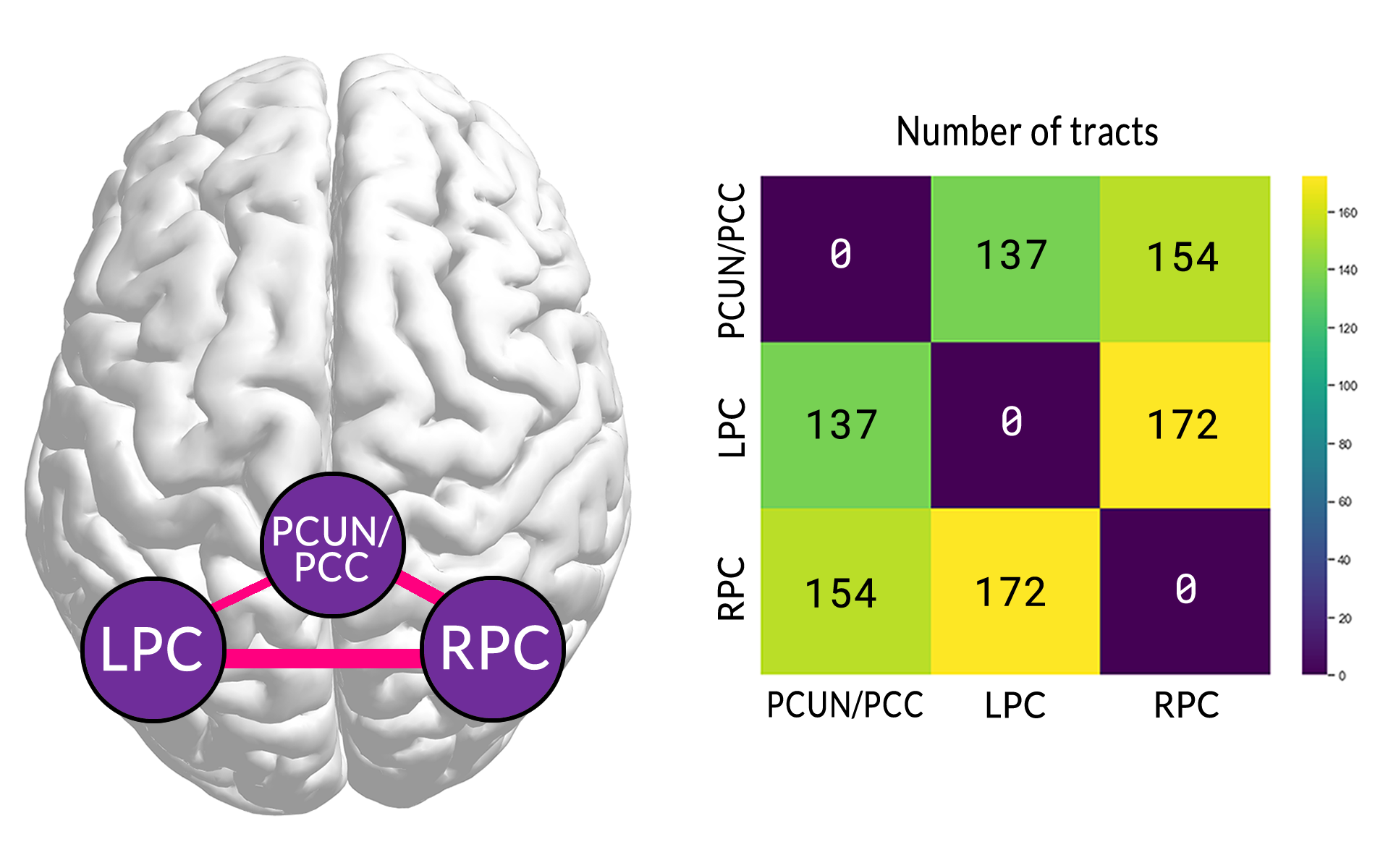}
  \caption{Structural connectivity.}
  \label{fig:structural-a}
\end{subfigure}%
\begin{subfigure}{.5\textwidth}
  \centering
  \includegraphics[scale=0.4]{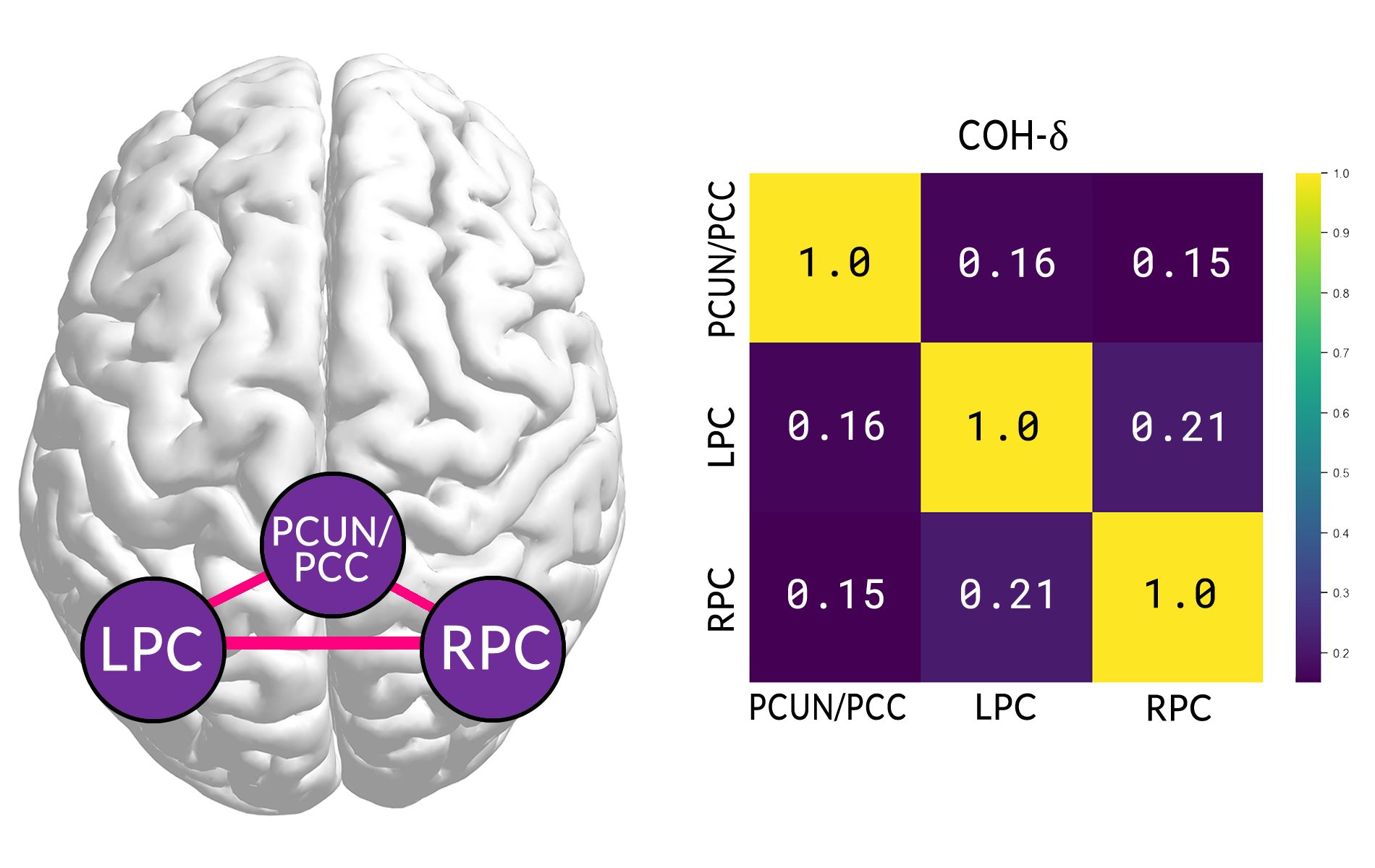}
  \caption{Functional connectivity.}
  \label{fig:structural-b}
\end{subfigure}
\caption{Structural and functional connectivity networks corresponding to patient 1 (subject 1). Structural connectivities correspond to the number of tracts between the ROIs, and the functional connectivities correspond to the coherence in the delta band (COH-$\delta$) between the ROIs (edge thickness is proportional to the magnitude of the connectivity measure) (3D model generated with the BrainNet Viewer software \citep{Xia}).}
\label{fig:structural-conn}
\end{figure}
\end{example}

\begin{example}\label{ex:example2-g-conn}
As commented in Example \ref{ex:example-idtf-ipdc}, PDC and DTF are \textit{directed} connectivity estimators, and thus it is possible to construct a connectivity digraph when applied to a multivariate time series. In this example, we constructed G-connectivity and G-influentiability networks based on data used in the article \citep{Baccala7} (the data were provided by the authors). The authors used EEG data from nine patients diagnosed with left/right mesial temporal lobe epilepsy (MTLE) to assess the seizure lateralization through graph theoretical analysis (GTA) of connectivity networks estimated via PDC. The records were performed using a total of 29 electrodes placed according to the international 10–20 system in referential montage, with a sampling rate of 200 Hz. We chose four channels (F7-Ref, F8-Ref, T5-Ref, T6-Ref) and computed the iPDC and iDTF estimators on a 10s epoch (patient 1), starting at seizure onset, in the delta frequency band (1-4 Hz). The iPDC and iDTF networks were estimated via the MATLAB package asympPDC (see Appendix \ref{appendix:software}), using the Nuttall-Strand algorithm to estimate the parameters of the VAR model, with a fixed order set at $p=2$, and the statistically significant connections were estimated asymptotically (see Subsection \ref{sec:asymp-pdc}) with a significance level of $0.1\%$. Figures \ref{fig:effective-a} and \ref{fig:effective-b} present the G-connectivity network obtained via iPDC and the G-influentiability network obtained via iDTF, respectively, along with their respective weighted adjacency matrices.


\begin{figure}[h!]
\centering
\begin{subfigure}{.5\textwidth}
  \centering
  \includegraphics[scale=0.4]{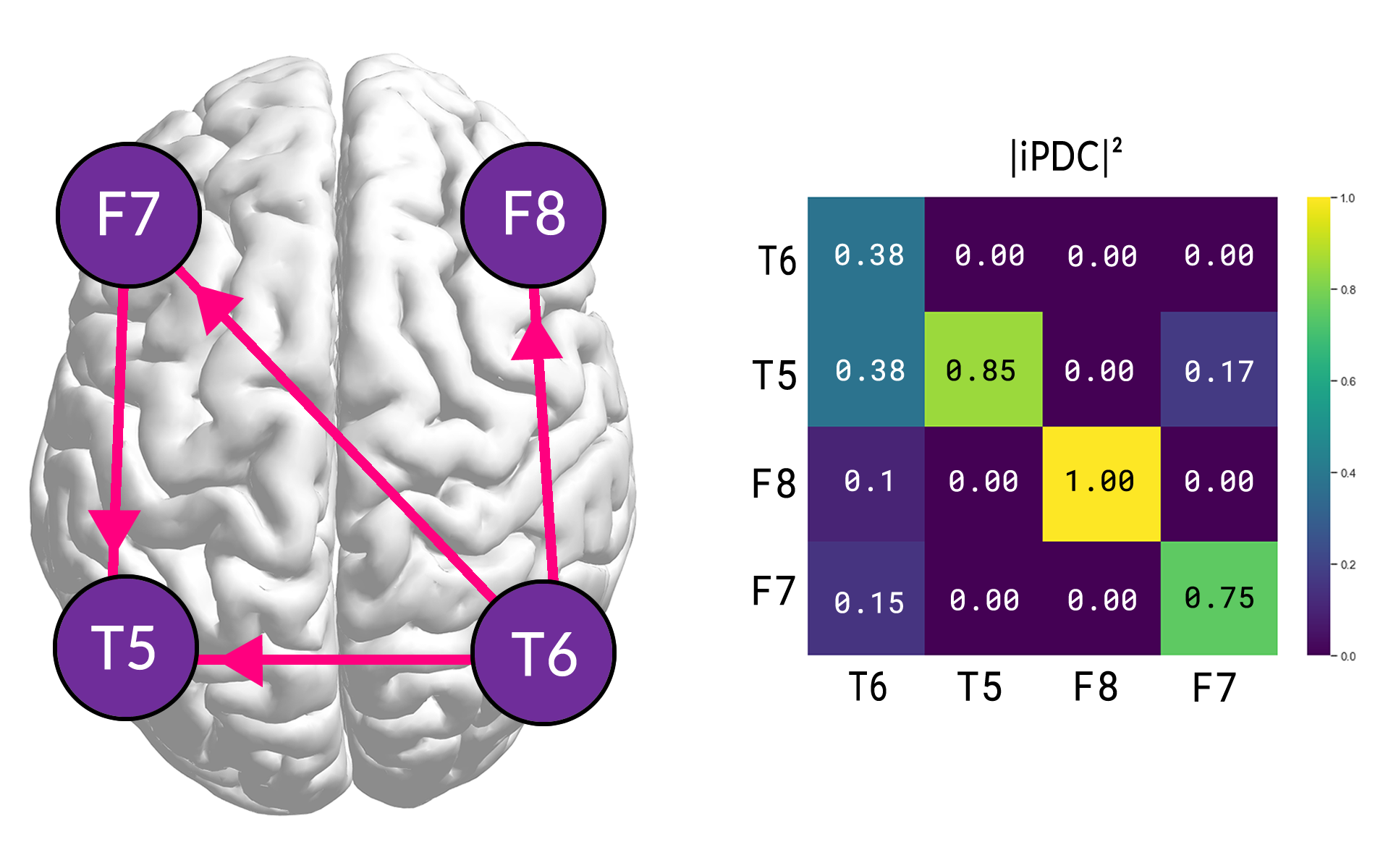}
  \caption{G-connectivity network (iPDC).}
  \label{fig:effective-a}
\end{subfigure}%
\begin{subfigure}{.5\textwidth}
  \centering
  \includegraphics[scale=0.4]{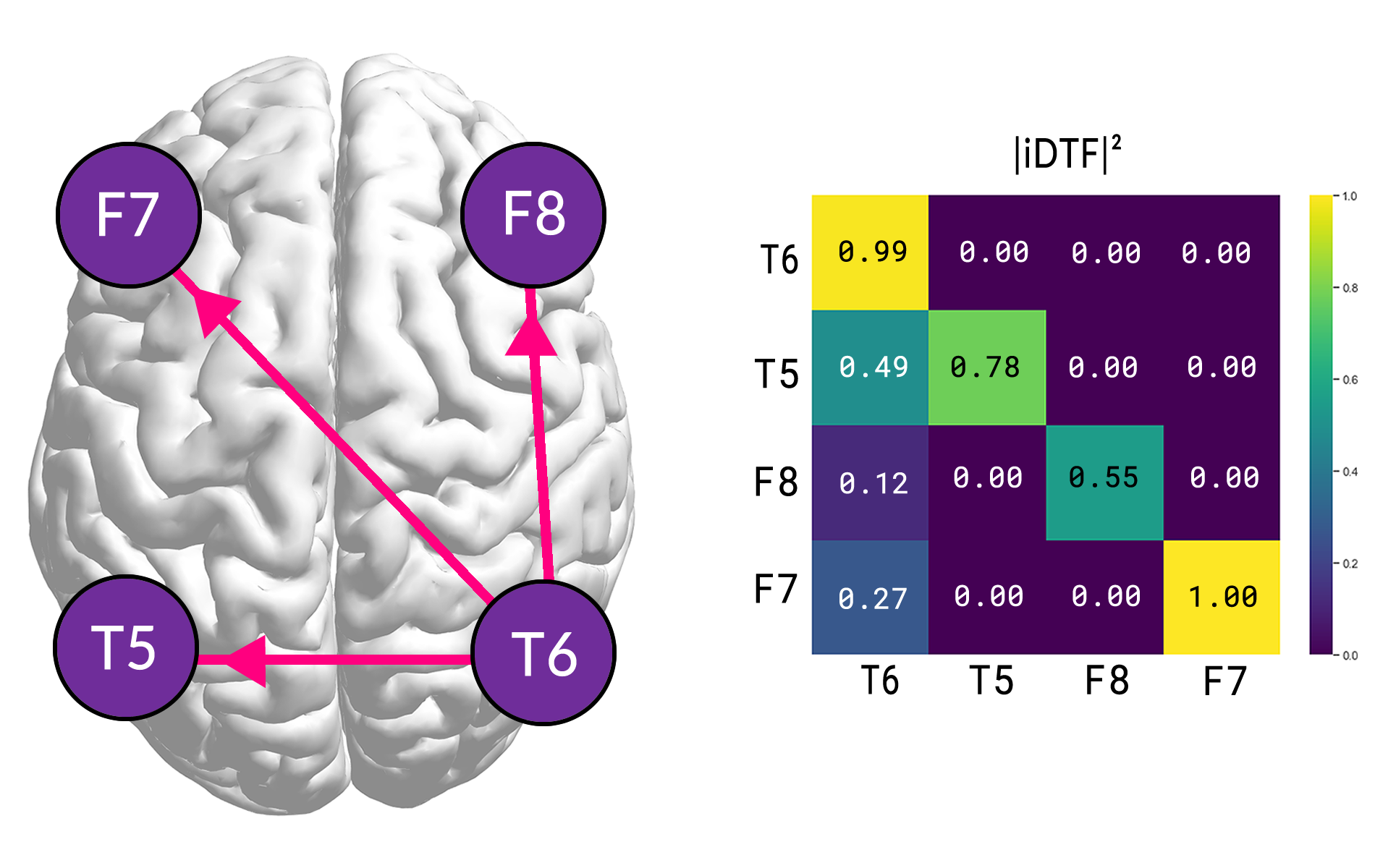}
  \caption{G-influentiability network (iDTF).}
  \label{fig:effective-b}
\end{subfigure}
\caption{G-connectivity (iPDC) and G-influentiablility (iDTF) digraphs, along with their weighted adjacency matrices (3D model generated with the BrainNet Viewer software \citep{Xia}).}
\label{fig:effective-conn}
\end{figure}
\end{example}

\subsection{Concerns about Brain Connectivity Networks}

There are some concerns about the estimation and validity of brain connectivity networks that should be considered in their analysis. Below, we list some of the main points that must be taken into consideration when obtaining and interpreting brain connectivity \citep{Chiarion, Lang, Sameshima1}.

\begin{itemize}

\item  \textbf{Variability of data extraction:} As briefly discussed in Section \ref{sec:brain-signals}, there are several techniques to detect and measure the neurophysiological activity of the brain, such as fMRI, EEG, and MEG. These techniques vary widely in terms of spatial and temporal resolution, and the specific aspects of brain activity that they measure. Therefore, the data acquisition method directly impacts the interpretation of the connectivity networks subsequently estimated.

\item  \textbf{Data preprocessing:} Data preprocessing procedures, such as motion correction, high pass filtering, and RETROICOR correction (RTC) for fMRI data, and downsampling and choice of reference for EEG data, may influence connectivity estimates. For instance, Výtvarová et al. \citep{Vytvarova} found that the strength of the connectivities in functional connectivity networks obtained from fMRI data was statistically significantly affected by high-pass filtering, and that the topology of these networks was affected by RTC.


\item \textbf{Selection of connectivity estimator:} As already discussed in Section \ref{sec:brain-signals-connectivity}, there are numerous measures for estimating brain connectivity. Measures such as correlation, coherence, phase synchronization, GCI, DC, PDC, DTF, and DCM may yield different or even conflicting results under similar conditions. Therefore, the choice of the connectivity estimator can have a decisive impact on the interpretation of the results.

\item \textbf{Volume conduction:} 
In Section \ref{sec:electroencephalography} we already discussed the phenomenon of volume conduction and how it affects EEG signals. Nevertheless, its effects may also impact the results of other techniques, such as MEG. Volume conduction effects can lead to false assumptions about the interactions between different brain regions. For instance, considering a functional connectivity network estimated via correlation from EEG signals, two regions might appear to be connected because their signal patterns are correlated. However, this correlation might be due to the signals from one region spreading to the electrodes near another region, rather than any true functional interaction between them. The effects of volume conduction are inevitable, but there are methods capable of reducing them \citep{He}. In particular, both DTF and PDC are affected by volume conduction \citep{Brunner}.

\item \textbf{Generalizability:} Brain connectivity networks are often estimated from data obtained from relatively small populations and in specific contexts (whether from healthy individuals or individuals with some neurological disorder); therefore, it is essential to take this limitation into account in the applicability of results to general populations and contexts.

\end{itemize}

All of the points mentioned above impact the analysis of brain connectivity networks; therefore, they should be taken into consideration when interpreting the results.

\subsection{Applications of Brain Connectivity Networks Analysis}
\label{sec:analysis-brain-networks}

As we saw in the previous sections, brain connectivity networks can be represented as graphs (sometimes referred to as \textit{brain graphs} \citep{Bullmore2}), where the nodes represent different brain structures and the edges represent specific associations between them (e.g., anatomical connections, statistical dependencies, or causal influences). Thus, they can be analyzed through algorithms and quantitative methods from graph theory and network science \citep{Bullmore1, Fallani3, Rubinov-2010, Sporns-2010, Stam-2007}. In recent years, graph theoretical analysis (GTA) of brain networks has proven to be a powerful analytic tool, and has been widely used in the area of network neuroscience \citep{Farahani, Fornito-2016}.

In Chapter \ref{chap:chap1}, we have already briefly discussed the applications of graph theory in the analysis of brain networks. Many of the GTA have shown that brain structural, functional, and effective networks may present, depending on the context, several non-trivial topological and organizational properties \citep{Zalesky-2012}, such as hierarchical organization \citep{Bassett-2008}, clustering, modularity \citep{Rubinov-2010, Sporns-2010}, presence of hubs (which may be interpreted as important brain regions) \citep{Fallani, Sporns-2007}, presence of structural and functional network motifs \citep{Sporns-2004c, Sporns-2010}, and small-world organization \citep{Achard, Bassett-2017, Papo-2016, Sporns-2004a}.

Moreover, graph theoretical analysis of brain networks has played an important role in the search for \textit{biomarkers} for different aspects of brain functioning. According to the National Institutes of Health (NIH) \citep{Biomarkers}, a biomarker is ``a characteristic that is objectively measured and evaluated as an indicator of normal biological processes, pathogenic processes or pharmacologic responses to a therapeutic intervention." Alterations in structural or functional connectivity of the brain may be used as biomarkers to detect structural and functional abnormalities in network connectivity patterns, which might be indicators of neurological disorders or specific cognitive processes \citep{Ward}. As mentioned in the systematic review carried out by Farahani et al. \citep{Farahani} of GTA studies of brain networks using fMRI data, graph theory applications in human cognition include to find biomarkers (in this case, fMRI-based biomarkers) for the human intelligence (e.g., shorter characteristic path lengths in functional networks were associated with better intellectual performances \citep{vandenHeuvel-2009}), working memory (e.g., better performance of working memory was associated with lower local efficiency \citep{Stanley}), aging brain (e.g., global efficiency was essentially unchanged over the lifespan, whereas local efficiency and the rich-club coefficient increased until adulthood in healthy individuals and decreased with age \citep{Farahani}), and behavioral performance in natural environments (e.g., small-world organizations were maintained in individuals in both normal and hyperthermia conditions, whereas decreased clustering coefficients, local efficiency, and small-worldness indices were observed in heat-exposed individuals \citep{Qian}), and applications in neurological disorders include to find biomarkers for disorders such as epilepsy (e.g., disruption of global integration and local segregation was observed in patients with chronic epilepsy \citep{Vlooswijk}), Alzheimer's disease (AD) (e.g., patients with AD showed a significantly decreased clustering coefficient and characteristic path length compared to healthy individuals \citep{deHaan}), multiple sclerosis (MS) (e.g., patients diagnosed with MS showed decreased global efficiency relative to healthy individuals \citep{Liu2016}), autism spectrum disorders (ASD) (e.g., in individuals diagnosed with ASD, modularity, clustering coefficient, and local efficiency are relatively reduced compared with healthy individuals \citep{Keown}), and attention-deficit/hyperactivity disorder (ADHD) (e.g., patients with ADHD showed increased local efficiency and decreased global efficiency compared to healthy individuals \citep{Wang-2009}). Be aware that the previous network properties were obtained for (\textit{undirected}) functional networks estimated through different methods.

Besides the fMRI-based biomarkers, particularly in the study of neuropathologies, the discovery of EEG-based biomarkers (electrophysiological biomarkers) is of special interest due to the portability and low cost of EEG compared to fMRI. Among the neuropathologies studied and the topological and functional characteristics found in connectivity networks obtained from EEG data, we can mention, for example \citep{Liu, Stam-2014}: patients with Parkinson's disease (PD) showed increased connectivity strength in the theta band compared with healthy controls \citep{Utianski}; patients with epilepsy showed abnormally regular functional networks relative to healthy controls \citep{Horstmann}; patients with schizophrenia (SCZ) showed a decreased clustering coefficient, decreased global and local efficiency, and increased average characteristic path length compared to healthy controls \citep{Yin2017}. Again, the previous network properties were obtained for (\textit{undirected}) functional networks estimated through different methods.


Recently explored approaches to the study of brain networks have involved concepts from computational (algebraic) topology and computational geometry, such as simplicial complexes, homology, homotopy, Betti numbers, and persistent homology (PH). The set of all these tools in data analysis, as we already mentioned in Chapter \ref{chap:chap1}, is identified by the umbrella term topological data analysis (TDA) \citep{Chazal}. Applications in network neuroscience include, for example, assessing neural functions and structures through clique topology and PH of clique complexes built out of functional or structural brain networks \citep{Giusti2016, Giusti2015, Petri, Reimann, Sizemore}, to characterize functional brain networks of patients with ADHD and ASD through PH \citep{Lee1-AD, Lee2-AD}, to characterize \citep{Merelli} and detect \citep{Fernandez, Piangerelli, Sun2023, Wang} epileptic seizures via PH, and to assess the dynamics of functional brain connectivity through persistence vineyards \citep{Yoo}. Also, concepts from Q-Analysis have been used to study functional connectivity networks \citep{Tadic} and human connectomes \citep{Andjelkovic2020}.

In summary, understanding the dynamic nature of brain networks is essential for understanding brain function and dysfunction in healthy individuals and individuals diagnosed with neurological disorders. As we have seen, there are a myriad of techniques, including EEG, fMRI, MEG, and mathematical, statistical, and computational methods, to estimate and study the organization, topology, and dynamics of brain networks, with the aim of identifying patterns of neural activity that are associated with specific brain processes. 

In the next chapter, we will discuss in more depth the role of brain connectivity in individuals with epilepsy and how GTA and TDA can be useful in the study of epileptic brain networks.

\chapter[Epilepsy as a Disorder of Brain Connectivity]{Epilepsy as a Disorder of Brain Connectivity}
\label{chap:epilepsy}

\epigraph{Hippocrates left behind him only a single discussion
of the function of the brain and the nature of consciousness. It was included in a lecture delivered to an audience
of medical men on epilepsia, the affliction that we still
call epilepsy. Here is an excerpt from this lecture, this
amazing flash of understanding: ``Some people say that
the heart is the organ with which we think and that it
feels pain and anxiety. But it is not so. Men ought to
know that from the brain and from the brain only arise
our pleasures, joys, laughter and tears. Through it, in particular, we think, see, hear and distinguish the ugly from
the beautiful, the bad from the good, the pleasant from
the unpleasant.... To consciousness the brain is messenger."
And again, he said: ``The brain is the interpreter of
consciousness." In another part of his discussion he remarked,
simply and accurately, that epilepsy comes from
the brain ``when it is not normal."}{--- Wilder Penfield \citep{Penfield}}

\bigskip

Disorders of brain connectivity encompass a wide range of conditions affecting neural connectivity, communication, and information processing in the brain. As we discussed in the previous chapter, several abnormalities in brain connectivity networks, whether structural, functional, or effective, have been observed in brain diseases such as Parkinson's disease, Alzheimer's disease, multiple sclerosis, schizophrenia, and epilepsy. The discovery of links between the topology of brain networks and neurological disorders has increasingly encouraged researchers to use mathematical methods from graph theory, network science, and computational topology to understand the changes in the dynamics of brain connectivity that underlie the symptoms and progression of specific neuropathologies, which could potentially lead to new diagnostic methods and therapeutic approaches.

In this chapter, we discuss in more detail the neuropathology of epilepsy, which is considered one of the most prominent brain network disorders. We discuss its general characteristics, the different types of epilepsy, the classification of epileptic seizures, the etiology, epidemiology, diagnosis, and treatment of epilepsy, and, finally, how it is characterized as a network disorder by exploring the topological/structural and functional properties of epileptic networks through methods from graph theory and computational topology.

\section{An Introduction to Epilepsy}
\label{sec:epilepsy-intro}

In this section, we present the main characteristics related to epilepsy, the different types of epilepsy, the classification of epileptic seizures, the etiology, epidemiology, diagnosis, and treatment of epilepsy.

\subsection{General Aspects of Epilepsy}

Epilepsy is one of the most common neurological disorders, present in populations all over the globe. It is a disorder of the central nervous system (CNS) typified by recurrent and non-induced seizures (defined as a transient period of excessively synchronous (hypersynchronous) abnormal neuronal activity manifested in the brain in a localized or generalized way) that occur over an interval of time, and which may occur spontaneously \citep{Alarcon, Frohlich, Wasade}.  Epilepsy is defined by the number and frequency of non-induced seizures and can be classified as \textit{focal} (seizure onset involves a specific area of the brain), \textit{generalized} (seizure onset involves both cerebral hemispheres), \textit{combined generalized and focal}, or \textit{unknown} (see next subsection), depending on the area of its origin in the brain \citep{Fisher-2017}.

The first phase of a seizure is known as \textit{pre-ictal phase} (or \textit{aura}), and it occurs immediately before the \textit{ictal phase}, which corresponds to the seizure itself. The \textit{post-ictal phase} is the period right after the ictal phase. The \textit{interictal period} corresponds to the period between seizures. However, determining the precise length of the pre-ictal, ictal, or post-ictal phase is not very clear and may vary depending on each case \citep{Bandarabadi}. Although some crises may last for a short time, most of them last from seconds to minutes.

The \textit{epileptogenic zone} (EZ) refers to the cortical regions involved in the genesis and propagation of the epileptiform activity. The EZ is most likely equivalent to the \textit{seizure focus}, which is defined as the location in the brain from whence the seizure began \citep{Nadler}. For focal epilepsies, the side (left or right cerebral hemisphere) of seizure onset (or the side of seizure focus) is known as the \textit{lateralization} of the seizure. The most common form of focal epilepsy is the \textit{temporal lobe epilepsy} (TLE).

The identification of graphoelements in EEG recordings (patterns in the EEG signal) is a commonly used method in the study of epilepsy. For instance, the majority of individuals with epilepsy exhibit typical \textit{interictal epileptiform discharges} (IEDs), often known as spike ($<70 \mu$s duration), spike-and-wave, or sharp-wave ($70–200 \mu$s duration) discharges \citep{StLouis}; in TLE, an EEG pattern that has been consistently observed is the \textit{temporal intermittent rhythmic delta activity} (TIRDA), which is characterized by an intermittent ($\ge $ 3s), rhythmic, 1-4Hz activity in the anterior temporal region \citep{Fox}.

The diagnosis of epilepsy is complicated by the fact that many of the signs and symptoms of epilepsy occur over brief, irregular periods of time (e.g., seizures, IDEs). This means that in clinical EEG examinations, brain activity may appear normal; even so, EEG is the most commonly used method to confirm the diagnosis of epilepsy. Moreover, the definitive diagnosis and choice of therapy is based on the analysis of the EEG by a specialist, which can take hours; for this and other reasons, the availability of automated systems capable of detecting epilepsy accurately (and eventually in real-time) would be of great value in clinical practice \citep{Covert, Wasade}. 

For patients diagnosed with epilepsy, antiepileptic drugs (AEDs) are the main form of treatment. However, some patients present drug-resistant epilepsy (DRE) \citep{Rao}, which is a pharmacoresistant form of the disease that represents one-third of epilepsies \citep{Fattorusso} and, in these cases, a surgical intervention may be necessary to resect or disconnect the supposed EZ \citep{Rosenow}.

\subsection{A Closer Look into the Neuropathology of Epilepsy}

In the previous subsection, we briefly discussed the general aspects related to the neuropathology of epilepsy. Now, we present in more detail the classification of the different types of seizures and epilepsies, the etiology, epidemiology, diagnosis, and treatment of epilepsy.

\subsubsection{Classification of Seizure Types and Epilepsies}

Epileptic seizures may be classified according to the type of onset, level of awareness, and responsiveness \citep{Fisher-2017a}. According to the International League Against Epilepsy (ILAE), epileptic seizures can be classified into \textit{focal onset}, \textit{generalized onset}, and \textit{unknown onset} \citep{Fisher-2017}:

\begin{itemize}
 \item \textbf{Focal onset (aware/impaired awareness):} These seizures originate from a specific area of the brain and may or may not involve loss of consciousness (impaired awareness). Seizures of this type can additionally be classified into \textit{motor onset} (automatisms, atonic, clonic, epileptic spasms, hyperkinetic, myoclonic, tonic); \textit{non-motor onset} (autonomic, behavior arrest, cognitive, emotional, sensory); and \textit{focal to bilateral tonic-clonic.}
 
\item  \textbf{Generalized onset:} These seizures involve both hemispheres of the brain from the onset and typically result in loss of consciousness (impaired awareness). Seizures of this type can additionally be classified into: \textit{motor onset} (tonic-clonic, clonic, myoclonic, myoclonic-tonic-clonic, myoclonic-atonic, epileptic spasms); and \textit{non-Motor (absence)} (typical, atypical, myoclonic, eyelid myoclonia).

\item \textbf{Unknown onset:} Seizures where their specific origin or onset cannot be determined with certainty. Seizures of this type can additionally be classified into: \textit{motor} (tonic-clonic, epileptic spasms); \textit{non-motor} (behavior arrest); and \textit{unclassified}.
\end{itemize}

In addition, the epilepsy type is classified into four categories, namely: \textit{generalized epilepsy}, \textit{focal epilepsy}, \textit{combined generalized and focal epilepsy}, and \textit{unknown epilepsy}.

\subsubsection{Etiology}

The causes of epilepsy can vary widely. Since there are many distinct processes governing the electrical activity of neurons, there exists a wide variety of possibilities that can disturb these processes, which can lead to a variety of reasons for epilepsy and seizures.  Accordingly, six etiologic categories for epilepsy have been established by the ILAE Task Force \citep{Scheffer}, namely: \textit{structural, genetic, infectious, metabolic, immunological}, and \textit{unknown}.

\begin{itemize}
\item \textbf{Structural etiology:} Structural etiology corresponds to the case in which abnormalities are found on structural neuroimaging that, together with results obtained from electrophysiological tests, suggest that the abnormalities are the cause of the patient's seizures.

\item \textbf{Genetic etiology:} Genetic etiology refers to the case in which the disorder is thought to be directly caused by a known or suspected genetic abnormality.

\item \textbf{Infectious etiology:} Infectious etiology refers to the case in which epilepsy results from a known infection. This is the most common etiology.

\item \textbf{Metabolic etiology:} Metabolic etiology refers to the case in which epilepsy results from a suspected metabolic disorder.

\item \textbf{Immunological etiology:} Immunological etiology refers to the case in which epilepsy results from an immunological disorder in the patient's organism.

\item \textbf{Unknown etiology:} Unknown etiology refers to the case in which the cause of the epilepsy is not known. 
\end{itemize}

\subsubsection{Epidemiology}

The incidence of epilepsy worldwide is contained in the range between 0.5 and $1.5\%$ and is higher in developing countries, mainly due to poor prevention conditions and lack of access to suitable treatments. The prevalence follows this $1\%$ trend and is also higher in developing countries.

Although the occurrence of a spontaneous seizure (i.e., without an identified cause) does not guarantee the diagnosis of epilepsy, the risk of a second seizure is around $40\%$. After a second seizure, the risk increases to almost $100\%$. In addition, up to $30\%$ of the patients diagnosed with epilepsy may present a pharmacoresistant form of the disorder (i.e., DRE), and $10\%$ may need a surgical resection or disconnection of the EZ \citep{Frohlich}.

\subsubsection{Diagnosis}

The first step towards diagnosing epilepsy is the occurrence of at least one non-inducted (unprovoked) seizure. Afterwards, the next steps of the diagnosis include medical history (e.g., family history of epilepsy, history of brain infection, traumatic brain injury, febrile seizures, etc.), physical examination (blood pressure test, skin exam, checking for signs of infection, cancer, etc.), EEG, and neuroimaging (e.g., MRI) \citep{Milligan}. Based on the results of these tests and evaluations, a neurologist or epileptologist can make a diagnosis of epilepsy and recommend an appropriate treatment.


\subsubsection{Treatment}

Epilepsy, to this day, is a neuropathology that cannot be cured. Nevertheless, there are several treatments available to control seizures in patients diagnosed with the disease, allowing them to live an unrestricted life. The three most common treatments for epilepsy are: administration of antiepileptic drugs (pharmacological treatment); surgical intervention to resect or disconnect the EZ (neurosurgical treatment); and neurostimulation \citep{Frohlich}.

\section{Grapho-Topological Characteristics of Epileptic Brain Connectivity Networks}

Epilepsy is, notoriously, a brain connectivity network disorder, characterized by a direct relation between abnormal network dynamics and clinical manifestations \citep{Frohlich}. A fact that has been constantly observed in the literature is that patients diagnosed with epilepsy present alterations in brain connectivity networks, whether structural, functional, or effective, when compared to healthy individuals \citep{Farahani, Liu, Stam-2014}. Many of these findings come from results obtained through graph theoretical analysis (GTA) and topological data analysis (TDA) of these connectivity networks, and that is the main reason why these methods have been widely used to try to answer questions such as: How do epileptic brain networks differ from healthy brain networks? How does brain network topology change in the ictal phase? How can epileptic networks be characterized? How to detect and predict epileptic seizures?

In what follows, we summarize some results obtained from studies that indicate the association of epilepsy with changes in brain networks, particularly associated with (undirected) structural and functional networks estimated from MRI/fMRI, EEG, sEEG, or DTI data.

\begin{itemize}
\item Bernhardt et al. \citep{Bernhardt} estimated structural connectivity networks through MR-based cortical thickness correlations from MRI data obtained from 122 patients with drug-resistant temporal lobe epilepsy (TLE) (52 males and 70 females; age range: 17-62) and 47 healthy controls (23 males and 24 females; age range: 18-66). They found that patients with TLE showed changes in the distribution of hubs, increased path lengths, increased clustering coefficients, and increased vulnerability to targeted attacks compared with healthy controls.

\item Liao et al. \citep{Liao} estimated functional connectivity networks through Pearson's correlation from resting-state fMRI data obtained from 18 patients with mesial temporal lobe epilepsy (MTLE) and 27 healthy controls. They found a pattern of significantly increased local connectivity and decreased global connectivity (typical characteristics of regular networks) in patients with MTLE compared with healthy controls.

\item Ponten et al. \citep{Ponten}, estimated functional connectivity networks through synchronization likelihood from sEEG data obtained from 7 patients diagnosed with MTLE. They found that, during the ictal phase, a change in the topology of the networks occurred, with an increase in the clustering coefficient and the characteristic path length, especially in the delta, theta, and alpha bands, which might suggest a change from a small-world organization (which seems to be characteristic of healthy individuals) to a more regular organization.

\item Bonilha et al. \citep{Bonilha} estimated structural connectivity networks from DTI data of 12 patients with MTLE (5 with left MTLE and 7 with right MTLE) and 26 healthy controls. They found that, in the thalamus, ipsilateral insula, and superior temporal region, patients with MTLE showed increased degree, local efficiency, clustering coefficient, and limbic network clustering compared with healthy controls.

\item Vlooswijk et al. \citep{Vlooswijk} estimated functional connectivity networks through Pearson's correlation from fMRI data obtained from 41 patients diagnosed with chronic epilepsy (20 males and 21 females; age range: 22-63) and 23 healthy controls (9 males and 14 females; age range: 18-58). They found a disruption of global integration and a disruption of local segregation in patients with epilepsy, and efficient small-world properties (high clustering coefficients and short characteristic path lengths) in healthy controls.

\item Horstmann et al. \citep{Horstmann} estimated functional connectivity networks through cross-correlation and mean phase coherence from EEG and MEG data obtained from 21 patients diagnosed with drug-resistant epilepsy (9 males and 12 females) and 23 healthy controls (12 males and 11 females). They found that epileptic brain networks showed abnormally regular functional networks relative to healthy controls. 


\item Merelli et al. \citep{Merelli} estimated functional connectivity networks through Pearson's correlation from EEG data obtained from 10 patients diagnosed with focal epilepsy (5 males and 5 females; age range: 0.5-19). From these networks, they constructed clique complexes and computed weighted persistent entropy for the pre-ictal and ictal phases. They found that the analysis of the persistent entropy can detect the transition between the pre-ictal and ictal phases.
\end{itemize}

Next, we present some results obtained for (directed) effective connectivity networks estimated from EEG data.

\begin{itemize}
\item Hu et al. \citep{Hu} estimated G-connectivity networks through PDC from EEG data obtained from 10 patients diagnosed with focal epilepsy (5 males and 5 females; age range: 0.5-19). They found that, in the delta band, the total degree (the sum of in-degree and out-degree) at the center lobe during the ictal phase was significantly lower compared with the interictal period, and the clustering coefficients were significantly increased in the frontal, parietal, and temporal lobes during the ictal phase compared with the interictal period.

\item Baccalá et al. \citep{Baccala7} estimated G-connectivity networks through PDC from EEG data obtained from 9 patients with left / right MTLE. They found that channels in the hemisphere corresponding to the seizure focus belong to strongly connected subdigraphs, suggesting a possible graph-based biomarker to identify laterality.

\item Coito et al. \citep{Coito} estimated G-connectivity networks through PDC from EEG obtained from 16 patients with TLE (eight with left TLE (LTLE) and eight right TLE (RTLE); 12 males and 4 females; age range: 15-56). They found significantly different patterns between the networks of the LTLE and RTLE groups: ipsilateral predominance in LTLE and bilateral predominance in RTLE.

\end{itemize}

As we have seen, different data acquisition methods, connectivity estimators, and network quantifiers may provide different interpretations of the same phenomenon. However, it is notorious across these different analysis modalities that there are discrepancies between connectivity networks of healthy individuals and individuals diagnosed with some type of epilepsy, as well as alterations in the brain networks during the ictal phase compared with other periods. Therefore, it is worth searching for graph-based biomarkers for the characterization of epileptic brain networks.

\chapter[Quantitative Graph/Simplicial Analysis of Epileptic Brain Networks]{Quantitative Graph/Simplicial Analysis of Epileptic Brain Networks}
\label{chap:chap6}

\epigraph{Le hasard n'est que la mesure de notre ignorance. Les phénomènes fortuits sont, par définition, ceux dont nous ignorons les lois. \\
(Chance is only the measure of our ignorance. Fortuitous phenomena are, by definition, those whose laws we are ignorant of.)}{--- Henri Poincaré \citep{Poincare}}

\bigskip


In this chapter, we apply some of the simplicial characterization measures and simplicial similarity comparison distances introduced in Chapter \ref{chap:chap5} to directed flag complexes built out of G-connectivity networks estimated through iPDC from EEG signals of patients diagnosed with left temporal lobe epilepsy. 

More specifically, we applied the iPDC estimator to epileptic EEG signals and constructed G-connectivity networks in three different frequency bands (delta, theta, and alpha) for the left and right brain hemispheres of each patient, and for three seizure phases (pre-ictal, ictal, and post-ictal phase). Subsequently, we constructed directed flag complexes from these networks, computed some chosen simplicial characterization measures (lower variants) and simplicial similarity comparison distances for each hemisphere, frequency, and seizure phase, in five levels of organization $q$ ($q=0,1,2,3,4$) of the complexes, and, finally, we performed a statistical analysis to evaluate the seizure phases and lateralization in each frequency band, in each hemisphere, and at each level $q$. Furthermore, we computed the standard graph measures introduced in Chapter \ref{chap:chap2} and compared their statistical results with the results obtained for their simplicial analogues.

The aim of this analysis is twofold: 1) to understand how epileptic networks and their higher-order counterparts change throughout different seizure phases, in different frequency bands, and for each cerebral hemisphere; 2) to identify novel biomarkers for epileptic brain networks associated with their higher-order structures and higher-order connectivities.

\section{EEG Data Acquisition and Preprocessing}
\label{sec:eeg-data-preprocessing}

The EEG data used in this study was obtained from the Siena Scalp EEG Database (SSED)\footnote{Public available at \url{https://physionet.org/content/siena-scalp-eeg/1.0.0/}} \citep{Detti}, which consists of EEG recordings acquired from 14 patients (9 males and 5 females), diagnosed with epilepsy, at the Unit of Neurology and Neurophysiology at the University of Siena, Italy. The EEG signals were recorded using a Video-EEG with a sampling rate of 512 Hz, and were made available in the European Data Format (EDF).  Three types of seizures were identified and classified according to the criteria of the ILAE, namely:  focal onset impaired awareness, focal onset without impaired awareness, and focal to bilateral tonic–clonic. Moreover, the documentation includes annotations on the start and end times of the ictal phase for each recording.

From the SSED, we selected eight patients with left temporal lobe epilepsy (TLE) based on the quality of the signal. For these patients, the records were performed using a total of 29 electrodes (Fp1, F3, C3, P3, O1, F7, T3, T5, Fc1, Fc5, Cp1, Cp5, F9, Fz, Cz, Pz, Fp2, F4, C4, P4, O2, F8, T4, T6, Fc2, Fc6, Cp2, Cp6, F10), placed according to the international 10–20 system in a referential montage. Table \ref{tab:table-patients} presents detailed information about the selected patients: columns 1 (Pat. id) presents the patient identification; columns 2 (Age) reports their ages; column 3 (Gender) reports their gender; column 4 (Localization) reports the location of the seizure focus; column 5 (Lateralization) reports the lateralization (left or right) of the seizure focus; column 6 (Time) presents the total registration time in minutes. 

\begin{table}[h!]
  \center
    \caption{Patient information.}
    \label{tab:table-patients}
    \begin{tabular}{c c c c c c}
      \toprule 
{\bf Pat. id} &  {\bf Age } & {\bf Gender } & {\bf Localization }  & {\bf Lateralization } & {\bf Time (min)}\\
      \midrule 
PN01 & 46 & Male & Temporal Lobe & Left &  809 \\
PN06 & 36 & Male & Temporal Lobe & Left &  722 \\
PN07 & 20 & Female  & Temporal Lobe & Left & 523  \\
PN09 & 27 & Female & Temporal Lobe & Left &  410 \\
PN12 & 71 & Male & Temporal Lobe & Left &  246 \\
PN13 & 34 & Female & Temporal Lobe & Left & 519  \\
PN14 & 49 & Male & Temporal Lobe & Left &  1408 \\
PN16 & 41 & Female & Temporal Lobe & Left & 303  \\
      \bottomrule 
    \end{tabular}
  
\end{table}

\smallskip

Figure \ref{fig:preprocessing} presents the whole preprocessing workflow performed on the EEG data. Specifically, the preprocessing was performed in EEGLAB and included the following steps: (1) The raw EEG signals were downsampled from 512 Hz to 256 Hz. (2) A 1 Hz high-pass filter was applied to each channel. (3) A notch filter at 50Hz (European standard power line frequency) was applied to each channel using the CleanLine\footnote{\url{https://www.nitrc.org/projects/cleanline}} EEGLAB plugin. (4) The midline electrodes (Fz, Cz, and Pz) were removed as they may be contaminated with the electrical activities of both hemispheres, thus a final configuration with 26 channels was obtained, as is schematically represented in Figure \ref{fig:montage-siena}. (5) Finally, an independent components analysis (ICA) (employing the InfoMax algorithm) was performed to remove artifactual components. Figure \ref{fig:eeg-pre-ic-po} presents a cut of 162 seconds of preprocessed signal (with the midline electrodes) as an example.

\bigskip

\begin{figure}[h!]
    \centering
\includegraphics[scale=1.0]{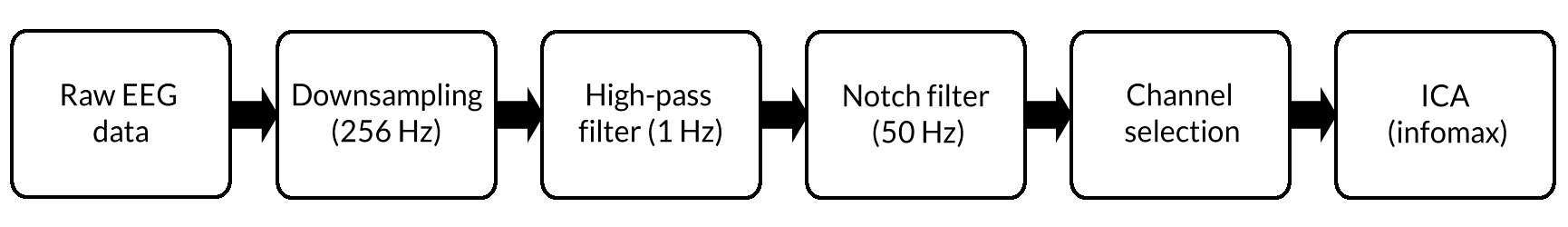}
    \caption{EEG data preprocessing workflow.}
    \label{fig:preprocessing}
\end{figure}

\bigskip

\begin{figure}[H]
    \centering
\includegraphics[scale=0.33]{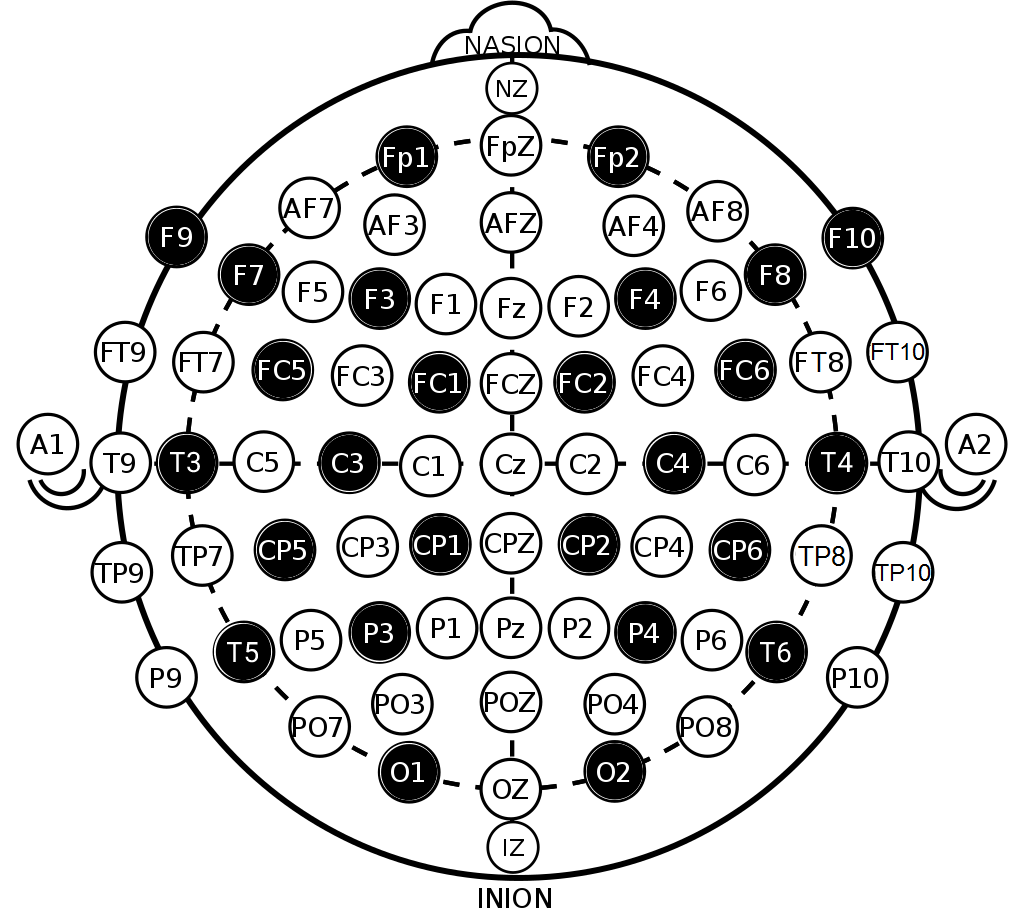}
    \caption{Configuration of the 26 selected electrodes (represented in black), according to the international 10–20 system.}
    \label{fig:montage-siena}
\end{figure}

\begin{figure}[H]
    \centering
\includegraphics[scale=0.4]{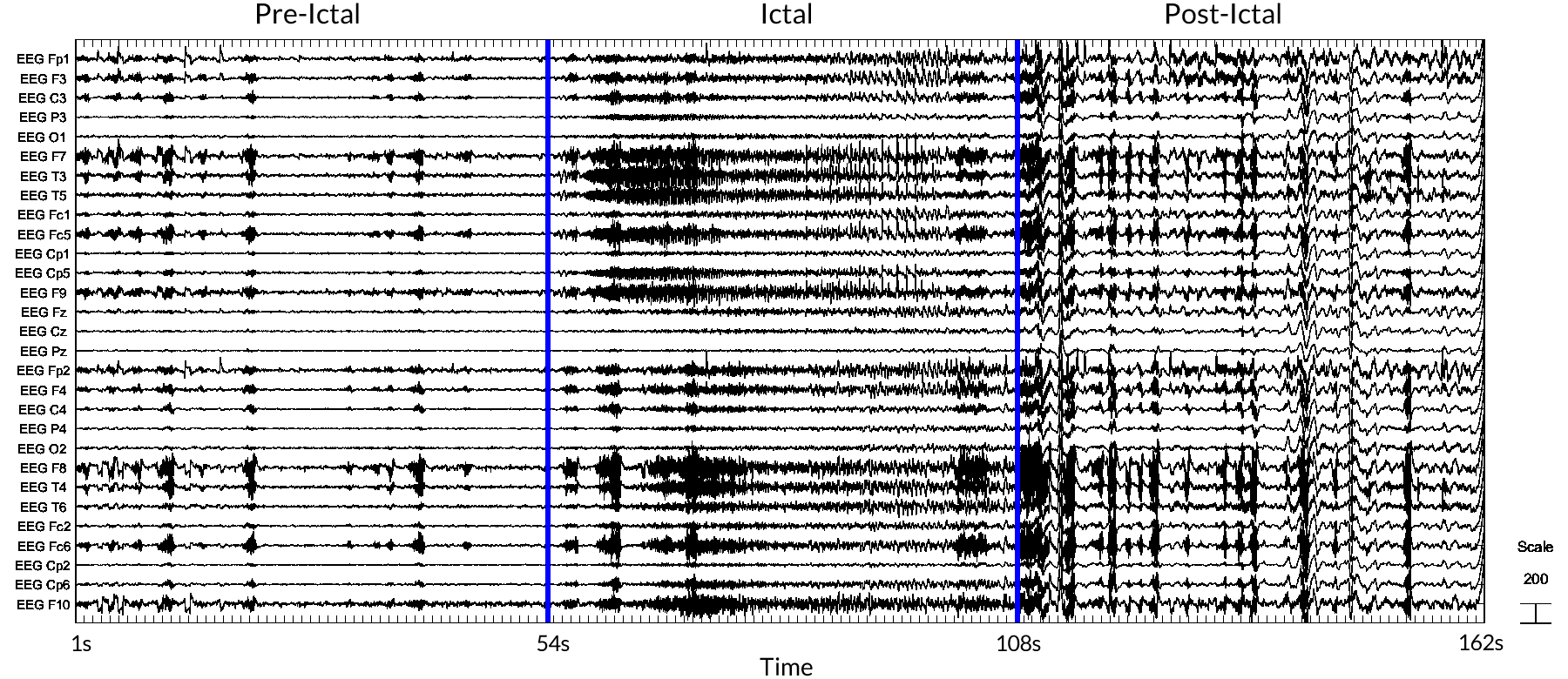}
    \caption{Example of preprocessed signal corresponding to patient PN01 (crisis 1). This signal represents a 162s cut from the original signal, in which the first 54s correspond to the pre-ictal phase, followed by 54s of the ictal phase, and the final 54s correspond to the post-ictal phase (separated by blue lines). The vertical scale is set to 200 $\mu V$.}
    \label{fig:eeg-pre-ic-po}
\end{figure}






\section{Analysis of Seizure Phases and Lateralization in Left Temporal Lobe Epilepsy}

In this section, we present the methodology and the results of the analysis performed on the preprocessed EEG data described in the previous section.

\subsection{Methodology}

\subsubsection{Brain G-Connectivity Networks}

We started our analysis by identifying the pre-ictal, ictal, and post-ictal phases in the preprocessed EEG signals based on the annotations provided together with the data, and then dividing the signals from each phase into 30s epochs: 30s immediately before the seizure, 30s starting on the seizure onset, and 30s immediately after the seizure. 


Afterwards, we applied the iPDC estimator on each 30s epoch using a sliding window technique with fixed-size windows of 10s and $80\%$ overlap, for three different frequency bands, namely: delta band [1 Hz, 4 Hz) (here, we are considering only frequencies $\ge$ 1 Hz), theta band [4 Hz, 8 Hz), and alpha band [8 Hz, 14 Hz). This procedure resulted in 11 weighted digraphs, or G-connectivity networks (corresponding to a discrete-time dynamic digraph), for each seizure phase in each of the frequency bands, where the weights correspond to the strengths of the connections. 

The iPDC networks were computed via the MATLAB package \texttt{asympPDC} version 3.0 (see Appendix \ref{appendix:software}). The VAR autoregression coefficients of these networks were estimated through the Nuttall-Strand's method. The order for the VAR models was estimated by three different information criteria, namely: Akaike's information criterion, Hannan-Quinn's criterion, and  Bayesian–Schwartz's criterion, which yielded an order equal to 2; however, the connectivity patterns differed just slightly from models with a fixed order set to 12, therefore we chose to use a fixed order equal to 12. Only the statistically significant connections were considered, and they were estimated asymptotically (see Subsection \ref{sec:asymp-pdc}) with a significance level of $0.1\%$. Figures \ref{fig:pip-digraphs1}, \ref{fig:pip-digraphs2}, and \ref{fig:pip-digraphs3} present examples of the dynamic iPDC networks (in discrete time) obtained for patient PN01 for the pre-ictal, ictal, and post-ictal phases in the delta band, respectively.

After obtaining the iPDC networks, as we are not taking the weights into account, we converted all of them into binary networks. Also, we removed all double edges in such a way that the clique structures of the networks were preserved. Subsequently, we separated each of these networks into two networks as follows: we separated the nodes corresponding to the left hemisphere from the nodes corresponding to the right hemisphere, preserving the intrahemispheric connections in both hemispheres, disregarding the interhemispheric connections, thus obtaining two networks, one for each hemisphere. We performed this procedure for each of the three seizure phases (pre-ictal, ictal, post-ictal) in each frequency band (delta, theta, alpha).

Furthermore, bearing in mind the density effects \citep{vanWijk}, it is important to clarify that no explicit density normalization was imposed across epileptic phases, since the main objective was to quantify the spontaneous emergence of directed higher-order structures and directed higher-order connectivities in the original iPDC networks. In epileptic phase transitions, increases in connectivity density may themselves reflect neurophysiologically relevant processes; therefore, removing density effects could suppress meaningful topological signatures of seizure dynamics.


\subsubsection{Simplicial Characterization Measures}

In order to analyze the directed higher-order connectivity and higher-order topology of the networks through simplicial characterization measures, we computed the lower $q$-digraphs (we will omit the term ``lower" from now on), for $q=0,1,2,3,4$, for the networks from the right and left hemispheres, for each seizure phase in each frequency band. Figures \ref{fig:fig-q-analysis1} and \ref{fig:fig-q-analysis2} present examples of $q$-digraphs, for $q=0,1$, constructed from networks of the right and left hemispheres of patients PN01 and PN12, respectively.

Due to the limited nature of this study, we chose nine among all simplicial measures introduced in Chapter \ref{chap:chap5}. We chose six global measures and three local measures, and at least one belonging to each of the five categories of simplicial measures (lower variants) (excluding the discrete curvatures):

\begin{itemize}

\item Distance-based simplicial measures:  Global $q$-efficiency (global);

\item Simplicial centralities: In- and out-$q$-degree centralities (local),  $q$-harmonic centrality (local), and global $q$-reaching centrality (global);

\item Simplicial segregation measures: Average  $q$-clustering coefficient (global);

\item Simplicial entropies: In- and out-$q$-degree distribution entropies (global);

\item Spectrum-related simplicial measures: $q$-energy (global).

\end{itemize}

Bearing in mind that the order (N) of the network can impact the simplicial measures that depend on N (N-dependent), we used a fixed N ($\mathrm{N}_{max}$) equal to the largest N among all $q$-digraphs, for $q=0,1,2,3,4$,  for all seizure phases in all frequency bands, for both hemispheres of all patients. For each simplicial measure and for each patient, we computed the mean and the standard deviation between the $q$-digraphs constructed from each of the 11 digraphs, for each seizure phase in each frequency band, and for both hemispheres. For the local simplicial measures, we computed the mean of the maximum values obtained among all nodes.

Furthermore, we applied the usual graph measures presented in Chapter \ref{chap:chap2} corresponding to their simplicial analogues exposed above in the iPDC networks of the right and left hemispheres, following the same procedure described above for the simplicial measures.

Both the construction of the $q$-digraphs and the computation of the simplicial and graph measures were carried out through the Python package \texttt{DigplexQ} (see Appendix \ref{appendix:software}).


 \begin{figure}[h!]
    \centering
\includegraphics[scale=0.5]{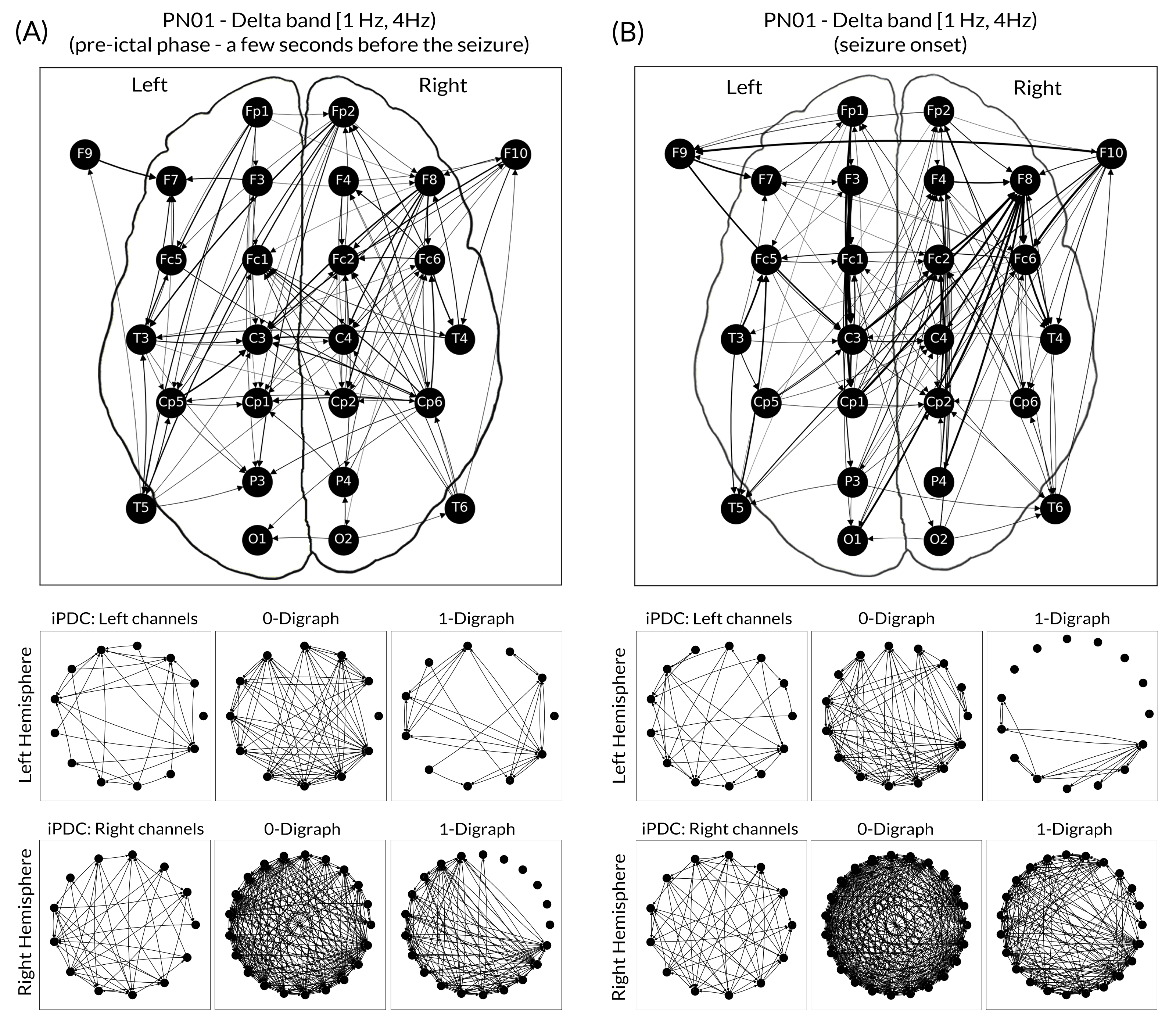}
 \caption{iPDC networks (arcs thicknesses are proportional to the weights) of patient PN01 computed in a 10s interval a few seconds before the seizure (A) and in a 10s interval starting on seizure onset (B) in the delta band, together with the (weightless) $q$-digraphs corresponding to the right and left hemispheres networks (considering only intrahemispheric connections) for $q=0,1$.}
 \label{fig:fig-q-analysis1}
\end{figure}

\begin{figure}[h!]
    \centering
\includegraphics[scale=0.5]{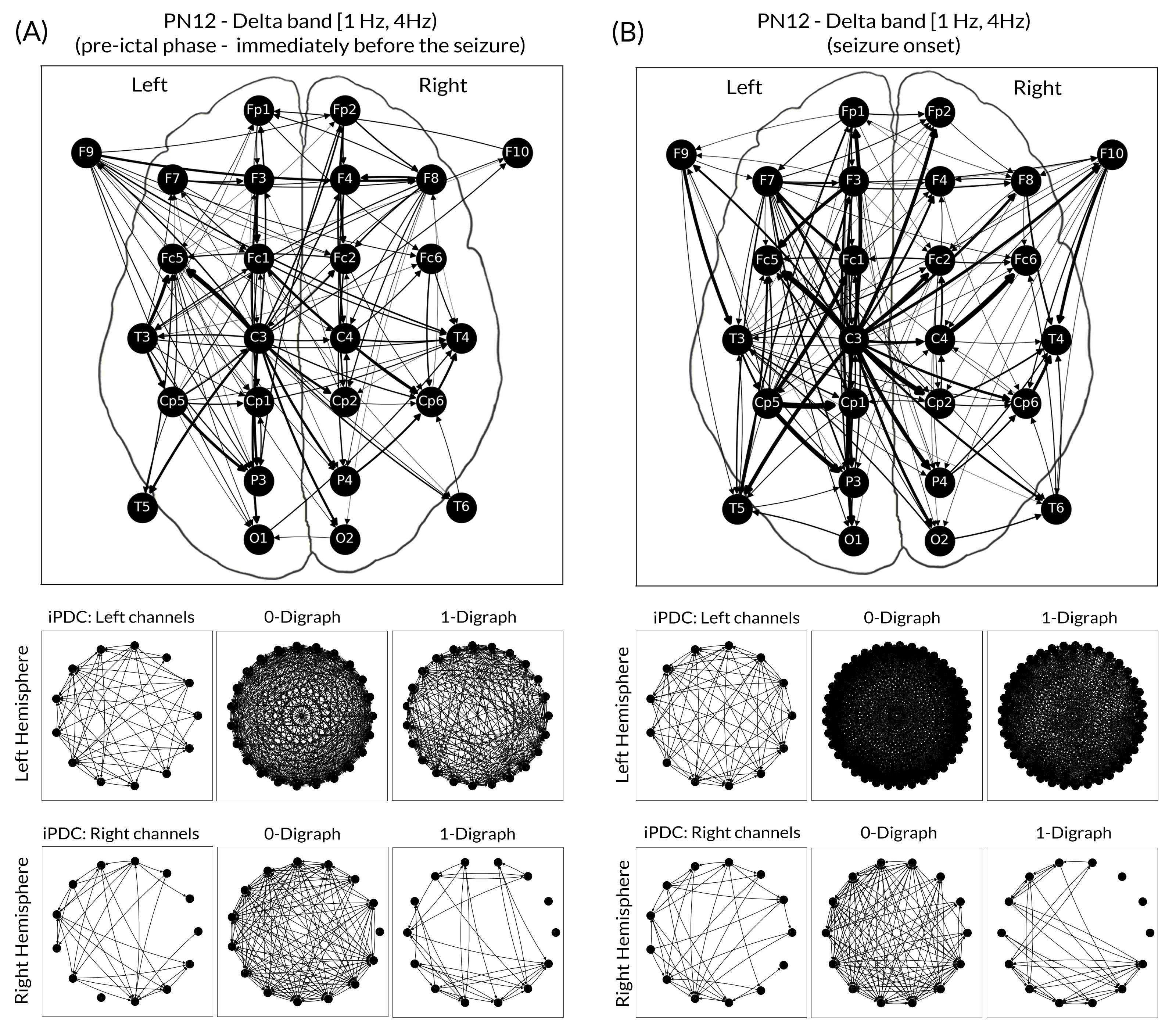}
 \caption{iPDC networks (arcs thicknesses are proportional to the weights) of patient PN12 computed in a 10s interval immediately before the seizure (A) and in a 10s interval starting on seizure onset (B) in the delta band, together with the (weightless) $q$-digraphs corresponding to the right and left hemispheres networks (considering only intrahemispheric connections) for $q=0,1$.}
 \label{fig:fig-q-analysis2}
\end{figure}

\subsubsection{Simplicial Similarity Comparison Distances}

For the simplicial distances, we only considered the iPDC networks of the right and left hemispheres, i.e., we did not take into account their $q$-digraphs. As in the case of the simplicial characterization measures, due to the limited nature of this study, we chose eight simplicial distances, namely: bottleneck distance, Wasserstein distance, Betti distance, 1st, 4th, and 5th topological structure distances, histogram cosine kernel, and Jaccard kernel. Also, for the bottleneck, Wasserstein, Betti, 4th, and 5th topological structure distances, we only considered the 0-th Betti numbers, and for the histogram cosine kernel and Jaccard kernel, we considered the directed flag complexes associated with the networks.

Let $d$ be any of the eight chosen simplicial distances. Given two digraphs, $G_{1}$ and $G_{2}$, let's denote by $d(G_{1}, G_{2})$ the value produced by the distance $d$ in the corresponding structures obtained from these two digraphs, for example, if $d$  is the bottleneck distance, then $d(G_{1}, G_{2}) = d_{W_{\infty}}(P_{1}, P_{2})$, where $P_{1}$ and $P_{2}$ are the persistent diagrams associated with $G_{1}$ and $G_{2}$, respectively, and if $d$ is the Jaccard kernel, then $d(G_{1}, G_{2}) = K_{J}(\mathcal{X}_{1}, \mathcal{X}_{2})$, where $\mathcal{X}_{1}$ and $ \mathcal{X}_{2}$ are the directed flag complexes of $G_{1}$ and $G_{2}$, respectively. For each of the simplicial distances, we produced different distributions, for each patient, according to the cerebral hemisphere and seizure phase, for every frequency band: 

\begin{itemize}
\item Distributions $d(G_{ic}^{R}, G_{ic}^{R})$ and $d(G_{ic}^{L}, G_{ic}^{R})$, where $G_{ic}^{R}$ are randomly chosen digraphs from the right hemisphere and $G^{L}_{ic}$ are randomly chosen digraphs from the left hemisphere corresponding to the ictal phase (without repetition). 

\item Distributions $d(G_{pre}^{L}, G_{pre}^{L})$ and $d(G_{pre}^{L}, G_{ic}^{L})$, where $G_{pre}^{L}$ are randomly chosen digraphs from the left hemisphere corresponding to the pre-ictal phase, and $G_{ic}^{L}$ are randomly chosen digraphs from the left hemisphere corresponding to the ictal phase (without repetition).

\item Distributions $d(G_{pos}^{L}, G_{pos}^{L})$ and $d(G_{ic}^{L}, G_{pos}^{L})$, where $G_{pos}^{L}$ are randomly chosen digraphs from the left hemisphere corresponding to the post-ictal phase, and $G_{ic}^{L}$ are randomly chosen digraphs from the left hemisphere corresponding to the ictal phase (without repetition).

\item Distributions $d(G_{pre}^{R}, G_{pre}^{R})$ and $d(G_{pre}^{R}, G_{ic}^{R})$, where $G_{pre}^{R}$ are randomly chosen digraphs from the right hemisphere corresponding to the pre-ictal phase, and $G_{ic}^{R}$ are randomly chosen digraphs from the right hemisphere corresponding to the ictal phase (without repetition).

\item Distributions $d(G_{pos}^{R}, G_{pos}^{R})$ and $d(G_{ic}^{R}, G_{pos}^{R})$, where $G_{pos}^{R}$ are randomly chosen digraphs from the right hemisphere corresponding to the post-ictal phase, and $G_{ic}^{R}$ are randomly chosen digraphs from the right hemisphere corresponding to the ictal phase (without repetition).

\end{itemize}

Afterwards, we computed the means and standard deviations for each distribution, in each frequency band, for each patient.

The computation of the simplicial distances was carried out through the Python package \texttt{DigplexQ} (see Appendix \ref{appendix:software}).

\subsubsection{Statistical Analysis}

Finally, we performed the following statistical analysis:

\begin{itemize}
\item For each simplicial measure, Wilcoxon paired tests, at a significance level $\alpha = 0.05$, were performed to verify the statistical differences between the simplicial measure of the pre-ictal phase and ictal phase as well as of the ictal phase and post-ictal phase, in each frequency band (delta, theta, alpha) and at each level $q=-1,0,1,2,3,4$ (where $q=-1$ represents the original networks), for the right and left hemispheres separately. In addition, we computed Pearson's correlation between the simplicial measures across the levels $q=-1,0,1,2,3,4$ for each seizure phase (pre-ictal, ictal, and post-ictal), and also across the seizure phases at each level $q=-1,0,1,2,3,4$, both for each frequency band and for each cerebral hemisphere.

\item For each simplicial distance, Wilcoxon paired tests, at a significance level $\alpha = 0.05$, were performed to verify the statistical differences between the means of the following distributions: $d(G_{ic}^{R}, G_{ic}^{R})$ and $d(G_{ic}^{L}, G_{ic}^{R})$ (to verify the differences between the left and the right hemispheres in the ictal phase); $d(G_{pre}^{L}, G_{pre}^{L})$ and $d(G_{pre}^{L}, G_{ic}^{L})$ (to verify the differences between the pre-ictal phase and the ictal phase in the left hemisphere); $d(G_{pos}^{L}, G_{pos}^{L})$ and $d(G_{ic}^{L}, G_{pos}^{L})$ (to verify the differences between the ictal phase and the post-ictal phase in the left hemisphere); $d(G_{pre}^{R}, G_{pre}^{R})$ and $d(G_{pre}^{R}, G_{ic}^{R})$ (to verify the differences between the pre-ictal phase and the ictal phase in the right hemisphere); and $d(G_{pos}^{R}, G_{pos}^{R})$ and $d(G_{ic}^{R}, G_{pos}^{R})$ (to verify the differences between the ictal phase and the post-ictal phase in the right hemisphere), for each frequency band.
\end{itemize}

The statistical analysis was carried out through the Python statistical package \texttt{Pingouin} (see Appendix \ref{appendix:software}).

Figure \ref{fig:two-analysis-pipeline} summarizes the analysis workflow, including the preprocessing step.

\begin{figure}[h!]
    \centering
\includegraphics[scale=1.05]{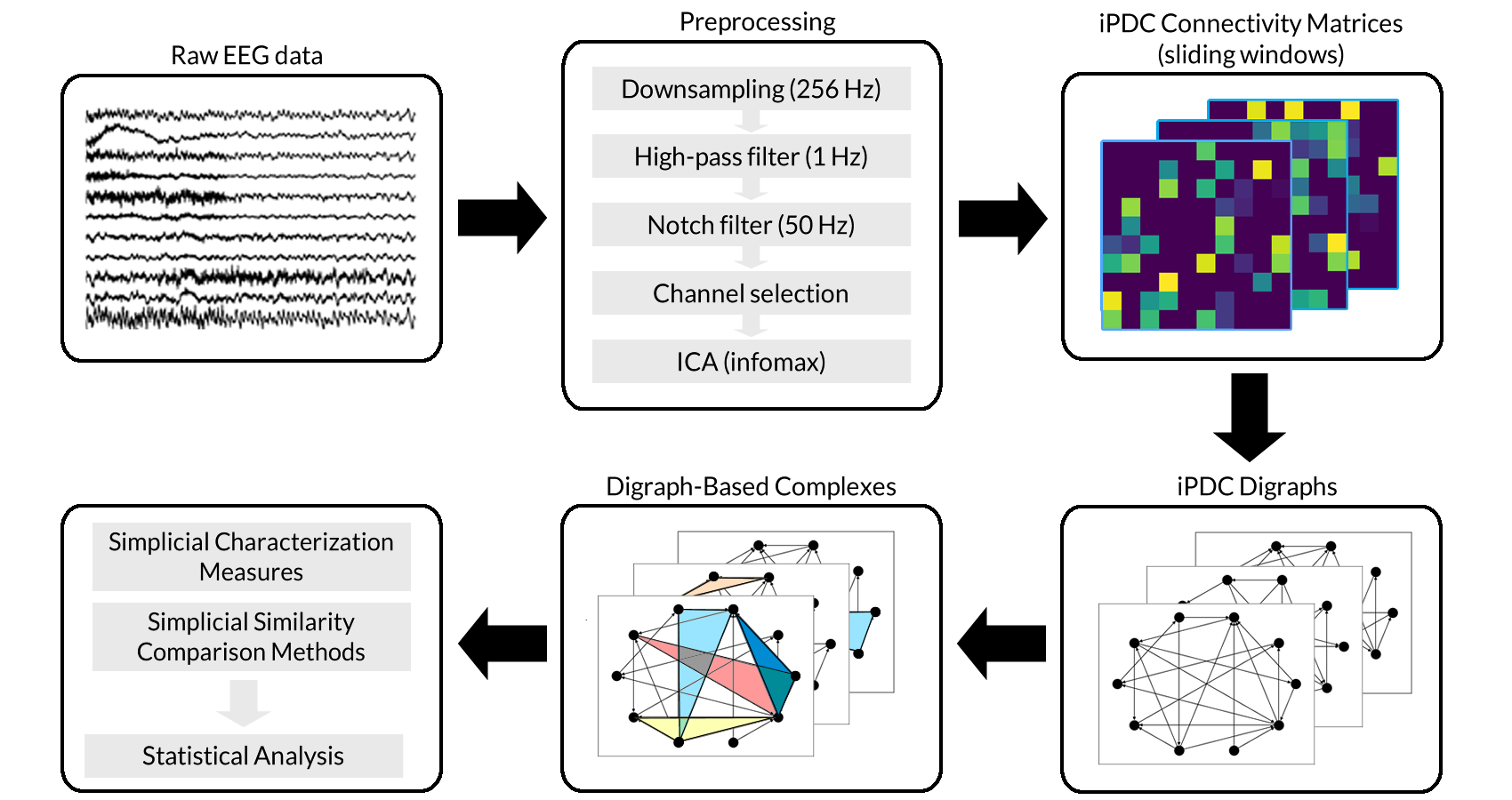}
    \caption{Summary of the analysis workflow.}
    \label{fig:two-analysis-pipeline}
\end{figure}






\subsection{Results and Discussion}

\subsubsection{Simplicial Characterization Measures}

All results discussed in this part can be found in Appendix \ref{appendix:results}. Figure \ref{fig:grand-mean-std_L} through \ref{fig:grand-mean-q4_R} present the grand means (mean of all of the means) and standard deviations of the simplicial measures; Table \ref{tab:w-test-meas-std} through \ref{tab:w-test-meas-q3} present the W-statistics and the p-values of the Wilcoxon paired tests; and Figure C.13 through C.21 present the Pearson's correlation coefficients between the simplicial measures. These tables and figures present the results for the original iPDC networks (henceforth referred to as $(-1)$-digraphs or level $q=-1$) and for the $q$-digraphs, $q=0,1,2,3,4$, for the right and left hemispheres, for each seizure phase (pre-ictal, ictal, post-ictal), and for each frequency band (delta, theta, alpha).

In the following, we present and discuss only statistically significant results ($p<0.05$) and, for the Pearson's correlation coefficient $r$, we consider two measures to be strongly positively correlated if $r \ge 0.7$ and strongly negatively correlated if $r \le -0.7$ ($ p<0.05$). Also, for the sake of simplicity, we refer to directed cliques simply as cliques, we omit the term ``directed simplicial" from the 
nomenclature of the measures, we use the notations $\delta$, $\theta$, and $\alpha$ to indicate the delta, theta, and alpha bands, respectively, and we adopt LH and RH to indicate the left hemisphere and the right hemisphere, respectively.

\bigskip
\noindent \textbf{Global $q$-efficiency:} The global $q$-efficiency increased significantly in the ictal phase compared with the pre-ictal phase in the $\delta$ band at levels $q=-1, 0, 1, 2$ in both hemispheres and at level $q=3$ in the LH, and in $\theta$ band at levels $q=2, 3$ in the LH and at level $q=-1$ in the RH. Also, this measure is strongly positively correlated with most of the measures across all seizure phases at all levels $q$, and across all levels $q$ in all seizure phases, for all frequencies, and in both hemispheres. These results suggest that the efficiency of transmitting information at a global level in each hemisphere increases in the ictal phase, not only at the level of nodes but also at the level of cliques, specifically in the $\delta$ band.

\bigskip
\noindent \textbf{In-$q$-degree centrality:} The in-$q$-degree centrality increased significantly in the ictal phase compared with the pre-ictal phase in the $\delta$ band at levels $q=0, 1, 2,3$ in the LH and at levels $q=1,2$ in the RH, and in $\theta$ band at levels $q=2,3$ in the LH and at levels $q=1,2,3$ in the RH. Also, this measure is strongly negatively correlated with 
most of the measures in relation to all seizure phases at level $q=-1$ in the LH, and it is strongly positively correlated with most of the measures across all seizure phases at all other levels $q$, and across all levels $q$ in all seizure phases, for all frequencies and for both hemispheres. These results suggest that the importance of nodes in network communication, in relation to the inner flow, did not change significantly in the ictal phase compared to the pre-ictal phase, however there was a significant increase at higher levels, that is, the importance of cliques in higher-order communication increased in the ictal phase, in both hemispheres, in both $\delta$ and $\theta$ bands.

\bigskip
\noindent \textbf{Out-$q$-degree centrality:} The out-$q$-degree centrality increased significantly in the ictal phase compared with the pre-ictal phase in the $\delta$ band at levels $q=0, 1, 2,3$ in the LH and at levels $q=1,2$ in the RH, and in $\theta$ band at levels $q=2,3$ in the LH and at levels $q=1,2,3$ in the RH. Also, this measure is strongly negatively correlated with 
most of the measures in relation to all seizure phases at level $q=-1$ in the LH and it is strongly positively correlated with most of the measures across to all seizure phases at all other levels $q$, and across to all levels $q$ in all seizure phases, for all frequencies and for both hemispheres. These results suggest that the importance of nodes in network communication, in relation to the outer flow, did not change significantly in the ictal phase compared to the pre-ictal phase, however, there was a significant increase at higher levels, that is, the importance of cliques in higher-order communication increased in the ictal phase, in both hemispheres, in both $\delta$ and $\theta$ bands.

\bigskip
\noindent \textbf{$q$-Harmonic centrality:} The $q$-harmonic centrality increased significantly in the ictal phase compared with the pre-ictal phase in the $\delta$ band at levels $q=0, 1, 2,3$ in the LH and at levels $q=-1,0,2$ in the RH, and in $\theta$ band at levels $q=2,3$ in the LH and at levels $q=-1,2$ in the RH. Also, this measure is strongly positively correlated with most of the measures across all seizure phases at all levels q, for all frequencies, and for both hemispheres. These results suggest that, in the ictal phase, higher-order cliques transmit information more efficiently, in both hemispheres, especially in the $\delta$ band, on the other hand, nodes transmit information more efficiently in the RH, in both $\delta$ and $\theta$ bands.

\bigskip
\noindent \textbf{Global $q$-reaching centrality:} The global $q$-reaching centrality increased significantly in the ictal phase compared with the pre-ictal phase in the $\delta$ band at levels $q=0, 1, 2,3$ in the LH and at levels $q=0,2,3$ in the RH, and in $\theta$ band at level $q=2$ in the LH and at levels $q=0,1,2$ in the RH. Also, this measure is strongly negatively correlated with most of the measures in relation to all seizure phases at level $q=-1$ in both hemispheres and it is strongly positively correlated with most of the measures across all seizure phases at all other levels $q$, and across all levels $q$ in all seizure phases, for all frequencies and for both hemispheres. These results suggest that, in the ictal phase, higher-order cliques have more influence in the information flow, in both hemispheres, especially in the $\delta$ band.

\bigskip
\noindent \textbf{Average $q$-clustering coefficient:} The average $q$-clustering coefficient increased significantly in the ictal phase compared with the pre-ictal phase in the $\delta$ band at levels $q=-1,0,1,2$ in the LH and at levels $q=1,2$ in the RH, and in $\theta$ band at levels $q=1,2,3$ in the LH and at levels $q=-1,1,2,3$ in the RH. Also, this measure is strongly positively correlated with most of the measures across all seizure phases at all levels $q$, and across all levels $q$ in all seizure phases, for all frequencies and for both hemispheres. These results suggest that, in the ictal phase, cliques tend to form higher-order clusters, in both hemispheres, in both $\delta$ and $\theta$ bands.


\bigskip
\noindent \textbf{In-$q$-degree distribution entropy:} The in-$q$-degree distribution entropy increased significantly in the ictal phase compared with the pre-ictal phase in the $\delta$ band at levels $q=1,2,3$ in the LH and at levels $q=-1,0,2,3$ in the RH, and in $\theta$ band at level $q=2$ in the LH and at level $q=-1,1,2,3$ in the RH. Also, this measure is strongly positively correlated with most of the measures in relation to all seizure phases at all levels $q$ and across all levels $q$ in all seizure phases, for all frequencies and for both hemispheres. These results suggest that, in the ictal phase, the higher-order networks, in relation to their inner higher-order flow, become ``more random" (or ``less regular"), i.e. they get closer to a random model, in both hemispheres, in both $\delta$ and $\theta$ bands.

\bigskip
\noindent \textbf{Out-$q$-degree distribution entropy:} The out-$q$-degree distribution entropy increased significantly in the ictal phase compared with the pre-ictal phase in the $\delta$ band at levels $q=1,2,3$ in the LH and at levels $q=-1,0,2,3$ in the RH, and in $\theta$ band at level $q=2$ in the LH and at level $q=-1,1,2,3$ in the RH. Also, this measure is strongly positively correlated with most of the measures in relation to all seizure phases at all levels $q$ and across all levels $q$ in all seizure phases, for all frequencies and for both hemispheres. These results suggest that, in the ictal phase, the higher-order networks, in relation to their outer higher-order flow, become ``more random" (or ``less regular"), i.e. they get closer to a random model, in both hemispheres, in both $\delta$ and $\theta$ bands.

\bigskip
\noindent \textbf{$q$-Energy:} The $q$-energy increased significantly in the ictal phase compared with the pre-ictal phase in the $\delta$ band at levels $q=0,1,2,3$ in the LH and at levels $q=-1,0,2,3$ in the RH, and in $\theta$ band at levels $q=2,3$ in the LH and at levels $q=-1,3$ in the RH. Also, this measure is strongly positively correlated with most of the measures across all seizure phases at all levels $q$, for all frequency bands and for both hemispheres. These results suggest that, in the ictal phase, the higher-order networks become more connected, in both hemispheres, in both $\delta$ and $\theta$ bands.

\bigskip

In short, all simplicial characterization measures showed an increase from the pre-ictal phase to the ical phase, for various levels of topological organization $q$, in both hemispheres, suggesting an increase in clustering, efficiency, and connectivity, and a shift towards a ``more random" organization (both in relation to inner and outer flow) in higher-order networks, especially in the $\delta$ and $\theta$ bands. However, a caveat is necessary here, because while the directed clustering coefficient takes into account the total degree of the nodes, the in and out entropies only take into account the in and out degrees, respectively; therefore, not necessarily an increase in the directed clustering coefficient will lead to a decrease in some of these entropies. Furthermore, as mentioned earlier in the methods section, density effects were intentionally disregarded; thus, the reported simplicial measures should be interpreted as \textit{raw} higher-order organizational biomarkers rather than density-adjusted network invariants.

\subsubsection{Simplicial Similarity Comparison Distances}

All results discussed in this part can be found in Appendix \ref{appendix:results}. Tables \ref{tab:w-test-dist2} through \ref{tab:w-test-dist5} present the W-statistics and the p-values of the Wilcoxon paired tests. These tables present the results for each frequency band (delta, theta, alpha).

In the following, we present and discuss only statistically significant results (p-values $<0.05$). Also, just as before, we refer to directed cliques simply as cliques, and we use the notations $\delta$, $\theta$, and $\alpha$ to indicate the delta, theta, and alpha bands, respectively, and LH and RH to indicate the left hemisphere and the right hemisphere, respectively.

\bigskip
\noindent \textbf{Bottleneck distance:} The bottleneck distance showed statistically significant difference between the pre-ictal phase and the ictal phase in the RH, in the $\theta$ band.

\bigskip
\noindent \textbf{Wasserstein distance:} The Wasserstein distance showed statistically significant difference between the pre-ictal phase and the ictal phase in both hemispheres, in the $\theta$ band, and between the ictal phase and the post-ictal phase in both hemispheres, but in the $\theta$ band in the LH and in the $\alpha$ band in the RH.

\bigskip
\noindent \textbf{Betti distance:} The Betti distance showed a statistically significant difference between the pre-ictal phase and the ictal phase in both hemispheres, in the $\theta$ band.

\bigskip
\noindent \textbf{First topological distance:} The first topological distance showed a statistically significant difference between the pre-ictal phase and the ictal phase in the LH, in the $\theta$ band, and between the ictal phase and the post-ictal phase in the RH, in the $\delta$, $\theta$, and $\alpha$ bands.

\bigskip
\noindent \textbf{Fifth topological distance:} The fifth topological distance showed a statistically significant difference between the ictal phase and the post-ictal phase in LH, in the $\theta$ band, and in the RH, in the $\delta$ band.

\bigskip
\noindent \textbf{Histogram cosine kernel:} The histogram cosine kernel showed a statistically significant difference between the ictal phase and the post-ictal phase in the RH, in the $\delta$, $\theta$, and $\alpha$ bands.

\bigskip
\noindent \textbf{Jaccard kernel:} The Jaccard kernel showed statistically significant difference between the pre-ictal phase and the ictal phase and between the ictal phase and the post-ictal phase in both hemispheres, in the $\delta$, $\theta$, and $\alpha$ bands.

\bigskip

All these results suggest that there is a change in the clique topology of the networks of both hemispheres, both from the pre-ictal phase to the ictal phase and from the ictal phase to the post-ictal phase, especially in the $\theta$ band. Moreover, we highlight that no statistically significant difference was observed between the left hemispheres and the right hemisphere in the ictal phase for any distance, at any frequency.

\section{Conclusions}

Despite all the efforts of the scientific community, our understanding of the dynamic processes underlying epilepsy still has many gaps. In this study we constructed G-connectivity networks from EEG data recorded before, during and after epileptic seizures in patients diagnosed with left temporal lobe epilepsy, obtaining therefore an estimate of the directionality and dynamics of the information flow between different brain regions in various seizure phases; subsequently, we applied novel simplicial characterization measures on the $q$-digraphs constructed from these networks, in addition to their usual graph counterparts in the original networks, to investigate: 1) How do the topological and functional properties of G-connectivity networks and their respective $q$-digraphs change during the seizure in each hemisphere and in each frequency band, both at the node level and at the various clique topology levels? 2) Does the analysis of higher-order structures of brain connectivity networks provide novel and better biomarkers for seizure dynamics and also for the laterality of the seizure focus than the usual theoretical graph analyses? 

We observed that all simpicial characterization measures showed statistically significant increases in their magnitudes from the pre-ictal phase to the ictal phase, for various levels $q$, for both hemispheres, especially in the $\delta$ and $\theta$ bands but no statistically significant changes were observed from the ictal phase to the post-ictal phase, which may suggest that several topological and functional aspects of the brain networks change from the pre-ictal phase to the ictal phase, at various higher-order levels of topological organization (clique organization), especially in the $\delta$ and $\theta$ bands. However, most of the usual graph measures did not detect significant differences in the left hemisphere, which may suggest that changes in the network topology at the node level ($q=-1$) do not undergo as many changes as in the higher-order network topology. Furthermore, most of the simplicial measures are strongly positively correlated across the seizure phases (at all levels $q$) and across the levels $q$ (in all seizure phases), for all frequency bands and for both hemispheres. Also, most of the simplicial distances revealed changes in the clique topology of the networks of both hemispheres, from the pre-ictal phase to the ictal phase, especially in the $\theta$ band, which reinforces the findings obtained by the simplicial characterization measures. Regarding the laterality of the seizure focus, the analysis through simplicial distances did not find any statistically significant difference between the left and right hemispheres' clique topology in the ictal phase.

In conclusion, despite our several limitations, such as the limited number of patients (eight patients) and all the limitations associated with each method used in the study, we emphasize that we found evidence that the analysis of higher-order structures represented by $q$-digraphs obtained from G-connectivity networks may be a reliable way to find biomarkers associated with epileptic networks, but its establishment as a viable rigorous method will depend on future work, as well as its applicability to other disorders of brain connectivity networks.

\chapter{Final Considerations}
\label{chap:chap8}

\epigraph{Wir müssen wissen. Wir werden wissen. (We must know. We will know.)}{--- Epitaph on the gravestone of David Hilbert}

\bigskip


The initial motivation for writing this thesis was the development of new methods to analyze the topology of directed graphs obtained from brain connectivity estimators based on the concept of Granger causality, especially the iPDC, to contribute to the general investigation of the dynamics of brain activity. As research progressed, we noticed the increasing application and usefulness of methods derived from computational topology in network analysis, mainly based on clique complexes. Thus, we decided to transpose several of these methods and concepts to directed graphs, by considering their directed clique complexes and path complexes, and two main objectives were outlined:

\begin{enumerate}
 \setlength\itemsep{0.1em}

\item To develop rigorously a new quantitative theory for digraph-based complexes (or, as we can consider it, a step towards the formalization of a ``quantitative simplicial theory"), with special emphasis on directed higher-order connectivity between directed cliques;

\item To apply the methods of the new theory to epileptic brain networks obtained through iPDC to quantitatively investigate their higher-order topologies and search for new biomarkers based on their directed higher-order connectivities, thus pointing out potential applications of the theory in network neuroscience.
  
\end{enumerate}

To accomplish these objectives, we divided the thesis into two parts: Part \ref{partI}, which addresses objective 1, and Part \ref{partII}, which addresses objective 2. The developments in these parts are summarized in the following paragraphs.


In Part \ref{partI}, having considered the multidisciplinary nature of this thesis and to make it self-contained, we first presented, in Chapter \ref{chap:chap2}, all the fundamental and necessary concepts of graph theory, starting with a discussion of binary and equivalence relations, followed by the introduction of concepts associated with graphs and digraphs, passing through algebraic and spectral graph theory, graph measures, graph similarities, and at last a brief discussion on random graphs. Subsequently, in Chapter \ref{chap:chap4}, we presented the basic theory of simplicial complexes and directed clique complexes associated with digraphs, including the case of weighted digraphs, passing through simplicial homology, persistent homology, and combinatorial Laplacians associated with these complexes, followed by the presentation of paths complexes and their homologies, and, at last, we introduced a novel theory associated with directed higher-order connectivity between directed cliques, which provided the conception of new concepts such as directed higher-order adjacencies (upper and lower adjacencies) and maximal/lower $q$-digraphs, and new concepts for directed Q-Analysis. Finally, in Chapter \ref{chap:chap5}, we introduced new characterization measures for maximal/lower $q$-digraphs, adapted from graph measures, such as distance-based measures, segregation measures, centrality measures, entropy measures, and spectrum-related measures, and similarity comparison methods for directed clique complexes and path complexes, based on graph similarity comparison methods, such as structure distances and graph kernels, and, at last, we presented some examples with random digraph models.


In Part \ref{partII}, we started by presenting, in Chapter \ref{chap:chap3}, the theory of brain connectivity networks, discussing briefly the biophysical principles of brain signals and the techniques for acquiring such signals, with special attention to EEG, followed by a discussion of connectivity estimators, with an emphasis on PDC and its variants, especially gPDC and iPDC, and, finally, we discussed the different types of brain connectivity networks, especially structural, functional, and effective networks, and also about the applications of graph theory and computational (algebraic) topology in the analysis of these networks to investigate the dynamics of brain activity in different contexts. In Chapter \ref{chap:epilepsy}, we studied the neuropathology of epilepsy in more detail, discussing its main characteristics, etiologies, epidemics, diagnoses, and treatments, and also discussed several studies pointing out alterations in the network properties of epileptic brain networks when compared to brain networks obtained from healthy individuals, as well as changes during the ictal period compared to other periods. Finally, in Chapter \ref{chap:chap6}, we performed an analysis of epileptic brain networks, estimated through iPDC from EEG data from patients diagnosed with left temporal lobe epilepsy, using the novel quantitative methods for directed clique complexes developed in previous chapters, to investigate how certain properties of these networks, and their corresponding higher-order structures ($q$-digraphs), alter according to the phases of seizures, and also according to the cerebral hemispheres, in different frequency bands.

Furthermore, we developed the Python package \texttt{DigplexQ} (see Appendix \ref{appendix:software}), which contains the implementation of algorithms for calculating directed clique complexes and path complexes of some given digraph from its adjacency matrix, and also contains the implementation of various simplicial characterization measures and simplicial similarity comparison methods introduced in Chapter \ref{chap:chap5}.

Due to the limited nature of this thesis, many of the simplicial characterization measures (including the maximal variants) and simplicial similarity comparison methods have not been applied to real-world data, and a comparison between the weighted and unweighted versions of these measures and methods has also been lacking. Moreover, the applicability of concepts developed to path complexes has not been explored. In addition to these gaps, there are several other opportunities to be explored in future work regarding the theory involving directed higher-order connectivity and directed Q-Analysis, not only in relation to its mathematical foundations but also in relation to its applicability in several areas, besides network neuroscience, such as biology, social sciences, transportation planning, etc.

\newpage

\addcontentsline{toc}{chapter}{Bibliography}
\bibliography{bibliography}

\appendix

\chapter{Software Review}
\label{appendix:software}

In this appendix, we present an overview of the software used in the examples of Chapter \ref{chap:chap5} and in the analysis of Chapter \ref{chap:chap6} and also a brief description of the \texttt{DigplexQ} package, a novel Python package that was developed during this thesis.


\section{Software Review}

There are several free/open-source software available for brain connectivity estimation, graph theoretical analysis, topological data analysis, and statistical analysis, written in several different programming languages. In what follows, we present a brief description of the software used in this thesis.

\begin{itemize}
\item \textbf{asympPDC:} asympPDC is a MATLAB/Octave package for the analysis of time series data via partial directed coherence (PDC) and/or directed transfer function (DTF) (and their variants - information/generalized PDC/DTF) to infer directed interactions in the frequency domain between structures.\\
Available at \url{https://github.com/asymppdc}

\item \textbf{DigplexQ:}
DigplexQ is a Python package to perform computations with digraph-based complexes (e.g. directed flag complexes and path complexes).\\
Available at \url{https://github.com/heitorbaldo/DigplexQ}

\item \textbf{Giotto-TDA:}
Giotto-TDA is a Python package built on top of scikit-learn for topological data analysis. It provides functions to compute Betti numbers, persistence diagrams, barcodes, Betti curves, bottleneck distance, etc., from simplicial complexes or graphs/digraphs given as input data. \\
Available at \url{https://giotto-ai.github.io/gtda-docs/0.5.1/library.html}

\item \textbf{Flagser:} 
Flagser is a C++ library built on top of Ripser to compute homologies of directed flag complexes. It is also implemented within the Giotto-TDA package.\\
Available at \url{https://github.com/luetge/flagser}

\item \textbf{HodgeLaplacians:}
HodgeLaplacians is a Python package created to compute the Hodge Laplacian matrices from simplicial complexes given as input data. \\
Available at \url{https://github.com/tsitsvero/hodgelaplacians}

\item \textbf{NetworkX:} NetworkX is a Python package to perform quantitative analysis in undirected and directed graphs. It provides several graph measures, graph distance measures, random models, visualization functions, etc.\\
Available at \url{https://networkx.org}.

\item \textbf{Persim:}
Persim is a Python package for topological data analysis. It provides functions to compute persistence diagrams, persistence landscapes, bottleneck distance, heat kernel, etc., from simplicial complexes given as input data. \\
Available at \url{https://persim.scikit-tda.org/en/latest/}

\item \textbf{Pingouin:}
Pingouin is a Python package based on NumPy and Pandas to perform statistical analysis. It provides standard parametric and non-parametric statistical tests, such as t-test, ANOVAs, Kruskal-Wallis test, Mann-Whitney test, Wilcoxon signed-rank test, etc.\\
Available at \url{https://pingouin-stats.org/build/html/index.html}

\end{itemize}

\section{DigplexQ}

As part of this thesis, the Python package \texttt{DigplexQ} (released as free software under the MIT license\footnote{\url{https://opensource.org/license/mit}}) was developed to perform computations with digraph-based complexes. It is based on other well-known Python packages, such as \texttt{NetworkX} (graph measures), \texttt{Giotto-TDA/Flagser} (persintence diagrams, Betti numbers, and topological distances), \texttt{Persim} (topological distances), and \texttt{HodgeLaplacians} (Hodge Laplacians). 

\texttt{DigplexQ} is an ``adjacency matrix-centered" package since it was designed so that the user can perform all computations just by entering an adjacency matrix as input. It contains the implementation of almost all simplicial characterizations measures and simplicial similarity comparision distances introduced in Chapter \ref{chap:chap5}.

The \texttt{DigplexQ} package has been tested under Windows and Linux-Ubuntu platforms running Python version 3.7 and higher. It is available in the PyPi repository at \url{https://pypi.org/project/digplexq} (current version 0.0.7), and it can be installed through the PIP package manager via the command \texttt{pip3 install digplexq}.

\chapter{Supplementary Information}

\section{Summary of the Simplicial Measures}
\label{appendix:measures-analysis}

\bigskip

\begin{table}[h!]
\centering
\small
\caption{Summary of the directed simplicial measures used in the analysis and/or in the examples (lower variants were used). For simplicity's sake, we omitted the terms ``directed simplicial" from the nomenclature of the measures.}
\begin{tabular}{c l c c}
\toprule 
\textbf{ id } & \textbf{ Measure } & \textbf{Notation} & \textbf{Equation}  \\
\midrule 

1 & Average Shortest $q$-Walk Length & $\vec{L}_{q}(\mathcal{G}_{q})$ & [\ref{eq:average-shortest-simp-q-walk}] \\

2 & Global $q$-Efficiency &  $\vec{E}^{q}_{glob}(\mathcal{G}_{q})$  & [\ref{eq:simp-global-efficiency}] \\


3 & $q$-Returnability &  $K_{r,q}(\mathcal{G}_{q})$  & [\ref{eq:simp-returnability}] \\

4 & In-$q$-Degree Centrality   &  $C_{deg_{q}}^{-}(\sigma)$ &  [\ref{eq:simp-in-degree}] \\

5 & Out-$q$-Degree Centrality & $C_{deg_{q}}^{+}(\sigma)$ & [\ref{eq:simp-out-degree}]  \\

6 & $q$-Harmonic Centrality &  $\vec{HC}_{q}(\sigma)$ &  [\ref{eq:simp-harmonic}]  \\

7 & $q$-Betweenness Centrality & $\vec{B}_{q}(\sigma)$ &  [\ref{eq:simp-betweenness}] \\

8 & Global $q$-Reaching Centrality &  $GRC_{q}(\mathcal{G}_{q})$  & [\ref{eq:simp-global-reaching-centrality}] \\

9 & Average $q$-Clustering Coefficient &  $\vec{\bar{C}}_{q}(\mathcal{G}_{q})$  & [\ref{eq:simp-average-cc}] \\

10 & in-$q$-Rich-Club Coefficient &  $\phi_{q, k}^{-}(\mathcal{G}_{q})$  & [\ref{eq:simp-in-rich-club}]  \\

11 & out-$q$-Rich-Club Coefficient &  $\phi_{q, k}^{+}(\mathcal{G}_{q})$  & [\ref{eq:simp-out-rich-club}]  \\

12 & $q$-Structural Entropy &  $H^{str}_{q}(\mathcal{G}_{q})$  & [\ref{eq:simp-struc-entropy}] \\

13 & in-$q$-Degree Distribution Entropy & $H^{-}_{q}(\mathcal{G}_{q})$ & [\ref{eq:in-dist-entropy}] \\

14 & out-$q$-Degree Distribution Entropy & $H^{+}_{q}(\mathcal{G}_{q})$ & [\ref{eq:out-dist-entropy}]  \\

15 & in-$q$-Forman-Ricci Curvature & $F_{q}^{-}(\sigma)$ &  [\ref{eq:in-forman-ricci}] \\

16 & out-$q$-Forman-Ricci Curvature & $F_{q}^{+}(\sigma)$ &  [\ref{eq:out-forman-ricci}] \\

17 & $q$-Energy & $\varepsilon_{q}(\mathcal{G}_{q})$ & [\ref{eq:simp-graph-energy}] \\

18 & in-$q$-Katz Centrality &  $\vec{K}^{-}_{q}(\sigma)$ &  [\ref{eq:simp-dir-katz-centrality-in}] \\

\bottomrule 
\end{tabular}
\label{tab:measures-analysis}
\end{table}

\newpage
\section{Summary of the Simplicial Distances}
\label{appendix:distances-analysis}

\bigskip

\begin{table}[h!]
\centering
\small
\caption{Summary of the simplicial and persistence distances used in the analysis and/or in the examples.}
\begin{tabular}{c l c c}
\toprule 
\textbf{ id } & \textbf{ Distance } & \textbf{Notation} & \textbf{Equation} \\
\midrule 

1 & Bottleneck Distance & $d_{W_{\infty}}(P, Q)$ & [\ref{eq:bottleneck-dist}] \\

2 & $p$-Wasserstein Distance & $d_{W_{p}}(P, Q)$ & [\ref{eq:wasserstein-dist}] \\

3 & Betti Distance & $d_{\mathcal{B}}(\mathcal{B}_{P}, \mathcal{B}_{Q})$ & [\ref{eq:betti-curve-dist}] \\

4 & First Topological Structure Distance & $\widehat{T}_{tsd}^{1}(\mathcal{X}_{1}, \mathcal{X}_{2})$ &  [\ref{eq:topological-distance}]\\

5 & Fourth Topological Structure Distance & $\widehat{T}_{tsd}^{4}(\mathcal{X}_{1}, \mathcal{X}_{2})$ &  [\ref{eq:topological-distance}]\\

6 & Fifth Topological Structure Distance & $\widehat{T}_{tsd}^{5}(\mathcal{X}_{1}, \mathcal{X}_{2})$ &  [\ref{eq:topological-distance}]\\

7 & Histogram Cosine Kernel & $K_{HC}(\mathcal{X}_{1}, \mathcal{X}_{2})$ &  [\ref{eq:HCK}] \\

8 & Jaccard Kernel & $K_{J}(\mathcal{X}_{1}, \mathcal{X}_{2})$ &  [\ref{eq:jaccard-kernel}] \\

9 & Simplicial $n$-Spectral Distance & $\widehat{D}_{\mathcal{L}_{n}}(\mathcal{X}_{1}, \mathcal{X}_{2})$ &  [\ref{eq:simp-spectral-distance}] \\

\bottomrule 
\end{tabular}
\label{tab:distances-analysis}
\end{table}

\chapter{Results}
\label{appendix:results}

\begin{table}[h!]
  \center
  \footnotesize
    \caption{ {\footnotesize  W-statistics and p-values (in parentheses) for simplicial measures showing significant differences ($p < 0.05$, Wilcoxon paired test) between the pre-ictal phase and ictal phase for the original iPDC networks of the left and right hemispheres in the delta and theta bands. See Table \ref{tab:measures-analysis} for the measure ids.}}
    \label{tab:w-test-meas-std}
    \begin{tabular}{c c c c c}
     \toprule 
      \multicolumn{1}{c}{} & \multicolumn{2}{c}{\textbf{Left}} & \multicolumn{2}{c}{\textbf{Right}} \\
\cmidrule(rl){2-3} \cmidrule(rl){4-5} \textbf{Measure} & \textbf{Delta} & \textbf{Theta} & \textbf{Delta} & \textbf{Theta} \\
      \midrule 
2 & 3.0 (0.039) & --- & 2.0 (0.023) & 0.0 (0.008) \\
4 & ---  & ---  & --- & ---  \\
5 & ---  & ---  & ---  & ---  \\
6 & --- & ---  & 1.0 (0.016) & 3.0 (0.039) \\
8 & --- & --- &  1.0 (0.016) & 3.0 (0.039) \\
9 & 2.0 (0.023) & --- & --- & 0.0 (0.008) \\
13 & --- & ---  & 2.0 (0.023) & 0.0 (0.008) \\
14 & --- & ---  & 3.0 (0.039) & 1.0 (0.016) \\
17 & 3.0 (0.039) & --- & 2.0 (0.023) & 0.0 (0.008) \\

      \bottomrule 
    \end{tabular}

\end{table}

\begin{table}[h!]
    \center
  \footnotesize
    \caption{ {\footnotesize  W-statistics and p-values (in parentheses) for simplicial measures showing significant differences ($p < 0.05$, Wilcoxon paired test) between the pre-ictal phase and ictal phase for the $0$-digraphs of the left and right hemispheres in the delta and theta bands. See Table \ref{tab:measures-analysis} for the measure ids.}}
    \label{tab:w-test-meas-q0}
    \begin{tabular}{c c c c c}
     \toprule 
      \multicolumn{1}{c}{} & \multicolumn{2}{c}{\textbf{Left}} & \multicolumn{2}{c}{\textbf{Right}} \\
\cmidrule(rl){2-3} \cmidrule(rl){4-5} \textbf{Measure} & \textbf{Delta} & \textbf{Theta} & \textbf{Delta} & \textbf{Theta} \\
      \midrule 
2 & 2.0 (0.023) & --- & 3.0 (0.039) & --- \\
4 & 2.0 (0.023) & --- & --- & --- \\
5 & 2.0 (0.023) & --- & --- & --- \\
6 & 2.0 (0.023) & --- & 2.0 (0.023) & --- \\
8 & 2.0 (0.023) & --- & 1.0 (0.016) & 1.0 (0.016) \\
9 & 2.0 (0.023) & --- & --- & --- \\
13 & --- & --- & 3.0 (0.039) & --- \\
14 & --- & --- & 3.0 (0.039) & --- \\
17 & 2.0 (0.023) & --- & 2.0 (0.023) & --- \\

      \bottomrule 
    \end{tabular}
 
\end{table}


\begin{table}[h!]
  \center
  \footnotesize
    \caption{ {\footnotesize W-statistics and p-values (in parentheses) for simplicial measures showing significant differences ($p < 0.05$, Wilcoxon paired test) between the pre-ictal phase and ictal phase for the $1$-digraphs of the left and right hemispheres in the delta and theta bands. See Table \ref{tab:measures-analysis} for the measure ids.}}
    \label{tab:w-test-meas-q1}
    \begin{tabular}{c c c c c}
     \toprule 
      \multicolumn{1}{c}{} & \multicolumn{2}{c}{\textbf{Left}} & \multicolumn{2}{c}{\textbf{Right}} \\
\cmidrule(rl){2-3} \cmidrule(rl){4-5} \textbf{Measure} & \textbf{Delta} & \textbf{Theta} & \textbf{Delta} & \textbf{Theta} \\
      \midrule 
2 & 2.0 (0.023) & --- & 3.0 (0.039) & --- \\
4 & 3.0 (0.039) & --- & 3.0 (0.039) & 3.0 (0.039) \\
5 & 2.0 (0.023) & --- & 0.0 (0.022) &  3.0 (0.039) \\
6 & 2.0 (0.023) & --- & --- & --- \\
8 & 2.0 (0.023) & --- & --- & 0.0 (0.008) \\
9 & 1.0 (0.016) & 2.0 (0.023) & 0.0 (0.008) & 1.0 (0.016) \\
13 & 2.0 (0.023) & --- & --- & 3.0 (0.039) \\
14 & 2.0 (0.023) & --- & --- & 2.0 (0.023) \\
17 & 2.0 (0.023) & --- & --- & --- \\

      \bottomrule 
    \end{tabular}
\end{table}

\begin{table}[h!]
   \center
  \footnotesize
    \caption{ {\footnotesize W-statistics and p-values (in parentheses) for simplicial measures showing significant differences ($p < 0.05$, Wilcoxon paired test) between the pre-ictal phase and ictal phase for the $2$-digraphs of the left and right hemispheres in the delta and theta bands. See Table \ref{tab:measures-analysis} for the measure ids.}}
    \label{tab:w-test-meas-q2}
    \begin{tabular}{c c c c c}
     \toprule 
      \multicolumn{1}{c}{} & \multicolumn{2}{c}{\textbf{Left}} & \multicolumn{2}{c}{\textbf{Right}} \\
\cmidrule(rl){2-3} \cmidrule(rl){4-5} \textbf{Measure} & \textbf{Delta} & \textbf{Theta} & \textbf{Delta} & \textbf{Theta} \\
      \midrule 
2 & 2.0 (0.023) & 2.0 (0.023) & 0.0 (0.008) & --- \\
4 & 1.0 (0.016) & 2.0 (0.023) & 0.0 (0.008) & 1.0 (0.016) \\
5 & 1.0 (0.016) & 2.0 (0.023) & 0.0 (0.008) & 0.0 (0.008) \\
6 & 1.0 (0.016) & 2.0 (0.023) & 0.0 (0.008) & 3.0 (0.039) \\
8 & 1.0 (0.016) & 3.0 (0.039) & 0.0 (0.008) & 1.0 (0.016) \\
9 & 1.0 (0.035) & 1.0 (0.016) & 3.0 (0.039) & 1.0 (0.035) \\
13 & 1.0 (0.016) & 3.0 (0.039) & 0.0 (0.008) & 3.0 (0.039) \\
14 & 1.0 (0.016) & 3.0 (0.039) & 0.0 (0.008) & 2.0 (0.023) \\
17 & 1.0 (0.016) & 2.0 (0.023) & 0.0 (0.008) & --- \\

      \bottomrule 
    \end{tabular}
 
\end{table}

\begin{table}[h!]
  \center
  \footnotesize
    \caption{ {\footnotesize W-statistics and p-values (in parentheses) for simplicial measures showing significant differences ($p < 0.05$, Wilcoxon paired test) between the pre-ictal phase and ictal phase for the $3$-digraphs of the left and right hemispheres in the delta and theta bands. See Table \ref{tab:measures-analysis} for the measure ids.}}
    \label{tab:w-test-meas-q3}
    \begin{tabular}{c c c c c}
     \toprule 
      \multicolumn{1}{c}{} & \multicolumn{2}{c}{\textbf{Left}} & \multicolumn{2}{c}{\textbf{Right}} \\
\cmidrule(rl){2-3} \cmidrule(rl){4-5} \textbf{Measure} & \textbf{Delta} & \textbf{Theta} & \textbf{Delta} & \textbf{Theta} \\
      \midrule 
2 & 0.0 (0.022) & 3.0 (0.039) & --- & --- \\
4 & 0.0 (0.022) & 3.0 (0.039) & --- & 1.0 (0.035) \\
5 & 0.0 (0.022) & 3.0 (0.039) & --- & 1.0 (0.035) \\
6 & 0.0 (0.022) & 3.0 (0.039) & --- & --- \\
8 & 0.0 (0.022) & --- & 2.0 (0.023) & --- \\
9 & --- & 1.0 (0.035) & --- & 1.0 (0.035) \\
13 & 0.0 (0.022) & --- & 2.0 (0.023) & 1.0 (0.035) \\
14 & 0.0 (0.022) & --- & 2.0 (0.023) & 1.0 (0.035) \\
17 & 0.0 (0.022) & 3.0 (0.039) & 3.0 (0.039) & 1.0 (0.035) \\

      \bottomrule 
    \end{tabular}
\end{table}



\begin{table}[h!]
  \center
  \footnotesize
    \caption{ {\footnotesize W-statistics and p-values (in parentheses) for simplicial distances showing significant differences ($p < 0.05$, Wilcoxon paired test) between the means of the distributions $d(G^{L}_{ic}, G^{L}_{pre})$ and $d(G^{L}_{pre}, G^{L}_{pre})$, for each frequency band. See Table \ref{tab:distances-analysis} for the distance ids.}}
    \label{tab:w-test-dist2}
    \begin{tabular}{c c c c c c c c c }
     \toprule 
     \textbf{Distance} & \textbf{Delta} & \textbf{Theta} & \textbf{Alpha} \\
      \midrule 
1 & --- & ---  &  ---\\
2 & --- & 0.0 (0.008) & ---\\
3 & --- & 2.0 (0.023) & ---\\
4 & --- & 3.0 (0.039) & ---\\
5 & --- & --- & ---\\
6 & --- & --- & ---\\
7 & --- & ---  & ---\\
8 & 0.0 (0.008) & 0.0 (0.008) & 0.0 (0.008)\\
      \bottomrule 
    \end{tabular}
  
\end{table}

\begin{table}[h!]

  \center
  \footnotesize
    \caption{ {\footnotesize W-statistics and p-values (in parentheses) for simplicial distances showing significant differences ($p < 0.05$, Wilcoxon paired test) between the means of the distributions $d(G^{L}_{ic}, G^{L}_{pos})$ and $d(G^{L}_{pos}, G^{L}_{pos})$, for each frequency band. See Table \ref{tab:distances-analysis} for the distance ids.}}
    \label{tab:w-test-dist3}
    \begin{tabular}{c c c c c c c c c }
     \toprule 
     \textbf{Distance} & \textbf{Delta} & \textbf{Theta} & \textbf{Alpha} \\
      \midrule 
1 & --- &---  & ---\\
2 &--- & 1.0 (0.016) & ---\\
3 &--- & --- &--- \\
4 &--- & --- & ---\\
5 &--- & --- & ---\\
6 &--- & 2.0 (0.023) & ---\\
7 & ---& --- & ---\\
8 & 0.0 (0.008) & 0.0 (0.008) & 0.0 (0.008)\\
      \bottomrule 
    \end{tabular}
 
\end{table}

\begin{table}[h!]

  \center
  \footnotesize
    \caption{ {\footnotesize W-statistics and p-values (in parentheses) for simplicial distances showing significant differences ($p < 0.05$, Wilcoxon paired test) between the means of the distributions $d(G^{R}_{ic}, G^{R}_{pre})$ and $d(G^{R}_{pre}, G^{R}_{pre})$, for each frequency band. See Table \ref{tab:distances-analysis} for the distance ids.}}
    \label{tab:w-test-dist4}
    \begin{tabular}{c c c c c c c c c }
     \toprule 
     \textbf{Distance} & \textbf{Delta} & \textbf{Theta} & \textbf{Alpha} \\
      \midrule 
1 & --- & 3.0 (0.039) & ---\\
2 & 2.0 (0.023) & 0.0 (0.008) & ---\\
3 &  & 2.0 (0.023 & ---\\
4 & --- & --- & ---\\
5 & --- & --- & ---\\
6 & --- & --- & ---\\
7 & --- & --- & ---\\
8 & 0.0 (0.008) & 0.0 (0.008) & 0.0 (0.008)\\
      \bottomrule 
    \end{tabular}
  
\end{table}

\begin{table}[h!]

  \center
  \footnotesize
    \caption{ {\footnotesize W-statistics and p-values (in parentheses) for simplicial distances showing significant differences ($p < 0.05$, Wilcoxon paired test) between the distributions $d(G^{R}_{ic}, G^{R}_{pos})$ and $d(G^{R}_{pos}, G^{R}_{pos})$, for each frequency band. See Table \ref{tab:distances-analysis} for the distance ids.} }
    \label{tab:w-test-dist5}
    \begin{tabular}{c c c c c c c c c }
     \toprule 
     \textbf{Distance} & \textbf{Delta} & \textbf{Theta} & \textbf{Alpha} \\
      \midrule 
1 & --- & --- & ---\\
2 & --- & --- & 2.0 (0.023) \\
3 & --- & --- & ---\\
4 & 0.0 (0.008)  & 1.0 (0.016) & 1.0 (0.016)\\
5 & --- & --- & ---\\
6 & 2.0 (0.023) & --- & --- \\
7 & 2.0 (0.023) & 0.0 (0.008) & 1.0 (0.016) \\
8 & 0.0 (0.008) & 0.0 (0.008) & 0.0 (0.008)\\
      \bottomrule 
    \end{tabular}
  
\end{table}


\begin{figure}[h!]
    \centering
\includegraphics[scale=1.54]{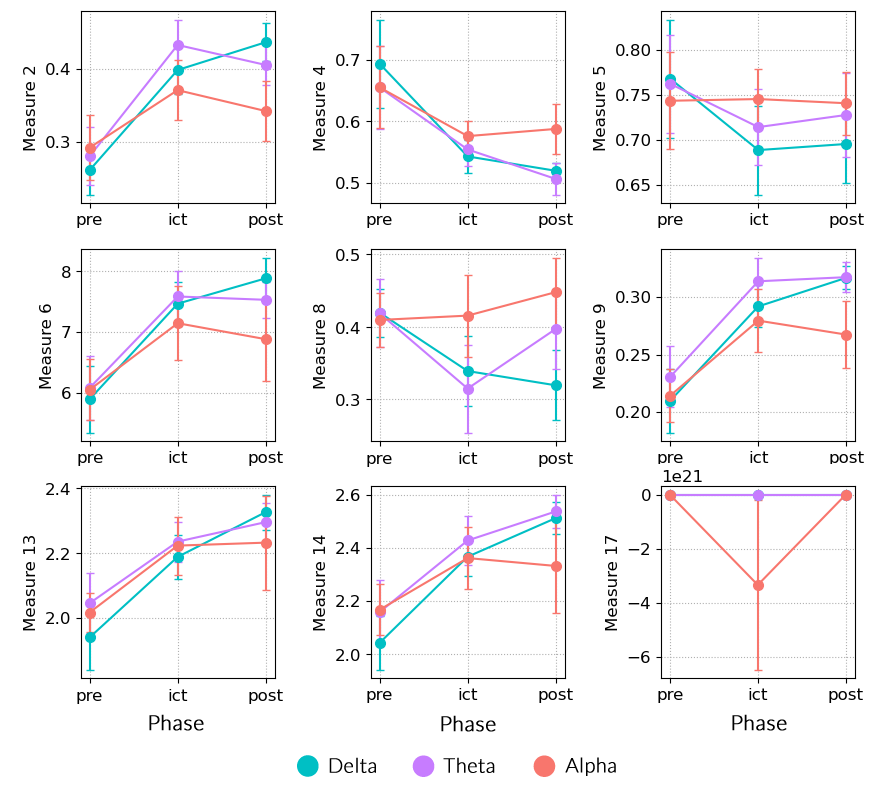}
 \caption{ {\small Grand means and standard deviations of the simplicial measures computed in the original iPDC networks of the left hemisphere for each seizure phase (pre-ictal, ictal, post-ictal) and for each frequency band (delta, theta, alpha).  See Table \ref{tab:measures-analysis} for the measure ids.}}
 \label{fig:grand-mean-std_L}

\medskip

\includegraphics[scale=1.54]{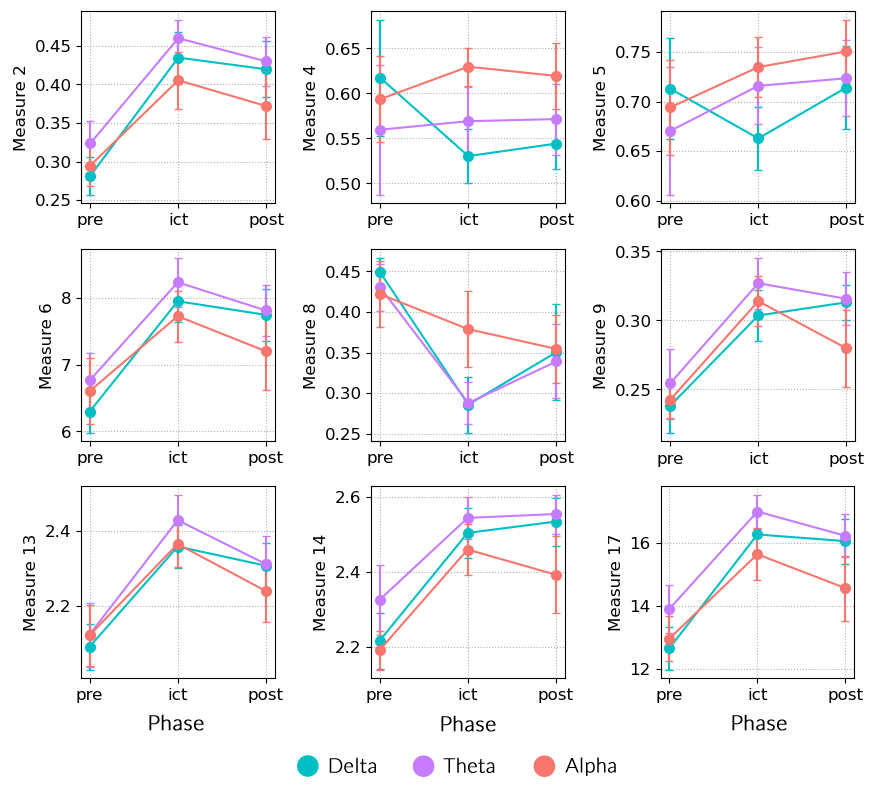}
 \caption{ {\small Grand means and standard deviations of the simplicial measures computed in the original iPDC networks of the right hemisphere for each seizure phase (pre-ictal, ictal, post-ictal) and for each frequency band (delta, theta, alpha).  See Table \ref{tab:measures-analysis} for the measure ids.}}
 \label{fig:grand-mean-std_R}
\end{figure}


\begin{figure}[h!]
    \centering
\includegraphics[scale=1.54]{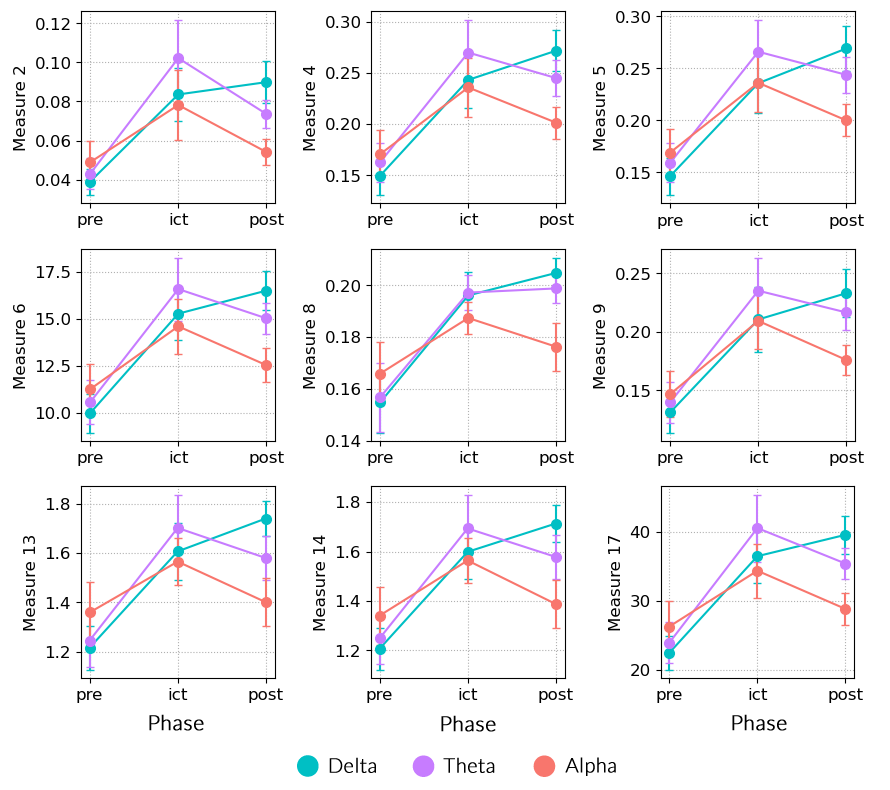}
 \caption{ {\small Grand means and standard deviations of the simplicial measures computed in the $0$-digraphs of the left hemisphere for each seizure phase (pre-ictal, ictal, post-ictal) and for each frequency band (delta, theta, alpha). See Table \ref{tab:measures-analysis} for the measure ids.}}
 \label{fig:grand-mean-q0_L}

\medskip

\includegraphics[scale=1.54]{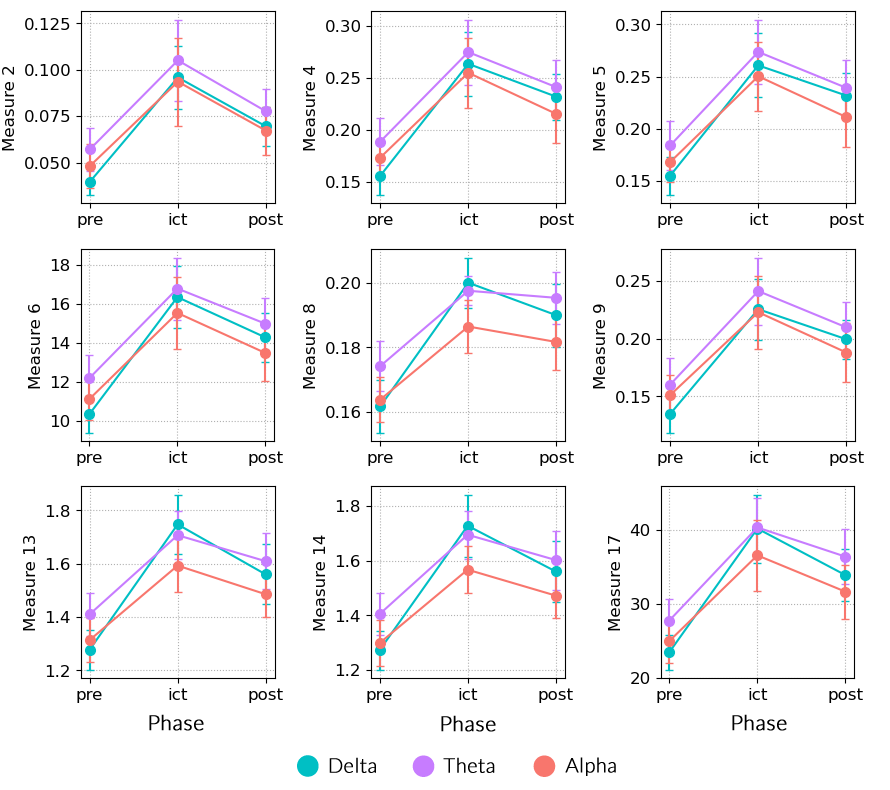}
 \caption{ {\small Grand means and standard deviations of the simplicial measures computed in the $0$-digraphs of the right hemisphere for each seizure phase (pre-ictal, ictal, post-ictal) and for each frequency band (delta, theta, alpha). See Table \ref{tab:measures-analysis} for the measure ids.}}
 \label{fig:grand-mean-q0_R}
\end{figure}


\begin{figure}[h!]
    \centering
\includegraphics[scale=1.54]{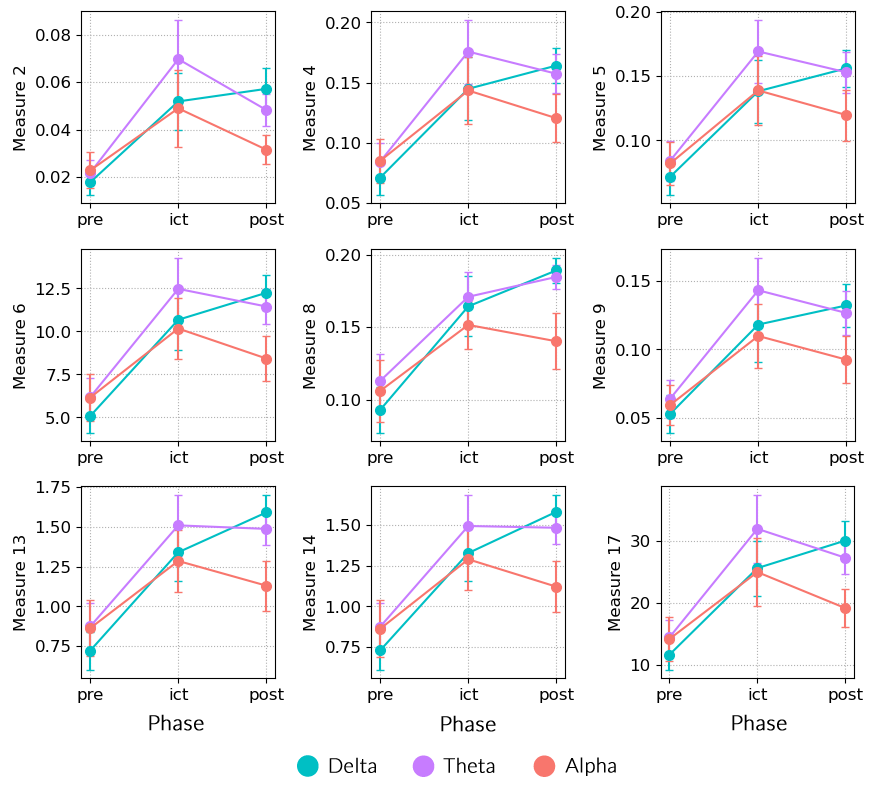}
\caption{ {\small Grand means and standard deviations of the simplicial measures computed in the $1$-digraphs of the left hemisphere for each seizure phase (pre-ictal, ictal, post-ictal) and for each frequency band (delta, theta, alpha). See Table \ref{tab:measures-analysis} for the measure ids.}}
 \label{fig:grand-mean-q1_L}

\medskip

\includegraphics[scale=1.54]{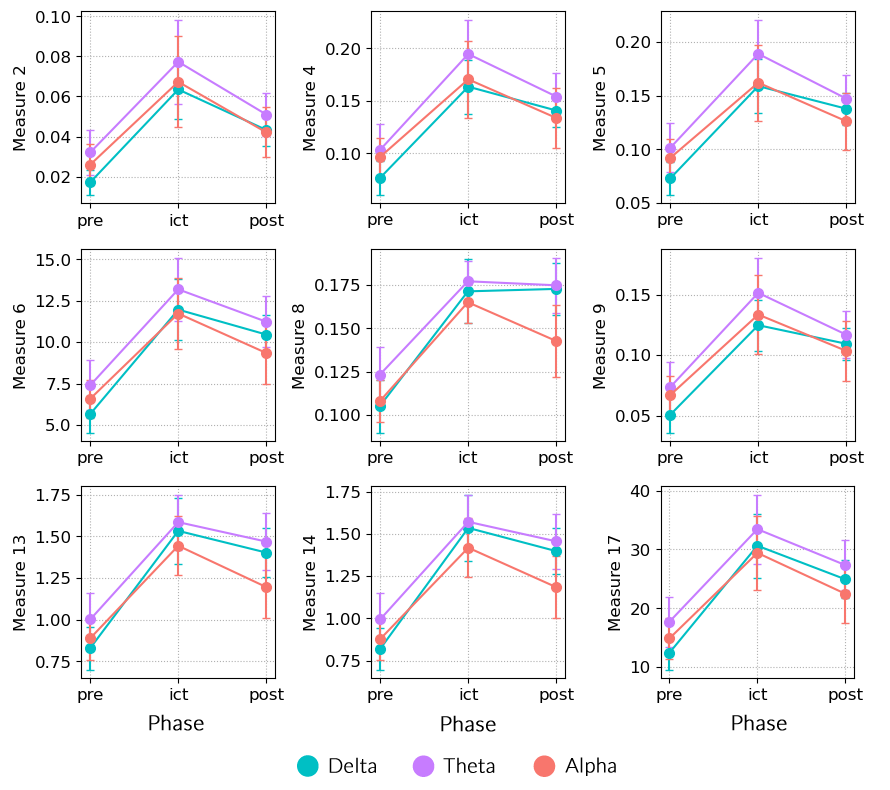}
 \caption{ {\small Grand means and standard deviations of the simplicial measures computed in the $1$-digraphs of the right hemisphere for each seizure phase (pre-ictal, ictal, post-ictal) and for each frequency band (delta, theta, alpha). See Table \ref{tab:measures-analysis} for the measure ids.}}
 \label{fig:grand-mean-q1_R}
\end{figure}


\begin{figure}[h!]
    \centering
\includegraphics[scale=1.54]{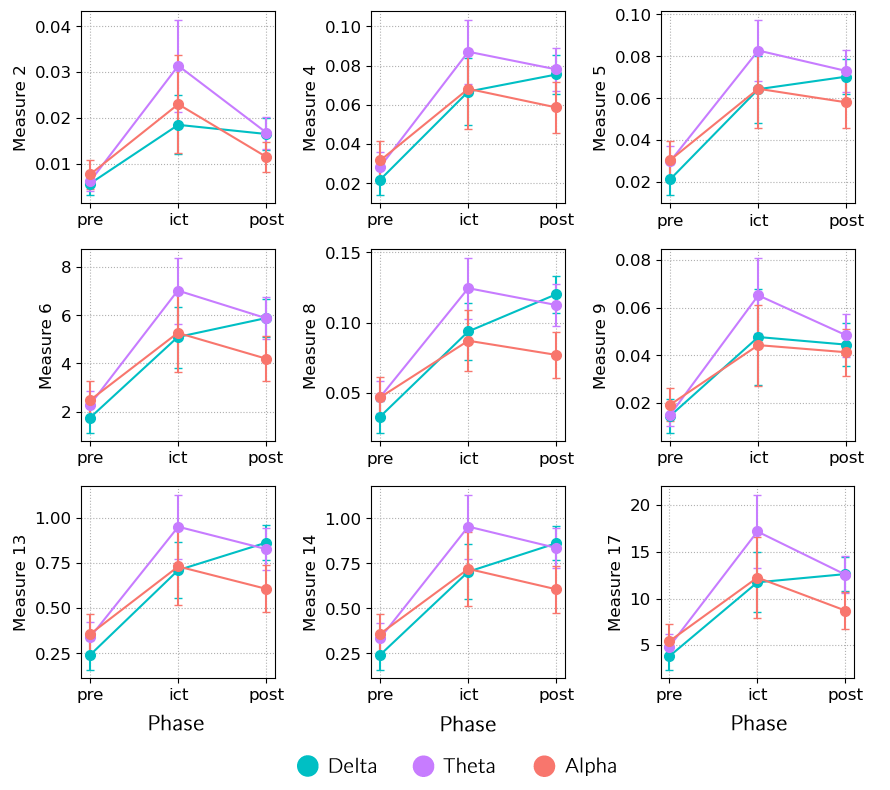}
 \caption{ {\small Grand means and standard deviations of the simplicial measures computed in the $2$-digraphs of the left hemisphere for each seizure phase (pre-ictal, ictal, post-ictal) and for each frequency band (delta, theta, alpha). See Table \ref{tab:measures-analysis} for the measure ids.}}
 \label{fig:grand-mean-q2_L}

\medskip

\includegraphics[scale=1.54]{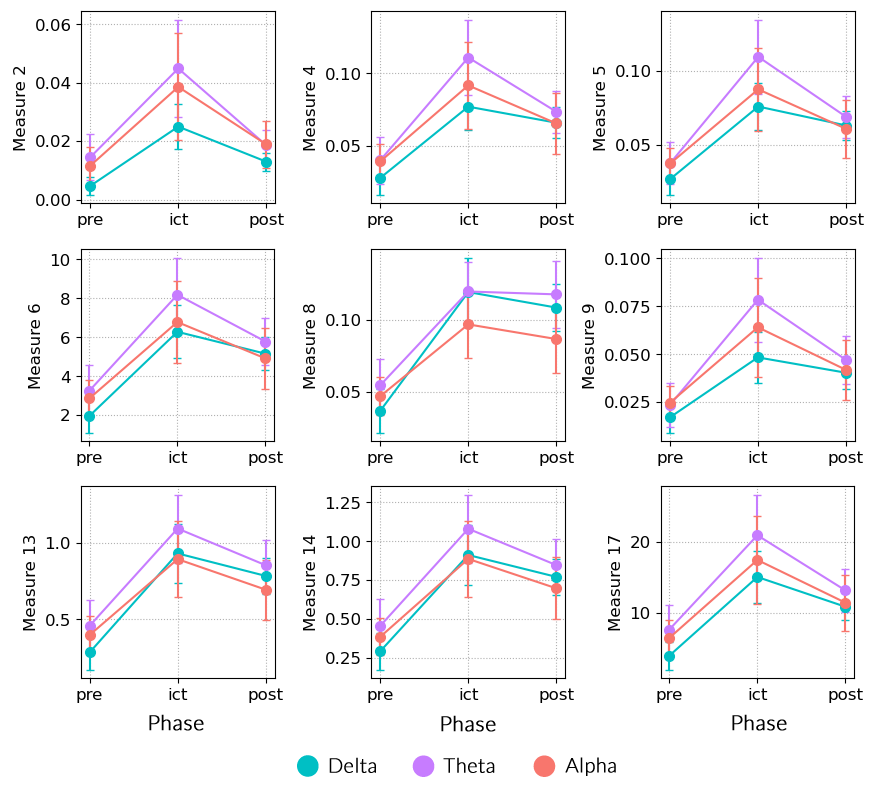}
 \caption{ {\small Grand means and standard deviations of the simplicial measures computed in the $2$-digraphs of the right hemisphere for each seizure phase (pre-ictal, ictal, post-ictal) and for each frequency band (delta, theta, alpha). See Table \ref{tab:measures-analysis} for the measure ids.}}
 \label{fig:grand-mean-q2_R}
\end{figure}


\begin{figure}[h!]
    \centering
\includegraphics[scale=1.54]{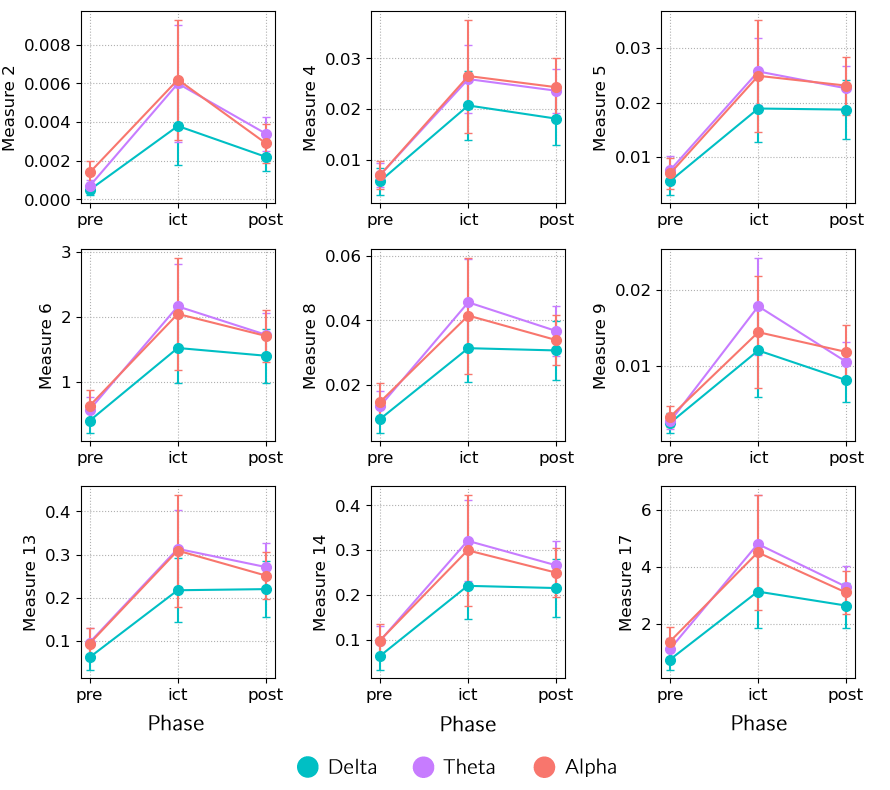}
 \caption{ {\small Grand means and standard deviations of the simplicial measures computed in the $3$-digraphs of the left hemisphere for each seizure phase (pre-ictal, ictal, post-ictal) and for each frequency band (delta, theta, alpha). See Table \ref{tab:measures-analysis} for the measure ids.}}
 \label{fig:grand-mean-q3_L}

\medskip

\includegraphics[scale=1.54]{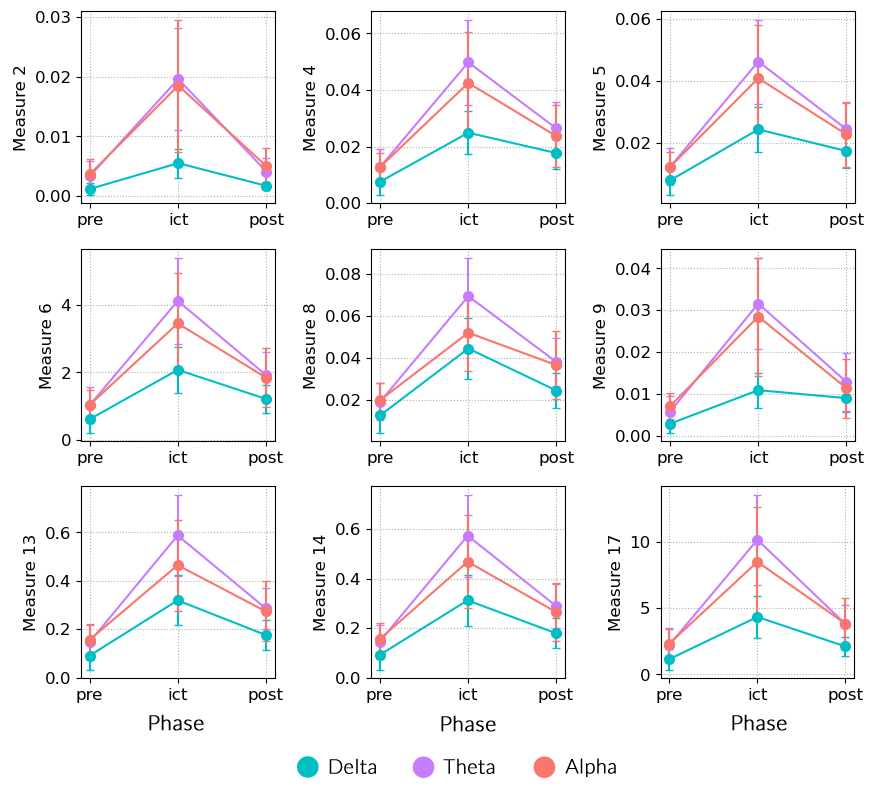}
 \caption{ {\small Grand means and standard deviations of the simplicial measures computed in the $3$-digraphs of the right hemisphere for each seizure phase (pre-ictal, ictal, post-ictal) and for each frequency band (delta, theta, alpha). See Table \ref{tab:measures-analysis} for the measure ids.}}
 \label{fig:grand-mean-q3_R}
\end{figure}


\begin{figure}[h!]
    \centering
\includegraphics[scale=1.54]{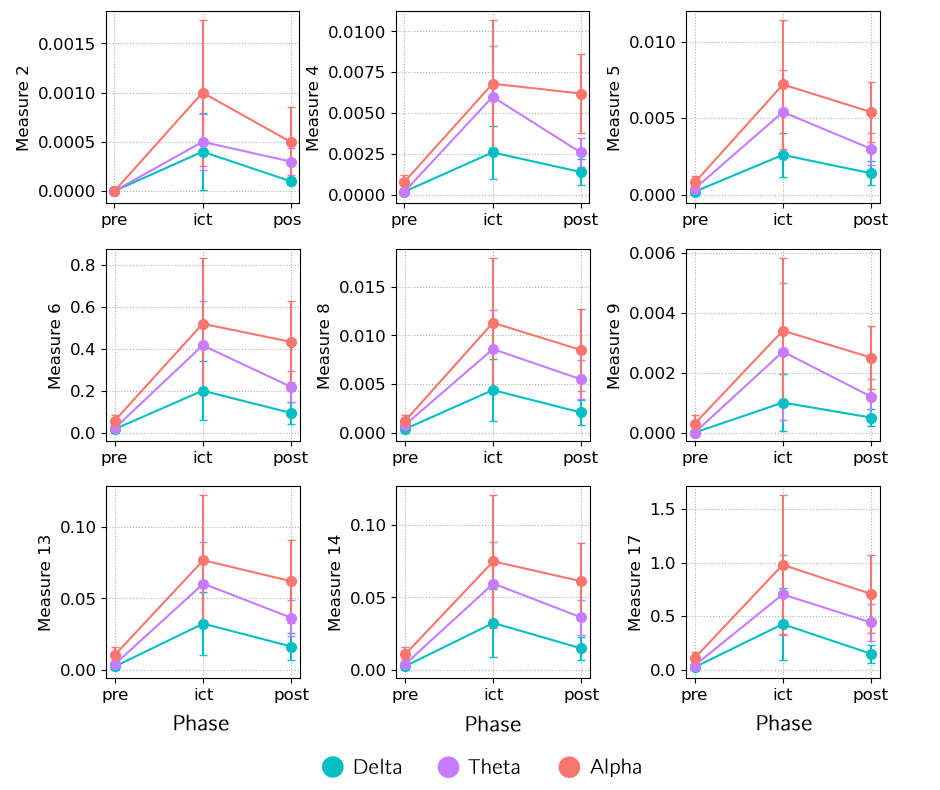}
 \caption{ {\small Grand means and standard deviations of the simplicial measures computed in the $4$-digraphs of the left hemisphere for each seizure phase (pre-ictal, ictal, post-ictal) and for each frequency band (delta, theta, alpha). See Table \ref{tab:measures-analysis} for the measure ids.}}
 \label{fig:grand-mean-q4_L}

\medskip

\includegraphics[scale=1.54]{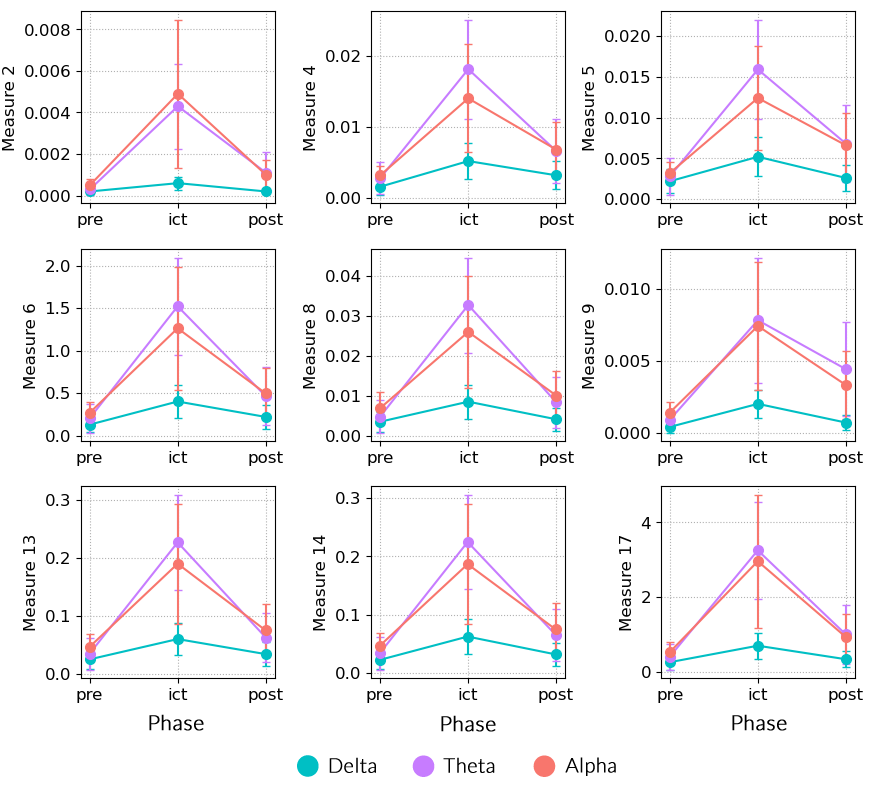}
 \caption{ {\small Grand means and standard deviations of the simplicial measures computed in the $4$-digraphs of the right hemisphere for each seizure phase (pre-ictal, ictal, post-ictal) and for each frequency band (delta, theta, alpha). See Table \ref{tab:measures-analysis} for the measure ids.}}
 \label{fig:grand-mean-q4_R}
\end{figure}


\begin{figure}[h!]
    \centering
\includegraphics[scale=0.75]{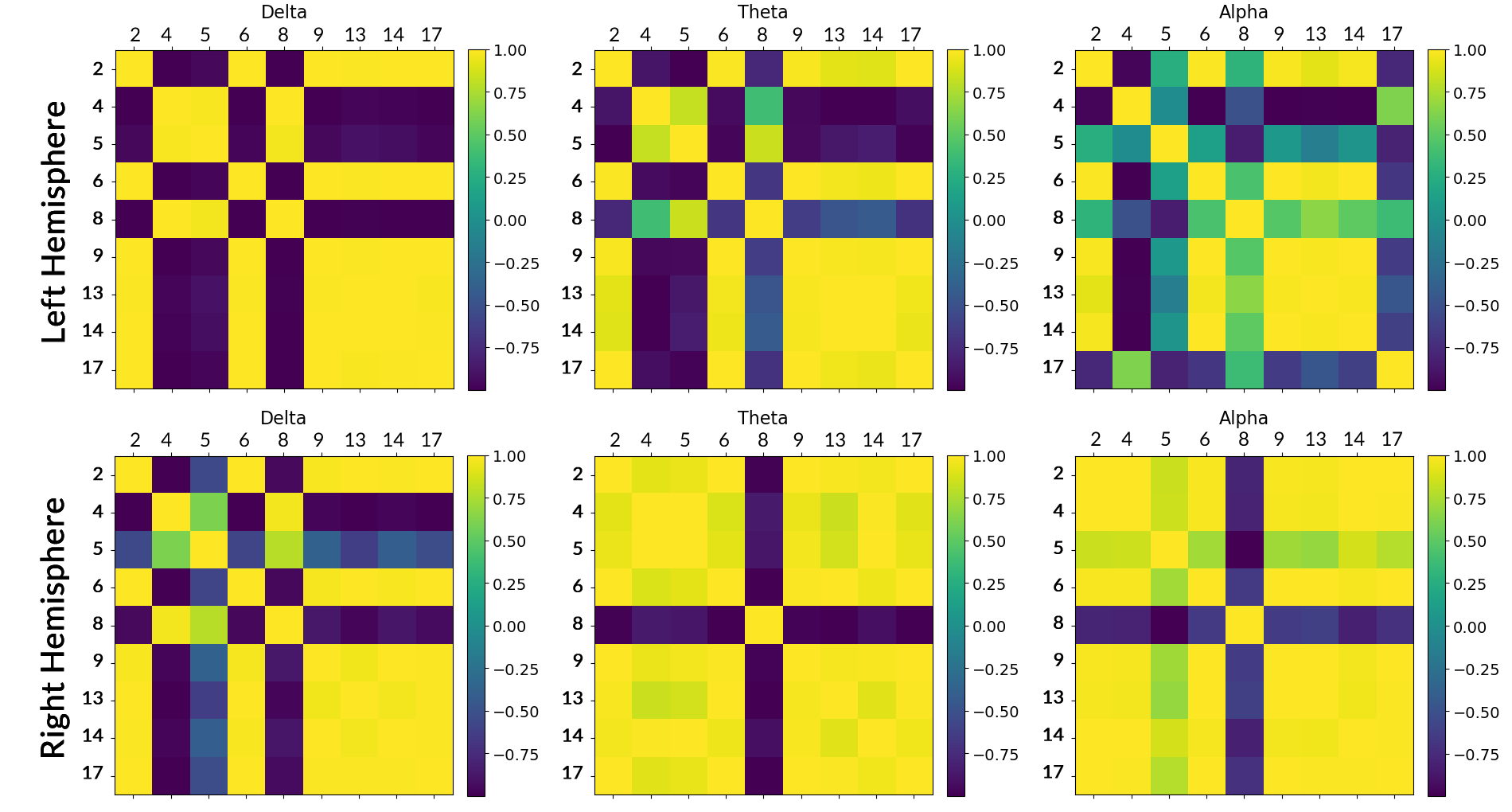}\label{fig:corr-std1}
\caption{ {\small Pearson's correlation coefficients between the simplicial measures across the pre-ictal, ictal, and post-ictal phases for the original iPDC networks (level $q=-1$). See Table \ref{tab:measures-analysis} for the measure ids.}}

\medskip

\includegraphics[scale=0.75]{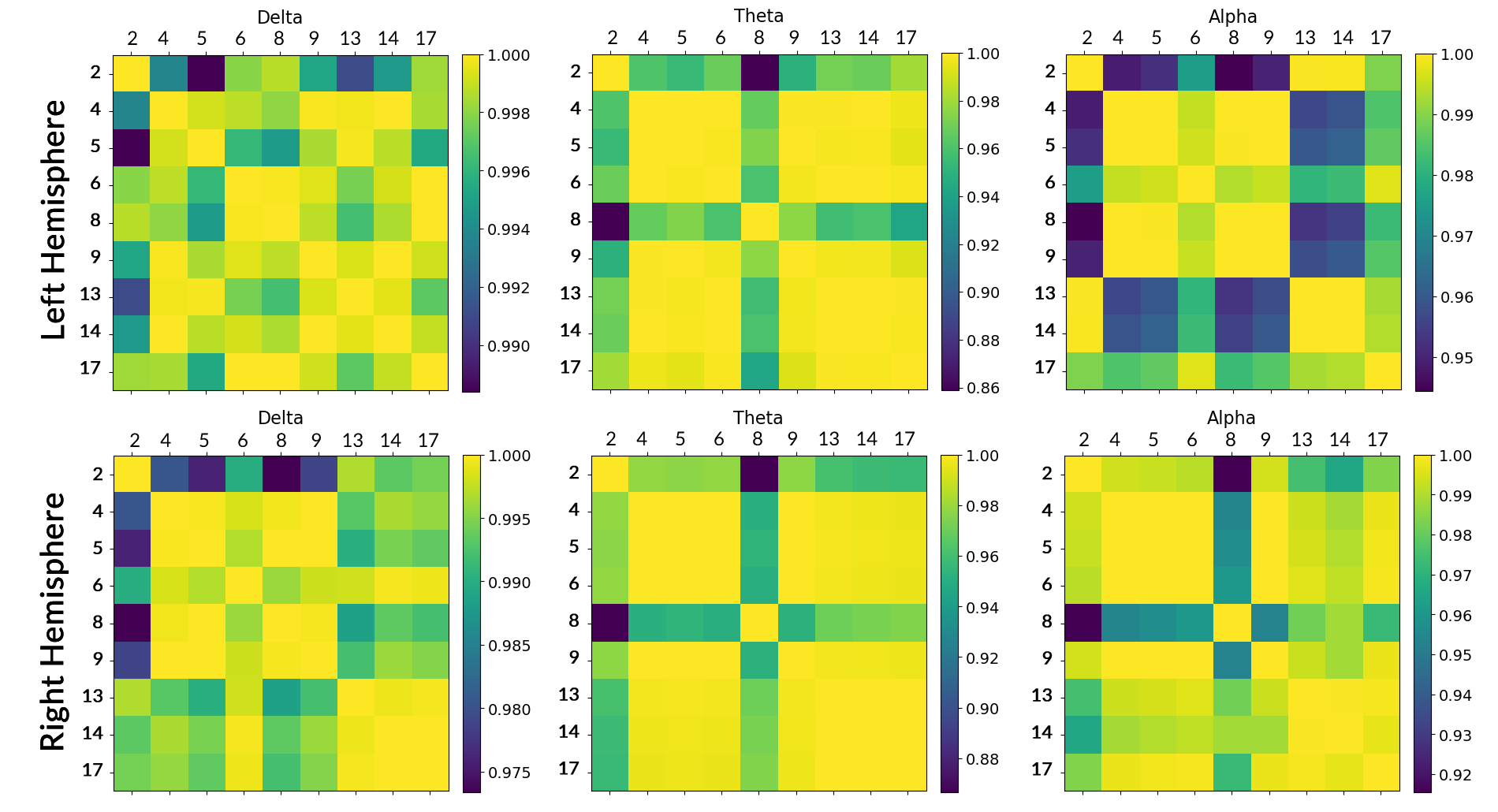}\label{fig:corr-q0}
 \caption{ {\small Pearson's correlation coefficients between the simplicial measures across the pre-ictal, ictal, and post-ictal phases at level $q=0$. See Table \ref{tab:measures-analysis} for the measure ids.}}

\medskip

\includegraphics[scale=0.75]{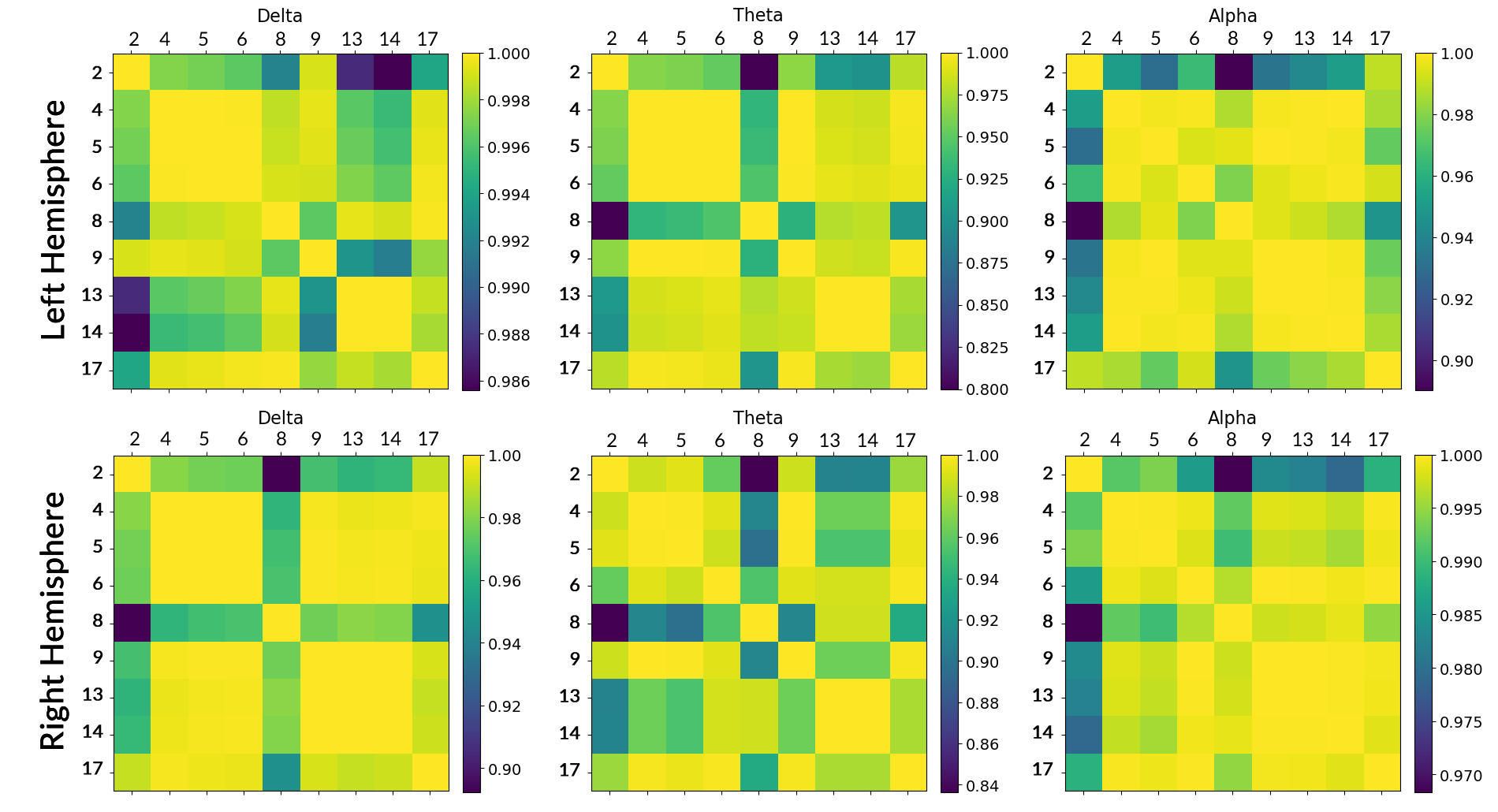}\label{fig:corr-q1}
 \caption{ {\small Pearson's correlation coefficients between the simplicial measures across the pre-ictal, ictal, and post-ictal phases at level $q=1$. See Table \ref{tab:measures-analysis} for the measure ids.}}
 
\end{figure}


\begin{figure}[h!]
    \centering
\includegraphics[scale=0.75]{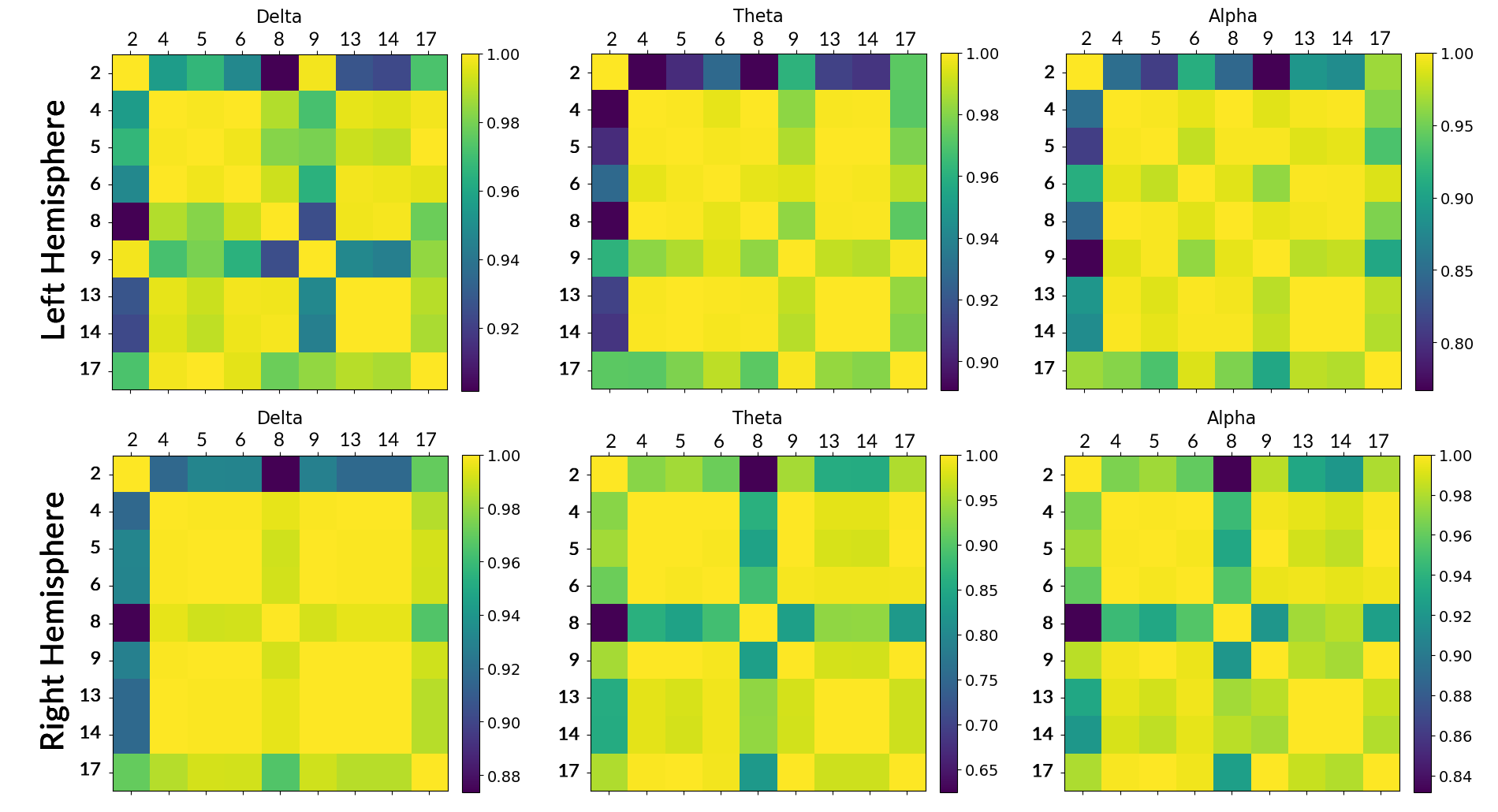}\label{fig:corr-q2}
 \caption{ {\small Pearson's correlation coefficients between the simplicial measures across the pre-ictal, ictal, and post-ictal phases at level $q=2$. See Table \ref{tab:measures-analysis} for the measure ids.}}

\medskip

\includegraphics[scale=0.75]{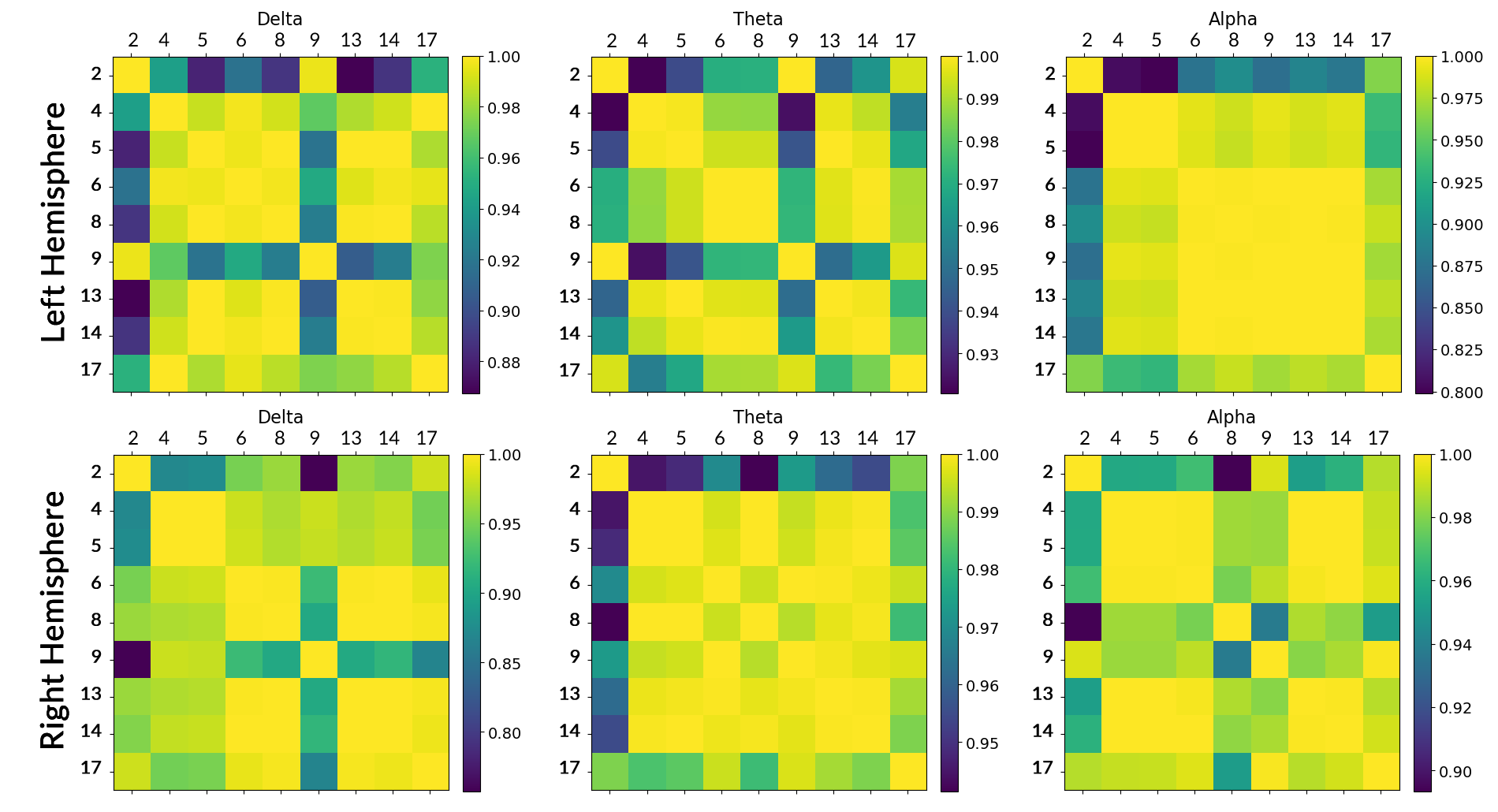}\label{fig:corr-q3}
 \caption{ {\small Pearson's correlation coefficients between the simplicial measures across the pre-ictal, ictal, and post-ictal phases at level $q=3$. See Table \ref{tab:measures-analysis} for the measure ids.}}

\medskip

\includegraphics[scale=0.75]{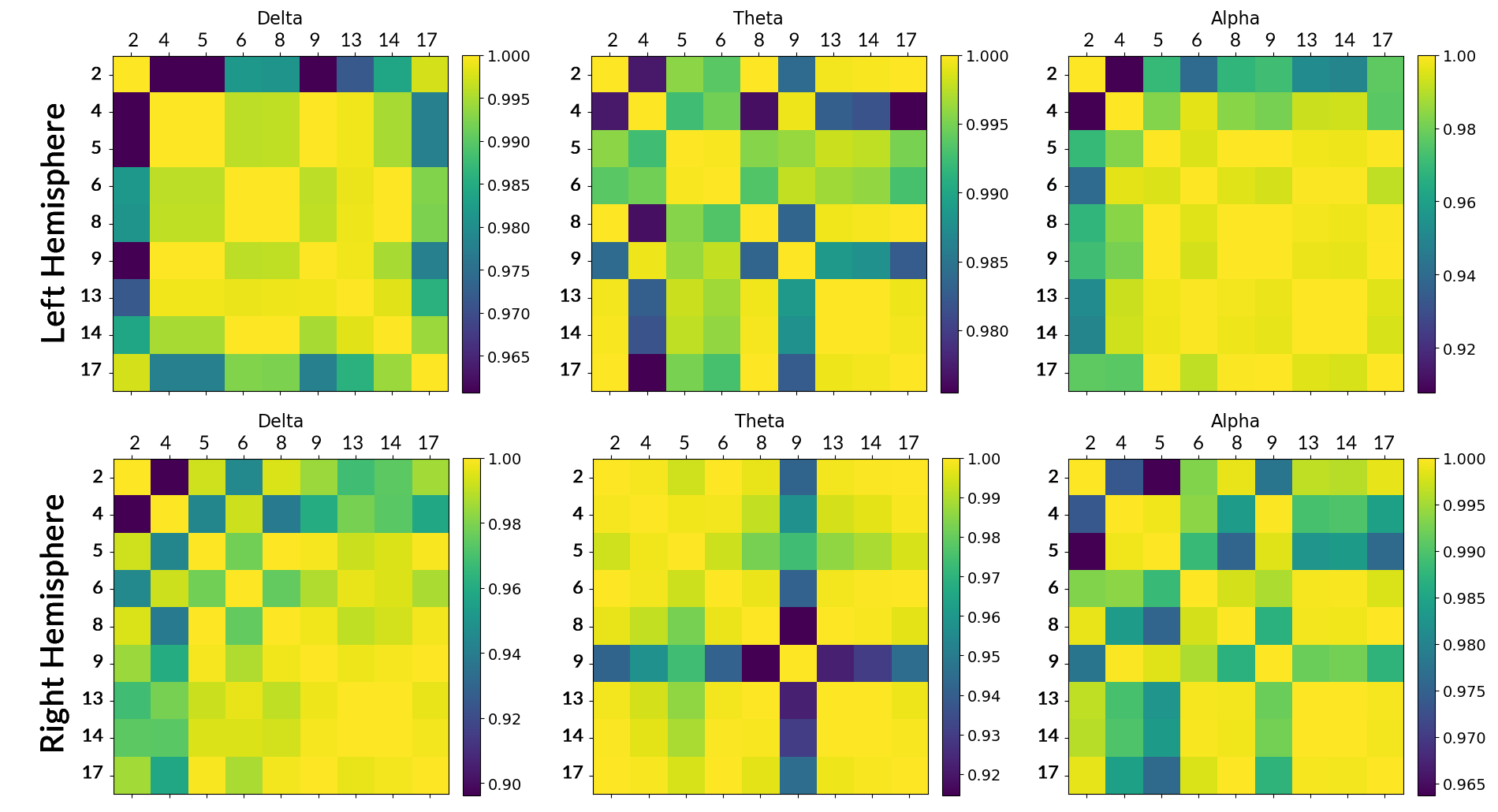}\label{fig:corr-q4}
  \caption{ {\small Pearson's correlation coefficients between the simplicial measures across the pre-ictal, ictal, and post-ictal phases at level $q=4$. See Table \ref{tab:measures-analysis} for the measure ids.}}
 
\end{figure}


\begin{figure}[h!]
    \centering
\includegraphics[scale=0.75]{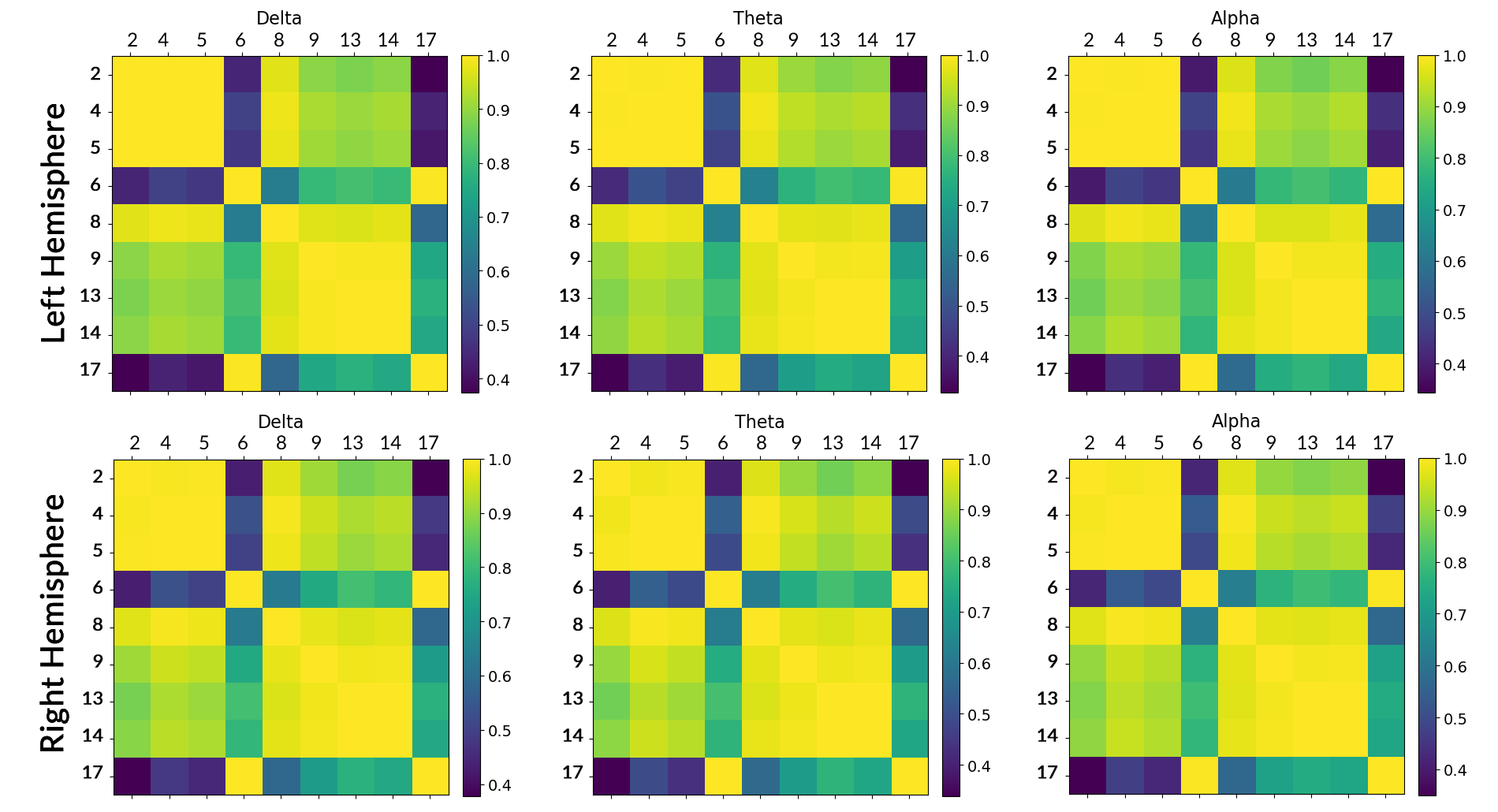}\label{fig:corr-pre}
 \caption{ {\small Pearson's correlation coefficients between the simplicial measures across the levels $q=-1,0,1,2,3,4$ in the pre-ictal phase. See Table \ref{tab:measures-analysis} for the measure ids.}}

\medskip

\includegraphics[scale=0.75]{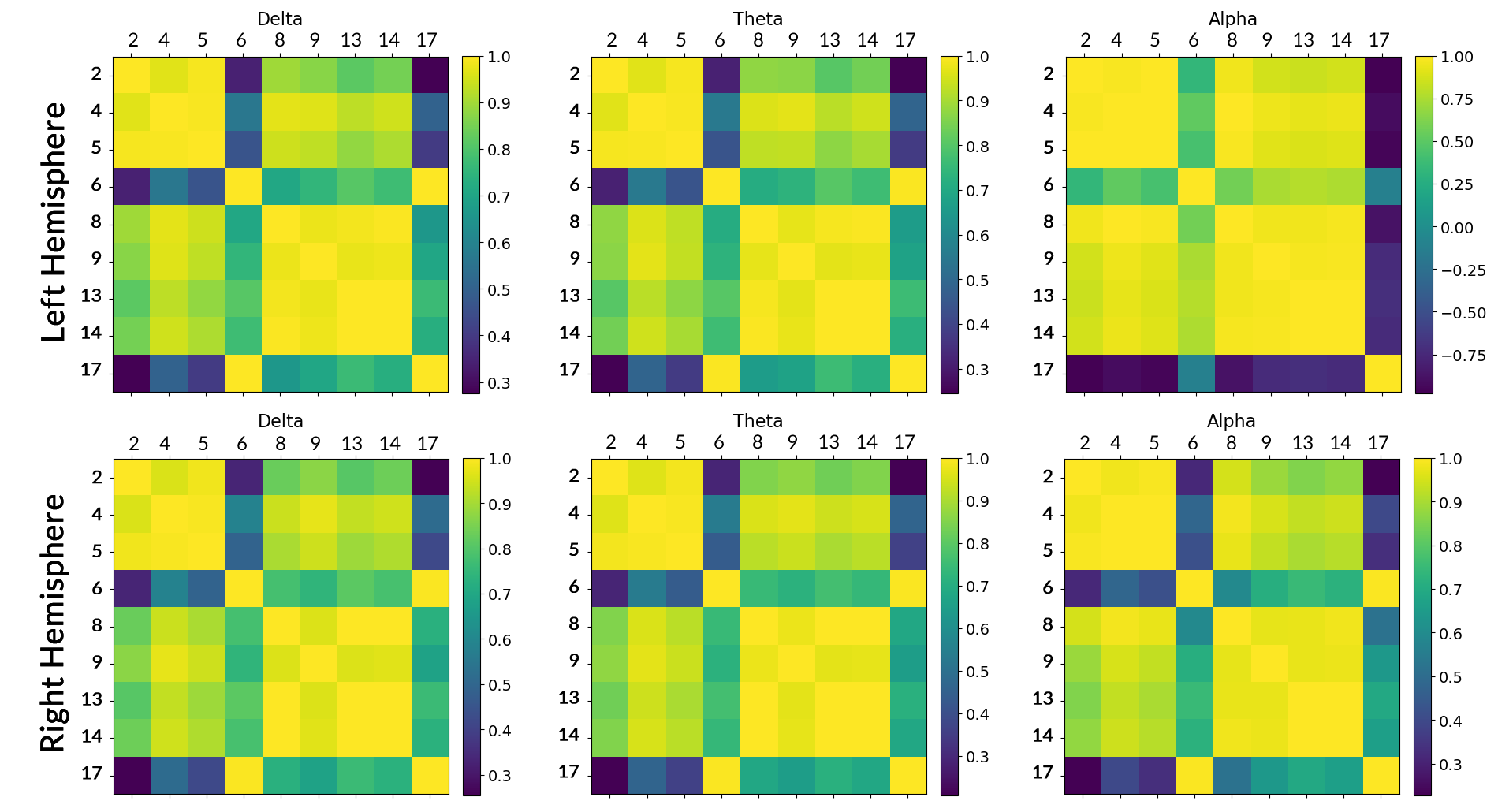}\label{fig:corr-ic}
 \caption{ {\small Pearson's correlation coefficients between the simplicial measures across the levels $q=-1,0,1,2,3,4$ in the ictal phase. See Table \ref{tab:measures-analysis} for the measure ids.}}

\medskip

\includegraphics[scale=0.75]{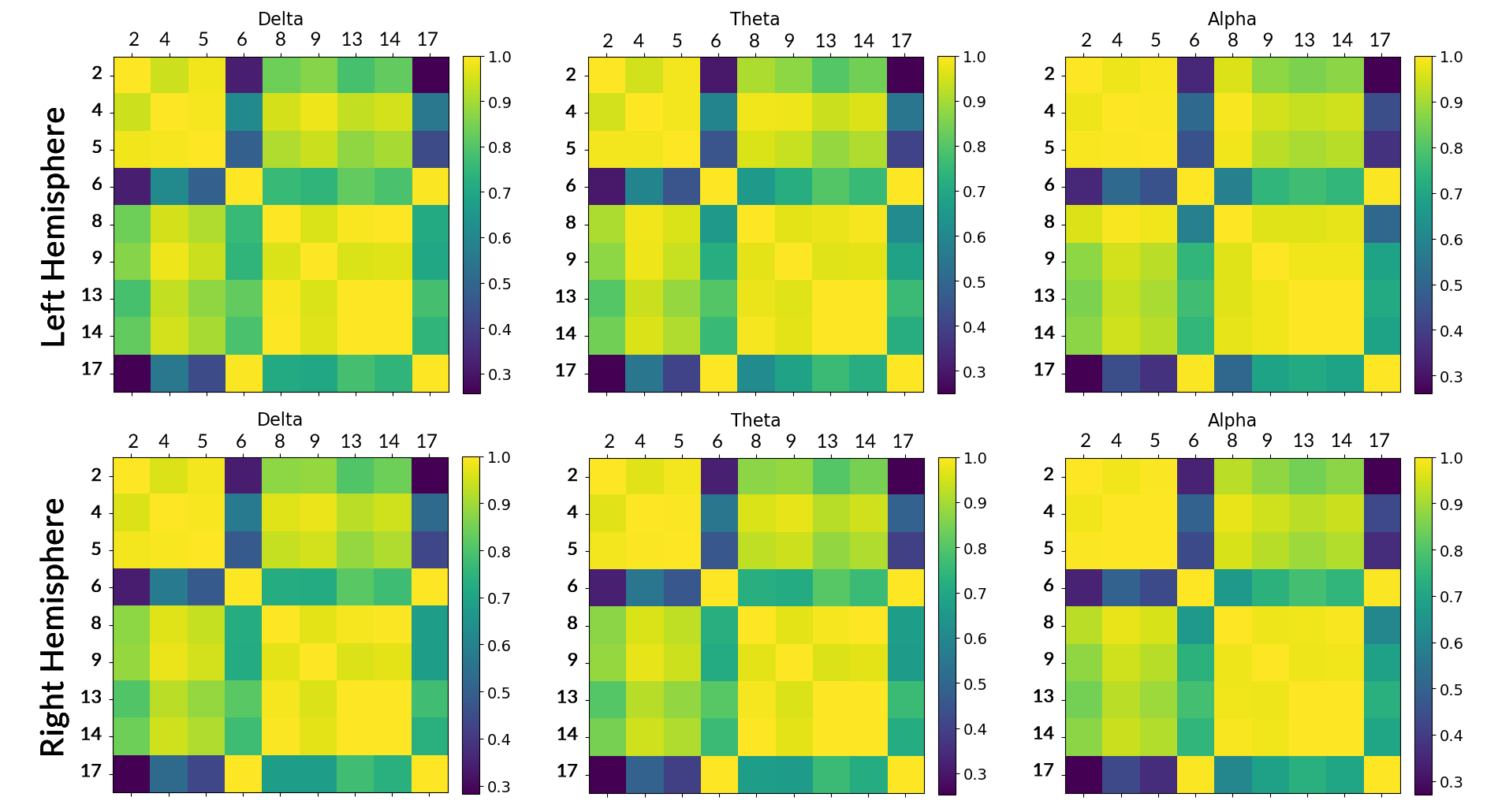}\label{fig:corr-pos}
 \caption{ {\small Pearson's correlation coefficients between the simplicial measures across the levels $q=-1,0,1,2,3,4$ in the post-ictal phase. See Table \ref{tab:measures-analysis} for the measure ids.}}

\end{figure}

\newpage

\begin{figure}[h!]
    \centering
\includegraphics[scale=0.72]{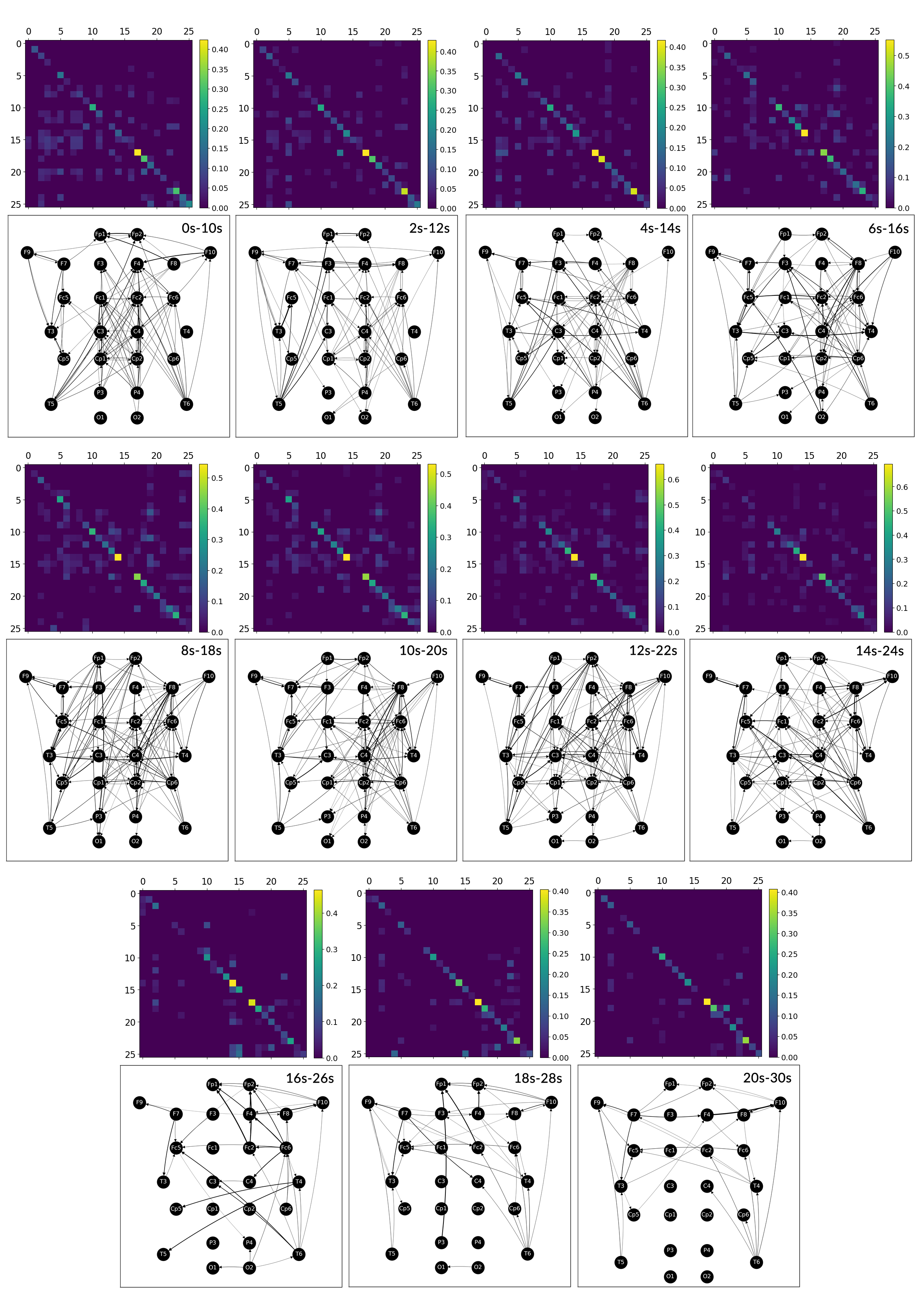}
 \caption{iPDC networks, together with their weighted adjacency matrices (only statistically significant ($p<0.001$) connections were considered), whose entries correspond to $|\iota\pi_{ij}(f)|^{2}$ (squared modulus), computed in the delta band of the pre-ictal phase of patient PN01. The networks were obtained using the sliding window technique with fixed-size windows of 10s and $80\%$ overlap (i.e. 8s) in a 30s interval immediately before the seizure. The nodes in the adjacency matrices are numbered from 0 to 25, and correspond to the following electrodes, starting from bottom to top, right to left: 0 (O2), 1 (O1), 2 (T6), 3 (P4), ..., 25 (Fp1).}
 \label{fig:pip-digraphs1}
\end{figure}

 \begin{figure}[h!]
 \includegraphics[scale=0.72]{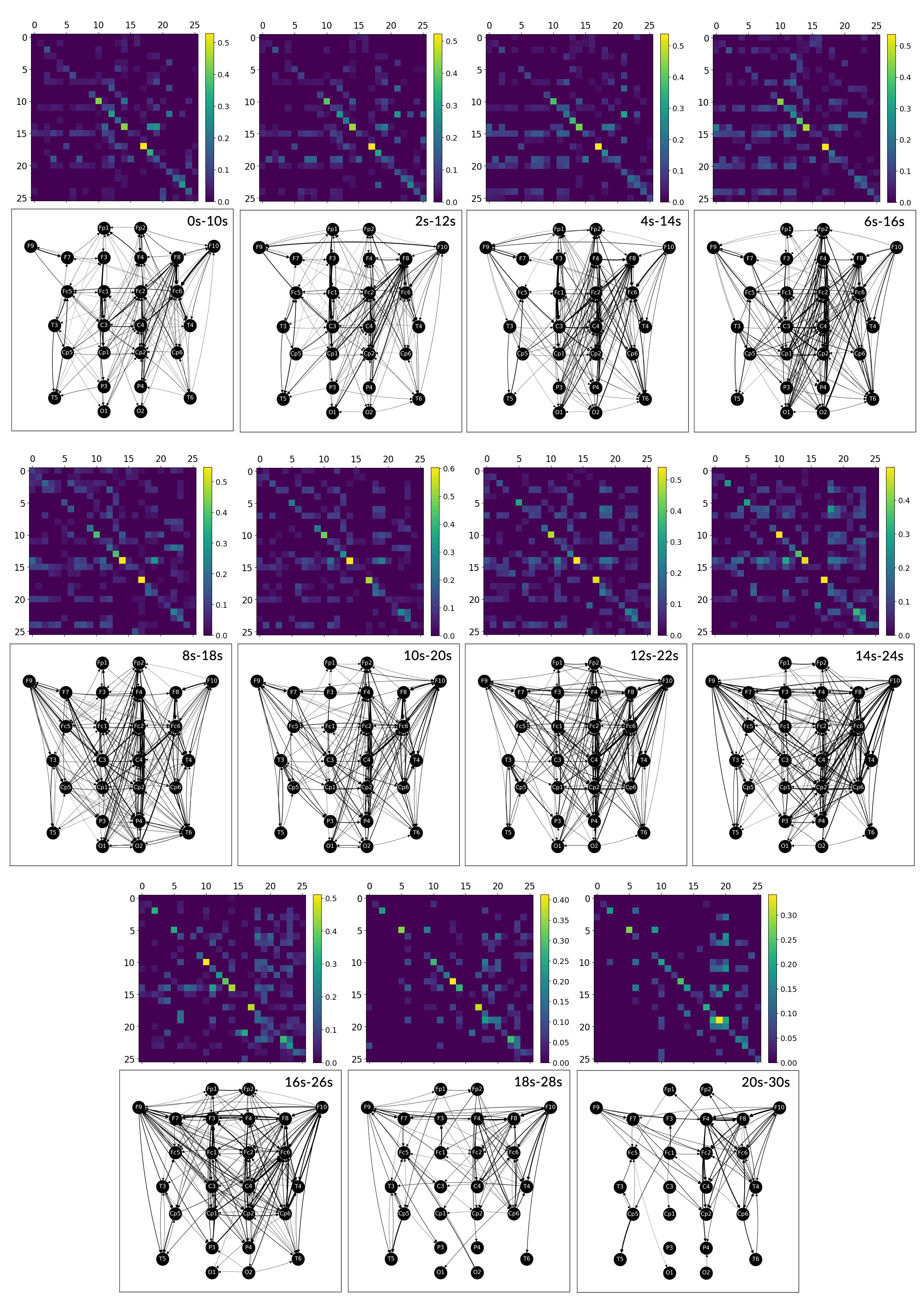}
  \caption{iPDC networks, together with their weighted adjacency matrices (only statistically significant ($p<0.001$) connections were considered), whose entries correspond to $|\iota\pi_{ij}(f)|^{2}$ (squared modulus), computed in the delta band of the ictal phase of patient PN01. The networks were obtained using the sliding window technique with fixed-size windows of 10s and $80\%$ overlap (i.e. 8s) in a 30s interval starting on the seizure onset. The nodes in the adjacency matrices are numbered from 0 to 25, and correspond to the following electrodes, starting from bottom to top, right to left: 0 (O2), 1 (O1), 2 (T6), 3 (P4), ..., 25 (Fp1).}
 \label{fig:pip-digraphs2}
\end{figure}

\begin{figure}[h!]
    \centering
\includegraphics[scale=0.72]{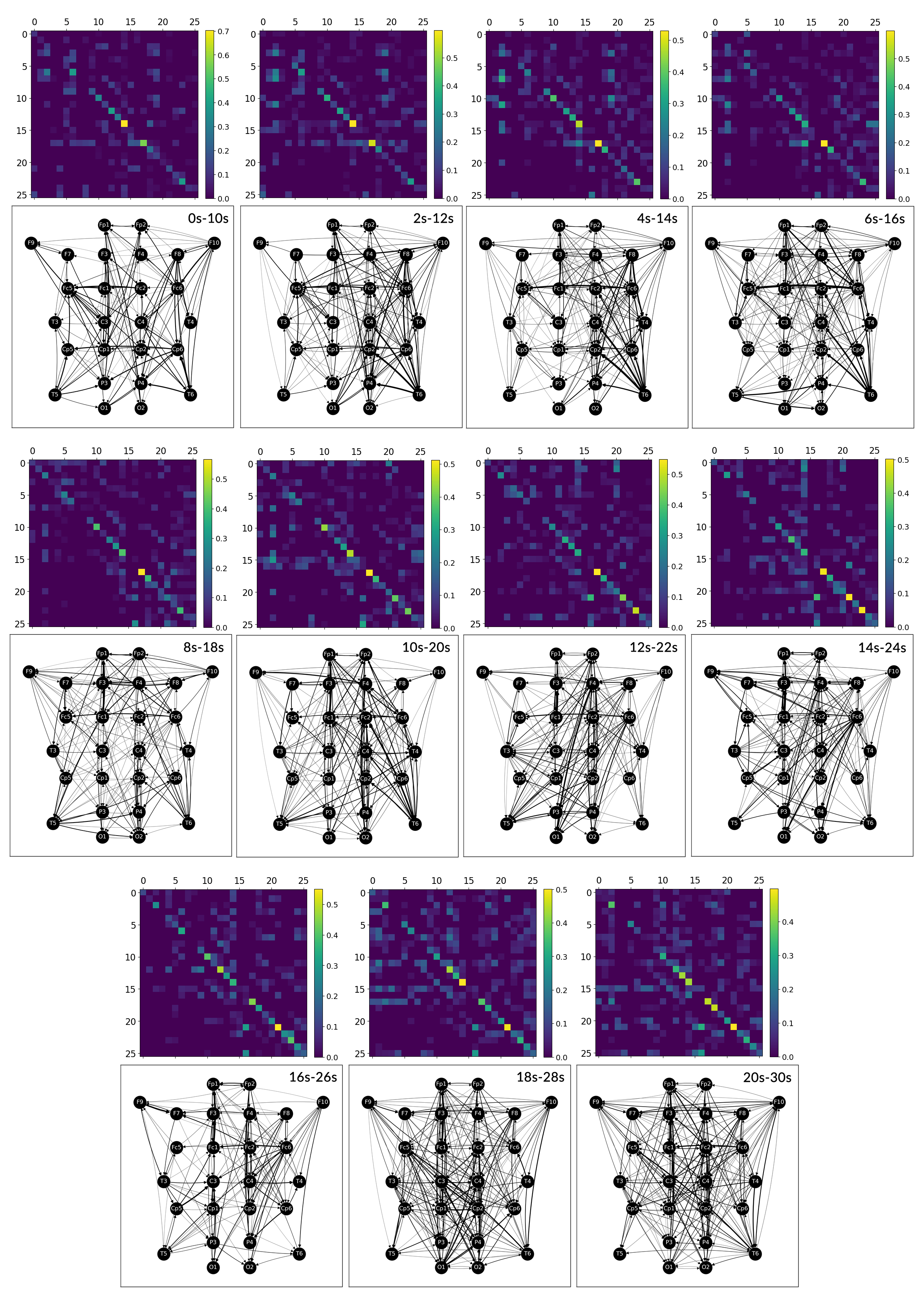}
 \caption{iPDC networks, together with their weighted adjacency matrices (only statistically significant ($p<0.001$) connections were considered), whose entries correspond to $|\iota\pi_{ij}(f)|^{2}$ (squared modulus), computed in the delta band of the post-ictal phase of patient PN01. The networks were obtained using the sliding window technique with fixed-size windows of 10s and $80\%$ overlap (i.e. 8s) in a 30s interval immediately after the seizure. The nodes in the adjacency matrices are numbered from 0 to 25, and correspond to the following electrodes, starting from bottom to top, right to left: 0 (O2), 1 (O1), 2 (T6), 3 (P4), ..., 25 (Fp1).}
 \label{fig:pip-digraphs3}
\end{figure}

\end{document}